\documentclass[a4paper,UKenglish,cleveref,autoref,thm-restate,utf8,utf8x]{lipics-v2021}

\author{Kyle Deeds}{University of Washington Seattle, WA, USA}{kdeeds@cs.washington.edu}{[]}{}
\author{Dan Suciu}{University of Washington Seattle, WA, USA}{suciu@cs.washington.edu}{[]}{}
\author{Magda Balazinska}{University of Washington Seattle, WA, USA}{magda@cs.washington.edu}{[]}{}
\author{Walter Cai}{University of Washington Seattle, WA, USA}{wzcai92@gmail.com}{[]}{}

\authorrunning{K. Deeds, D. Suciu, M. Balazinska, W. Cai}



\bibliographystyle{plainurl}

\title{Degree Sequence Bound For Join Cardinality Estimation} 


\Copyright{CC-BY} 

\ccsdesc[500]{Information systems~Query optimization}
\ccsdesc[500]{Information systems~Query planning}
\ccsdesc[500]{Theory of computation~Database query processing and optimization (theory)}
\ccsdesc[300]{Theory of computation~Data modeling} 

\keywords{Cardinality Estimation, Cardinality Bounding, Degree Bounds, Functional Approximation, Query Planning, Berge-Acyclic Queries} 

\category{} 

\relatedversion{} 


\funding{This work is supported by National Science Foundation grants NSF IIS 1907997 and NSF-BSF 2109922. }


\nolinenumbers 

\EventEditors{John Q. Open and Joan R. Access}
\EventNoEds{2}
\EventLongTitle{42nd Conference on Very Important Topics (CVIT 2016)}
\EventShortTitle{CVIT 2016}
\EventAcronym{CVIT}
\EventYear{2016}
\EventDate{December 24--27, 2016}
\EventLocation{Little Whinging, United Kingdom}
\EventLogo{}
\SeriesVolume{42}
\ArticleNo{23}



\usepackage{amsmath}
\usepackage{algorithm}
\usepackage{algorithmic}
\usepackage{bm} 
\usepackage{amsfonts}
\usepackage{amsthm}
\usepackage{graphicx}
\usepackage{tikz}
\usepackage[colorinlistoftodos]{todonotes}

\usepackage{enumitem}
\usepackage{wrapfig}
\usepackage{nicematrix}

\usepackage[small,nohug,heads=vee]{diagrams}
\diagramstyle[labelstyle=\scriptstyle]

\DeclareMathOperator*{\argmin}{arg\,min}

\newcommand{\set}[1]{\{#1\}}                    
\newcommand{\setof}[2]{\{{#1}\mid{#2}\}}        
\newcommand{\R}{{\mathbb R}}    			


\newcommand{\calM}{\mathcal M}

\theoremstyle{plain}                   
\newtheorem{thm}{Theorem}[section]
\newtheorem{lmm}[thm]{Lemma}
\newtheorem{prop}[thm]{Proposition}

\newtheorem{cor}[thm]{Corollary}

\theoremstyle{definition}              
\newtheorem{pbm}{Problem}
\newtheorem{opm}{Question}
\newtheorem{conj}[thm]{Conjecture}
\newtheorem{ex}[thm]{Example}

\newtheorem{defn}[thm]{Definition}

\newtheorem{rmk}[thm]{Remark}

\newcommand{\bi}{\begin{itemize}}
\newcommand{\ei}{\end{itemize}}

\newcommand{\bp}{\begin{proof}}
\newcommand{\ep}{\end{proof}}
\newcommand{\bcor}{\begin{cor}}
\newcommand{\ecor}{\end{cor}}
\newcommand{\bthm}{\begin{thm}}
\newcommand{\ethm}{\end{thm}}
\newcommand{\blmm}{\begin{lmm}}
\newcommand{\elmm}{\end{lmm}}
\newcommand{\bdefn}{\begin{defn}}
\newcommand{\edefn}{\end{defn}}
\newcommand{\bprop}{\begin{prop}}
\newcommand{\eprop}{\end{prop}}
\newcommand{\bconj}{\begin{conj}}
\newcommand{\econj}{\end{conj}}
\newcommand{\bopm}{\begin{opm}}
\newcommand{\eopm}{\end{opm}}
\newcommand{\brmk}{\begin{rmk}}
\newcommand{\ermk}{\end{rmk}}

\newcommand{\defeq}{\stackrel{\text{def}}{=}}

\newcommand{\floor}[1]{\left\lfloor #1 \right\rfloor}
\newcommand{\ceil}[1]{\left\lceil #1 \right\rceil}

\newcommand{\eat}[1]{}

\begin{document}

\maketitle

\begin{abstract}
    Recent work has demonstrated the catastrophic effects of poor cardinality estimates on query processing time. In particular, underestimating query cardinality can result in overly optimistic query plans which take orders of magnitude longer to complete than one generated with the true cardinality. Cardinality bounding avoids this pitfall by computing a strict upper bound on the query's output size using statistics about the database such as table sizes and degrees, i.e. value frequencies. In this paper, we extend this line of work by proving a novel bound called the Degree Sequence Bound which takes into account the full degree sequences and the max tuple multiplicity. This work focuses on the important class of Berge-Acyclic queries for which the Degree Sequence Bound is tight. Further, we describe how to practically compute this bound using a functional approximation of the true degree sequences and prove that even this functional form improves upon previous bounds.
\end{abstract}

\section{Introduction}

\label{sec:intro}

The weakest link in a modern query processing engine is the {\em cardinality estimator}.  There are several major decisions where the system needs to estimate the size of a query's output: the optimizer uses the estimate to compute an effective query plan; the scheduler needs the estimate to determine how much memory to allocate for a hash table and to decide whether to use a main-memory or an out-of-core algorithm; a distributed system needs the estimate to decide how many servers to reserve for subsequent operations. Today's systems estimate the cardinality of a query by making several strong and unrealistic assumptions, such as uniformity and independence.  As a result, the estimates for multi-join queries commonly have relative errors up to several orders of magnitude. An aggravating phenomenon is that cardinality estimators consistently underestimate (this is a consequence of the independence assumption), and this leads to wrong decisions for the most expensive queries~\cite{DBLP:journals/pvldb/LeisGMBK015,DBLP:conf/sigmod/CaiBS19,han2021cardinality}. A significant amount of effort has been invested in the last few years into using machine learning for cardinality estimation~\cite{DBLP:conf/sigmod/Sun0021,DBLP:conf/sigmod/WuC21,DBLP:journals/pvldb/YangKLLDCS20,DBLP:journals/pvldb/ZhuWHZPQZC21,DBLP:journals/pvldb/WangQWWZ21,DBLP:journals/pvldb/LiuD0Z21,DBLP:journals/pvldb/NegiMKMTKA21}, but this approach still faces several formidable challenges, such as the need for large training sets, the long training time of complex models, and the lack of guarantees about the resulting estimates.

\begin{wrapfigure}{l}{0.5\textwidth}
{\tiny
  \begin{tabular}[b]{{|l|l|}} \hline
    {\bf Name} & ... \\ \hline
    Alice & ... \\
    Alice  & ... \\\hline
    Bob  & ... \\ \hline
    Carlos  & ...  \\
    Carlos  & ... \\
    Carlos  & ... \\ \hline
    David  & ... \\ \hline
    Eseah  & ... \\
    Eseah  & ... \\
    Eseah  & ... \\
    Eseah  & ... \\
    Eseah  & ... \\ \hline
    Vivek  & ... \\
    Vivek  & ... \\
    Vivek  & ... \\ \hline
    Gael  & ... \\ \hline
    Hans  & ... \\
    Hans  & ... \\ \hline
    John  & ... \\ \hline
    Karl  & ... \\ \hline
    Lee  & ... \\ \hline
  \end{tabular}
}
    \includegraphics[width=140pt, height=160pt]{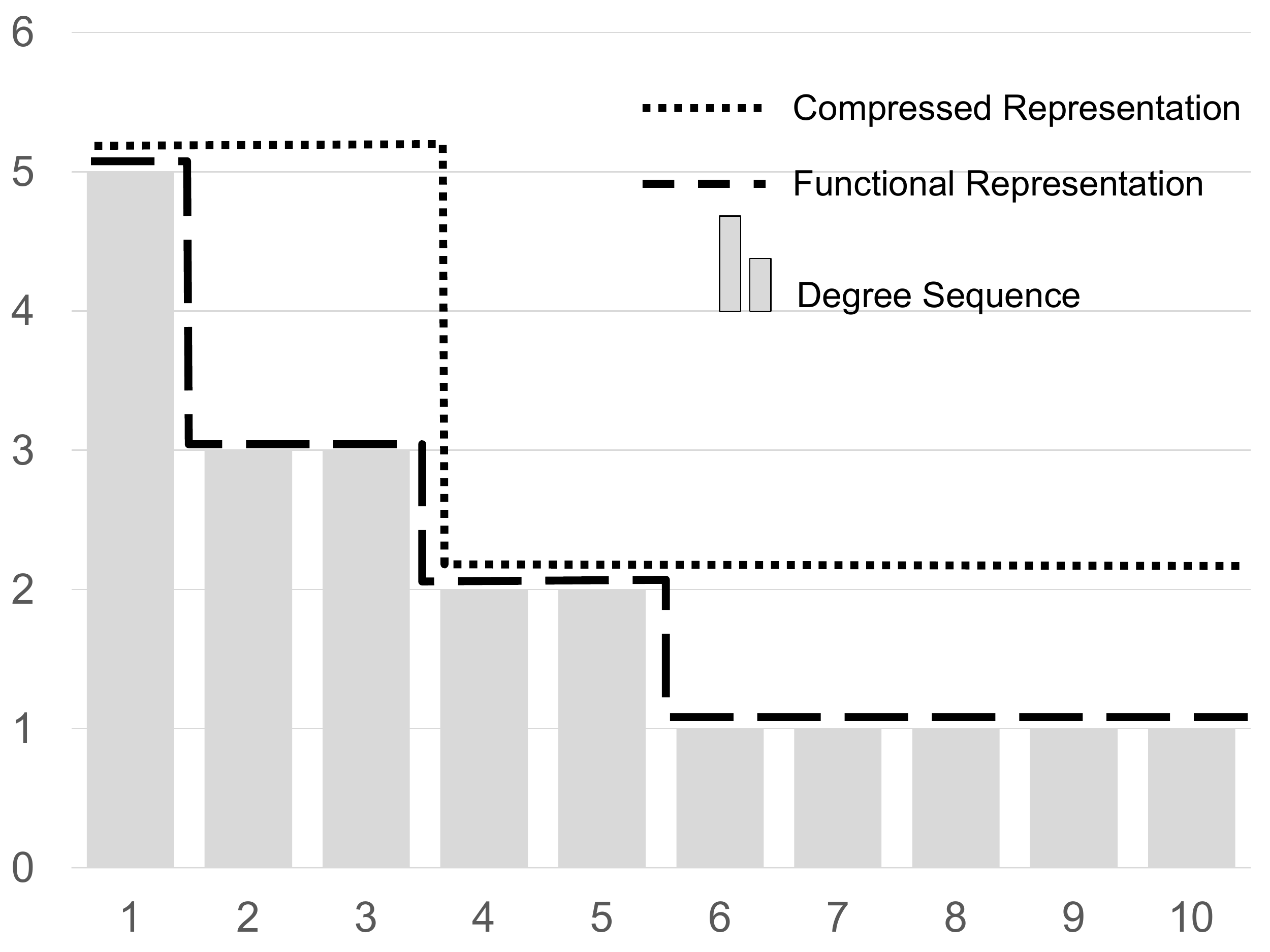}
    \caption{The degree sequence of \texttt{Name}.  The first rank
      represents \texttt{Eseah} whose degree is 5, the next two ranks
      are for \texttt{Carlos} and \texttt{Vivek} whose degrees are 3.
      The degree sequence can be represented compactly using a
      staircase functions, and even more compactly using lossy
      compression.}
  \label{fig:degree-sequence-example}
\end{wrapfigure}

An alternative approach to estimating the cardinality is to compute an
{\em upper bound} for the size of the query answer.  This approach
originated in the database theory community, through the pioneering
results by Grohe and Marx~\cite{DBLP:conf/soda/GroheM06} and Atserias,
Grohe, and Marx~\cite{DBLP:conf/focs/AtseriasGM08}.  They described an
elegant formula, now called the AGM bound, that gives a tight upper
bound on the query result in terms of the cardinalities of the input
tables.  This upper bound was improved by the {\em polymatroid bound},
which takes into account both the cardinalities, and the degree
constraints and includes functional dependencies as a special
case~\cite{DBLP:journals/jacm/GottlobLVV12,DBLP:conf/pods/KhamisNS16,DBLP:conf/pods/Khamis0S17,DBLP:conf/pods/000118}.
In principle, an upper bound could be used by a query optimizer in
lieu of a cardinality estimator and, indeed, this idea was recently
pursued by the systems community, where the upper bound appears under
various names such as bound sketch or pessimistic cardinality
estimator~\cite{DBLP:conf/sigmod/CaiBS19,DBLP:conf/cidr/HertzschuchHHL21}.
In this paper, we will call it a {\em cardinality bound}.  As
expected, a cardinality bound prevents query optimizers from choosing
disastrous plans for the most expensive
queries~\cite{DBLP:conf/sigmod/CaiBS19}, however, their relative error
is often much larger than that of other
methods~\cite{DBLP:conf/sigmod/ParkKBKHH20,DBLP:conf/sigmod/GiladPM21}.
While the appeal of a guaranteed upper bound is undeniable, in
practice overly pessimistic bounds are unacceptable.

In this paper, we propose a new upper bound on the query size based on
{\em degree sequences}. By using a slightly larger memory footprint,
this method has the potential to achieve much higher accuracy than
previous bounds. Given a relation $R$, an attribute $X$, and a value
$u \in \Pi_X(R)$, the {\em degree} of $u$ is the number of tuples in
$R$ with $u$ in the $X$ attribute, formally
$d^{(u)}=|\sigma_{X=u}(R)|$. The {\em degree sequence} of an attribute
$X$ in relation $R$ is the sorted sequence of all degrees for the
values of that attribute,
$d^{(u_1)} \geq d^{(u_2)} \geq \cdots \geq d^{(u_{n})}$. Going
forward, we drop any reference to values and instead refer to degrees
by their index in this sequence, also called their {\em rank},
i.e. $d_1\geq \cdots \geq d_{n}$.\footnote{Note that the degree
  sequence is very similar to a rank-frequency distribution in the
  probability literature and has been extensively used in graph
  analysis ~\cite{bauer2015best,hakimi1978graphs}.} A degree sequence
can easily be computed offline, and can be compressed effectively,
with a good space/accuracy tradeoff due to its monotonicity; see
Fig.~\ref{fig:degree-sequence-example} for an illustration. Degree
sequences offer more information on the database instance than the
statistics used by previous upper bounds.  For example, the AGM bound
uses only the cardinality of the relations, which is $\sum_i d_i$,
while the extension to degree
constraints~\cite{DBLP:conf/pods/Khamis0S17} uses the cardinality,
$\sum_i d_i$, and the maximum degree, $d_1$.


For this new bound we had to develop entirely new techniques over
those used for the AGM and the polymatroid bounds.  Previous
techniques are based on information theory.  If some relation $R(X,Y)$
has cardinality $N$, then any probability space over $R$ has an
entropy that satisfies $H(XY) \leq \log N$; if the degree sequence of
the attribute $X$ is $d_1 \geq d_2 \geq \ldots$, then
$H(Y|X) \leq \log d_1$.  Both the AGM and the polymatroid bound start
from such constraints on the entropy. Unfortunately, these constraints
do not extend to degree sequences, because $H$ is ignorant of
$d_2, d_3, \ldots$ Information theory gives us only three degrees of
freedom, namely $H(XY), H(X), H(Y)$, while the degree sequence has an
arbitrary number of degrees of freedom.  Rather than using information
theory, our new framework models relations as tensors, and formulates
the upper bound as a linear optimization problem.  This framework is
restricted to {\em Berge-acyclic, fully conjunctive
  queries}~\cite{DBLP:journals/jacm/Fagin83} (reviewed in
Sec.~\ref{sec:problem}); throughout the paper we will assume that
queries are in this class.  As we explain in
appendix~\ref{app:background} these are the most common queries found
in applications.

{\bf The Worst-Case Instance} Our main result
(Theorems~\ref{th:main:star} and \ref{th:general}) is a tight
cardinality bound given the degree sequences of all relations. This
bound is obtained by evaluating the query on a {\em worst-case
  instance} that satisfies those degree constraints.\footnote{In graph
  theory, the problem of computing a graph satisfying a given degree
  sequence is called the {\em realization problem}.}  Intuitively,
each relation of the worst-case instance is obtained by matching the
highest degree values in the different columns, and the same principle
is applied across relations. For example, consider the join
$R(X,\ldots) \Join S(X,\ldots)$, where the degree sequences of $R.X$
and $S.X$ are $a_1 \geq a_2 \geq \cdots$ and
$b_1 \geq b_2 \geq \cdots$ respectively. The true cardinality of the
join is $\sum_i a_i b_{\tau(i)}$ for some unknown permutation $\tau$ ,
while the maximum cardinality is\footnote{For example, if
  $a_1 \geq a_2$, $b_1 \geq b_2$, then
  $a_1 b_1+a_2b_2 \geq a_1b_2+a_2b_1$.}  $\sum_i a_i b_i$, and is
obtained when the highest degree values match.  Our degree sequence
bound holds even when the input relations are allowed to be bags.
Furthermore, we prove (Theorem~\ref{th:main:polymatroid}) that this
bound is always below the AGM and polymatroid bounds, although the
latter restrict the relations to be sets.  To prove this we had to
develop a new, explicit formula for the polymatroid bound for
Berge-acyclic queries, which is of independent interest
(Theorem~\ref{thm:polyamtroid:bound:sum:of:chains}).

{\bf Compact Representation} A full degree sequence is about as large
as the relation instance, while cardinality estimators need to run in
sub-linear time.  Fortunately, a degree sequence can be represented
compactly using piece-wise constant function, called a {\em staircase
  function}, as illustrated in Fig.~\ref{fig:degree-sequence-example}.
Our next result, Theorem~\ref{thm:compress}, is an algorithm for the
degree sequence bound that runs in quasi-linear time (i.e. linear plus
a logarithmic factor) in the size of the representation, independent
of the size of the instance.  The algorithm makes some rounding errors
(Lemma~\ref{lmm:compress}), hence its output may be slightly larger
than the exact bound, however we prove that it is still lower than the
AGM and polymatroid bounds
(Theorem~\ref{th:approximate:agm:polymatroid}).  The algorithm can be
used in conjunction with a compressed representation of the degree
sequence.  By using few buckets and upper-bounding the degree sequence
one can trade off the memory size and estimation time for accuracy.
At one extreme, we could upper bound the entire sequence using a
single bucket with the constant $d_1$, at the other extreme we could
keep the complete sequence.  Neither the AGM bound nor the polymatroid
bound have this tradeoff ability.

{\bf Max Tuple Multiplicity} Despite using more information than
previous upper bounds, our bound can still be overly pessimistic,
because it needs to match the most frequent elements in all
attributes.  For example, suppose a relation has two attributes whose
highest degrees are $a_1$ and $b_1$ respectively. Its worst-case
instance is a bag and must include some tuple that occurs
$\min(a_1, b_1)$ times. Usually, $a_1$ and $b_1$ are large, since they
represent the frequencies of the worst heavy hitters in the two
columns, but in practice they rarely occur together $\min(a_1,b_1)$
times. To avoid such worst-case matchings, we use one additional piece
of information on each base table: the max multiplicity over all
tuples, denoted $B$. Usually, $B$ is significantly smaller than the
largest degrees, and, by imposing it as an additional constraint, we
can significantly improve the query's upper bound; in particular, when
$B=1$ then the relation is restricted to be a set. Our main results in
Theorems~\ref{th:main:star} and~\ref{th:general} extend to max tuple
multiplicities, but in some unexpected ways. The worst-case relation,
while still {\em tight}, is not a conventional relation: it may have
tuples that occur more than $B$ times, and, when the relation has 3 or
more attributes it may even have tuples with negative
multiplicities. Nevertheless, these rather unconventional worst-case
relations provide an even better degree sequence bound than by
ignoring $B$.

\begin{ex} \label{ex:intro} To give a taste of our degree-sequence bound, consider the full conjunctive query $Q(\cdots) = R(X,\cdots)\Join S(X,Y,\cdots) \Join T(Y,\cdots)$, where we omit showing attributes that appear in only one of the relations. Alternatively, we can write $Q(X,Y) = R(X)\Join S(X,Y) \Join T(Y)$ where $R, S, T$ are bags rather than sets.  Assume the following degree sequences:
  \begin{align}
    \bm d^{(R.X)} = & \,(3,2,2) & \bm d^{(T.Y)} = & \,(2,1,1,1)&
    \bm d^{(S.X)} = & \,(5,1) & \bm d^{(S.Y)} = & \,(3,2,1) \label{eq:ex:intro:d}
  \end{align}
  The AGM bound uses only the cardinalities, which are:
  \begin{align*}
    |R| = & \,7 & |S| = & \,6 & |T| = & \,5
  \end{align*}
  The AGM bound\footnote{Recall that each of the three relations has
    private variables, e.g. $R(X,U), S(X,Y,V,W), T(Y,Z)$.  The only
    fractional edge cover is $1,1,1$.}  is
  $|R| \cdot |S| \cdot |T| = 210$.  The extension to degree
  constraints in~\cite{DBLP:conf/pods/Khamis0S17} uses in addition the
  maximum degrees:
  \begin{align*}
   \texttt{deg}(R.X) = & \,3 & \texttt{deg}(S.X) = & \,5 
   &\texttt{deg}(S.Y) = & \,3 & \texttt{deg}(T.Y) = & \,2
  \end{align*}
  and the bound is the minimum between the AGM bound and the following
  quantities:
  \begin{small}
  \begin{align*}
    & |R|\cdot \texttt{deg}(S.X) \cdot \texttt{deg}(T.Y) = 7\cdot 5 \cdot 2=70 \\
    & \texttt{deg}(R.X) \cdot |S| \cdot \texttt{deg}(T.Y) = 3 \cdot 6  \cdot 2 = 36 \\
    & \texttt{deg}(R.X) \cdot \texttt{deg}(S.Y) \cdot |T| = 3 \cdot 3  \cdot 5 =45
  \end{align*}
  \end{small}
  Thus, the degree-constraint bound is improved to 36.

  Our new bound is given by the answer to the query on the worst-case instance of the relations $R, S, T$, shown here
  together with their multiplicities (recall that they are bags):
  \begin{small}
  \begin{align*}
    R=\,
    &
      \begin{array}{|l|l}\cline{1-1}
        a & 3 \\
        b & 2 \\
        c & 2 \\ \cline{1-1}
      \end{array},
    & S=\,
          &
            \begin{array}{|l|l|l} \cline{1-2}
              a & u & 3 \\
              a & v & 2 \\
              b & w & 1 \\ \cline{1-2}
            \end{array},
    & T=\,
    &
      \begin{array}{|l|l}\cline{1-1}
        u & 2 \\
        v & 1 \\
        w & 1 \\
        z & 1 \\ \cline{1-1}
      \end{array},
  \end{align*}
  \end{small}

  The three relations have the required degree sequences, for example
  $S.X$ consists of 5 $a$'s and 1 $b$, thus has degree sequence
  $(5,1)$.  Notice the matching principle: we assumed that the most
  frequent element in $R.X$ and $S.X$ are the same value $a$, and that
  the most frequent values in $S.X$ and in $S.Y$ occur together.  On
  this instance, we compute the query and obtain the answer $Q$.
  \begin{small}
  \begin{align*}
    Q =&\,  \begin{array}{|l|l|l} \cline{1-2}
      a & u & 3\cdot 3\cdot 2=18\\
      a & v & 3 \cdot 2 \cdot 1=6\\
      b & w & 2 \cdot 1 \cdot 1 =2\\  \cline{1-2}
    \end{array}
    & S'=&\,
      \begin{array}{|l|l|l} \cline{1-2}
        a & u & 2 \\
        a & v & 2 \\
        a & w & 1 \\
        b & u & 1 \\ \cline{1-2}
      \end{array}    
  \end{align*}
  \end{small}
  The upper bound is the size of the answer on this instance, which is
  $18+6+2=26$, and it improves over 36.  Here, the improvement is relatively minor, but this is a consequence of the short example. In practice, degree sequences often have a long tail, i.e. with a few large leading degrees $d_1, d_2, \ldots$ followed by very many
  small degrees $d_m, d_{m+1}, \ldots, d_n$ (with a large $n$).  In
  that case the improvements of the new bound can be very significant.

  Suppose now that we have one additional information about $S$: every
  tuple occurs at most $B=2$ times.  Then we need to reduce the
  multiplicity of $(a,u)$, and the new worst-case instance,
  denoted $S'$, is the following relation which decreases the cardinality bound to 25.
\end{ex}

%


\section{Problem Statement}

\label{sec:problem}

{\bf Tensors} In this paper, it is convenient to define tensors using a named perspective, where each dimension is associated with a variable. We write variables with capital letters $X, Y, \ldots$ and sets of variables with boldface, $\bm X, \bm Y, \ldots$ We assume that each variable $X$ has an associated finite domain $D_X \defeq [n_X]$ for some number $n_X \geq 1$.  For any set of variables $\bm X$ we denote by $D_{\bm X} \defeq \prod_{Z \in \bm X} D_Z$.  We use lower case for values, e.g. $z \in D_Z$ and boldface for tuples, e.g. $\bm x \in D_{\bm X}$.  An {\em $\bm X$-tensor}, or simply a tensor when $\bm X$ is clear from the context, is $\bm M \in \R^{D_{\bm X}}$. We say that $\bm M$ has $|{\bm X}|$ dimensions.  Given two $\bm X$-tensors $\bm M, \bm N$, we write $\bm M \leq \bm N$ for the component-wise order ($M_{\bm x} \leq N_{\bm x}$, for all $\bm x$). If $\bm X, \bm Y$ are two sets of variables, then we denote their union by $\bm X \bm Y$.  If, furthermore, $\bm X, \bm Y$ are disjoint, and $\bm x \in D_{\bm X}, \bm y \in D_{\bm Y}$, then we denote by $\bm x \bm y \in D_{\bm X \bm Y}$ the concatenation of the two tuples.

\begin{defn} \label{def:tensor:ops} Let $\bm M, \bm N$ be an
  $\bm X$-tensor, and a $\bm Y$-tensor respectively.  Their tensor
  product is the following $\bm X \bm Y$-tensor:
  \begin{align}
    \forall \bm z \in D_{\bm X\bm Y}: &&   (\bm M \otimes \bm N)_{\bm z} \defeq &  M_{\pi_{\bm X}(\bm z)} \cdot  N_{\pi_{\bm Y}(\bm z)} \label{eq:tensor:prod}
  \end{align}
  If $\bm X, \bm Y$ are disjoint and $\bm M$ is an
  $\bm X \bm Y$-tensor then we define its $\bm X$-summation to be the
  following $\bm Y$-tensor:
  \begin{align}
    \forall \bm y \in D_{\bm Y}: && (\texttt{SUM}_{\bm X}(\bm M))_{\bm y} \defeq  & \sum_{\bm x \in D_{\bm X}} M_{\bm x \bm y}
  \end{align}
  If $\bm M, \bm N$ are $\bm X\bm Y$ and $\bm Y\bm Z$ tensors, where
  $\bm X, \bm Y, \bm Z$ are disjoint sets of variables, then their dot
  product is the $\bm X\bm Z$-tensor:
  \begin{align}
    \forall \bm x \in D_{\bm X}, \bm z \in D_{\bm Z}: &&     (\bm M \cdot \bm N)_{\bm x \bm z} \defeq & \texttt{SUM}_{\bm Y}(\bm M \otimes \bm N)_{\bm x \bm z}= \sum_{\bm y \in D_{\bm Y}}M_{\bm x \bm y} N_{\bm y \bm z} \label{eq:def:tensor:dot:prod}
  \end{align}
\end{defn}

In other words, in this paper we use $\otimes$ like a natural join.  For example, if
$\bm M$ is an $IJ$-tensor (i.e. a matrix) and $\bm N$ is an
$KL$-tensor, then $\bm M \otimes \bm N$ is the Kronecker product; if
$\bm P$ is an $IJ$-tensor (like $\bm M$) then $\bm M \otimes \bm P$ is
the element-wise product.  The dot product sums out the common
variables, for example if $\bm a$ is a $J$-tensor, then
$\bm M \cdot \bm a$ is the standard matrix-vector multiplication, and
its result is an $I$-tensor.  The following is easily verified.  If
$\bm M$ is an $\bm X$-tensor, $\bm N$ is an $\bm Y$-tensor and
$\bm X, \bm Y$ are disjoint sets of variables, then:
\begin{align}
\forall \bm X_0 \subseteq \bm X, \forall \bm Y_0 \subseteq \bm Y:&&   \texttt{SUM}_{\bm X_0 \bm Y_0}(\bm M \otimes \bm N) = & \texttt{SUM}_{\bm X_0}(\bm M) \otimes  \texttt{SUM}_{\bm Y_0}(\bm N) \label{eq:push:aggregates:down}
\end{align}

%

{\bf Permutations} A permutation on $D = [n]$ is a bijective function
$\sigma : D\rightarrow D$; the set of permutations on $D$ is denoted
$S_D$, or simply $S_n$.  If $\bm D = D_1 \times \cdots \times D_k$
then we denote by
$S_{\bm D} \defeq S_{D_1} \times \cdots \times S_{D_k}$.  Given an
$\bm X$-tensor $\bm M \in \R^{D_{\bm X}}$ and permutations
$\bm \sigma \in S_{D_{\bm X}}$, the {\em $\bm \sigma$-permuted
  $\bm X$-tensor} is $\bm M \circ \bm \sigma \in \R^{D_{\bm X}}$:
\begin{align*}
\forall \bm x \in D_{\bm X}:&&  (\bm M \circ \bm \sigma)_{\bm x} \defeq & \bm M_{\bm \sigma(\bm x)}
\end{align*}
Sums are invariant under permutations, for example if
$\bm a, \bm b \in \R^{D_Z}$ are $Z$-vectors and $\sigma \in S_{D_Z}$,
then
$(\bm a \circ \sigma) \cdot (\bm b \circ \sigma) = \bm a \cdot \bm b$,
because $\sum_{i \in D_Z} a_{\sigma(i)}b_{\sigma(i)} = \sum_{i \in D_Z} a_ib_i$.


{\bf Queries} A full conjunctive
query $Q$ is:
\begin{align}
  Q(\bm X) = & \Join_{R \in \bm R} R(\bm X_R) \label{eq:q}
\end{align}
where $\bm R \defeq \bm R(Q)$ denotes the set of its relations,
$\bm X$ is a set of variables, and $\bm X_R \subseteq \bm X$ for each
relation $R \in \bm R$.  The {\em incidence graph} of $Q$ is the
following bipartite graph:
$T \defeq (\bm R\cup \bm X, E \defeq \setof{(R,Z)}{Z \in \bm X_R})$.
It can be shown that $Q$ is {\em
  Berge-acyclic}~\cite{DBLP:journals/jacm/Fagin83} iff its incidence
graph is an undirected tree (see appendix~\ref{app:background}).
Unless otherwise stated, all queries in this paper are assumed to be
full, Berge-acyclic conjunctive queries.
We use bag semantics for query evaluation, and represent an {\em
  instance} of a relation $R \in \bm R$ by an $\bm X_R$-tensor,
$\bm M^{(R)}$, where $M^{(R)}_t$ is defined to be the multiplicity of
the tuple $t \in D_{\bm X_R}$ in the bag $R$.  The number of tuples in
the answer to $Q$ is:
\begin{align}
  |Q| =\, & \texttt{SUM}_{\bm X} \left(\bigotimes_{R \in \bm R} \bm M^{(R)}\right) \label{eq:q:size}
\end{align}

\begin{ex} \label{ex:rstk}
  Consider the following query:
  \begin{align*}
  Q(X,Y,Z,U,V,W) = & R(X,Y) \Join S(Y,Z,U) \Join T(U,V) \Join K(Y,W)
  \end{align*}
  Its incidence graph is
  $T = (\set{R,\ldots,K}\cup \set{X, \ldots, W},
  \set{(R,X),(R,Y),(S,Y),\ldots,(K,W)})$ and is an undirected tree.
  An instance of $R(X,Y)$ is represented by a matrix
  $\bm M^{(R)} \in \R^{D_X \times D_Y}$, where $M^{(R)}_{xy}=$ the
  number of times the tuple $(x,y)$ occurs in $R$.  Similarly, $S$ is
  represented by a tensor
  $\bm M^{(S)} \in \R^{D_Y \times D_Z \times D_U}$.  The size of the
  query's output is:
  \begin{align*}
    |Q| = & \sum_{x,y,z,u,v,w} M^{(R)}_{xy} M^{(S)}_{yzu} M^{(T)}_{uv} M^{(K)}_{yw}
  \end{align*}
\end{ex}

{\bf Degree Sequences} We denote by
$\R_+ \defeq \setof{x}{x \in \R, x \geq 0}$ and we say that a vector
$f\in \R_+^{[n]}$ is non-increasing if $f_{r-1} \geq f_r$ for $r=2,n$.

\begin{defn} \label{def:consistent} Fix a set of variables $\bm X$,
  with domains $D_Z$, $Z \in \bm X$.  A {\em degree sequence}
  associated with the dimension $Z \in \bm X$ is a non-increasing
  vector $\bm f^{(Z)} \in \R_+^{D_Z}$.  We call the index $r$ the {\em
    rank}, and $f_r^{(Z)}$ the {\em degree at rank $r$}.  An
  $\bm X$-tensor $\bm M$ is {\em consistent} w.r.t. $\bm f^{(Z)}$ if:
  \begin{align}
    \texttt{SUM}_{\bm X-\set{Z}}(\bm M) \leq & \bm f^{(Z)} \label{eq:marginals}
  \end{align}
  $\bm M$ is consistent with a tuple of degree sequences $\bm f^{(\bm X)} \defeq (\bm f^{(Z)})_{Z \in \bm X}$, if it is consistent with every $\bm f^{(Z)}$.  Furthermore, given $B \in \R_+ \cup \set{\infty}$, called the {\em max tuple multiplicity}, we say that $\bm M$ is {\em consistent} w.r.t. $B$ if $\bm M_t \leq B$ for all $t \in D_{\bm X}$.  
We denote:
  \begin{align}
    \calM_{\bm f^{(\bm X)},B}\defeq & \setof{\bm M \in \R^{D_{\bm X}}}{\bm M\mbox{ is  consistent  with } \bm f^{(\bm X)}, B} \nonumber\\
    \calM^+_{\bm f^{(\bm X)},B}\defeq & \setof{\bm M \in \R_+^{D_{\bm X}}}{\bm M\mbox{ is  non-negative and consistent  with } \bm f^{(\bm X)}, B} \label{eq:calm}
  \end{align}
\end{defn}

For a simple illustration consider two degree sequences
$\bm f \in \R^{[m]}, \bm g\in \R^{[n]}$.
$\calM_{\mathbf{f}, \mathbf{g},\infty}$ is the set of matrices $\bm M$
whose row-sums and column-sums are $\leq \bm f$ and $\leq \bm g$
respectively; $\calM^+_{\mathbf{f}, \mathbf{g},\infty}$ is the subset
of non-negative matrices; $\calM^+_{\mathbf{f}, \mathbf{g},B}$ is the
subset of matrices that also satisfy $M_{ij} \leq B$, $\forall i,j$.


{\bf Problem Statement} Fix a query $Q$.  For each relation $R$, we
are given a set of degree sequences
$\bm f^{(R,\bm X_R)} \defeq \left(\bm f^{(R,Z)}\right)_{Z \in \bm
  X_R}$, and a tuple multiplicity $B^{(R)} \in \R_+\cup \set{\infty}$.
We are asked to find the maximum size of $Q$ over all database
instances consistent with all degree sequences and tuple
multiplicities. To do this, we represent a relation instance $R$ by an
unknown tensor
$\bm M^{(R)}\in\calM^+_{\mathbf{f}^{(R,\bm X_R)},B^{(R)}}$ and an
unknown set of permutations $\bm\sigma^{(R)}\in S_{D_{\bm X_R}}$, and
solve the following problem:

\begin{pbm}[Degree Sequence Bound] \label{prob:main} Solve the
  following optimization problem:
  \begin{align}
    \mbox{Maximize:\ } & |Q|= \texttt{SUM}_{\bm X}\left( \bigotimes_{R \in \bm R}(\bm  M^{(R)} \circ \bm \sigma^{(R)})\right) \label{eq:prob:main}\\
    \mbox{Where:\ } & \forall R \in \bm R,\,\bm \sigma^{(R)} \in S_{D_{\bm X_R}},\,\bm M^{(R)} \in \calM^+_{\bm f^{(R, \bm X_R)},B^{(R)}}\nonumber
  \end{align}
\end{pbm}

This is a non-linear optimization problem: while the set $\calM^+$
defined in Eq.~\eqref{eq:calm} is a set of linear constraints, the
objective~\eqref{eq:prob:main} is non-linear.  In the rest of the
paper we describe an explicit formula for the degree sequence bound,
which is optimal (i.e. tight) when $B^{(R)}=\infty$, for all $R$, and
is optimal in a weaker sense in general.

\begin{ex} \label{ex:intro:cont} Continuing Example~\ref{ex:intro}, the four degree sequences in~\eqref{eq:ex:intro:d} correspond to the variables in each relation $R.X$, $S.X$, $S.Y$, and $T.Y$. Since $S.X$ has a shorter degree sequence than $R.X$, we pad it with a 0, so it becomes $\bm d^{(S.X)}=(5,1,0)$; similarly for $\bm d^{(S.Y)}$.  Instead of values $c,b,a$, we use indices $1,2,3$, similarly $u,v,w,z$ becomes $1,2,3,4$.  For example,
$S =
{\footnotesize
  \begin{array}{|l|l|l} \cline{1-2}
    3 & 1 & 3 \\
    3 & 2 & 2 \\
    2 & 3 & 1 \\ \cline{1-2}
  \end{array}}$
is isomorphic to the instance in Example~\ref{ex:intro}. It is represented by $\bm M \circ (\sigma, \tau)$ where the matrix $\bm M=$
${\footnotesize
  \begin{pmatrix}
    3 & 2 & 0 & 0 \\ 0 & 0 & 1 & 0 \\ 0 & 0 & 0 & 0
  \end{pmatrix}}$, (its row-sums are $5,1,0$ and column-sums are
$3, 2, 1, 0$, as required) and the permutations are, in two-line
notation, $\sigma \defeq {\footnotesize \begin{pmatrix} 1 & 2 & 3 \\ 3 & 2 & 1
  \end{pmatrix}}$ and $\tau\defeq$ the identity. Similarly, the
relations $R, T$, are represented by vectors $\bm a, \bm b$ and
permutations $\theta, \rho$. The bound of $Q$ is the maximum value of
$\sum_{i=1,3} \sum_{j=1,4} M_{\sigma(i)\tau(j)} a_{\theta(i)}
b_{\rho(j)}$, where $\bm M, \bm a, \bm b$ are consistent with the
given degree sequences, and $\sigma, \tau, \theta, \rho$ are
permutations.  This is a special case of Eq.~\eqref{eq:prob:main}.
\end{ex}

%

\section{The Star Query}

\label{sec:bowtie}

We start by computing the degree sequence bound for a {\em star
  query}, which is defined as:
\begin{align}
 Q_{\texttt{star}} =  & S(X_1, \ldots, X_d) \Join R^{(1)}(X_1) \Join \cdots \Join R^{(d)}(X_d)
\label{eq:star:query}
\end{align}
Assume that the domain of each variable $X_p$ is $[n_p]$ for some
$n_p > 0$, and denote by
$[\bm n] \defeq [n_1] \times \cdots \times [n_d]$.  Later, in
Sec.~\ref{sec:general}, we will use the bound for $Q_{\texttt{star}}$
as a building block to compute the degree sequence bound of a general
query $Q$.  There, $S$ will be one of the relations of the query, for
which we know the degree sequences $\bm f^{(X_p)} \in \R_+^{[n_p]}$,
$p=1,\ldots,d$ and tuple bound $B$, while the unary relations
$R^{(1)}, \ldots, R^{(d)}$ will be results of subqueries, which are
unknown.  The instance of each $R^{(p)}$ is given by an unknown vector
$\bm a^{(p)} \in \R_+^{[n_p]}$, which we can assume w.l.o.g.  to be
non-increasing, by permuting the domain of $X_p$ in both $S$ and in
$R^{(p)}$.  Therefore, $S$ will be represented by
$\bm M \circ \bm \sigma$, where
$\bm M \in \calM^+_{\bm f^{(\bm X)},B}$ is some tensor and
$\bm \sigma$ some permutation, and the size of $Q_{\texttt{star}}$ is:
\begin{align}
|Q_{\texttt{star}}| =  & \sum_{(i_1,\ldots,i_d)\in [\bm n]}\left(\bm M\circ \bm \sigma \right)_{i_1\cdots i_d}\cdot a^{(X_1)}_{i_1}\cdots a^{(X_d)}_{i_d} \label{eq:probl:star}
\end{align}
Equivalently: $|Q_{\texttt{star}}| =  \texttt{SUM}_{\bm X}\left((\bm M \circ \bm \sigma) \otimes \bigotimes_p \bm a^{(X_p)}\right) = (\bm M \circ \bm \sigma) \cdot\bm a^{(X_1)} \cdots \bm a^{(X_d)}$.

Our goal is to find the unknown $\bm M \circ \bm \sigma$ for which
$|Q_{\texttt{star}}|$ is maximized, no matter what the unary relations
are.  It turns out that $\bm \sigma$ can always be chosen the identity
permutation, thus it remains to find the optimal $\bm M$, which we
denote by $\bm C$.  This justifies:


\begin{pbm}[Worst-Case Tensor] \label{prob:pessimistic:tensor} Fix
  $\bm f^{(\bm X)}$, $B$.  Find a tensor
  $\bm C \in \calM_{\bm f^{(\bm X)},\infty}$ such that, for all
  $\bm \sigma \in S_{[\bm n]}, \bm M \in \calM^+_{\bm f^{(\bm X)},B}$,
  and all non-increasing vectors
  $\bm a^{(X_1)} \in \R_+^{[n_1]}, \ldots, \bm a^{(X_d)} \in
  \R_+^{[n_d]}$:
  \begin{align}
    (\bm M \circ \bm \sigma) \cdot \bm a^{(X_1)} \cdots \bm a^{(X_d)} \leq & \bm C \cdot \bm a^{(X_1)} \cdots \bm a^{(X_d)}
\label{eq:c:is:max}
  \end{align}
  %
\end{pbm}

In the rest of this section we describe the solution $\bm C$.  If all
entries in $\bm C$ are $\geq 0$ and $\leq B$, then
$\bm C \in \calM^+_{\bm f^{(\bm X)},B}$ and, by setting
$\bm M \defeq \bm C$ and $\bm \sigma\defeq$ the identity permutations,
the relation $S$ represented by $\bm M \circ \bm \sigma$ maximizes
$|Q_{\texttt{star}}|$, achieving our goal.  But, somewhat
surprisingly, we found that sometimes this worst-case $\bm C$ has
entries $>B$ or $<0$, yet it still achieves our goal of a tight upper
bound for $|Q_{\texttt{star}}|$.  This is why we allow
$\bm C \in \calM_{\bm f^{(\bm X)},\infty}$.

%
%
%
%
%
%
%

Let $\Delta_Z$ denote the {\em discrete derivative} of a
$\bm X$-tensor w.r.t. a variable $Z \in \bm X$, and $\Sigma_Z$ denote
the {\em discrete integral}.  Formally, if $\bm a \in \R^{[n]}$ is an
$Z$-vector, then, setting $a_0 \defeq 0$:
\begin{align}
  \forall i \in [n]: && (\Delta_Z \bm a)_i \defeq & a_i-a_{i-1} & (\Sigma_Z \bm a)_i = & \sum_{j=1,i} a_j
\label{eq:delta:sigma}
\end{align}
Notice that:
\begin{align}
\Sigma_Z (\Delta_Z \bm a) = & \Delta_Z (\Sigma_Z \bm a) = \bm a &  \texttt{SUM}_Z(\Delta_Z \bm a)=a_n \label{eq:sigma:delta:cancel}
\end{align}
The subscript in $\Delta, \Sigma$ indicates on which variable they
act.  For example, if $\bm M$ is an $XYZ$-tensor, then
$(\Delta_Y \bm M)_{xyz} \defeq M_{xyz} - M_{x(y-1)z}$.
One should think of the three operators
$\Delta_X, \Sigma_X, \texttt{SUM}_X$ as analogous to the
continuous operators $\frac{d\cdots}{dx}$, $\int \cdots dx$,
$\int_0^n\cdots dx$.
\\
\begin{defn} \label{def:pessimiistic:tensor} The {\em value} tensor,
  $\bm V^{\bm f^{(\bm X)},B}\in \R_+^{[\bm n]}$, is defined by the
  following linear optimization problem:
\begin{align}
\forall \bm m \in [\bm n]: && V_{\bm m}^{\bm f^{(\bm X)},B} \defeq  \mbox{\ Maximize:\ } &  \sum_{\bm s \leq \bm m} M_{\bm s}\label{eq:def:v}\\
  && \mbox{Where:\ } & \bm M \in \calM^+_{\bm f^{(\bm X)},B}\nonumber
\end{align}

The {\em worst-case} tensor,
$\bm C^{\bm f^{(\bm X)},B}\in \R^{[\bm n]}$, is defined as:
  \begin{align}
    && \bm C^{\bm f^{(\bm X)},B} \defeq & \Delta_{X_1}\cdots \Delta_{X_d} \bm V^{\bm f^{(\bm X)},B} \label{eq:def:c}
  \end{align}
\end{defn}

We will drop the superscripts when clear from the context, and write
simply $\bm V, \bm C$. Our main result in this section is:

\begin{thm} \label{th:main:star} Let $\bm f^{(\bm X)}, B$ be given as
  above, and let $\bm V, \bm C$ defined
  by~\eqref{eq:def:v}-\eqref{eq:def:c}. Then:
  \begin{enumerate}
  \item \label{item:th:main:star:2} $\bm C$ is a solution to
    Problem~\ref{prob:pessimistic:tensor}, i.e.
    $\bm C \in \calM_{\bm f^{(\bm X)},\infty}$ and it satisfies
    Eq.~\eqref{eq:c:is:max}.  Furthermore, it is tight in the
    following sense: there exists a tensor
    $\bm M \in \calM^+_{\bm f^{(\bm X)},B}$ and non-increasing vectors
    $\bm a^{(p)} \in \R_+^{[n_p]}$, $p=1,d$, such that
    inequality~\eqref{eq:c:is:max} (with $\bm \sigma$ the identity) is
    an equality.
  \item \label{item:th:main:star:3} If there exists any solution
    $\bm C' \in \calM^+_{\bm f^{(\bm X)}, B}$ to
    Problem~\ref{prob:pessimistic:tensor}, then $\bm C'=\bm C$.
  \item \label{item:th:main:star:6} When the number of dimensions is
    $d=2$ then $\bm C$ is integral and non-negative. If $d\geq 3$,
    $\bm C$ may have negative entries.
  \item \label{item:th:main:star:1} If $B < \infty$, then $\bm C$ may
    not be consistent with $B$, even if $d=2$.
  \item \label{item:th:main:star:4} For any non-increasing vectors
    $\bm a^{(X_p)} \in \R_+^{[n_p]}$, $p=2,d$, the vector
    $\bm C \cdot \bm a^{(X_2)} \cdots \bm a^{(X_d)}$ is in
    $\R^{[n_1]}_+$ and non-increasing.
  \item \label{item:th:main:star:5} Assume $B = \infty$.  Then the following holds:
    \begin{align}
      \forall \bm m \in [\bm n]: &&    V_{\bm m} = & \min\left(F^{(X_1)}_{m_1}, \ldots, F^{(X_d)}_{m_d}\right) \label{eq:v:is:max}
    \end{align}
    where $F^{(X_p)}_r \defeq \sum_{j\leq r} f^{(X_p)}_j$ is the CDF
    associated to the PDF $\bm f^{(X_p)}$, for $p=1,d$. Moreover,
    $\bm C$ can be computed by Algorithm~\ref{alg:fast_C_alg}, which
    runs in time $\mathbf{O}(\sum_p n_p)$.  This further implies that
    $\bm C \geq 0$, in other words
    $\bm C \in \calM^+_{\bm f^{(\bm X)}, \infty}$.
  \end{enumerate}
\end{thm}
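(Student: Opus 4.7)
The plan is to ground the proof in one core identity plus one rearrangement lemma. Telescoping (with $V_{\bm s}=0$ whenever some $s_p=0$) yields $\sum_{\bm s \leq \bm k} C_{\bm s}=V_{\bm k}$ for every $\bm k$. Combining this with Abel summation applied dimension by dimension to any non-increasing $\bm a^{(p)}$---writing $a^{(p)}_{i_p}=\sum_{k_p \geq i_p}\delta^{(p)}_{k_p}$ with $\delta^{(p)}_{k_p}\defeq a^{(p)}_{k_p}-a^{(p)}_{k_p+1}\geq 0$---gives the dual pair of identities
\begin{align*}
\bm C\cdot \bm a^{(X_1)}\cdots \bm a^{(X_d)}&=\sum_{\bm k}\Bigl(\prod_p\delta^{(p)}_{k_p}\Bigr)V_{\bm k},\\
(\bm M\circ \bm\sigma)\cdot \bm a^{(X_1)}\cdots \bm a^{(X_d)}&=\sum_{\bm k}\Bigl(\prod_p\delta^{(p)}_{k_p}\Bigr)\sum_{\bm s\leq \bm k}(\bm M\circ\bm\sigma)_{\bm s}.
\end{align*}
Since every $\delta^{(p)}_{k_p}\geq 0$, inequality~\eqref{eq:c:is:max} reduces to the per-box inequality $\sum_{\bm s\leq \bm k}(\bm M\circ\bm\sigma)_{\bm s}\leq V_{\bm k}$ for every $\bm k\in[\bm n]$.

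Proving this box inequality is the main obstacle, and is where the non-increasing nature of $\bm f^{(X_p)}$ is crucially used. I would rewrite the left-hand side as the mass of $\bm M$ in $S_1\times\cdots\times S_d$ where $S_p\defeq\sigma_p([k_p])$, then introduce the corner-shifting permutation $\tau_p$ that sends the $j$-th smallest element of $S_p$ to index $j$, and form a new tensor $\bm M''$ on $[\bm k]$ by relocating $\bm M|_{\prod_p S_p}$ through $\tau$. The key verification is feasibility: the marginal of $\bm M''$ at rank $j\leq k_p$ equals the marginal of $\bm M$ at rank $\tau_p^{-1}(j)\geq j$, hence is bounded by $f^{(X_p)}_{\tau_p^{-1}(j)}\leq f^{(X_p)}_j$ because $\bm f^{(X_p)}$ is non-increasing; the entry bound $\bm M''\leq B$ is preserved directly. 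Thus $\bm M''\in\calM^+_{\bm f^{(\bm X)},B}$ is supported in $[\bm k]$ with the same mass as the original box sum, which is therefore $\leq V_{\bm k}$.

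With those two pieces in hand, the remaining items follow by routine bookkeeping. Part~\ref{item:th:main:star:2}: the combined identities give Eq.~\eqref{eq:c:is:max}; the inclusion $\bm C\in\calM_{\bm f^{(\bm X)},\infty}$ follows because $\texttt{SUM}_{\bm X-\{X_p\}}(\bm C)$ telescopes into a single $\Delta_{X_p}$ applied to $V$ evaluated at the top of every other coordinate, bounded by $f^{(X_p)}$; tightness is witnessed by $\bm a^{(p)}=\mathbf{1}_{[k_p]}$ paired with the $\bm M$ that realizes $V_{\bm k}$. Part~\ref{item:th:main:star:3}: any $\bm C'\in\calM^+$ satisfying Eq.~\eqref{eq:c:is:max} must force $\sum_{\bm s\leq \bm k}C'_{\bm s}=V_{\bm k}$ (plug in the same indicator tests), and hence $\bm C'=\bm C$ by discrete inversion. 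Part~\ref{item:th:main:star:6} ($d=2$): $V$ is the value of a transportation LP, integer-valued by total unimodularity, with $\bm C\geq 0$ by a direct exchange/supermodularity argument on the 2D LP; for $d\geq 3$ I would exhibit a small $B<\infty$ example producing a negative $C$-entry. Part~\ref{item:th:main:star:1}: a small 2D example such as $\bm f^{(1)}=\bm f^{(2)}=(2,2,2)$ with $B=1$ yields an entry of $\bm C$ strictly exceeding $B$. Part~\ref{item:th:main:star:4}: applying Abel only in dimensions $2,\ldots,d$ expresses $(\bm C\cdot \bm a^{(X_2)}\cdots \bm a^{(X_d)})_{i_1}$ as a nonnegative combination of first differences $V_{i_1,\bm k_{-1}}-V_{i_1-1,\bm k_{-1}}$; these are $\geq 0$ by monotonicity of $V$, and nonincreasing in $i_1$ by concavity of $V$ in its first coordinate---a standard property of LP value functions parameterized linearly in the objective.

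For Part~\ref{item:th:main:star:5} ($B=\infty$) I would derive the closed form $V_{\bm k}=\min_p F^{(X_p)}_{k_p}$ by matching bounds: the upper bound $\sum_{\bm s\leq \bm k}M_{\bm s}\leq F^{(X_p)}_{k_p}$ comes from summing the dimension-$p$ marginal constraint over $[k_p]$, and the matching lower bound is realized by the product tensor $M_{\bm s}=f^{(X_{p^*})}_{s_{p^*}}\prod_{p\neq p^*}f^{(X_p)}_{s_p}/F^{(X_p)}_{k_p}$ on $\bm s\leq \bm k$, where $p^*=\argmin_p F^{(X_p)}_{k_p}$ (the marginal check in dimension $p\neq p^*$ uses exactly $F^{(X_{p^*})}_{k_{p^*}}\leq F^{(X_p)}_{k_p}$). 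With the formula in hand, $\bm C\geq 0$ is obtained by exhibiting an explicit non-negative tensor with the required cumulative sums, namely the rank-matching ``comonotonic'' coupling of the marginals $\bm f^{(X_p)}$, and invoking the uniqueness of Part~\ref{item:th:main:star:3} to identify it with $\bm C$. The algorithm then computes $\bm C$ in $O(\sum_p n_p)$ time by merging the sorted $F$-value sequences and emitting an increment whenever the argmin switches, since $\bm C$ is nonzero only at those transitions.
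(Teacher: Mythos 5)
Your overall architecture is sound and, for the central inequality, you take a genuinely different and arguably cleaner route than the paper. The paper first proves that the permutations $\bm\sigma$ can be assumed to be the identity (its Lemma~\ref{lemma:sigma:identity}), via an iterative mass-shifting argument (Proposition~\ref{prop:matching:1}) that pushes weight between hyperplanes while preserving all marginals and the bound $B$; only then does it reduce to one-zero vectors and the identity $\Sigma_{X_1}\cdots\Sigma_{X_d}\bm C=\bm V$. You instead keep $\bm\sigma$ arbitrary, reduce via Abel summation to the box inequality $\sum_{\bm s\leq\bm k}(\bm M\circ\bm\sigma)_{\bm s}\leq V_{\bm k}$, and prove it by relocating the sub-block $\bm M|_{\prod_p\sigma_p([k_p])}$ to the corner box $[\bm k]$; the feasibility check (the $j$-th smallest element of a $k_p$-subset of $[n_p]$ has index $\geq j$, so the relocated marginal is $\leq f^{(X_p)}_j$ by monotonicity of the degree sequence) is correct and replaces the paper's more delicate sorting lemma in one stroke. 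Your counterexample for item~\ref{item:th:main:star:1} ($\bm f^{(1)}=\bm f^{(2)}=(2,2,2)$, $B=1$, giving $C_{33}=2>B$) checks out, and your primal construction of the product tensor realizing $\min_pF^{(X_p)}_{k_p}$ for item~\ref{item:th:main:star:5} is a valid, more elementary alternative to the paper's dual-LP argument.

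There are, however, two genuine gaps. First, for item~\ref{item:th:main:star:4} you derive the non-increasing property from ``concavity of $V$ in its first coordinate --- a standard property of LP value functions parameterized linearly in the objective.'' This citation is wrong: the value of $\max_{\bm M\in P}\langle\bm c,\bm M\rangle$ is \emph{convex} in $\bm c$, not concave, and here $m_1$ parameterizes neither the objective affinely nor the right-hand side, so no off-the-shelf principle applies. The concavity $\Delta_{X_1}^2\bm V\leq 0$ is true but requires an argument exploiting the exchangeable structure of the slice constraints --- the paper's Lemma~\ref{lemma:delta:2:negative} truncates and extends an optimal dual solution of the $(k-1)$-box to bound $V_{k-2}$ and $V_k$ simultaneously. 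For $B=\infty$ concavity does follow from the explicit $\min$ formula, but the item is claimed for all $B$. Second, for item~\ref{item:th:main:star:6} your ``direct exchange/supermodularity argument'' for $\bm C\geq0$ when $d=2$, $B<\infty$ is only a restatement of the goal (supermodularity of $V_{p,q}$ \emph{is} $\bm C\geq 0$); the paper needs the structure of optimal dual solutions (sorted $0$--$1$ vectors, the thresholds $s_{p,q},t_{p,q}$ and their monotonicity) and a four-case analysis to establish it. Both claims are correct, but as written neither is proved.
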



\begin{algorithm}
\caption{Efficient construction of $\bm C$ when $B=\infty$}
\label{alg:fast_C_alg}
\begin{algorithmic}
\STATE $\forall p=1,d: s_p \leftarrow 1$; \ \ \ $\bm C=0$;
\WHILE{$\forall p : s_p < n_p$}
\STATE $p_{min} \leftarrow \argmin_p(f^{(X_p)}_{s_p})$ \ \ \ \ $d_{min} \leftarrow \min_p(f^{(X_p)}_{s_p})$
\STATE $C_{s_1,\ldots,s_d} \leftarrow d_{min}$
\STATE $\forall p=1,d: f^{(X_p)}_{s_p} \leftarrow f^{(X_p)}_{s_p}-d_{min}$
\STATE $s_{p_{min}} \leftarrow s_{p_{min}} + 1$
\ENDWHILE
\RETURN $C$
\end{algorithmic}
\end{algorithm}

In a nutshell, the theorem asserts that the tensor $\bm C$ defined
in~\eqref{eq:def:c} is the optimal solution to
Problem~\ref{prob:pessimistic:tensor}; this is stated in
item~\ref{item:th:main:star:2}.  Somewhat surprisingly, $\bm C$ may be
inconsistent w.r.t. $B$, and may even be negative.  When that happens,
then, by item~\ref{item:th:main:star:3}, no consistent solution exists
to Problem~\ref{prob:pessimistic:tensor}, hence we have to make do
with $\bm C$.  In that case $\bm C$ may not represent a traditional
bag $S$, for example if it has entries $<0$. However, this will not be
a problem for computing the degree sequence bound in
Sec.~\ref{sec:general}, because all we need is to compute the product
$\bm C \cdot \bm a^{(X_2)} \cdots \bm a^{(X_d)}$, which we need to be
non-negative, and non-increasing: this is guaranteed by
item~\ref{item:th:main:star:4}. The last item gives more insight into
$\bm V$ and, by extension, into $\bm C$.  Recall that $V_{\bm m}$,
defined by~\eqref{eq:def:v}, is the largest possible sum of values of
a consistent $m_1 \times m_2 \times \cdots \times m_d$ tensor $\bm M$.
Since the sum in each hyperplane $X_1 = r$ of $\bm M$ is
$\leq f^{(X_1)}_r$, it follows that
$\sum_{\bm s \leq \bm m}M_{\bm s} \leq \sum_{r=1,m_1} f^{(X_1)}_r
\defeq F^{(X_1)}_{m_1}$.  Repeating this argument for each dimension
$X_p$ implies that $V_{\bm m} \leq \min_{p=1,d} (F^{(X_p)}_{m_p})$.
Item~\ref{item:th:main:star:5} states that this becomes an equality,
when $B=\infty$.

\begin{ex}
  \label{ex:simple:c} Suppose that we want to maximize
  $\bm a^T \cdot \bm M \cdot \bm b$, where $\bm M$ is a $3\times 4$
  matrix with degree sequences $\bm f=(6,3,1)$ and $\bm g=(4,3,2,1)$;
  assume $B= \infty$.  The vectors $\bm a, \bm b$ are non-negative and
  non-increasing, but otherwise unknown.  The theorem asserts that
  this product is maximized by the worst-case matrix $\bm C$.  We
  show here the matrices $\bm C$ and $\bm V$ defined
  by~\eqref{eq:def:v} and~\eqref{eq:def:c}, together with degree
  sequences $\bm f, \bm g$ next to $\bm C$, and the cumulative
  sequences $\bm F = \Sigma \bm f, \bm G = \Sigma \bm g$ next to
  $\bm V$:
  \begin{small}
  \begin{align*}
    \bm C =\,\,
    &
      \begin{pNiceMatrix}[first-row,first-col]
        \ & 4 & 3 & 2 & 1 \\
        6 & 4 & 2 & 0 & 0 \\
        3 & 0 & 1 & 2 & 0 \\
        1 & 0 & 0 & 0 & 1
      \end{pNiceMatrix}
   & \bm V=
   & 
     \begin{pNiceMatrix}[first-row,first-col]
       \ & 4 & 7 & 9 & 10 \\
       6 & 4 & 6 & 6 & 6 \\
       9 & 4 & 7 & 9 & 9 \\
       10 & 4 & 7 & 9 & 10
     \end{pNiceMatrix}
  \end{align*}
  \end{small}
  We can check that $V_{m_1m_2} = \min(F_{m_1},G_{m_2})$; for example
  $V_{31} = \min(10,4)=4$.  The worst-case matrix $\bm C$ is defined
  as the second discrete derivative of $\bm V$, more precisely
  $C_{m_1m_2}=V_{m_1m_2}-V_{m_1-1,m_2}-V_{m_1,m_2-1}+V_{m_1-1,m_2-1}$.
  Alternatively, $\bm C$ can be computed greedily, using
  Algorithm~\ref{alg:fast_C_alg}: start with
  $C_{11} \leftarrow \min(f_1,g_1) = 4$, decrease both $f_1, g_1$ by
  4, set the rest of column 1 to 0 (because now $g_1=0$) and continue
  with $C_{12}$, etc.  Another important property, which we will prove
  below in the Appendix (Eq.~\eqref{eq:step:3}), is that, for all
  $m_1, m_2$, $\sum_{i \leq m_1, j\leq m_2} C_{ij}=V_{m_1m_2}$; for
  example $\sum_{i\leq 2, j\leq 3} C_{ij} = 4+2+1+2=9=V_{23}$.
\end{ex}

While the proof of Theorem~\ref{th:main:star} provides interesting insight into the structure of the degree sequence bound, it is not necessary for understanding the remainder of the paper and requires the introduction of additional notation and machinery. Therefore, for the sake of space and clarity, we omit it from the main text and instead include a proof of each item in the appendix section \ref{app:thm:main:star}.

\section{The Berge-Acyclic Query}

\label{sec:general}

We now turn to the general problem~\ref{prob:main}.  Fix a
Berge-acyclic query $Q$ with relations $\bm R \defeq \bm R(Q)$, degree
sequences $\bm f^{R,Z}$, and max tuple multiplicities $B^{(R)}$ as in
problem~\ref{prob:main}.

\subsection{The Degree Sequence Bound}

\begin{thm}
\label{th:general}
For any tensors
$\bm M^{(R)} \in \calM^+_{\bm f^{(R, \bm X_R)},B^{(R)}}$ and
permutations $\bm \sigma^{(R)}$, for $R \in \bm R$, the following
holds:
\begin{align}
  \texttt{SUM}_{\bm X}\left( \bigotimes_{R \in \bm R}(\bm  M^{(R)} \circ \bm \sigma^{(R)})\right)
\leq &
  \texttt{SUM}_{\bm X}\left( \bigotimes_{R \in \bm R} \bm C^{\bm f^{(R, \bm X_R)},B^{(R)}} \right)
\defeq DSB(Q)
       \label{eq:th:general}
\end{align}
where $\bm C^{\bm f^{(R, \bm X_R)},B^{(R)}}$ is the worst-case tensor
from Def.~\ref{def:pessimiistic:tensor}.  
\end{thm}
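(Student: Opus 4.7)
The plan is to prove this inequality by induction on $|\bm R|$, with Theorem~\ref{th:main:star} as the central tool, combined with the rearrangement inequality and a cumulative majorization argument. The key structural fact from Theorem~\ref{th:main:star} that makes the induction go through is item~\ref{item:th:main:star:4}: contracting $\bm C^{(R)}$ against non-increasing non-negative weights on all but one attribute yields a non-increasing non-negative vector, which is precisely what one needs to apply the star-query bound at the next relation up the tree.

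The base case $|\bm R|=1$ is immediate: since $\texttt{SUM}$ is permutation-invariant, $\texttt{SUM}_{\bm X}(\bm M^{(R)} \circ \bm \sigma^{(R)}) = \texttt{SUM}_{\bm X}(\bm M^{(R)})$, and Theorem~\ref{th:main:star}(\ref{item:th:main:star:2}) applied with $\bm a^{(X_p)}:=\bm 1$ on every attribute of $R$ gives the bound $\leq \texttt{SUM}_{\bm X}(\bm C^{(R)})$. For the inductive step I root the incidence tree of $Q$ and pick a leaf relation $R$; let $X_0$ be its unique non-private attribute and $X_1,\ldots,X_k$ its private attributes. Letting $Q' := Q \setminus R$ denote the Berge-acyclic subquery with one fewer relation,
\begin{align*}
|Q| = \sum_{x_0} g(x_0)\,h(x_0), \qquad DSB(Q) = \sum_{x_0} \bar{\bm C}^{(R)}(x_0)\,h^C(x_0),
\end{align*}
where $g(x_0) := \texttt{SUM}_{X_1,\ldots,X_k}(\bm M^{(R)} \circ \bm \sigma^{(R)})(x_0)$, $\bar{\bm C}^{(R)} := \texttt{SUM}_{X_1,\ldots,X_k}(\bm C^{(R)})$, and $h(x_0)$ (resp.\ $h^C(x_0)$) is the size of $Q'$ with $X_0$ frozen at $x_0$ under the original (resp.\ worst-case) tensors. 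Theorem~\ref{th:main:star}(\ref{item:th:main:star:4}), applied to $R$ with all-ones weights on the private attributes, guarantees that $\bar{\bm C}^{(R)}$ is non-negative and non-increasing.

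I would then bound $|Q|$ by a chain of three inequalities. The rearrangement inequality gives $\sum_{x_0} g(x_0) h(x_0) \leq \sum_i g^\downarrow(i) h^\downarrow(i)$, where $\downarrow$ denotes the non-increasing rearrangement. Next, $g^\downarrow$ is cumulatively majorized by $\bar{\bm C}^{(R)}$: the top-$k$ sum of $g = \bar{\bm M}^{(R)}$ is realizable by a consistent tensor (re-permuting $\bm M^{(R)}$'s $X_0$-axis preserves membership in $\calM^+_{\bm f^{(R,\bm X_R)},B^{(R)}}$), hence bounded by $V_{(k, n_1, \ldots, n_k)}$, which in turn equals $\sum_{i \leq k}\bar{\bm C}^{(R)}(i)$ since $\bm C^{(R)}$ is the iterated discrete derivative of $\bm V$. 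Combined with $h^\downarrow$ being non-increasing, this gives $\sum_i g^\downarrow(i) h^\downarrow(i) \leq \sum_i \bar{\bm C}^{(R)}(i) h^\downarrow(i)$. Finally, the induction hypothesis, strengthened to assert that $h^C$ is non-negative non-increasing and that $h^\downarrow$ is cumulatively majorized by $h^C$, together with the non-increasingness of $\bar{\bm C}^{(R)}$, yields $\sum_i \bar{\bm C}^{(R)}(i) h^\downarrow(i) \leq \sum_i \bar{\bm C}^{(R)}(i) h^C(i) = DSB(Q)$.

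The hard part will be establishing this strengthened induction hypothesis uniformly over every choice of slice attribute $X_0$, not merely the attachment attribute of the chosen leaf. When $X_0$ does not appear in the peeled leaf $R$, the slice $h(x_0)$ becomes a weighted sum over $R$'s attachment attribute rather than a clean product, and maintaining the majorization invariant requires combining item~\ref{item:th:main:star:4} (which keeps contractions of $\bm C^{(R)}$ with non-increasing weights non-increasing in the remaining attribute) with a multi-sequence form of the rearrangement inequality. A careful choice of which leaf to peel at each stage---preferring, whenever possible, leaves whose attachment attribute is $X_0$---should simplify the bookkeeping and make the invariant propagate through the entire incidence tree.
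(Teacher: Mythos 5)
Your ingredients are the right ones and match the paper's: item~\ref{item:th:main:star:2} of Theorem~\ref{th:main:star} (equivalently, the majorization $\sum_{i\le k}g^{\downarrow}(i)\le V_{(k,n_1,\ldots,n_k)}=\sum_{i\le k}\bar{C}^{(R)}(i)$ via Eq.~\eqref{eq:step:3}) to swap in one worst-case tensor at a time, item~\ref{item:th:main:star:4} to keep contracted vectors non-negative and non-increasing, and Abel summation to combine. But the induction as you organize it has a genuine hole, which you flag yourself and then wave at: the strengthened hypothesis (cumulative majorization of the slice vector $h$ by $h^{C}$, plus monotonicity of $h^{C}$) must be available for \emph{every} slice variable $X_0$ of the smaller query, whereas your step only establishes it when $X_0$ is simultaneously the attachment variable of a peelable leaf \emph{and} still a variable of the remainder, so that $h$ factors as a clean pointwise product. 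Already at depth two this fails for almost any non-star query: for the path $R_1(X_1,X_2)\Join\cdots\Join R_4(X_4,X_5)$ sliced at $X_3$, no leaf relation even contains $X_3$, so your heuristic of ``preferring leaves whose attachment attribute is $X_0$'' has nothing to select, and every peel lands in the unhandled ``weighted sum over the attachment attribute'' case. That case is not bookkeeping; it is the entire content of the theorem for deep trees. The workable repair is to root the incidence tree at $X_0$ and induct on subtrees, showing that cumulative majorization is preserved under pointwise products of sorted non-negative vectors (two Abel summations) and under contraction against $\bm C$ (item~\ref{item:th:main:star:2} for the swap, item~\ref{item:th:main:star:4} for monotonicity in the remaining coordinate) --- at which point you have reconstructed the paper's bottom-up argument in different clothing. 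Separately, one local claim is false as stated: an arbitrary re-permutation of the $X_0$-axis of $\bm M^{(R)}$ does \emph{not} preserve membership in $\calM^{+}_{\bm f^{(R,\bm X_R)},B^{(R)}}$ (a marginal $(5,1)$ dominated by $\bm f=(5,1)$ permuted to $(1,5)$ violates the constraint). What is true, and suffices, is that placing the top-$k$ slices at positions $1,\ldots,k$ in decreasing order of marginal and zeroing the rest yields a consistent tensor, because the $i$-th largest entry of any vector dominated by a non-increasing $\bm f$ is at most $f_i$.

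For comparison, the paper sidesteps your hard case entirely by a different choice of invariant. It roots the tree once, grows a ``good'' set of nodes bottom-up, and maintains only (i) the total-sum inequality after substituting worst-case tensors for the already-processed relations and (ii) non-negativity and monotonicity of the intermediate vectors $\bm a^{(Z)}$. When a relation is processed, all of its child directions are already sorted by (ii), and the single unsorted direction --- the parent expression, still built from the original tensors --- is sorted by an explicit permutation that is absorbed into the $\bm\sigma^{(R)}$ over which the theorem quantifies. No slice-wise majorization statement at an internal variable is ever needed. If you want to keep the leaf-peeling formulation, you must actually prove the strengthened hypothesis for arbitrary $X_0$ along the lines above; as written, the proposal asserts it where the proof is hardest.
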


The theorem simply says that the upper bound to the query $Q$ can be
computed by evaluating $Q$ on the worst case instances, represented by
the worst case tensors $\bm C^{\bm f^{(R, \bm X_R)},B^{(R)}}$.  We
call this quantity the {\em degree sequence bound} and denote it by
$DSB(Q)$.  When all max tuple multiplicities $B^{(R)}$ are $\infty$,
then the bound is tight, because in that case every worst-case tensor
$\bm C^{\bm f^{(R, \bm X_R)},\infty}$ is in
$\calM^+_{\bm f^{(R, \bm X_R)},\infty}$ (by Th.~\ref{th:main:star}
item~\ref{item:th:main:star:5}); otherwise the bound may not be tight,
but it is locally tight, in the sense of Th.~\ref{th:main:star}
item~\ref{item:th:main:star:2}.

Before we sketch the main idea of the proof, we note that an immediate
consequence is that the degree sequence bound can be computed using a
special case of the FAQ algorithm~\cite{DBLP:conf/pods/KhamisNR16}.
We describe this briefly in Algorithm~\ref{alg:bottom:up}.  Recall
that the incidence graph of $Q$ is a tree $T$.  Choose an arbitrary
relation $\texttt{ROOT}\in \bm R(Q)$ and designate it as root, then
make $T$ a directed tree by orienting all its edges away from the
root. Denote by $\texttt{parent}(R) \in \bm X_R$ the parent node of a
relation $R\neq \texttt{ROOT}$, associate an $X$-vector $\bm a^{(X)}$
to each variable $X$, and a $\texttt{parent}(R)$-vector $\bm w^{(R)}$
to each relation name $R$, then compute these vectors by traversing
the tree bottom-up, as shown in Algorithm~\ref{alg:bottom:up}. Notice that, when $X$ is a leaf variable, then $\text{children}(X)=\emptyset$ and $\bm a^{(X)}=(1,1,\ldots,1)^T$; similarly, if $R(X)$ is leaf relation of arity 1 with variable $X$, then $\bm w^{(R)}$ is the degree sequence of its variable, because $\bm w^{(R)}= \bm C^{(\bm f^{(R,X)},B^{(R)})}=\bm f^{(R,X)}$.
%
We provide an example in Sec.~\ref{app:th:general}.  It follows:

\begin{cor} \label{cor:o:n} The degree sequence bound $DSB(Q)$ can be
  computed in time polynomial in the size of the largest domain (data
  complexity).
\end{cor}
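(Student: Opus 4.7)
The plan is to analyze Algorithm~\ref{alg:bottom:up} directly and to bound the cost of every operation it performs in terms of $n \defeq \max_Z n_Z$, the size of the largest domain. Since we are in the data complexity setting, the query $Q$ is treated as a constant: the incidence tree $T$ has a constant number of nodes, the arity $k_R \defeq |\bm X_R|$ of every relation is bounded, and so is the degree of every vertex of $T$. It therefore suffices to show that each individual step of Algorithm~\ref{alg:bottom:up} runs in $\text{poly}(n)$.

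First I would account for the traversal itself. The algorithm visits each tree node once and, at each variable node $X$, computes $\bm a^{(X)}$ as the element-wise product of the vectors $\bm w^{(R')}$ arriving from the children $R'$ of $X$. Each $\bm w^{(R')}$ is an $X$-vector of length at most $n$, and there are $O(1)$ of them, so this step costs $O(n)$ per variable node. At each relation node $R$, the algorithm computes $\bm w^{(R)} = \texttt{SUM}_{\bm X_R \setminus \{\texttt{parent}(R)\}}\bigl(\bm C^{\bm f^{(R,\bm X_R)},B^{(R)}} \otimes \bigotimes_{X \in \text{children}(R)} \bm a^{(X)}\bigr)$, which is a sum of at most $n^{k_R - 1}$ products and takes $O(n^{k_R}) = \text{poly}(n)$ time, provided that $\bm C^{\bm f^{(R,\bm X_R)},B^{(R)}}$ has already been materialized.

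The remaining task is to construct $\bm C^{\bm f^{(R,\bm X_R)},B^{(R)}}$ in $\text{poly}(n)$ time. When $B^{(R)} = \infty$, Theorem~\ref{th:main:star} item~\ref{item:th:main:star:5} already supplies Algorithm~\ref{alg:fast_C_alg}, which produces $\bm C$ in $O(\sum_p n_p) = O(k_R \cdot n)$ time (the tensor is sparse with at most $\sum_p n_p$ nonzero entries). When $B^{(R)} < \infty$, I would invoke Definition~\ref{def:pessimiistic:tensor} directly: the value tensor $\bm V$ is defined entry-wise by a linear program whose variables are the entries of $\bm M$ (there are at most $n^{k_R}$ of them) and whose constraints are the non-negativity bounds, the $B^{(R)}$ upper bounds, and the $k_R$ degree-sequence constraints, giving polynomially many constraints in total. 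A standard polynomial-time LP solver thus computes each $V_{\bm m}$ in $\text{poly}(n)$, and since there are $n^{k_R}$ entries of $\bm V$ to compute, the whole tensor is built in $\text{poly}(n)$. Finally $\bm C = \Delta_{X_1}\cdots\Delta_{X_{k_R}}\bm V$ is obtained from $\bm V$ by $O(2^{k_R})$ lookups per entry, which is still $\text{poly}(n)$.

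The main obstacle I expect is the $B^{(R)} < \infty$ case, since the fast greedy Algorithm~\ref{alg:fast_C_alg} no longer applies and one has to argue that the LP defining $\bm V$ can be solved within the stated complexity. A sharper analysis might exploit the recursive structure of $\bm V$ to avoid solving one LP per entry, but for the corollary's polynomial-time claim the blunt LP argument suffices. Summing the bounded number of per-node costs over $T$ yields an overall $\text{poly}(n)$ bound on Algorithm~\ref{alg:bottom:up}, which by Theorem~\ref{th:general} computes exactly $DSB(Q)$.
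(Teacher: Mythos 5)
Your proof is correct and follows essentially the same route as the paper, which presents the corollary as an immediate consequence of running the bottom-up Algorithm~\ref{alg:bottom:up} over the incidence tree with the query treated as fixed. The one place you go beyond the paper's (very terse) justification is in spelling out how to materialize $\bm C^{\bm f^{(R,\bm X_R)},B^{(R)}}$ when $B^{(R)}<\infty$ by solving the entry-wise LPs of Definition~\ref{def:pessimiistic:tensor}; the paper leaves this implicit, and your blunt LP argument is a valid way to close that gap.
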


\begin{algorithm}[t]
  \caption{Computing
    $DSB(Q) = \texttt{SUM}_{\bm X}\left( \bigotimes_{R \in \bm R} \bm C^{\bm f^{(R, \bm X_R)},B^{(R)}} \right)$}
\label{alg:bottom:up}
\begin{algorithmic}
\FOR{each variable  $X \in \bm X$ and non-root relation $R \in \bm R$, $R\neq \texttt{root}$, in bottom-up order}{
   \STATE $\bm a^{(X)} \defeq \bigotimes_{R \in \text{children}(X)} \bm w^{(R)}$\ \ \ \ // element-wise product
   \STATE $\bm w^{(R)} \defeq \bm C^{\bm f^{(R, \bm X_R)},B^{(R)}}\cdot\bm a^{(X_2)}\cdots\bm a^{(X_k)}$
\ // where $\bm X_R=(X_1,\ldots,X_k)$, $X_1=\texttt{parent}(R)$
 }
\ENDFOR
\RETURN $C^{\bm f^{(\texttt{root}, \bm X_{\texttt{ROOT}})},B^{(\texttt{ROOT})}}\cdot\bm a^{(X_1)}\cdot\bm a^{(X_2)}\cdots\bm a^{(X_k)}$
\end{algorithmic}
\end{algorithm}

In the rest of this section we sketch the proof of
Theorem~\ref{th:general}, 
mostly to highlight the role of item~\ref{item:th:main:star:4} of
Theorem~\ref{th:main:star}, and defer the formal details to
Appendix~\ref{app:th:general}.  Fix tensors $\bm M^{(R)}$ and
permutations $\bm \sigma^{(R)}$, for each $R \in \bm R$.  Choose one
relation, say $S \in \bm R$, assume it has $k$ variables
$X_1, \ldots, X_k$, then write the LHS of~\eqref{eq:th:general} as:
\begin{align}
  &
    \texttt{SUM}_{\bm X_S}\left( \left(\bm M^{(S)} \circ \bm \sigma^{(S)}\right)
    \otimes \bm b_1 \otimes \cdots \otimes \bm b_k\right) \label{eq:tree:product}
\end{align}
where each $\bm b_p$ is a tensor expression sharing only variable
$X_p$ with $S$, where we sum out all variables except $X_p$ (using
Eq.~\eqref{eq:push:aggregates:down}).  Compute the vectors $\bm b_p$
first, sort them in non-decreasing order, let $\tau_p$ be the
permutation that sorts $\bm b_p$, and
$\bm \tau\defeq (\tau_1,\ldots,\tau_k)$.  Then~\eqref{eq:tree:product}
equals:
\begin{align}
 &   \texttt{SUM}_{\bm X_S}\left( \left(\bm M^{(S)} \circ \bm  \sigma^{(R)}\circ \bm \tau\right) \otimes (\bm b_1 \circ\tau_1)  \otimes \cdots \otimes (\bm b_k \circ\tau_k) \right)
\label{eq:tree:product:2}
\end{align}
because sums are invariant under permutations.
Since each $\bm b_p \circ \tau_p$ is sorted, by
item~\ref{item:th:main:star:2} of Theorem~\ref{th:main:star}, the
expression above is $\leq$ to the expression obtained by replacing
$\bm M^{(S)} \circ \bm \sigma^{(S)}\circ \bm \tau$ with the worst-case
tensor $\bm C^{f^{(S,\bm X_S)},B^{(S)}}$.  Thus, every tensor could be
replaced by the worst-case tensor, albeit at the cost of applying some
new permutations $\tau_p$ to other expressions.  To avoid introducing
these permutations, we proceed as follows.  We choose an orientation
of the tree $T$, as in Algorithm~\ref{alg:bottom:up}, then prove
inductively, bottom-up the tree, that each tensor
$\bm M \circ \bm \sigma$ can be replaced by the worst-case tensor
$\bm C$ without decreasing the LHS of~\eqref{eq:th:general}, {\em and}
that the resulting vector (in the bottom-up computation) is sorted.
To prove this, we re-examine Eq.~\eqref{eq:tree:product}, assuming
$X_1$ is the parent variable of $S$.  By induction, all the tensors
occurring in $\bm b_2, \ldots, \bm b_k$ have already been replaced
with worst-case tensors, {\em and} their results are non-increasing
vectors.  Then, in Eq.~\eqref{eq:tree:product:2} it suffices to apply
the permutation $\tau$ to the parent expression $\bm b_1$ (which still
has the old tensors $\bm M \circ \bm \sigma$), use
item~\ref{item:th:main:star:2} of Theorem~\ref{th:main:star} to
replace $\bm M^{(S)} \circ \bm \sigma^{(S)}\circ \bm \tau$ by
$\bm C^{f^{(S,\bm X_S)},B^{(S)}}$, and, finally, use
item~\ref{item:th:main:star:4} of Theorem~\ref{th:main:star} to prove
that the result returned by the node $S$ is a non-decreasing vector,
as required.

\subsection{Connection to the AGM and Polymatroid Bounds}

\label{sec:connection:to:agm:polymatroid:bounds}

We prove now that $DSB(Q)$ is always below the
AGM~\cite{DBLP:conf/focs/AtseriasGM08} and the polymatroid
bounds~\cite{DBLP:conf/pods/Khamis0S17,DBLP:conf/pods/000118}.

The AGM bound is expressed in terms of the cardinalities of the
relations.  For each relation $R$, let $N_R$ be an upper bound on its
cardinality.  Then the AGM bound is
$AGM(Q) \defeq \min_{\bm w} \prod_R N_R^{w_R}$, where the vector
$\bm w = (w_R)_{R\in \bm R}$ ranges over the fractional edge covers of
the hypergraph associated to $Q$.  If a database instance satisfies
$|R| \leq N_R$ for all $R$, then the size of the query is
$|Q| \leq AGM(Q)$, and this bound is tight, i.e. there exists an
instance for which we have equality.

The polymatroid bound uses both the cardinality constraints $N_R$ and
the maximum degrees.  The general bound
in~\cite{DBLP:conf/pods/Khamis0S17} considers maximum degrees for any
subset of variables, but throughout this paper we restrict to degrees
of single variables, in which case the polymatroid bound is expressed
in terms of the quantities $N_R$ and $f_1^{(R,X)}$, one for each
relation $R$ and each of its variables $X$.  The AGM bound is the
special case when $f_1^{(R,X)} = N_R$ for all $R$.  We review the
general definition of the polymatroid bound in
Sec.~\ref{app:chain:bound}, but will mention that no closed formula is
known for polymatroid bound, similar to the AGM formula.  We give here
the first such closed formula, for the case of Berge-acyclic queries.
Let $Q$ be a Berge-acyclic query with incidence graph $T$ (which is a
tree).  Choose an arbitrary relation $\texttt{ROOT}\in \bm R(Q)$ to
designate as the root of $T$, and for each other relation $R$, denote
by $Z_R \defeq \texttt{parent}(R)$, i.e.  its unique variable pointing
up the tree.  Denote by:
\begin{align}
PB(Q,\texttt{ROOT}) \defeq & N_{\texttt{ROOT}}\prod_{R\neq \texttt{ROOT}}f_1^{(R,Z_R)}
\label{eq:pb:one:component}
\end{align}
One can immediately check that the query answer on any database
instance consistent with the statistics satisfies
$|Q| \leq PB(Q,\texttt{ROOT})$.  A {\em cover} of $Q$ is set
$\bm W = \set{Q_1, Q_2, \ldots, Q_m}$, for some $m \geq 1$, where each
$Q_i$ is a connected subquery of $Q$, and each variable of $Q$ occurs
in at least one $Q_i$, and we denote by:
\begin{align}
  PB(\bm W) \defeq & \prod_{i=1,m} \min_{\texttt{ROOT}\in \bm R(Q_i)}PB(Q_i,\texttt{ROOT}_i)\label{eq:pb:all}
\end{align}
Since $|Q| \leq |Q_1|\cdot |Q_2| \cdots |Q_m|$, we also have
$|Q| \leq PB(\bm W)$.  We prove in Sec.~\ref{app:chain:bound}:
\begin{thm} \label{thm:polyamtroid:bound:sum:of:chains} The
  polymatroid bound of a Berge-acyclic query $Q$ is
  $PB(Q) \defeq \min_{\bm W} PB(\bm W)$, where $\bm W$ ranges over all
  covers.
\end{thm}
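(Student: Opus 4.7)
The plan is to establish the two inequalities $PB(Q)\le \min_{\bm W} PB(\bm W)$ and $PB(Q)\ge \min_{\bm W} PB(\bm W)$. The first follows from an information-theoretic chain-rule argument applied along the cover, while the matching direction comes from an LP-duality analysis of the polymatroid optimization that exploits Berge-acyclicity.

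For the upper bound, I fix any cover $\bm W=\{Q_1,\ldots,Q_m\}$ with arbitrary designated roots $\texttt{ROOT}_i\in \bm R(Q_i)$ and any polymatroid $h$ satisfying the cardinality constraints $h(\bm X_R)\le \log N_R$ and the max-degree constraints $h(\bm X_R)-h(\bm X_R\setminus\{X\})\le \log f_1^{(R,X)}$. Since $\bigcup_i \bm X_{Q_i}=\bm X$, repeated submodularity gives $h(\bm X)\le \sum_i h(\bm X_{Q_i})$. For each connected Berge-acyclic $Q_i$ rooted at $\texttt{ROOT}_i$, I list its relations in a top-down order $R_1=\texttt{ROOT}_i,R_2,\ldots,R_k$ and apply the chain rule $h(\bm X_{Q_i})=h(\bm X_{R_1})+\sum_{j\ge 2} h(\bm X_{R_j}\setminus\{Z_{R_j}\}\mid \text{previously introduced variables})$. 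The first term is at most $\log N_{R_1}$ by monotonicity and the cardinality constraint; each conditional entropy in the sum is at most $h(\bm X_{R_j}\setminus\{Z_{R_j}\}\mid Z_{R_j})\le \log f_1^{(R_j,Z_{R_j})}$ by submodularity followed by the max-degree constraint. Summing over $j$ gives $h(\bm X_{Q_i})\le \log PB(Q_i,\texttt{ROOT}_i)$; taking the minimum over the root choice and the sum over $i$ yields $h(\bm X)\le \log \min_{\bm W} PB(\bm W)$, proving the upper bound after exponentiating.

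The matching direction is the technical heart of the proof. I would pass to the LP formulation of the polymatroid bound: its dual assigns non-negative weights $\alpha_R$ to each cardinality constraint and $\beta_{R,X}$ to each degree constraint, together with weights on the Shannon (polymatroid) inequalities, and has objective $\sum_R \alpha_R \log N_R + \sum_{R,X} \beta_{R,X} \log f_1^{(R,X)}$. The upper-bound argument above already exhibits, for each cover $\bm W$ and root assignment, an integer dual-feasible certificate with $\alpha_{\texttt{ROOT}_i}=1$ and $\beta_{R,Z_R}=1$ for every non-root $R$, attaining value $\log PB(\bm W)$. To close the gap, I would show that any fractional dual-feasible certificate can be normalized to such an integer cover-certificate without increasing the objective. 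The key lemma is a bottom-up rearrangement on the incidence tree $T$: at each relation $R$ the dual effectively ``charges'' either $R$'s cardinality or the max-degree of its parent edge, and these local charges can be made consistently and integrally by induction from the leaves, using a local exchange that trades small amounts of $\beta_{R,Z_R}$-weight against $\alpha_R$-weight along each tree edge. The main obstacle is precisely this rounding step: one must argue that the dual polytope, once restricted by the Berge-acyclic structure, admits only cover-shaped integer extreme points, an essentially totally-unimodular argument specific to the incidence tree. Once this integrality is established, the dual optimum equals $\log \min_{\bm W} PB(\bm W)$, which by strong LP duality gives $\log PB(Q)\ge \log \min_{\bm W} PB(\bm W)$, completing the proof.
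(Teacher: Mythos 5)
The easy direction of your argument is correct and matches the paper: for any cover $\bm W$ and any choice of roots, the chain-rule/submodularity estimate gives $h(\bm X)\le \log PB(\bm W)$ for every polymatroid $h$ satisfying the degree constraints, hence $PB(Q)\le \min_{\bm W}PB(\bm W)$. This is exactly the paper's observation that each $E_{\bm W}(h)\ge h(\bm X)$ is a valid inequality.

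The converse direction, however, contains a genuine gap precisely where you flag the ``main obstacle.'' You propose to show that the dual of the polymatroid LP has only cover-shaped integer extreme points via ``an essentially totally-unimodular argument,'' but you never supply that argument, and it is not a routine TU claim: the dual in question carries multipliers not only on the cardinality and degree constraints but also on the (exponentially many) monotonicity and submodularity inequalities, and that constraint system is not given by a $0/\pm 1$ incidence-tree matrix. The paper does not prove integrality of this polytope at all. Instead it (a) invokes a nontrivial external result (the characterization of \emph{simple inequalities}, Theorem~\ref{th:simple:inequalities} from~\cite{DBLP:conf/pods/KhamisK0S20}) to replace ``valid for all polymatroids'' by the purely combinatorial condition $\texttt{cover}(U,E)\ge k$ for all nonempty $U$ --- this is what makes the Berge-acyclic structure usable and disposes of the Shannon-inequality multipliers; and (b) proves a Chain Decomposition theorem (Theorem~\ref{th:chain:decomposition}) by induction on the relations, peeling off a leaf relation and carefully reallocating coefficients, showing that any integer certificate $E(h)\ge k\cdot h(\bm X)$ decomposes into $k$ cover certificates plus positive slack. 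Note that this is weaker than dual integrality: a scaled certificate with $k>1$ is handled by taking the geometric mean of the $k$ resulting cover bounds, which is at least the minimum; no claim is made that fractional dual optima can be rounded to a single integral cover. Without a proof of your integrality lemma --- or a substitute like the paper's coverage characterization plus inductive decomposition --- the lower bound $PB(Q)\ge\min_{\bm W}PB(\bm W)$ is not established.
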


\begin{ex} \label{eq:chain:bound} Let $Q=R(X,Y),S(Y,Z),T(Z,U),K(U,V)$.
  Then $PB(Q,S) = f_1^{(R,Y)}N_S f_1^{(T,Z)}f_1^{(K,U)}$,
  $PB(\set{R,TK}) = N_R \cdot \min\left(N_T f^{(K,U)},
    f^{(T,U)}N_k\right)$, and $PB(\set{R,T,K}) = N_R N_T N_K$.
\end{ex}

If we restrict the formula to the AGM bound, i.e. all max degrees are
equal to the cardinalities, $f_1^{(R,X)}=N_R$, then
Eq.~\eqref{eq:pb:one:component} becomes $\prod_{R \in \bm R(Q)}N_R$,
while the polymatroid bound~\eqref{eq:pb:all} becomes
$\min_{\bm W} \prod_{R \in \bm W} N_R$, where $\bm W$ ranges over
integral covers of $Q$.  In particular, the AGM bound of a
Berge-acyclic query can be obtained by restricting to integral edge
covers, although this property fails for $\alpha$-acylic queries.  For
example, consider the query $R(X,Y),S(Y,Z),T(Z,X),K(X,Y,Z)$; when
$|R|=|S|=|T|=|K|$ then the AGM bound is obtained by the edge cover
$0,0,0,1$, but when $|R|=|S|=|T| \ll |K|$ one needs the fractional
cover $1/2, 1/2, 1/2, 0$.  Next, we prove next that the degree
sequence bound is always better.

\begin{lmm} \label{lemma:connection:to:pb} (1) For any choice of
  root relation, $\texttt{ROOT}\in \bm R(Q)$:
  $DSB(Q) \leq PB(Q,\texttt{ROOT})$.
  \newline (2) For any cover $Q_1, \ldots, Q_m$ of $Q$,
  $DSB(Q) \leq DSB(Q_1) \cdots DSB(Q_m)$
\end{lmm}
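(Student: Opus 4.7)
I would prove both parts by analyzing Algorithm~\ref{alg:bottom:up}, combining item~\ref{item:th:main:star:4} of Theorem~\ref{th:main:star} (every bottom-up vector $\bm w^{(R)}$ is non-negative and non-increasing) with a discrete summation-by-parts identity. For any $\bm X$-tensor $\bm M$ with cumulative sums $V_{\bm m}=\sum_{\bm s\leq\bm m} M_{\bm s}$ and any non-negative non-increasing vectors $\bm a^{(X_p)}$, iterated Abel summation yields
\[
\sum_{\bm m} M_{\bm m}\prod_p a^{(X_p)}_{m_p}
\;=\; \sum_{\bm m} V_{\bm m}\prod_p \bigl(a^{(X_p)}_{m_p}-a^{(X_p)}_{m_p+1}\bigr),
\]
with the convention $a^{(X_p)}_{n_p+1}=0$. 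Every factor on the right is non-negative, so the sum is bounded by $\bigl(\max_{\bm m} V_{\bm m}\bigr)\prod_p a^{(X_p)}_1$.

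\textbf{Part (1).} Applied to the worst-case tensor $\bm C^{(R)}$, whose cumulative tensor is $\bm V^{(R)}$ by Definition~\ref{def:pessimiistic:tensor}, this identity together with the consistency bound $V^{(R)}_{1,m_2,\ldots,m_k}\leq f_1^{(R,Z_R)}$ gives a bound on the first entry of each bottom-up vector. I would then prove bottom-up by induction that for every non-root relation $R$,
\[
w^{(R)}_1 \;\leq\; \prod_{R'\in \mathrm{subtree}(R)} f_1^{(R',Z_{R'})}.
\]
The base case is a leaf relation $R(Z_R)$, where $\bm w^{(R)}=\bm C^{(R)}=\bm f^{(R,Z_R)}$. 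In the inductive step at $R$ with parent variable $X_1=Z_R$ and children variables $X_2,\ldots,X_k$, each child vector $\bm a^{(X_p)}$ is a product of non-negative non-increasing vectors (by induction and item~\ref{item:th:main:star:4}), hence itself non-negative and non-increasing. Applying the Abel identity to the $X_1{=}1$ slab of $\bm C^{(R)}$ then gives $w^{(R)}_1\leq f_1^{(R,Z_R)}\prod_{p\geq 2}a^{(X_p)}_1$, and the induction closes by telescoping. At the root, the same identity over all dimensions combined with $V^{(\mathrm{ROOT})}_{\bm n}\leq N_{\mathrm{ROOT}}$ yields $\mathrm{DSB}(Q)\leq N_{\mathrm{ROOT}}\prod_{R\neq\mathrm{ROOT}}f_1^{(R,Z_R)}=PB(Q,\mathrm{ROOT})$.

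\textbf{Part (2).} When $B^{(R)}=\infty$ for every $R$, item~\ref{item:th:main:star:5} of Theorem~\ref{th:main:star} tells us that each $\bm C^{(R)}$ is itself a non-negative consistent instance, so $\mathrm{DSB}(Q)$ is literally realized by the instance $\bm M^{(R)}:=\bm C^{(R)}$. The cover property of bag joins, $|Q(I)|\leq\prod_i|Q_i(I)|$, then gives $\mathrm{DSB}(Q)=|Q(I)|\leq\prod_i|Q_i(I)|\leq\prod_i \mathrm{DSB}(Q_i)$. For general $B$ the $\bm C^{(R)}$ may contain negative entries, so the cover inequality cannot be invoked directly on the $\bm C^{(R)}$; however, the intermediate vectors produced by Algorithm~\ref{alg:bottom:up} are still non-negative and non-increasing by item~\ref{item:th:main:star:4}. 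I would therefore induct on $|\bm W|$, and at each bridge joining $Q_i$ to the rest of the cover apply the elementary inequality $\sum_x f_x u_x\leq \bigl(\sum_x f_x\bigr)\bigl(\sum_x u_x\bigr)$ for non-negative $\bm f,\bm u$ (justified because $\max_x f_x\leq \sum_x f_x$) to the two intermediate vectors sitting on either side of the bridge, thereby factoring one $\mathrm{DSB}(Q_i)$ out of the remaining bottom-up computation. The hard part is book-keeping: choosing the orientation of $T$ and the bridge variables so that after cutting, the resulting factors are exactly the tensor expressions defining the $\mathrm{DSB}(Q_i)$, and handling covers in which the same relation belongs to several $Q_i$'s or in which a bridge variable is shared by multiple relations inside one subquery.
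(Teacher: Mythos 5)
Your proof is correct and follows the same overall architecture as the paper's: part (1) is a bottom-up induction on the oriented incidence tree bounding each intermediate vector of Algorithm~\ref{alg:bottom:up} by the product of max degrees in its subtree, and part (2) splits the cover at the bridge variables and uses the non-negativity of the intermediate vectors (item~\ref{item:th:main:star:4} of Theorem~\ref{th:main:star}) to apply $\sum_i a_i b_i \leq (\sum_i a_i)(\sum_i b_i)$. The one genuine difference is the key technical step in part (1). The paper bounds $w^{(R)}_{i_1}=\sum_{i_2\cdots i_k}C_{i_1\cdots i_k}\prod_p a^{(X_p)}_{i_p}$ by replacing each $a^{(X_p)}_{i_p}$ with a constant upper bound and pulling it out of the sum, leaving $\sum_{i_2\cdots i_k}C_{i_1\cdots i_k}\leq f^{(R,X_1)}_{i_1}$; this term-by-term substitution is only an upper bound when the coefficients $C_{i_1\cdots i_k}$ are non-negative, which by item~\ref{item:th:main:star:6} can fail when $B<\infty$ and $d\geq 3$. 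Your Abel summation-by-parts identity sidesteps this: it rewrites the sum against the cumulative tensor $\bm V$, which is non-negative and bounded by $f_1^{(R,Z_R)}$ (resp.\ $N_{\texttt{ROOT}}$) regardless of the sign of the entries of $\bm C$, and the telescoping of the differences $a_{m_p}-a_{m_p+1}$ lets the induction close on first entries alone. So your route is slightly more robust at exactly the point where negative entries could cause trouble, at the cost of carrying the summation-by-parts machinery; the two arguments are otherwise interchangeable. For part (2) your treatment matches the paper's (which itself only writes out $m=2$ and assumes a single shared variable), and your explicit caveat about covers sharing relations or multiple bridge variables is fair --- that bookkeeping is also left implicit in the paper.
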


\begin{proof}
  (1) Referring to Algorithm~\ref{alg:bottom:up}, we prove by
  induction on the tree that, for all $R \neq \texttt{ROOT}$, and
  every index $i$,
  $w^{(R)}_i \leq \prod_{S \in \texttt{tree}(R)} f_1^{(S,Z_S)}$.  In
  other words, each element of the vector $\bm w^{(R)}$ is $\leq$ the
  product of all max degrees in the subtree rooted at $R$.  Assuming
  this holds for all children of $R$, consider the definition of
  $\bm w^{(R)}$ in Algorithm~\ref{alg:bottom:up}.  By induction
  hypothesis, for each vector $\bm a^{(X_p)}$ we have
  $a^{(X_p)}_{i_p}\leq \prod_{S \in \texttt{tree}(X_p)}f_1^{(S,Z_S)}$,
  a quantity that is independent of the index $i_p$, and therefore we
  obtain the following:
  \begin{align*}
  w^{(R)}_{i_1} = & \left(\bm C^{\bm f^{(R, \bm X_R)},B^{(R)}} \cdot\bm a^{(X_2)}\cdots\bm a^{(X_k)}\right)_{i_1}
  \leq \left(\sum_{i_2 i_3 \cdots i_k} C^{\bm f^{(R, \bm X_R)},B^{(R)}}_{i_1i_2i_3\cdots i_k}\right)\cdot \prod_{S \in \texttt{tree}(R),S\neq R} f_1^{(S,Z_S)}
  \end{align*}
  and we use the fact that
  $\sum_{i_2 i_3 \cdots i_k} C^{\bm f^{(R, \bm
      X_R)},B^{(R)}}_{i_1i_2\cdots i_k}\leq f^{(R,X_1)}_{i_1}$
  because, by Theorem~\ref{th:main:star}
  item~\ref{item:th:main:star:2},
  $\bm C^{\bm f^{(R, \bm X_R)},B^{(R)}}$ is consistent with the degree
  sequence $f^{(R,X_1)}_1$, and, finally,
  $f^{(R,X_1)}_{i_1}\leq f^{(R,X_1)}_1$. This completes the inductive
  proof.  The algorithm returns
  $C^{\bm f^{(\texttt{root}, \bm
      X_{\texttt{ROOT}})},B^{(\texttt{ROOT})}}\cdot\bm
  a^{(X_1)}\cdot\bm a^{(X_2)}\cdots\bm a^{(X_k)}\leq
  \texttt{SUM}(C^{\bm f^{(\texttt{root}, \bm
      X_{\texttt{ROOT}})},B^{(\texttt{ROOT})}})\cdot\prod_{R \neq
    \texttt{ROOT}}f_1^{(R,Z_R)}\leq |\texttt{ROOT}|\cdot\prod_{R \neq
    \texttt{ROOT}}f_1^{(R,Z_R)}$, which is $=PB(Q,\texttt{ROOT})$, as
  required.

  (2) We prove the statement only for $m=2$ (the general case is
  similar) and show that $DSB(Q) \leq DSB(Q_1) \cdot DSB(Q_2)$.  Since
  $DSB$ is the query answer on the worst case instance, we need to
  show that $|Q_1 \Join Q_2| \leq |Q_1| \cdot |Q_2|$.  This is not
  immediately obvious because the worst case instance may have
  negative multiplicities.  Let $X$ be the unique common variable of
  $Q_1, Q_2$, and let $\bm a$, $\bm b$ be the $X$-vectors representing
  the results of $Q_1$ and $Q_2$ respectively.  It follows from
  Theorem~\ref{th:main:star} item~\ref{item:th:main:star:4} that
  $\bm a, \bm b$ are non-negative, therefore,
  $|Q|=\sum_i a_i b_i \leq (\sum_i a_i)(\sum_i b_i) = |Q_1|\cdot
  |Q_2|$.
\end{proof}


Our discussion implies:

\begin{thm}
  Let $Q$ be a Berge-acyclic query.  We denote by
  $DSB(Q, \bm f, \bm B)$ the DSB computed on the statistics
  $\bm f \defeq (\bm f^{R,Z})_{R \in \bm R(Q), Z \in \bm X_R}$ and
  $\bm B \defeq (B^{(R)})_{R \in \bm R(Q)}$.  Then:
\begin{equation}
    |Q| \leq DSB(Q, \bm f, \bm 1) \leq DSB(Q, \bm f, \bm B) \leq DSB(Q, \bm f, \bm\infty) \leq PB(Q) \leq AGM(Q)
\end{equation}
where $|Q|$ is the answer to the query on an database instance
consistent with the given statistics.
\label{th:main:polymatroid}
\end{thm}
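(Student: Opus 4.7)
The plan is to prove each of the five links in the chain in turn; three of them follow directly from results already established in the paper, while the two middle monotonicity inequalities require a fresh summation-by-parts argument.

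First, $|Q| \leq DSB(Q, \bm f, \bm 1)$ is the instantiation of Theorem~\ref{th:general} with $B^{(R)}=1$ for every $R$, because a set-valued instance is represented by a tensor $\bm M^{(R)} \in \calM^+_{\bm f^{(R, \bm X_R)}, 1}$. Next, $DSB(Q, \bm f, \bm\infty) \leq PB(Q)$ would follow by combining both parts of Lemma~\ref{lemma:connection:to:pb} with Theorem~\ref{thm:polyamtroid:bound:sum:of:chains}: for any cover $\bm W = \set{Q_1, \ldots, Q_m}$ of $Q$ and any choice of roots $\texttt{ROOT}_i \in \bm R(Q_i)$, part (2) gives $DSB(Q) \leq \prod_i DSB(Q_i)$ and part (1) gives $DSB(Q_i) \leq PB(Q_i, \texttt{ROOT}_i)$; minimizing over both the roots and the cover yields $DSB(Q) \leq \min_{\bm W} PB(\bm W) = PB(Q)$. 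Finally, $PB(Q) \leq AGM(Q)$: any integral edge cover $\bm W \subseteq \bm R(Q)$ is itself a cover in the sense of Eq.~\eqref{eq:pb:all} with every $Q_i$ a singleton relation $R_i$ and $PB(Q_i, R_i) = N_{R_i}$, so $PB(\bm W) = \prod_{R \in \bm W} N_R$; and as remarked after Example~\ref{eq:chain:bound}, the AGM bound of a Berge-acyclic query is attained by an integral edge cover, so $PB(Q) = \min_{\bm W} PB(\bm W) \leq AGM(Q)$.

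It remains to handle the two middle inequalities, which together amount to monotonicity of $DSB(Q, \bm f, \cdot)$ in each component $B^{(R)}$ of $\bm B$. I would fix all $B^{(R')}$ for $R' \neq R$ and root the incidence tree of $Q$ at $R$ in Algorithm~\ref{alg:bottom:up}, so that, writing $\bm X_R = \set{X_1, \ldots, X_d}$,
\begin{equation*}
    DSB(Q, \bm f, \bm B) = \bm C^{\bm f^{(R, \bm X_R)}, B^{(R)}} \cdot \bm a^{(X_1)} \cdots \bm a^{(X_d)},
\end{equation*}
where each $\bm a^{(X_p)}$ depends only on the other components of $\bm B$. By item~\ref{item:th:main:star:4} of Theorem~\ref{th:main:star} applied inductively from the leaves of the rooted tree, every $\bm a^{(X_p)}$ is non-negative and non-increasing. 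Since $\bm C^{\bm f, B} = \Delta_{X_1} \cdots \Delta_{X_d} \bm V^{\bm f, B}$ by Def.~\ref{def:pessimiistic:tensor}, iterated summation-by-parts in each coordinate gives
\begin{equation*}
    \bm C^{\bm f^{(R, \bm X_R)}, B^{(R)}} \cdot \bm a^{(X_1)} \cdots \bm a^{(X_d)} = \sum_{\bm m} V^{\bm f^{(R, \bm X_R)}, B^{(R)}}_{\bm m} \prod_{p=1}^d \left( a^{(X_p)}_{m_p} - a^{(X_p)}_{m_p+1} \right),
\end{equation*}
with the convention $a^{(X_p)}_{n_p+1} \defeq 0$. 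Each factor $a^{(X_p)}_{m_p} - a^{(X_p)}_{m_p+1}$ is non-negative because $\bm a^{(X_p)}$ is non-increasing and non-negative, and $V^{\bm f, B}_{\bm m}$ is non-decreasing in $B$ since $\calM^+_{\bm f, B}$ grows with $B$ and $V$ is defined as the maximum of a linear objective over this set; hence the whole expression is non-decreasing in $B^{(R)}$. Re-rooting at each $R$ in turn gives joint monotonicity in $\bm B$, delivering both $DSB(Q, \bm f, \bm 1) \leq DSB(Q, \bm f, \bm B)$ and $DSB(Q, \bm f, \bm B) \leq DSB(Q, \bm f, \bm\infty)$.

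The hard part will be precisely this monotonicity step: the worst-case tensor $\bm C^{\bm f, B}$ itself is \emph{not} monotone in $B$ componentwise (by item~\ref{item:th:main:star:1} of Theorem~\ref{th:main:star} its entries can be negative, and entries can decrease as $B$ grows), so the inequality cannot be read off from $\bm C$ alone. The key insight is that $DSB$ is always evaluated as a contraction of $\bm C$ against non-increasing non-negative vectors, guaranteed by item~\ref{item:th:main:star:4}; this lets the entire $B$-dependence be pushed onto the integrated quantity $\bm V^{\bm f, B}$ via summation-by-parts, and $\bm V$ is manifestly monotone in $B$ by construction.
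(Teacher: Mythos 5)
Your proposal is correct, and for the outer links of the chain it coincides with the paper's (largely implicit) argument: $|Q|\leq DSB(Q,\bm f,\bm 1)$ is Theorem~\ref{th:general} specialized to set-valued instances, $DSB\leq PB$ comes from combining both parts of Lemma~\ref{lemma:connection:to:pb} with Theorem~\ref{thm:polyamtroid:bound:sum:of:chains}, and $PB\leq AGM$ from the integral-cover remark following Example~\ref{eq:chain:bound}. Where you genuinely diverge is on the two middle inequalities. The paper never proves monotonicity of $DSB$ in $\bm B$ explicitly; the only monotonicity statement it offers (the final theorem of Section~\ref{sec:compress}) concerns the optimal value $U$ of Problem~\ref{prob:main}, which coincides with $DSB$ only when $B=\infty$, so it does not directly transfer. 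Your argument fills this in cleanly: root the tree at the relation whose $B^{(R)}$ varies, invoke item~\ref{item:th:main:star:4} to get non-negative non-increasing contraction vectors, and push the $B$-dependence onto $\bm V^{\bm f,B}$, which is manifestly monotone in $B$ as a maximum over a growing feasible set. Your summation-by-parts identity is equivalent to the paper's own machinery in Appendix~\ref{app:item:1:a} (decomposing non-increasing vectors into non-negative combinations of one-zero vectors together with $\Sigma_{X_1}\cdots\Sigma_{X_d}\bm C=\bm V$, Eq.~\eqref{eq:step:3}), so nothing new is needed beyond what the paper already established -- but assembling it into an explicit monotonicity proof is a genuine contribution, and your observation that $\bm C^{\bm f,B}$ is \emph{not} entrywise monotone in $B$ (so the argument must route through $\bm V$) correctly identifies why the step is not vacuous. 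One small slip: the possibility of negative entries in $\bm C$ is item~\ref{item:th:main:star:6}, not item~\ref{item:th:main:star:1} (which concerns entries exceeding $B$); this does not affect the argument.
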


Recall that both AGM and PB bounds are defined over set semantics
only.  While the AGM bound is tight, the PB bound is known to not be
tight in general, and it is open whether it is tight for Berge-acyclic
queries.  Our degree sequence bound under either set or bag semantics
improves over PB and, in the case of bag semantics ($B=\infty$) DSB  is tight.
\section{Functional Representation}

\label{sec:compress}

A degree sequence requires, in general, $\Omega(n)$ space, where
$n = \max_{X \in \bm X} n_X$ is the size of the largest domain, while
cardinality estimators require sublinear space and time.  However, a
degree sequence can be {\em represented} compactly, using a staircase
function as illustrated in Fig.~\ref{fig:degree-sequence-example}.  In
this section we show how the degree sequence bound, DSB, be
approximated in quasi-linear time in the size of the functional
representation. We call this approximate bound FDSB, show that
$\text{DSB}\leq \text{FDSB} \leq PB$, and show that the staircase
functions can be further compressed, allowing a tradeoff between the
memory size and computation time on one hand, and accuracy of the FDSB
on the other hand.  We restrict our discussion to $B^{(R)}= \infty$.

In this section we denote a vector element by $F(i)$ rather than
$F_i$.  For a non-decreasing vector $\bm F \in \R_+^{[n]}$, we denote
by $\bm F^{-1}: \R_+ \rightarrow \R_+$ any function satisfying the
following, for all $v$, $0 \leq v \leq F(n)$: if $F(i) < v$ then
$i < F^{-1}(v)$, and if $F(i) > v$ then $i > F^{-1}(v)$.  Such a
function always exists\footnote{E.g. define it as follows: if
  $\exists i$ s.t.  $F(i-1)<v<F(i)$ then set $F^{-1}(v)\defeq i-1/2$,
  otherwise set $F^{-1}(v) = i$ for some arbitrary $i$
  s.t. $F(i)=v$.}, but is not unique.
%
%
Then:

\begin{lmm} \label{lmm:compress} Let
  $\bm F_1 \in \R_+^{[n_1]}, \ldots, \bm F_d \in \R_+^{[n_d]}$ be
  non-decreasing vectors satisfying $F_1(0)=0$ and, for all $p=1,d$,
  $F_1(n_1) \leq F_p(n_p)$.  Let
  $a_1 \in \R_+^{[n_1]}, \ldots, a_d \in \R_+^{[n_d]}$ be
  non-increasing vectors. Denote by $\bm C, \bm w$ the following
  tensor and vector:
\begin{align}
    C_{i_1\cdots i_d} \defeq & \Delta_{i_1}\cdots\Delta_{i_d}\max(F_1(i_1),\ldots,F_d(i_d))\label{eq:compress:c}\\
    w(i_1) \defeq & \sum_{i_2=1}^{n_2}\ldots \sum_{i_d=1}^{n_d}C_{i_1\cdots i_d}\prod_{p\in[2,d]}a_p(i_p)\label{eq:compress:w}
\end{align}
Then the following inequalities hold:
\begin{align}
    w(i_1) \geq& \left(\Delta_{i_1}F_1(i_1)\right)\prod_{p\in[2,d]}a_p\left(\floor{F_p^{-1}(F_1(i_1))}+1\right) \label{eq:w:efficient:1}\\ 
    w(i_1) \leq& \left(\Delta_{i_1}F_1(i_1)\right)\prod_{p\in[2,d]}a_p\left(\ceil{F_p^{-1}(F_1(i_1-1))}\right)\label{eq:w:efficient:2}
\end{align}
\end{lmm}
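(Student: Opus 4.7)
The plan is to reduce the $(d-1)$-dimensional sum defining $w(i_1)$ to a one-dimensional integral of a non-increasing step function, then bound that integral by evaluating the integrand at the endpoints of the interval. First, I would exploit the fact that the tensor $C_{i_1\cdots i_d}$ depends on $i_1$ only through $F_1(i_1)$. Defining
\[
\Phi(c) \defeq \sum_{i_2,\ldots,i_d} \Delta_{i_2}\cdots\Delta_{i_d}\,V(c,i_2,\ldots,i_d) \prod_{p=2}^{d} a_p(i_p),
\]
where $V(c,i_2,\ldots,i_d) \defeq \min(c,F_2(i_2),\ldots,F_d(i_d))$ is the scalar analogue of the min-of-CDFs appearing in Theorem~\ref{th:main:star}, immediately yields $w(i_1) = \Phi(F_1(i_1)) - \Phi(F_1(i_1-1))$.

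Next, I would prove by induction on $d$ the closed form
\[
\Phi(c) = \int_0^c \prod_{p=2}^{d} \hat a_p(t)\, dt,
\]
where $\hat a_p$ is the non-increasing step function defined by $\hat a_p(t) \defeq a_p(i_p)$ for $t \in (F_p(i_p-1), F_p(i_p)]$. The inductive step peels off the outermost dimension: writing $\Delta_{i_d}\min(c',F_d(i_d))$ as $\int \mathbf{1}[F_d(i_d-1) < t \leq \min(c',F_d(i_d))]\,dt$ with $c' \defeq \min(c,F_2(i_2),\ldots,F_{d-1}(i_{d-1}))$, swapping the $i_d$-sum with the integral, and recognising that $\sum_{i_d}\mathbf{1}[F_d(i_d-1)<t\leq F_d(i_d)]\,a_d(i_d) = \hat a_d(t)$ lets me invoke the inductive hypothesis on the remaining $d-1$ dimensions.

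Combining these gives $w(i_1) = \int_{F_1(i_1-1)}^{F_1(i_1)} \prod_{p=2}^d \hat a_p(t)\,dt$. Since each $a_p$ is non-increasing and each $F_p$ non-decreasing, every $\hat a_p$ is non-increasing and non-negative, so the integrand is a non-increasing non-negative function of $t$. The integral over an interval of length $\Delta_{i_1}F_1(i_1) = F_1(i_1)-F_1(i_1-1)$ is therefore pinched between the integrand's value at the right endpoint (yielding the lower bound) and its value just right of the left endpoint (the upper bound), each multiplied by $\Delta_{i_1}F_1(i_1)$. The remaining task is to rewrite those values in the index form $a_p(\lfloor F_p^{-1}(\cdot)\rfloor+1)$ and $a_p(\lceil F_p^{-1}(\cdot)\rceil)$ appearing in \eqref{eq:w:efficient:1}--\eqref{eq:w:efficient:2}.

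The main obstacle I anticipate is this final index-rewriting step. The bucket index $k_p$ with $t \in (F_p(k_p-1), F_p(k_p)]$ need not equal $F_p^{-1}(t)$, because the definition of $F_p^{-1}$ leaves freedom at plateaus of $F_p$ and at points where $t$ coincides with a value of $F_p$. A short case analysis is needed to show that for any admissible choice of $F_p^{-1}$, the rounding $\lfloor F_p^{-1}(F_1(i_1))\rfloor+1$ is at least as large as the bucket containing $F_1(i_1)$ (so that the non-increasing $a_p$ makes the lower bound conservative), and symmetrically $\lceil F_p^{-1}(F_1(i_1-1))\rceil$ is at most as large as the bucket just to the right of $F_1(i_1-1)$ on the upper side. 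Once this slack is shown to be absorbed in the correct direction, \eqref{eq:w:efficient:1} and \eqref{eq:w:efficient:2} follow directly from the interval bounds on the integral.
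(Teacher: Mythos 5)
Your plan is correct, and it reaches the two bounds by a genuinely different route than the paper (both you and the paper read the $\max$ in Eq.~\eqref{eq:compress:c} as $\min$, which is clearly the intended statement given Theorem~\ref{th:main:star}). The paper's proof stays entirely discrete: it first proves a support claim --- $C_{i_1\cdots i_d}=0$ whenever $F_1(i_1-1)\geq F_p(i_p)$ or $F_p(i_p-1)\geq F_1(i_1)$ for some $p$ --- which, via the two defining implications of $F_p^{-1}$, confines each index $i_p$ to lie strictly between $F_p^{-1}(F_1(i_1-1))$ and $F_p^{-1}(F_1(i_1))+1$; it then uses monotonicity of $a_p$ to pull each factor out of the sum at the appropriate end of that window, one dimension at a time (this tacitly uses $\bm C\geq 0$, which holds here by item~\ref{item:th:main:star:5} of Theorem~\ref{th:main:star} since $B=\infty$), and closes by computing $\sum_{i_2\cdots i_d}C_{i_1\cdots i_d}=\Delta_{i_1}F_1(i_1)$ from $F_1(i_1)\leq F_p(n_p)$. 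You instead derive the exact identity $w(i_1)=\int_{F_1(i_1-1)}^{F_1(i_1)}\prod_{p\geq 2}\hat a_p(t)\,dt$ and obtain both inequalities as endpoint estimates for the integral of a non-negative, non-increasing step function. That identity is a strictly stronger intermediate fact (it pins down $w(i_1)$ exactly rather than only sandwiching it) and it makes the monotone structure of $\bm w$ transparent, at the price of an induction on $d$ and some bookkeeping with indicators; the paper's version is more elementary and never needs the closed form. Both arguments funnel into the same final step --- translating the bucket index containing $F_1(i_1)$ (resp.\ the bucket just to the right of $F_1(i_1-1)$) into $\floor{F_p^{-1}(F_1(i_1))}+1$ (resp.\ $\ceil{F_p^{-1}(F_1(i_1-1))}$) using only the defining property of $F_p^{-1}$ --- and the directions of slack you identify for that case analysis are the correct ones, so the step you defer does go through. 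Two details worth making explicit when you write it up: the factorization $\Delta_{i_2}\cdots\Delta_{i_d}\min(c,F_2(i_2),\ldots,F_d(i_d))=\int_0^c\prod_{p\geq 2}\mathbf{1}[F_p(i_p-1)<t\leq F_p(i_p)]\,dt$ needs $c\leq F_p(n_p)$ so that the buckets of $F_p$ cover all of $(0,c]$, which is exactly where the hypothesis $F_1(n_1)\leq F_p(n_p)$ enters; and the degenerate case $F_1(i_1-1)=F_1(i_1)$ should be dispatched separately, since the interval of integration is then empty and both sides of each inequality vanish.
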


We give the proof in Appendix~\ref{app:lmm:compress}.  The lemma
implies that, in Algorithm~\ref{alg:bottom:up}, we can use
inequality~\eqref{eq:w:efficient:2} to upper bound the computation $w^{(R)} = \bm C \cdot \bm a^{(X_2)}\cdots \bm a^{(X_k)}$.  Indeed, in
that case each $F_p(r) \defeq \sum_{i=1,r}f_p(r)$ is the cdf of a
degree sequence $f_p$, hence $F_p(0)=0$ and $F_p(n_p)=$ the
cardinality of $R$, while the tensor $\bm C$ is described in
item~\ref{item:th:main:star:5} of Theorem~\ref{th:main:star}, hence
the assumptions of the lemma hold.

%
%
We say that a vector $\bm f\in \R_+^n$ is {\em represented} by a
function $\hat f : \R_+\rightarrow \R_+$ if $f(i)=\hat f(i)$ for all
$i=1,n$.  A function $\hat f$ is a {\em staircase function with $s$
  steps}, in short an {\em $s$-staircase}, if there exists dividers
$m_0 \defeq 0 < m_1 < \cdots < m_s \defeq n$ such that $\hat f(x)$ is
a nonnegative constant on each interval
$\setof{x}{m_{q-1} < x \leq m_q}$, $q=1,s$. The sum or product of an
$s_1$-staircase with an $s_2$-staircase is an
$(s_1+s_2)$-staircase. We denote the summation of a staircase
$\hat f(x)$ as $\hat F(x)=\int_0^x \hat f(t)dt$ which is then an
increasing piecewise-linear function. Its standard inverse
$\hat F^{-1}:\R_+\rightarrow \R_+$ is also increasing and
piecewise-linear. If $\hat F$ represents the vector $F$, then
$\hat F^{-1}$ is an inverse $F^{-1}$ of that vector (as discussed
above).

\begin{algorithm}[t]
  \caption{$FDSB(Q,\texttt{ROOT})$}
\label{alg:bottom:up:f}
\begin{algorithmic}
\FOR{each variable  $X \in \bm X$ and non-root relation $R \in \bm R$, $R\neq \texttt{root}$, in bottom-up order}{
   \STATE $\bm{\hat a}^{(X)} \defeq \bigotimes_{R \in \text{children}(X)} \bm{\hat w}^{(R)}$
   \STATE $\forall i_1:\ \ \hat w^{(R)}(i_1) \defeq \left(\hat f^{(R,X_1)}(i_1)\right)\prod_{p\in[2,d]}a^{(X_p)}\left(\max(1, (\hat F^{(R,X_p)})^{-1}(\hat F^{(R,X_1)}(i_1-1)))\right)$
 }
\ENDFOR
\RETURN $\sum_{i=1,|\texttt{ROOT}|}\prod_{p=1,k}a^{(X_p)}(\max(1,(F^{\texttt{ROOT},X_p})^{-1}(i-1)))$
\end{algorithmic}
\end{algorithm}

Fix a Berge-acyclic query $Q$, and let each degree sequence
$\bm f^{(R,Z)}$ be represented by some $s_{R,Z}$-staircase
$\hat{f}^{R,Z}$, and we denote by $\hat{F}^{(R,Z)}$ its summation.
Fix any relation $\texttt{ROOT} \in \bm R(Q)$ to designated as root.
The {\em Functional Degree Sequence Bound at $\texttt{ROOT}$},
$FDSB(Q,\texttt{ROOT})$, is the value returned by
Algorithm~\ref{alg:bottom:up:f}.  This algorithm is identical to
Algorithm~\ref{alg:bottom:up}, except that it replaces both
$\bm w^{(R)}$ with a functional upper bound justified by the
inequality~\ref{eq:w:efficient:2} of Lemma~\ref{lmm:compress}, and
similarly for the returned result.  All functions $\bm{\hat a}^{(X)}$
and $\bm{\hat w}^{(R)}$ are staircase functions, and can be computed
in linear time, plus a logarithmic time need for a binary search to
lookup a segment in a staircase.  Using this, we prove the following
in Appendix~\ref{app:thm:compress}:
%


\begin{thm} \label{thm:compress} (1)
  $FDSB(Q, \texttt{ROOT})\geq DSB(Q)$. (2) $FDSB(Q,\texttt{ROOT})$ can
  be computed in time
  $T_{\text{FDSB}}\defeq \tilde O(m\cdot \sum_{R,
    Z}(\text{arity}(R)\cdot s_{R,Z}))$, where $\tilde O$ hides a
  logarithmic term, and $m = |\bm R(Q)|$ is the number of relations in
  $Q$.
\end{thm}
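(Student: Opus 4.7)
\textbf{Proof proposal for Theorem~\ref{thm:compress}.} The plan is to prove part (1) by a bottom-up induction on the incidence tree $T$, using Lemma~\ref{lmm:compress} as the key per-node inequality, and part (2) by tracking how many breakpoints each staircase produced during Algorithm~\ref{alg:bottom:up:f} can accumulate.

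For part (1), I will prove by induction, from the leaves of $T$ toward $\texttt{ROOT}$, two parallel statements: $\hat a^{(X)}(i) \geq a^{(X)}(i)$ for every variable $X$ and index $i$, and $\hat w^{(R)}(i) \geq w^{(R)}(i)$ for every non-root relation $R$ and every index $i$, where $w^{(R)}, a^{(X)}$ are the vectors produced by the exact Algorithm~\ref{alg:bottom:up}. The base case is trivial: at a leaf variable $X$ the vector $\bm a^{(X)}$ is the all-ones vector, exactly represented by the constant-$1$ staircase. For the inductive step at a non-root relation $R$ with attributes $X_1,\ldots,X_k$ where $X_1=\texttt{parent}(R)$, I apply Lemma~\ref{lmm:compress} with $F_p = \hat F^{(R,X_p)}$ and the non-increasing vectors $a^{(X_p)}$ (non-increasing by Theorem~\ref{th:main:star} item~\ref{item:th:main:star:4}) to obtain
\begin{equation*}
    w^{(R)}(i_1) \;\leq\; \left(\Delta_{i_1}\hat F^{(R,X_1)}(i_1)\right)\prod_{p=2}^{k} a^{(X_p)}\!\left(\lceil (\hat F^{(R,X_p)})^{-1}(\hat F^{(R,X_1)}(i_1-1))\rceil\right).
\end{equation*}
Since all factors are non-negative and each $a^{(X_p)}$ is non-increasing, replacing $a^{(X_p)}$ by the pointwise-larger $\hat a^{(X_p)}$ (inductive hypothesis) and evaluating at any point no larger than the ceiling (which is what the $\max(1,\cdot)$ without ceiling does, since $\hat a^{(X_p)}$ is non-increasing) only increases the product, giving exactly $\hat w^{(R)}(i_1)$. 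The pointwise bound on $\hat a^{(X)}$ then follows because the element-wise product of larger non-negative staircases is larger. A final application of the same inequality at $\texttt{ROOT}$ gives $FDSB(Q,\texttt{ROOT}) \geq DSB(Q)$.

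For part (2), let $t^{(R)}$ and $t^{(X)}$ denote the number of pieces in $\hat w^{(R)}$ and $\hat a^{(X)}$ respectively. Element-wise multiplication of two staircases with $s_1,s_2$ pieces yields at most $s_1+s_2$ pieces, so $t^{(X)} \leq \sum_{R\in\text{children}(X)} t^{(R)}$. The breakpoints of $\hat w^{(R)}$ as $i_1$ varies come from (a) the $s_{R,X_1}$ steps of $\hat f^{(R,X_1)}$, (b) for each $p\geq 2$, the $O(s_{R,X_p})$ breakpoints of the piecewise-linear inverse $(\hat F^{(R,X_p)})^{-1}$ composed with $\hat F^{(R,X_1)}$, and (c) the $t^{(X_p)}$ breakpoints of each $\hat a^{(X_p)}$ through which the composed argument passes. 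Hence $t^{(R)} \leq s_{R,X_1}+\sum_{p=2}^{k}(s_{R,X_p}+t^{(X_p)})$. Unrolling this recursion along the tree, each step $s_{S,Z}$ of a descendant relation $S$ can be charged to $\hat w^{(R)}$ only if $R$ is an ancestor of $S$, so it appears at most $m$ times, weighted by the arity of the relation processing it; a standard summation gives $\sum_R t^{(R)} = O\!\big(m\cdot\sum_{R,Z}\text{arity}(R)\cdot s_{R,Z}\big)$. Each evaluation in Algorithm~\ref{alg:bottom:up:f} requires $O(\log)$-time binary searches into the involved staircases (to locate the current piece of $\hat F^{(R,X_p)}$, $(\hat F^{(R,X_p)})^{-1}$, and $\hat a^{(X_p)}$), which produces the $\tilde O$. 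The final aggregation at $\texttt{ROOT}$ is dominated by the same quantity.

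The main obstacle is the accounting in part (2): one must verify carefully that composing a piecewise-linear function with a piecewise-linear inverse and then with a piecewise-constant staircase does not blow up the piece count beyond the sum of the individual counts, and that the algorithm can actually emit the merged breakpoints in sorted order in time linear in that sum (plus the unavoidable $\log$ for binary searches), rather than quadratically. This is handled by a standard merge-sweep over the three staircases involved at each relation, analogous to merging sorted lists, and explains why the bound is genuinely quasi-linear in $\sum_{R,Z}\text{arity}(R)\cdot s_{R,Z}$ rather than in the much larger $\sum n_X$.
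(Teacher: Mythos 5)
Your proposal is correct and follows essentially the same route as the paper: part (1) is the per-node inequality of Lemma~\ref{lmm:compress} propagated up the incidence tree using the non-negativity and monotonicity of the exact vectors from Theorem~\ref{th:main:star} item~\ref{item:th:main:star:4}, and part (2) is the same piece-counting analysis of staircase composition and multiplication with a logarithmic factor for segment lookups. The only cosmetic difference is that the paper handles the final summation at the root by adjoining a dummy variable $X_0$ with degree sequence $\bm 1$ to $\texttt{ROOT}$ and then reusing the per-node bound, whereas you invoke ``the same inequality'' at the root directly; your explicit bottom-up induction on $\hat a^{(X)} \geq a^{(X)}$ and $\hat w^{(R)} \geq w^{(R)}$ is, if anything, a cleaner rendering of the paper's terser ``replace $\bm w$ by $\hat{\bm w}$ throughout'' argument.
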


The theorem says that $FDSB(Q,\texttt{ROOT})$ is still an upper bound
on $|Q|$, and can be computed in quasi-linear time in the size of the
functional representations of the degree sequences.  Next, we check if
$FDSB$ is below the polymatroid bound.  Consider the computation
of $\hat w^{(R)}(i_1)$ by the algorithm.  On one hand
$\hat f^{(R,X_1)}(i_1) \leq \hat f^{(R,X_1)}(1)$; on the other hand
$a^{(X_p)}(\max(1,\ldots)) \leq a^{(X_p)}(1)$.  This allows us to
prove (inductively on the tree,  in  ~\ref{app:lmm:compress:polymatroid}):
\begin{lmm}\label{lmm:compress:polymatroid}
  $FDSB(Q, \texttt{ROOT}) \leq PB(Q, \texttt{ROOT})$, where $PB$ is
  defined in~\eqref{eq:pb:one:component}.
\end{lmm}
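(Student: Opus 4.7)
The plan is to mirror the inductive argument used for the polymatroid bound in Lemma~\ref{lemma:connection:to:pb}, but adapted to the staircase setting. Concretely, I would prove by bottom-up induction on the tree $T$ rooted at $\texttt{ROOT}$ that, for every non-root relation $R$ and every index $i_1$,
\begin{align*}
  \hat w^{(R)}(i_1) \leq \prod_{S \in \texttt{tree}(R)} f_1^{(S,Z_S)},
\end{align*}
where $\texttt{tree}(R)$ denotes the subtree rooted at $R$. The right-hand side is a constant independent of $i_1$, so this is a uniform upper bound on the staircase $\hat w^{(R)}$.

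As a preliminary step I would verify that every $\hat w^{(R)}$ produced by Algorithm~\ref{alg:bottom:up:f} is non-increasing, so that the subsequent $\hat a^{(X)} = \bigotimes_{R \in \text{children}(X)} \hat w^{(R)}$ are also non-increasing. This is because $\hat f^{(R,X_1)}(i_1)$ is non-increasing in $i_1$, while $\hat F^{(R,X_1)}(i_1-1)$ and $(\hat F^{(R,X_p)})^{-1}$ are both non-decreasing, so the lookup index $\max(1,(\hat F^{(R,X_p)})^{-1}(\hat F^{(R,X_1)}(i_1-1)))$ is non-decreasing in $i_1$, and by the inductive non-increasingness of $\hat a^{(X_p)}$ the whole product is non-increasing. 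This monotonicity is exactly what lets me bound $\hat f^{(R,X_1)}(i_1)\leq \hat f^{(R,X_1)}(1) = f_1^{(R,X_1)} = f_1^{(R,Z_R)}$ and $a^{(X_p)}(\max(1,\cdot)) \leq a^{(X_p)}(1)$.

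For the inductive step, I combine these two monotonicity bounds on the definition of $\hat w^{(R)}$ in Algorithm~\ref{alg:bottom:up:f}:
\begin{align*}
  \hat w^{(R)}(i_1) \;\leq\; f_1^{(R,Z_R)} \cdot \prod_{p=2}^{k} a^{(X_p)}(1) \;=\; f_1^{(R,Z_R)} \cdot \prod_{p=2}^{k} \prod_{S\in \text{children}(X_p)} \hat w^{(S)}(1).
\end{align*}
Applying the induction hypothesis to each $\hat w^{(S)}(1)$ and noting that $\{R\}$ together with the subtrees rooted at the children of $X_2,\ldots,X_k$ is exactly $\texttt{tree}(R)$, the claimed bound follows.

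Finally, at the root I handle the returned sum. The algorithm returns $\sum_{i=1}^{|\texttt{ROOT}|} \prod_{p=1}^{k} a^{(X_p)}(\max(1, (F^{\texttt{ROOT},X_p})^{-1}(i-1)))$, and by the same $\max(1,\cdot)$ argument each summand is at most $\prod_{p=1}^k a^{(X_p)}(1)$. Using the subtree bound on each child relation of $\texttt{ROOT}$, this is at most $\prod_{R \neq \texttt{ROOT}} f_1^{(R,Z_R)}$, and the sum contributes a factor of $|\texttt{ROOT}| = N_{\texttt{ROOT}}$. The resulting bound is exactly $PB(Q,\texttt{ROOT})$ as defined in~\eqref{eq:pb:one:component}. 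The only real obstacle is the bookkeeping to ensure the product over subtrees telescopes correctly into $\texttt{tree}(R)$; beyond that, the argument is a direct transfer of the Lemma~\ref{lemma:connection:to:pb} induction into the staircase world.
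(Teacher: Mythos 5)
Your proposal is correct and follows essentially the same route as the paper's proof: a bottom-up induction establishing $\hat w^{(R)}(i)\leq \prod_{S\in\texttt{tree}(R)}f_1^{(S,Z_S)}$ via the two bounds $\hat f^{(R,X_1)}(i)\leq f_1^{(R,Z_R)}$ and $a^{(X_p)}(\max(1,\cdot))\leq a^{(X_p)}(1)$, followed by the same treatment of the root sum yielding $N_{\texttt{ROOT}}\prod_{R\neq\texttt{ROOT}}f_1^{(R,Z_R)}=PB(Q,\texttt{ROOT})$. Your explicit verification that each $\hat w^{(R)}$ is non-increasing is a welcome extra detail the paper only asserts, but it does not change the argument.
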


When we proved $DSB\leq PB$ in Lemma~\ref{lemma:connection:to:pb}, we
used two properties of $DSB$: $DSB(Q, \texttt{ROOT})$ is independent
of the choice of $\texttt{ROOT}$, and
$DSB(Q_1 \Join \cdots \Join Q_m) \leq DSB(Q_1) \cdots DSB(Q_m)$, for
any cover $\bm W = \set{Q_1, \ldots, Q_m}$.  Both hold because
$DSB(Q)$ is standard query evaluation: it is independent of the query
plan (i.e. choice of $\texttt{ROOT}$) and it can only increase if we
remove join conditions.  But $FDSB$ is no longer standard query
evaluation and these properties may fail.  For that reason we
introduce a stronger functional degree sequence bound:
%
%
\begin{align}
    FDSB(Q) = \min_{\bm W}\prod_{i=1,m}\min_{ROOT\in\bm R(Q)}FDSB(Q_i, ROOT)
\end{align}
where $\bm W$ range over the covers of $Q$.  We prove in Appendix~\ref{app:thm:compress:dynamic}:


\begin{thm} \label{thm:compress:covers} $FDSB(Q)$ can be computed in
  time $O(2^{m}\cdot(2^m+ m\cdot T_{FDSB}))$ (where $T_{FDSB}$ is defined in
  Theorem~\ref{thm:compress}).
\end{thm}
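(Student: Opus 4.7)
I propose a two-stage algorithm: first precompute the rooted-FDSB value of every connected subquery, then combine these by a subset dynamic program to find the optimal cover.

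\emph{Stage 1 (precomputing subquery costs).}  Enumerate all $2^m$ subsets $S \subseteq \bm R(Q)$.  A single DFS in the incidence tree of $Q$ decides in $O(m)$ time whether the relations in $S$ induce a connected subquery $Q_S$.  For each connected $S$, run Algorithm~\ref{alg:bottom:up:f} with every relation of $S$ designated as $\texttt{ROOT}$ in turn (at most $|S| \leq m$ choices), each call taking $T_{FDSB}$ time, and store
\[
c(S) \defeq \min_{\texttt{ROOT} \in S} FDSB(Q_S, \texttt{ROOT}).
\]
Disconnected $S$ receive $c(S) \defeq \infty$.  The stage costs $O(2^m \cdot m \cdot T_{FDSB})$, which is the second summand of the theorem's bound.

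\emph{Stage 2 (subset DP for the optimal cover).}  With $c(\cdot)$ in hand, computing $FDSB(Q) = \min_{\bm W} \prod_i c(Q_i)$ over covers $\bm W$ reduces to a min-cost exact-partition problem on subsets of relations.  Observing that $c(T) \geq 1$ for every subquery (since FDSB upper-bounds the answer size, which is at least one whenever the degree sequences are positive), augmenting a cover with extra subqueries cannot lower the product; therefore some optimal cover consists of pairwise relation-disjoint connected subqueries whose union $U \subseteq \bm R(Q)$ satisfies $\bigcup_{R \in U}\bm X_R = \bm X$.  Set $\phi(\emptyset) \defeq 1$; for nonempty $U$ with distinguished smallest element $r_U$,
\[
\phi(U) \;=\; \min_{\substack{T \subseteq U,\ r_U \in T \\ c(T) < \infty}}\; c(T)\cdot \phi(U \setminus T).
\]
Each update is $O(1)$ after Stage~1, and the total number of $(U, T)$ pairs inspected is $\sum_U 2^{|U|-1} = O(3^m) \leq O(2^m \cdot 2^m)$.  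The final answer $FDSB(Q) = \min\bigl\{\phi(U) : \bigcup_{R \in U}\bm X_R = \bm X\bigr\}$ is extracted in an additional $O(2^m \cdot m)$ scan, yielding the overall bound $O(2^m(2^m + m \cdot T_{FDSB}))$.

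\emph{Main obstacle.}  The one non-mechanical step is the reduction to disjoint partition covers.  Given an overlapping cover with two pieces sharing a relation $R$, the argument removes $R$ from one of the two subqueries---variable coverage is preserved because $R$ remains in the other---at the potential cost of fragmenting that subquery into several connected components.  Showing that the product of $c$-values cannot decrease under this surgery requires both $c(\cdot) \geq 1$ and a monotonicity property of $FDSB$ under restriction to a sub-subquery, which in turn follows from the structure of Algorithm~\ref{alg:bottom:up:f} together with Lemma~\ref{lmm:compress:polymatroid}.  Once this structural reduction is in place, correctness of the subset DP is a routine weighted-partition argument and the running-time analysis above delivers the stated bound.
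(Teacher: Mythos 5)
Your proposal is correct and essentially the paper's own proof: the paper likewise enumerates all $2^m$ subsets of relations, evaluates $\min_{\texttt{ROOT}}FDSB(\cdot,\texttt{ROOT})$ on the connected components (the $m\cdot T_{FDSB}$ term), and combines them with a subset dynamic program over partitions (the $2^m$ term per subset), yielding the same $O(2^m\cdot(2^m+m\cdot T_{FDSB}))$ bound. The only place you go beyond the paper is the ``main obstacle'' about overlapping covers, which the paper sidesteps by implicitly treating covers as relation-disjoint (cf.\ the proof of Lemma~\ref{lemma:connection:to:pb}(2), which assumes two pieces share only a variable); if you do want that reduction, note that the inequality you state (``the product cannot decrease under this surgery'') is the reverse of what is needed, since you must show the disjointified cover is no more expensive than the overlapping one.
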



%
Mirroring our results from Theorem \ref{th:main:polymatroid}, we prove
the following in  Appendix~\ref{app:thm:compress:polymatroid}:


\begin{thm} \label{th:approximate:agm:polymatroid}
Suppose $Q$ is a Berge-acyclic query.  Then the following hold:
\begin{align}
  && |Q| \leq  & FDSB(Q) \leq PB(Q) \leq AGM(Q)
\end{align}
\end{thm}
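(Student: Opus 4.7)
The plan is to prove the three inequalities separately, leveraging the parallel cover-based structure of $FDSB(Q)$ and $PB(Q)$: both are defined as a minimum, over covers $\bm W = \{Q_1,\ldots,Q_m\}$ of $Q$ by connected subqueries, of a product of per-component quantities $FDSB(Q_i,\texttt{ROOT}_i)$ or $PB(Q_i,\texttt{ROOT}_i)$ minimized over the choice of root. This lets me compare $FDSB$ and $PB$ term-by-term inside a common outer minimization, and lets me bound $|Q|$ against $FDSB(Q)$ using the standard multiplicativity of joins over a cover.

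\textbf{The three inequalities.} For $|Q| \leq FDSB(Q)$: fix an arbitrary cover $\bm W$ and arbitrary roots $\texttt{ROOT}_i \in \bm R(Q_i)$. Since each $Q_i$ is a connected subquery of $Q$ and the $Q_i$'s jointly cover all variables, removing cross-component join conditions yields $|Q| \leq \prod_i |Q_i|$ on any database instance consistent with the statistics (this is the same multiplicativity step used in part (2) of Lemma \ref{lemma:connection:to:pb}). Theorem \ref{th:general} gives $|Q_i| \leq DSB(Q_i)$ and Theorem \ref{thm:compress}(1) gives $DSB(Q_i) \leq FDSB(Q_i,\texttt{ROOT}_i)$; taking the inner minimum over roots and the outer minimum over covers finishes the bound. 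For $FDSB(Q) \leq PB(Q)$: Lemma \ref{lmm:compress:polymatroid} gives $FDSB(Q_i,\texttt{ROOT}_i) \leq PB(Q_i,\texttt{ROOT}_i)$ component-wise, so the inequality propagates through both the inner minimum over roots and the outer minimum over covers, using Theorem \ref{thm:polyamtroid:bound:sum:of:chains} to re-express $PB(Q)$ as $\min_{\bm W} PB(\bm W)$. For $PB(Q) \leq AGM(Q)$: as noted after Theorem \ref{thm:polyamtroid:bound:sum:of:chains}, for Berge-acyclic queries the AGM bound equals $\min_{\bm W_0}\prod_{R\in\bm W_0} N_R$ over integer edge covers $\bm W_0$. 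Any such $\bm W_0$ is itself a cover of $Q$ by singleton connected subqueries $Q_i = R$, and for such a singleton $PB(Q_i,R) = N_R$ because the product over non-root relations is empty. Hence the outer minimum defining $PB(Q)$ is taken over a family at least as large as the integer edge covers, so $PB(Q) \leq AGM(Q)$.

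\textbf{Main obstacle.} The most delicate point is the first inequality. The worst-case tensor underlying $DSB$, and a fortiori the functional upper bound used in $FDSB$, may have negative multiplicities and may violate the max-multiplicity constraint $B$, so they do not in general represent a genuine bag. However, the quantity $|Q|$ we are bounding is the answer on a real database instance consistent with the statistics, so $|Q| \leq \prod_i |Q_i|$ is just standard monotonicity of joins under removal of equality constraints. The machinery developed in item \ref{item:th:main:star:4} of Theorem \ref{th:main:star}, which guarantees that intermediate vectors in the bottom-up evaluation remain non-negative and non-increasing, is precisely what lets the cover decomposition pass cleanly through $FDSB$ without being spoiled by the potential non-representability of the worst-case tensor.
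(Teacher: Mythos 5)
Your proof is correct and follows essentially the same route as the paper: the middle inequality is exactly the paper's argument (Lemma~\ref{lmm:compress:polymatroid} applied component-wise inside the shared minimization over covers and roots), and the outer two inequalities are assembled from the same ingredients the paper defers to Theorems~\ref{th:general}, \ref{thm:compress}, \ref{thm:compress:covers} and the integral-edge-cover characterization of $AGM$ for Berge-acyclic queries. The only (immaterial) difference is that you apply the cover multiplicativity $|Q| \leq \prod_i |Q_i|$ directly to the real instance rather than to the worst-case instance as in Lemma~\ref{lemma:connection:to:pb}(2), which sidesteps rather than re-uses the non-negativity argument for the worst-case tensors.
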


Together, Theorems~\ref{thm:compress:covers}
and~\ref{th:approximate:agm:polymatroid} imply that we can compute in
quasi-linear time in the size of the representation an upper bound to
the query $Q$ that is guaranteed to improve over the polymatroid
bound.  In practice, we expect this bound to be significantly lower
than the polymatroid bound, because it accounts for the entire degree
sequence $\bm f$, not just $f_1$.

Finally, we show that one can tradeoff the size of the representation
for accuracy, by simply choosing more coarse staircase approximations
of the degree sequences.  They only need to be non-increasing, and lie
above the true degree sequences.


\begin{thm} 
  Fix a query $Q$, let $\bm f^{(R,Z)}, B^{(R)}$ be statistics as in
  Problem~\ref{prob:main}, and let $U$ be the cardinality bound
  defined by~\eqref{eq:prob:main}.  Let
  $\hat{\bm f}^{(R,Z)}, \hat{B}^{(R)}$ be a new set of statistics, and
  $\hat U$ the resulting cardinality bound.  If
  $\bm f^{(R,\bm X_R)} \leq \hat{\bm f}^{(R,\bm X_R)}$ and
  $B^{(R)} \leq \hat{B}^{(R)}$ for all $R$, $Z \in X_R$, then
  $U \leq \hat U$.
\end{thm}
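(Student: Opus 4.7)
The plan is to argue monotonicity by a containment of feasible sets, since the objective in Problem~\ref{prob:main} depends on the statistics only through the constraints defining $\calM^+_{\bm f^{(R,\bm X_R)},B^{(R)}}$.

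First, I would observe that for any relation $R$, if $\bm f^{(R,\bm X_R)} \leq \hat{\bm f}^{(R,\bm X_R)}$ (understood component-wise across the tuple of degree sequences, i.e. $\bm f^{(R,Z)} \leq \hat{\bm f}^{(R,Z)}$ pointwise for each $Z \in \bm X_R$) and $B^{(R)} \leq \hat B^{(R)}$, then the inclusion
\begin{align*}
\calM^+_{\bm f^{(R,\bm X_R)},B^{(R)}} \subseteq \calM^+_{\hat{\bm f}^{(R,\bm X_R)},\hat B^{(R)}}
\end{align*}
holds. Indeed, if $\bm M^{(R)} \in \calM^+_{\bm f^{(R,\bm X_R)},B^{(R)}}$, then for each $Z \in \bm X_R$ the marginal satisfies $\texttt{SUM}_{\bm X_R \setminus \{Z\}}(\bm M^{(R)}) \leq \bm f^{(R,Z)} \leq \hat{\bm f}^{(R,Z)}$, and the multiplicity constraint gives $\bm M^{(R)}_t \leq B^{(R)} \leq \hat B^{(R)}$ for all $t$, so $\bm M^{(R)}$ also lies in the relaxed feasible set.

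Next, I would note that the set of permutations $S_{D_{\bm X_R}}$ over which $\bm \sigma^{(R)}$ ranges in~\eqref{eq:prob:main} depends only on the domains, not on the statistics, and is therefore identical in the two problems. Consequently, any feasible choice $(\bm M^{(R)}, \bm \sigma^{(R)})_{R \in \bm R}$ for the problem with statistics $(\bm f, \bm B)$ is also feasible for the problem with statistics $(\hat{\bm f}, \hat{\bm B})$, and yields the same value of the objective $\texttt{SUM}_{\bm X}\bigl(\bigotimes_{R}(\bm M^{(R)} \circ \bm \sigma^{(R)})\bigr)$. Taking the supremum of the objective over the larger feasible set can only produce a value at least as large, so $U \leq \hat U$.

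There is no genuine obstacle here: the result is essentially a monotonicity-of-supremum argument, and the only thing to verify carefully is that both constraints (marginal bounds and tuple-multiplicity bounds) are of the ``$\leq$'' type, so that relaxing the right-hand sides only enlarges the feasible region. The non-negativity condition in $\calM^+$ does not involve the statistics and is therefore preserved trivially.
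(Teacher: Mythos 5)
Your proof is correct and follows essentially the same route as the paper's: the paper's one-line argument is exactly the feasible-set containment $\calM^+_{\bm f^{(R,\bm X_R)},B^{(R)}} \subseteq \calM^+_{\hat{\bm f}^{(R,\bm X_R)},\hat B^{(R)}}$, which you spell out in slightly more detail together with the observation that the permutation sets and objective are unchanged.
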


\begin{proof}
  The proof follows immediately from the observation that the set of
  feasible solutions can only increase (see
  Def.~\ref{def:consistent}):
  $\calM^+_{\bm f^{(R, \bm X_R)},B^{(R)}}\subseteq \calM^+_{\hat{\bm
      f}^{(R, \bm X_R)},\hat{B}^{(R)}}$.
\end{proof}


\section{Conclusions}

\label{sec:discussion}

We have described the {\em degree sequence bound} of a conjunctive
query, which is an upper bound on the size of its answer, given in
terms of the degree sequences of all its attributes.  Our results
apply to Berge-acyclic queries, and strictly improve over previously
known AGM and polymatroid
bounds~\cite{DBLP:conf/focs/AtseriasGM08,DBLP:conf/pods/Khamis0S17}. On
one hand, our results represent a significant extension, because they
account for the full degree sequences rather than just cardinalities
or just the maximum degrees. On the other hand, they apply only to a
restricted class of acyclic queries, although, we argue, this class is
the most important for practial applications.  While the full degree
sequence can be as large as the entire data, we also described how to
approximate the cardinality bound very efficiently, using compressed
degree sequences.  Finally, we have argued for using the max tuple
multiplicity for each relation, which can significantly improve the
accuracy of the cardinality bound.


\bibliography{bib}

\begin{thebibliography}{10}

\bibitem{DBLP:conf/focs/AtseriasGM08}
Albert Atserias, Martin Grohe, and D{\'{a}}niel Marx.
\newblock Size bounds and query plans for relational joins.
\newblock In {\em 49th Annual {IEEE} Symposium on Foundations of Computer
  Science, {FOCS} 2008, October 25-28, 2008, Philadelphia, PA, {USA}}, pages
  739--748. {IEEE} Computer Society, 2008.
\newblock \href {https://doi.org/10.1109/FOCS.2008.43}
  {\path{doi:10.1109/FOCS.2008.43}}.

\bibitem{bauer2015best}
Douglas Bauer, Haitze~J Broersma, Jan van~den Heuvel, Nathan Kahl, A~Nevo,
  E~Schmeichel, Douglas~R Woodall, and Michael Yatauro.
\newblock Best monotone degree conditions for graph properties: a survey.
\newblock {\em Graphs and combinatorics}, 31(1):1--22, 2015.

\bibitem{DBLP:conf/sigmod/CaiBS19}
Walter Cai, Magdalena Balazinska, and Dan Suciu.
\newblock Pessimistic cardinality estimation: Tighter upper bounds for
  intermediate join cardinalities.
\newblock In Peter~A. Boncz, Stefan Manegold, Anastasia Ailamaki, Amol
  Deshpande, and Tim Kraska, editors, {\em Proceedings of the 2019
  International Conference on Management of Data, {SIGMOD} Conference 2019,
  Amsterdam, The Netherlands, June 30 - July 5, 2019}, pages 18--35. {ACM},
  2019.
\newblock \href {https://doi.org/10.1145/3299869.3319894}
  {\path{doi:10.1145/3299869.3319894}}.

\bibitem{DBLP:journals/jacm/Fagin83}
Ronald Fagin.
\newblock Degrees of acyclicity for hypergraphs and relational database
  schemes.
\newblock {\em J. {ACM}}, 30(3):514--550, 1983.
\newblock \href {https://doi.org/10.1145/2402.322390}
  {\path{doi:10.1145/2402.322390}}.

\bibitem{DBLP:conf/sigmod/GiladPM21}
Amir Gilad, Shweta Patwa, and Ashwin Machanavajjhala.
\newblock Synthesizing linked data under cardinality and integrity constraints.
\newblock In Guoliang Li, Zhanhuai Li, Stratos Idreos, and Divesh Srivastava,
  editors, {\em {SIGMOD} '21: International Conference on Management of Data,
  Virtual Event, China, June 20-25, 2021}, pages 619--631. {ACM}, 2021.
\newblock \href {https://doi.org/10.1145/3448016.3457242}
  {\path{doi:10.1145/3448016.3457242}}.

\bibitem{DBLP:journals/jacm/GottlobLVV12}
Georg Gottlob, Stephanie~Tien Lee, Gregory Valiant, and Paul Valiant.
\newblock Size and treewidth bounds for conjunctive queries.
\newblock {\em J. {ACM}}, 59(3):16:1--16:35, 2012.
\newblock \href {https://doi.org/10.1145/2220357.2220363}
  {\path{doi:10.1145/2220357.2220363}}.

\bibitem{DBLP:conf/soda/GroheM06}
Martin Grohe and D{\'{a}}niel Marx.
\newblock Constraint solving via fractional edge covers.
\newblock In {\em Proceedings of the Seventeenth Annual {ACM-SIAM} Symposium on
  Discrete Algorithms, {SODA} 2006, Miami, Florida, USA, January 22-26, 2006},
  pages 289--298. {ACM} Press, 2006.
\newblock URL: \url{http://dl.acm.org/citation.cfm?id=1109557.1109590}.

\bibitem{hakimi1978graphs}
S~Louis Hakimi and Edward~F Schmeichel.
\newblock Graphs and their degree sequences: A survey.
\newblock In {\em Theory and applications of graphs}, pages 225--235. Springer,
  1978.

\bibitem{han2021cardinality}
Yuxing Han, Ziniu Wu, Peizhi Wu, Rong Zhu, Jingyi Yang, Liang~Wei Tan, Kai
  Zeng, Gao Cong, Yanzhao Qin, Andreas Pfadler, et~al.
\newblock Cardinality estimation in dbms: A comprehensive benchmark evaluation.
\newblock {\em arXiv preprint arXiv:2109.05877}, 2021.

\bibitem{DBLP:conf/cidr/HertzschuchHHL21}
Axel Hertzschuch, Claudio Hartmann, Dirk Habich, and Wolfgang Lehner.
\newblock Simplicity done right for join ordering.
\newblock In {\em 11th Conference on Innovative Data Systems Research, {CIDR}
  2021, Virtual Event, January 11-15, 2021, Online Proceedings}.
  www.cidrdb.org, 2021.
\newblock URL: \url{http://cidrdb.org/cidr2021/papers/cidr2021\_paper01.pdf}.

\bibitem{DBLP:conf/pods/KhamisK0S20}
Mahmoud~Abo Khamis, Phokion~G. Kolaitis, Hung~Q. Ngo, and Dan Suciu.
\newblock Bag query containment and information theory.
\newblock In Dan Suciu, Yufei Tao, and Zhewei Wei, editors, {\em Proceedings of
  the 39th {ACM} {SIGMOD-SIGACT-SIGAI} Symposium on Principles of Database
  Systems, {PODS} 2020, Portland, OR, USA, June 14-19, 2020}, pages 95--112.
  {ACM}, 2020.
\newblock \href {https://doi.org/10.1145/3375395.3387645}
  {\path{doi:10.1145/3375395.3387645}}.

\bibitem{DBLP:conf/pods/KhamisNR16}
Mahmoud~Abo Khamis, Hung~Q. Ngo, and Atri Rudra.
\newblock {FAQ:} questions asked frequently.
\newblock In Tova Milo and Wang{-}Chiew Tan, editors, {\em Proceedings of the
  35th {ACM} {SIGMOD-SIGACT-SIGAI} Symposium on Principles of Database Systems,
  {PODS} 2016, San Francisco, CA, USA, June 26 - July 01, 2016}, pages 13--28.
  {ACM}, 2016.
\newblock \href {https://doi.org/10.1145/2902251.2902280}
  {\path{doi:10.1145/2902251.2902280}}.

\bibitem{DBLP:conf/pods/KhamisNS16}
Mahmoud~Abo Khamis, Hung~Q. Ngo, and Dan Suciu.
\newblock Computing join queries with functional dependencies.
\newblock In Tova Milo and Wang{-}Chiew Tan, editors, {\em Proceedings of the
  35th {ACM} {SIGMOD-SIGACT-SIGAI} Symposium on Principles of Database Systems,
  {PODS} 2016, San Francisco, CA, USA, June 26 - July 01, 2016}, pages
  327--342. {ACM}, 2016.
\newblock \href {https://doi.org/10.1145/2902251.2902289}
  {\path{doi:10.1145/2902251.2902289}}.

\bibitem{DBLP:journals/corr/Khamis0S16}
Mahmoud~Abo Khamis, Hung~Q. Ngo, and Dan Suciu.
\newblock What do shannon-type inequalities, submodular width, and disjunctive
  datalog have to do with one another?
\newblock {\em CoRR}, abs/1612.02503, 2016.
\newblock URL: \url{http://arxiv.org/abs/1612.02503}, \href
  {http://arxiv.org/abs/1612.02503} {\path{arXiv:1612.02503}}.

\bibitem{DBLP:conf/pods/Khamis0S17}
Mahmoud~Abo Khamis, Hung~Q. Ngo, and Dan Suciu.
\newblock What do shannon-type inequalities, submodular width, and disjunctive
  datalog have to do with one another?
\newblock In Emanuel Sallinger, Jan~Van den Bussche, and Floris Geerts,
  editors, {\em Proceedings of the 36th {ACM} {SIGMOD-SIGACT-SIGAI} Symposium
  on Principles of Database Systems, {PODS} 2017, Chicago, IL, USA, May 14-19,
  2017}, pages 429--444. {ACM}, 2017.
\newblock \href {https://doi.org/10.1145/3034786.3056105}
  {\path{doi:10.1145/3034786.3056105}}.

\bibitem{DBLP:journals/pvldb/LeisGMBK015}
Viktor Leis, Andrey Gubichev, Atanas Mirchev, Peter~A. Boncz, Alfons Kemper,
  and Thomas Neumann.
\newblock How good are query optimizers, really?
\newblock {\em Proc. {VLDB} Endow.}, 9(3):204--215, 2015.
\newblock URL: \url{http://www.vldb.org/pvldb/vol9/p204-leis.pdf}, \href
  {https://doi.org/10.14778/2850583.2850594}
  {\path{doi:10.14778/2850583.2850594}}.

\bibitem{DBLP:journals/pvldb/LiuD0Z21}
Jie Liu, Wenqian Dong, Dong Li, and Qingqing Zhou.
\newblock Fauce: Fast and accurate deep ensembles with uncertainty for
  cardinality estimation.
\newblock {\em Proc. {VLDB} Endow.}, 14(11):1950--1963, 2021.
\newblock URL: \url{http://www.vldb.org/pvldb/vol14/p1950-liu.pdf}.

\bibitem{DBLP:journals/pvldb/NegiMKMTKA21}
Parimarjan Negi, Ryan~C. Marcus, Andreas Kipf, Hongzi Mao, Nesime Tatbul, Tim
  Kraska, and Mohammad Alizadeh.
\newblock Flow-loss: Learning cardinality estimates that matter.
\newblock {\em Proc. {VLDB} Endow.}, 14(11):2019--2032, 2021.
\newblock URL: \url{http://www.vldb.org/pvldb/vol14/p2019-negi.pdf}.

\bibitem{DBLP:conf/pods/000118}
Hung~Q. Ngo.
\newblock Worst-case optimal join algorithms: Techniques, results, and open
  problems.
\newblock In Jan~Van den Bussche and Marcelo Arenas, editors, {\em Proceedings
  of the 37th {ACM} {SIGMOD-SIGACT-SIGAI} Symposium on Principles of Database
  Systems, Houston, TX, USA, June 10-15, 2018}, pages 111--124. {ACM}, 2018.
\newblock \href {https://doi.org/10.1145/3196959.3196990}
  {\path{doi:10.1145/3196959.3196990}}.

\bibitem{DBLP:conf/sigmod/ParkKBKHH20}
Yeonsu Park, Seongyun Ko, Sourav~S. Bhowmick, Kyoungmin Kim, Kijae Hong, and
  Wook{-}Shin Han.
\newblock {G-CARE:} {A} framework for performance benchmarking of cardinality
  estimation techniques for subgraph matching.
\newblock In David Maier, Rachel Pottinger, AnHai Doan, Wang{-}Chiew Tan,
  Abdussalam Alawini, and Hung~Q. Ngo, editors, {\em Proceedings of the 2020
  International Conference on Management of Data, {SIGMOD} Conference 2020,
  online conference [Portland, OR, USA], June 14-19, 2020}, pages 1099--1114.
  {ACM}, 2020.
\newblock \href {https://doi.org/10.1145/3318464.3389702}
  {\path{doi:10.1145/3318464.3389702}}.

\bibitem{DBLP:conf/sigmod/Sun0021}
Ji~Sun, Guoliang Li, and Nan Tang.
\newblock Learned cardinality estimation for similarity queries.
\newblock In Guoliang Li, Zhanhuai Li, Stratos Idreos, and Divesh Srivastava,
  editors, {\em {SIGMOD} '21: International Conference on Management of Data,
  Virtual Event, China, June 20-25, 2021}, pages 1745--1757. {ACM}, 2021.
\newblock \href {https://doi.org/10.1145/3448016.3452790}
  {\path{doi:10.1145/3448016.3452790}}.

\bibitem{DBLP:journals/pvldb/WangQWWZ21}
Xiaoying Wang, Changbo Qu, Weiyuan Wu, Jiannan Wang, and Qingqing Zhou.
\newblock Are we ready for learned cardinality estimation?
\newblock {\em Proc. {VLDB} Endow.}, 14(9):1640--1654, 2021.
\newblock URL: \url{http://www.vldb.org/pvldb/vol14/p1640-wang.pdf}.

\bibitem{DBLP:conf/sigmod/WuC21}
Peizhi Wu and Gao Cong.
\newblock A unified deep model of learning from both data and queries for
  cardinality estimation.
\newblock In Guoliang Li, Zhanhuai Li, Stratos Idreos, and Divesh Srivastava,
  editors, {\em {SIGMOD} '21: International Conference on Management of Data,
  Virtual Event, China, June 20-25, 2021}, pages 2009--2022. {ACM}, 2021.
\newblock \href {https://doi.org/10.1145/3448016.3452830}
  {\path{doi:10.1145/3448016.3452830}}.

\bibitem{wu2020bayescard}
Ziniu Wu, Amir Shaikhha, Rong Zhu, Kai Zeng, Yuxing Han, and Jingren Zhou.
\newblock Bayescard: Revitilizing bayesian frameworks for cardinality
  estimation.
\newblock {\em arXiv e-prints}, pages arXiv--2012, 2020.

\bibitem{DBLP:journals/pvldb/YangKLLDCS20}
Zongheng Yang, Amog Kamsetty, Sifei Luan, Eric Liang, Yan Duan, Xi~Chen, and
  Ion Stoica.
\newblock Neurocard: One cardinality estimator for all tables.
\newblock {\em Proc. {VLDB} Endow.}, 14(1):61--73, 2020.
\newblock URL: \url{http://www.vldb.org/pvldb/vol14/p61-yang.pdf}, \href
  {https://doi.org/10.14778/3421424.3421432}
  {\path{doi:10.14778/3421424.3421432}}.

\bibitem{yang2020neurocard}
Zongheng Yang, Amog Kamsetty, Sifei Luan, Eric Liang, Yan Duan, Xi~Chen, and
  Ion Stoica.
\newblock Neurocard: one cardinality estimator for all tables.
\newblock {\em arXiv preprint arXiv:2006.08109}, 2020.

\bibitem{zhu2020flat}
Rong Zhu, Ziniu Wu, Yuxing Han, Kai Zeng, Andreas Pfadler, Zhengping Qian,
  Jingren Zhou, and Bin Cui.
\newblock Flat: fast, lightweight and accurate method for cardinality
  estimation.
\newblock {\em arXiv preprint arXiv:2011.09022}, 2020.

\bibitem{DBLP:journals/pvldb/ZhuWHZPQZC21}
Rong Zhu, Ziniu Wu, Yuxing Han, Kai Zeng, Andreas Pfadler, Zhengping Qian,
  Jingren Zhou, and Bin Cui.
\newblock {FLAT:} fast, lightweight and accurate method for cardinality
  estimation.
\newblock {\em Proc. {VLDB} Endow.}, 14(9):1489--1502, 2021.
\newblock URL: \url{http://www.vldb.org/pvldb/vol14/p1489-zhu.pdf}.

\end{thebibliography}
\appendix

\onecolumn

\section{Appendix}

\label{sec:appendix}

\subsection{Discussion  the Framework}

\label{app:background}


Throughout this paper we restricted the discussion to Berge-acyclic
queries.  In this section we briefly review their standard definition,
and comment on their utility in practice.

We begin by reviewing the definition of a Berge-cycle,
following~\cite{DBLP:journals/jacm/Fagin83}.  Fix an arbitrary
conjunctive query $Q$, which we view as a hypergraph, and consider its
incidence graph, defined as the following bipartite graph:
$T \defeq (\bm R\cup \bm X, E \defeq \setof{(R,Z)}{Z \in \bm X_R})$.
A \textit{Berge-Cycle} is a sequence
$(R_1, X_1, R_2, X_2, \ldots, R_m, X_m, R_{m+1})$ such that,
\begin{enumerate}
    \item $R_i \,\,\forall \,\, 1\leq i\leq m$ is a unique relation in $\bm R$.
    \item $R_1=R_{m+1}$
    \item $X_i \,\,\forall \,\, 1\leq i\leq m$ is a unique variable in $\bm X$.
    \item $m\geq 2$
    \item $X_i\in \bm X_{R_i}$ and $X_i\in \bm X_{R_{i+1}} \,\,\forall\,\, 1\leq i\leq m$
\end{enumerate}
It follows immediately that a Berge-cycle is a cycle in the incidence
graph, and vice versa.  A \textit{Berge-Acyclic} query is one that
does not contain a Berge-Cycle, or, equivalently, one whose incidence
graph is a tree.  This coincides with our definition in
Sec.~\ref{sec:problem}.

Berge-acyclic queries capture many traditional flavors of queries
found in practice, such as chain, star, and snowflake queries. As an
illustration of this, we note that many modern cardinality estimation
benchmarks include exclusively Berge-Acyclic queries, for example the
JOB benchmark \cite{DBLP:journals/pvldb/LeisGMBK015} and the STATS-CEB
benchmark~\cite{han2021cardinality}. Additionally, many state of the
art approaches to cardinality estimation have this limitation as well.
For example, NeuroCard, FLAT, and BayesCard cannot produce estimates
for cyclic queries~\cite{yang2020neurocard, zhu2020flat,
  wu2020bayescard}.

While rare, cyclic queries do occur in practice, and, in that case,
the degree sequence bound is still useful, as follows.  Consider all
possible spanning trees of the incidence graph.  The query defined by
each such spanning tree must produce a larger output than the original
query. This can be seen by viewing joins as a set of filters on the
Cartesian product of the input tables. A spanning tree operates on the
same Cartesian product, but it applies a subset of the cyclic query's
filters, and, therefore, it must produce at least as many rows. This
means that we can produce a bound on cyclic queries by taking the
minimum of the degree sequence bound for each spanning tree.  In
general, this bound is not tight, but is still an upper bound, and
still useful.  In contrast, cardinality estimation frameworks like
NeuroCard, FLAT, BayesCard aim at {\em estimating} the cardinality,
but an estimate on the cardinality of a spanning tree is useless for
estimating the cardinality of the cyclic query.  In other words, while
our degree sequence bound makes similar restrictions on the queries as
other state-of-the-art cardinality estimators, the latter cannot be
used on cyclic queries, while ours is still useful.


\subsection{Proof of Theorem~\ref{th:main:star}}
\label{app:thm:main:star}
%
In this section we prove each of the items in the theorem and restate them at each point for convenience. 

\subsubsection{Permutations $\sigma$ Are the Identity}

We start by proving that it suffices to restrict $\bm \sigma$ in
Problem~\ref{prob:pessimistic:tensor} to be the identity permutations.
More precisely, we prove:

\begin{lmm}
  \label{lemma:sigma:identity} For any tensor
  $\bm M \in \calM^+_{\bm f^{(\bm X)},B}$ and any tuple of permutations
  $\bm \sigma \in S_{[\bm n]}$ there exists a tensor
  $\bm N \in \calM^+_{\bm f^{(\bm X)},B}$ such that, for all
  non-increasing vectors $\bm a^{(X_p)} \in \R_+^{[n_p]}$, $p=1,d$,
  the following holds:
    \begin{align*}
          (\bm M \circ \bm \sigma) \cdot \bm a^{(X_1)} \cdots \bm a^{(X_d)} \leq & \bm N \cdot \bm a^{(X_1)} \cdots \bm a^{(X_d)}
    \end{align*}
\end{lmm}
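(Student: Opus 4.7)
My plan is to reduce the claim to a combinatorial ``corner dominates rectangle'' inequality via multilinearity, then construct $\bm N$ by a greedy procedure driven by the degree sequences $\bm f^{(\bm X)}$ alone.

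First, every non-increasing vector $\bm a^{(X_p)}\in \R_+^{[n_p]}$ decomposes as a non-negative combination of step indicators, $\bm a^{(X_p)}=\sum_{r=1}^{n_p}(a^{(X_p)}_r-a^{(X_p)}_{r+1})\mathbf{1}_{[1,r]}$ (with $a^{(X_p)}_{n_p+1}\defeq 0$). Multilinearity of $\bm N\cdot\prod_p \bm a^{(X_p)}$ in the $\bm a^{(X_p)}$'s, combined with non-negativity of the decomposition coefficients, reduces the claim to the special case where each $\bm a^{(X_p)}=\mathbf{1}_{[1,r_p]}$. In this case the inequality becomes
\[
\sum_{\bm y\in S_{\bm r}}M_{\bm y} \;\leq\; \sum_{\bm y\leq \bm r}N_{\bm y}, \qquad S_{\bm r}\defeq \sigma_1([1,r_1])\times\cdots\times\sigma_d([1,r_d]),
\]
required to hold for every $\bm r\in[\bm n]$. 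Collapsing $\bm M$ onto dimension $p$ and using $g_p\leq \bm f^{(X_p)}$ together with the monotonicity of $\bm f^{(X_p)}$ yields the a priori bound $\sum_{\bm y\in S_{\bm r}}M_{\bm y}\leq \min_p F^{(X_p)}_{r_p}$, where $F^{(X_p)}_{r_p}\defeq\sum_{y=1}^{r_p}f^{(X_p)}_y$; when $B<\infty$ one also has $\sum_{\bm y\in S_{\bm r}}M_{\bm y}\leq B\prod_p r_p$.

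Second, for the case $B=\infty$ I would take $\bm N$ to be the tensor produced by Algorithm~\ref{alg:fast_C_alg} applied to $\bm f^{(\bm X)}$. A direct induction on the greedy steps, in the spirit of Example~\ref{ex:simple:c}, shows that $\bm N\in \calM^+_{\bm f^{(\bm X)},\infty}$, that $\bm N\geq 0$, and that $\sum_{\bm y\leq \bm r}N_{\bm y}=\min_p F^{(X_p)}_{r_p}$ for every $\bm r$. Combined with the previous a priori bound, this completes the proof when $B=\infty$, and crucially the resulting $\bm N$ depends only on $\bm f^{(\bm X)}$ and not on $\bm M$ or $\bm \sigma$.

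The main obstacle is the finite-$B$ case: the raw output of Algorithm~\ref{alg:fast_C_alg} may exceed $B$ on some entries, and therefore no longer lies in $\calM^+_{\bm f^{(\bm X)},B}$. I plan to handle it by augmenting the greedy construction with a per-entry cap at $B$ and a controlled redistribution of the capped mass along the remaining coordinates, arguing by case analysis that the modified corner sums still dominate both upper envelopes $\min_p F^{(X_p)}_{r_p}$ and $B\prod_p r_p$ of the rectangle sums. The careful bookkeeping needed to simultaneously respect the marginal constraints, the per-entry bound, and the corner-sum requirements is the technically demanding part; I expect it to hinge on the observation that whenever the marginal constraint ceases to be tight (so that the greedy step is $B$-limited), spare marginal capacity is available to absorb the redistributed mass without violating $\bm f^{(X_p)}$.
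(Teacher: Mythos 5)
Your reduction to one-zero vectors and the a priori bound $\sum_{\bm y\in S_{\bm r}}M_{\bm y}\leq \min_p F^{(X_p)}_{r_p}$ are both sound, and for $B=\infty$ your plan works: the greedy tensor of Algorithm~\ref{alg:fast_C_alg} is non-negative, consistent with $\bm f^{(\bm X)}$, and has corner sums equal to $\min_p F^{(X_p)}_{r_p}$, so a single $\bm N$ dominates every $\bm M\circ\bm\sigma$ simultaneously --- in fact a stronger statement than the lemma in that case.

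The finite-$B$ branch, however, cannot be repaired along the lines you sketch, and this is the genuine gap. Any tensor $\bm N$ built only from $\bm f^{(\bm X)}$ and $B$ that dominates every $\bm M\in\calM^+_{\bm f^{(\bm X)},B}$ under every $\bm\sigma$ would be a solution to Problem~\ref{prob:pessimistic:tensor} lying inside $\calM^+_{\bm f^{(\bm X)},B}$; by item~\ref{item:th:main:star:3} of Theorem~\ref{th:main:star} it would then have to coincide with $\bm C$, yet item~\ref{item:th:main:star:1} and Example~\ref{ex:greater:than:b} (degree sequences $\bm f=\bm g=(2B,2B,B+1)$) exhibit a case where $C_{33}=B+1>B$. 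More concretely, taking the supremum over $\bm M$ in your rectangle inequality forces $\sum_{\bm y\leq\bm r}N_{\bm y}\geq V_{\bm r}$ for every $\bm r$, while membership in $\calM^+_{\bm f^{(\bm X)},B}$ forces the reverse inequality, so $\Sigma_{X_1}\cdots\Sigma_{X_d}\bm N=\bm V$ and $\bm N=\bm C\notin\calM^+_{\bm f^{(\bm X)},B}$. Hence no capping-and-redistribution scheme that ignores $\bm M$ can succeed; the bookkeeping you anticipate as technically demanding is actually impossible. The paper closes the case by letting $\bm N$ depend on $\bm M$: Lemma~\ref{lemma:matching} and Proposition~\ref{prop:matching:1} construct $\bm N$ from $\bm M\circ\bm\sigma$ by repeatedly transferring a fraction of mass between two hyperplanes to fix inversions of the marginals; each transfer raises an entry to at most $\max(M_{\bm k i},M_{\bm k j})\leq B$ and lowers one to at least $\min(M_{\bm k i},M_{\bm k j})\geq 0$, so $\bm N$ stays in $\calM^+_{\bm f^{(\bm X)},B}$ while its marginals become sorted and hence still dominated by the non-increasing $\bm f^{(Z)}$. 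You would need to switch to a construction of this $\bm M$-dependent kind to complete the proof.
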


The lemma implies that $\bm \sigma$ can be chosen to be the identity
in Problem~\ref{prob:pessimistic:tensor}, because we can simply
replace $\bm M$ and $\bm \sigma$, with $\bm N$ and the identity
permutation.  In other words that, in order to compute the upper bound
to $Q_{\texttt{start}}$, it suffices to restrict the relation instance
$S$ to have the highest degrees aligned with the highest degrees of
the unary relations $R^{(p)}$, $p=1,d$.
%
%

To prove Lemma~\ref{lemma:sigma:identity}, we establish the following:

\begin{lmm} \label{lemma:matching} Let $\bm M \in \R_+^{[\bm n]}$ be
  an $\bm X$-tensor s.t.
  $\bm M \circ \bm \sigma^{-1} \in \calM^+_{\bm f^{(\bm X)}, B}$ for
  some $\bm \sigma$.  Then there exists an $\bm X$-tensor
  $\bm N \in \R_+^{[\bm n]}$ satisfying:
  \begin{itemize}
  \item $||\bm M||_\infty = ||\bm N||_\infty$.
  \item For every variable $X_p$,
    $\texttt{SUM}_{\bm X-\set{X_p}}(\bm N)$ is non-increasing, and is
    equal, up to a permutation, to
    $\texttt{SUM}_{\bm X-\set{X_p}}(\bm M)$.  In particular,
    $\bm N \in \calM^+_{\bm f^{(\bm X)}, B}$.
  \item For any non-increasing vectors
    $\bm a^{(X_p)} \in \R_+^{[n_p]}$:
  \begin{align}
    \bm M \cdot \bm a^{(X_1)}\cdots \bm a^{(X_d)} \leq & \bm N \cdot \bm a^{(X_1)}\cdots \bm a^{(X_d)}  \label{eq:m:leq:n}
  \end{align}
  \end{itemize}
\end{lmm}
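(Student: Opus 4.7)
My plan is to prove this multi-dimensional rearrangement lemma by an iterative construction that straightens out the marginals of $\bm M$ one axis at a time. The key observation underlying this strategy is that any modification acting only on slices perpendicular to a fixed axis $p$ leaves the marginals along every other axis invariant: those marginals sum over axis $p$, so any axis-$p$ rearrangement is absorbed in the summation. I would therefore proceed inductively on $p$: given a tensor whose marginals along axes $1, \ldots, p-1$ are already non-increasing, produce a new tensor whose marginals along axes $1, \ldots, p$ are all non-increasing, while preserving the multi-set of marginals along every axis and preserving $\lVert\cdot\rVert_\infty$. Iterating for $p = 1, 2, \ldots, d$ yields the desired $\bm N$.

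The per-axis step is a bubble-sort on slices along axis $p$: if the marginal $\phi_p \defeq \texttt{SUM}_{\bm X - \set{X_p}}(\bm M)$ has an inversion $\phi_p(i) < \phi_p(i+1)$, I would swap the two slices at positions $i$ and $i+1$ along axis $p$. A direct computation shows that the change in the dot product produced by such a swap equals $(a^{(X_p)}_i - a^{(X_p)}_{i+1}) \cdot (\psi_p(i+1) - \psi_p(i))$, where $\psi_p(i_p) \defeq \sum_{\bm i \setminus i_p} M_{\bm i} \prod_{q \neq p} a^{(X_q)}_{i_q}$ denotes the weighted slice-sum. The first factor is non-negative since $\bm a^{(X_p)}$ is non-increasing. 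The main obstacle is the second factor: an inversion in the unweighted marginal $\phi_p$ does not by itself force the weighted sum $\psi_p$ to be inverted the same way, so a naive slice swap can decrease the dot product.

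To overcome this, I would augment the slice swap with internal 2x2 rearrangements between the two slices: these move mass along some secondary axis $q$ inside the two slices so as to preserve both axis-$p$ and axis-$q$ marginals while adjusting $\psi_p$ so the swap becomes monotone. The non-increasing property of the remaining $\bm a^{(X_q)}$'s and the already-sorted marginals along axes $1, \ldots, p-1$ (from the inductive hypothesis) are precisely what guarantees these internal 2x2 rearrangements can be oriented in the correct direction. Each combined operation strictly decreases the number of inversions in $\phi_p$, preserves the marginal multi-sets along every axis, preserves $\lVert\cdot\rVert_\infty$ (since a 2x2 swap can be chosen so as not to create an entry exceeding the current maximum), and does not decrease the dot product.

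As a cleaner alternative, if the interleaved-swap argument becomes too intricate, I would instead construct $\bm N$ globally as the corner-packed rearrangement of $\bm M$: greedily place the largest entries of $\bm M$ at positions closest to the origin $(1,\ldots,1)$, subject to the sorted marginal constraints. The condition $\lVert\bm N\rVert_\infty = \lVert\bm M\rVert_\infty$ corresponds to placing the maximum entry of $\bm M$ at position $(1,\ldots,1)$. Verification of the dot-product inequality would then follow from a majorization argument, since the rank-1 tensor $\bigotimes_p \bm a^{(X_p)}$ is non-increasing in each coordinate and $\bm N$ dominates $\bm M$ in the corresponding cumulative partial order.
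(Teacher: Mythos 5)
You correctly identify the central obstruction -- that an inversion in the unweighted marginal $\phi_p$ does not force the weighted slice-sums $\psi_p$ to be inverted the same way, so a naive slice swap can decrease the objective (the paper makes exactly this point with a $2\times 3$ counterexample). But your proposed repair is where the proof actually lives, and it is left as an assertion. The claim that interleaved $2\times 2$ rearrangements can always reorient $\psi_p$ so that the swap becomes monotone, while simultaneously preserving all other marginals, keeping every entry in $[0,B]$, and not increasing $\lVert\cdot\rVert_\infty$, is not obvious and is not argued; nor is termination of the combined bubble-sort-plus-corrections process. Your appeal to the inductive hypothesis that axes $1,\ldots,p-1$ are already sorted is a further warning sign: the correct argument needs no such cross-axis coupling at all. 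The fallback "corner-packed rearrangement" has the same problem in a different guise -- constructing a tensor that has the prescribed (permuted) marginals, attains the same $\lVert\cdot\rVert_\infty$, \emph{and} dominates $\bm M$ in the cumulative partial order is precisely the hard content, and greedy placement of the largest entries near the origin is not shown to satisfy the marginal constraints or the dominance.

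The paper's proof replaces the slice swap with a different elementary move that dissolves the difficulty. To fix an inversion $f_i < f_j$ ($i<j$) in the axis-$q$ marginal, set $\delta \defeq f_j - f_i$, and for each fiber $\bm k$ transversal to axis $q$ let $\delta_{\bm k}^+ \defeq \max(M_{\bm k j}-M_{\bm k i},0)$; then transfer mass $\theta\,\delta_{\bm k}^+$ from position $\bm k j$ to position $\bm k i$, where $\theta \defeq \delta/\sum_{\bm k}\delta_{\bm k}^+ \in (0,1]$. Because mass moves only \emph{within} each fiber along axis $q$, every other marginal is preserved exactly (not merely up to permutation); the axis-$q$ marginal becomes $\bm f\circ\tau_{ij}$; each entry stays between $\min(M_{\bm k i},M_{\bm k j})$ and $\max(M_{\bm k i},M_{\bm k j})$, hence in $[0,B]$; and the change in the dot product is $\bigl(a^{(X_q)}_i - a^{(X_q)}_j\bigr)\sum_{\bm k}\theta\,\delta_{\bm k}^+\prod_{p\neq q} a^{(X_p)}_{k_p} \geq 0$, using only that $\bm a^{(X_q)}$ is non-increasing and the remaining vectors are non-negative. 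The non-negativity of the transferred mass in every fiber is what makes the monotonicity automatic, with no need for the corrective rearrangements your argument hinges on. As written, your proposal has a genuine gap at its key step.
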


We first show that Lemma~\ref{lemma:matching} implies
Lemma~\ref{lemma:sigma:identity}.  Let $\bm M, \bm \sigma$ be given as
in Lemma~\ref{lemma:sigma:identity}.  We apply
Lemma~\ref{lemma:matching} to $\bm M \circ \bm \sigma$, and obtain
some tensor $\bm N \in \calM^+_{\bm f^{(\bm X)}, B}$ such that
$(\bm M \circ \bm \sigma) \cdot \bm a^{(X_1)} \cdots \bm a^{(X_d)}
\leq \bm N \cdot \bm a^{(X_1)} \cdots \bm a^{(X_d)}$.  Thus, it
remains to prove Lemma~\ref{lemma:matching}.

The proof of Lemma~\ref{lemma:matching} consists of modifying $\bm M$
by sorting each dimension $X_p$ separately.  To sort the dimension
$X_p$, we cannot simply switch some hyperplane $X_p=i$ with another
hyperplane $X_p=j$, because that doesn't guarantee~\eqref{eq:m:leq:n}.
Instead, we move mass ``up'' in the tensor along the dimension $X_p$.
We first illustrate the proof idea with an example.

\begin{ex}
  We want to maximize the product $\bm a^T \cdot \bm M \cdot \bm b$,
  for some $2 \times 3$ matrix $\bm M$, constraint by the degree
  sequences $\bm f = (16,12)$ and $\bm g=(14,11,3)$, and element bound
  $B=10$.  Our current candidate is
  \begin{align*}
    \bm M =
    & \begin{pNiceMatrix}[first-row,first-col]
      \  & 14 & 3 & 11 \\
      12 & 10 & 1 & 1  \\
      16 & 4  & 2 & 10 
    \end{pNiceMatrix}
  \end{align*}
  We will modify $\bm M$ to a new matrix $\bm N$, whose row-sums are
  $16,12$, while
  $\bm a^T \cdot \bm M \cdot \bm b \leq \bm a^T \cdot \bm N \cdot \bm
  b$, for all $\bm a \bm b$.  Simply switching rows 1 and 2 doesn't
  work: indeed, if we set $\bm a^T \defeq (1,0)^T$,
  $\bm b^T\defeq (1,0,0)$ then $\bm a^T \cdot \bm M \cdot \bm b= 10$,
  but after we switch the two rows,
  $\bm a^T \cdot \bm N \cdot \bm b = 4$.  Instead, we move positive
  mass up, separately in each column.  The maximum we can move in each
  column $k$ is $v_k \defeq (M_{k2}-M_{k1})^+$, where
  $x^+ \defeq \max(x,0)$.  Denoting by $\bm v$ the vector
  $(v_1, v_2, v_3)$, we have:
  \begin{align*}
\bm v = &  \begin{pmatrix}
      0 & 1 & 9 
    \end{pmatrix}
  \end{align*}
  The row-sum in $\bm v$ is 10, but we only need to move a total mass
  of 4 (the difference $16-12$), hence we will move only a
  $\theta = 4/10$-fraction of $\bm v$, namely
  $\theta\cdot \bm v = \begin{pmatrix} 0 & 0.4 & 3.6 \end{pmatrix}$,
  and obtain:
  \begin{align*}
    \bm N \defeq
    & \begin{pNiceMatrix}[first-row,first-col]
      \  & 14 & 3   & 11  \\
      16 & 10 & 1.4 & 4.6 \\
      12 & 4  & 1.6 & 7.4   
    \end{pNiceMatrix}
  \end{align*}
  Finally, we check that the quantity
  $\bm a^T \cdot \bm M \cdot \bm b$ has not decreased:
  \begin{align*}
    \bm a^T \cdot \bm N \cdot \bm b - &  \bm a^T \cdot \bm M \cdot \bm b =
     \left(\bm a^T \cdot (\bm N - \bm M)\right) \cdot \bm b = \\
    = & \theta\cdot
    \begin{pmatrix}
      (a_1-a_2)v_1 & (a_1-a_2)v_2 & (a_1-a_2)v_3
    \end{pmatrix}
\cdot \bm b
  \end{align*}
  This quantity is $\geq 0$ because $a_1 \geq a_2$ and $\bm b \geq 0$.
  Observe that $\bm N$ has exactly the same column-sums as $\bm M$,
  and $||\bm M||_\infty = ||\bm N||_\infty$, while its rows are sorted
  in non-increasing order of their sums.  If we repeat the same
  argument for the columns 2 and 3, then we obtain a matrix whose
  row-sums and column-sums are $(16, 12)$, and $(14,11,3)$, and which
  upper bounds the quantity $\bm a^T \cdot \bm M \cdot \bm b$ for all
  non-increasing $\bm a, \bm b$.
\end{ex}

Before we prove Lemma~\ref{lemma:matching} we need some
notations. An {\em inversion} in a vector $\bm f \in \R^{[n]}$ is a
pair $i < j$ such that $f_i \geq f_j$.  The vector is non-increasing
iff it has the maximum number of inversions, $n(n-1)/2$.  If $(i,j)$
is not an inversion, then $\bm f \circ \tau_{ij}$ has strictly more
inversions, where $\tau_{ij}$ is the permutation swapping $i$ and
$j$.  The proof of Lemma~\ref{lemma:matching} follows from the
following Proposition:

\begin{prop} \label{prop:matching:1}
  Let $\bm M \in \R_+^{[\bm n]}$ be $\bm X$-tensor, $q \in [d]$ be one
  of its dimensions, and $\bm f \defeq \texttt{SUM}_{\bm X-\set{X_q}}(\bm M)$.  If $\bm f$
  as no inversion at $(i,j)$, then there exists a $\bm X$-tensor
  $\bm N \in \R_+^{\bm n}$ with the following properties:
  \begin{itemize}
  \item $||\bm M||_\infty = ||\bm N||_\infty$.
  \item $\texttt{SUM}_{\bm X-\set{X_p}}(\bm M) = \texttt{SUM}_{\bm X-\set{X_p}}(\bm N)$ for all $p \neq q$.
  \item $\texttt{SUM}_{\bm X-\set{X_q}}(\bm N) = \bm f \circ \tau_{ij}$. 
  \item Inequality~\eqref{eq:m:leq:n} holds for any vectors
    $\bm a^{(X_p)} \in \R_+^{[n_p]}$ such that $\bm a^{(X_q)}$ is
    non-increasing,
  \end{itemize}
\end{prop}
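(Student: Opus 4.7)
The plan is to construct $\bm N$ by redistributing mass between the $i$-th and $j$-th fibers of $\bm M$ along dimension $X_q$, keeping all other fibers fixed. For each $\bm y \in D_{\bm X - \set{X_q}}$ let $v_{\bm y} \defeq (M_{j,\bm y} - M_{i,\bm y})^+$, set $\theta \defeq (f_j - f_i)/\sum_{\bm y} v_{\bm y}$, and define $N_{i,\bm y} \defeq M_{i,\bm y} + \theta v_{\bm y}$, $N_{j,\bm y} \defeq M_{j,\bm y} - \theta v_{\bm y}$, with all remaining entries unchanged. Intuitively, $v_{\bm y}$ is the mass that can be pushed from index $j$ up to index $i$ inside the $\bm y$-fiber without exceeding values already present, and $\theta$ is scaled precisely so that the two $X_q$-marginals swap.

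First I would verify the feasibility of this construction. The no-inversion hypothesis gives $f_j > f_i$, and $\sum_{\bm y} v_{\bm y} \geq \sum_{\bm y}(M_{j,\bm y} - M_{i,\bm y}) = f_j - f_i > 0$, so $\theta$ is a well-defined scalar in $(0,1]$. Non-negativity of $\bm N$ is then immediate: on fibers with $v_{\bm y} > 0$ we have $\theta v_{\bm y} \leq v_{\bm y} = M_{j,\bm y} - M_{i,\bm y}$, while fibers with $v_{\bm y} = 0$ are untouched. Since both modified entries lie in the interval $[M_{i,\bm y}, M_{j,\bm y}]$, this also yields $||\bm N||_\infty \leq ||\bm M||_\infty$, which is exactly what is needed to keep $\bm N \in \calM^+_{\bm f^{(\bm X)}, B}$. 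For the marginals: in any dimension $p \neq q$ the shift $+\theta v_{\bm y}$ at index $i$ cancels $-\theta v_{\bm y}$ at index $j$, so $\texttt{SUM}_{\bm X - \set{X_p}}(\bm N) = \texttt{SUM}_{\bm X - \set{X_p}}(\bm M)$; along $X_q$, the new marginal at $i$ equals $f_i + \theta \sum_{\bm y} v_{\bm y} = f_j$ and at $j$ equals $f_i$, producing exactly $\bm f \circ \tau_{ij}$.

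The only remaining step is inequality~\eqref{eq:m:leq:n}. Because the only modifications sit on the two fibers at indices $i$ and $j$ of dimension $X_q$, expansion gives
\begin{align*}
\bm N \cdot \bm a^{(X_1)} \cdots \bm a^{(X_d)} - \bm M \cdot \bm a^{(X_1)} \cdots \bm a^{(X_d)} = \theta \bigl(a^{(X_q)}_i - a^{(X_q)}_j\bigr) \sum_{\bm y} v_{\bm y} \prod_{p \neq q} a^{(X_p)}_{y_p}.
\end{align*}
Since $i < j$ and $\bm a^{(X_q)}$ is non-increasing, $a^{(X_q)}_i \geq a^{(X_q)}_j$, and all remaining factors are non-negative, so the difference is non-negative. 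There is no real obstacle beyond the choice of $\theta$: it is tailored precisely to swap the two $X_q$-marginals while keeping $\bm N$ non-negative, and once both properties are in place the inequality collapses to a one-line computation.
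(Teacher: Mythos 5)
Your construction is exactly the one the paper uses: push a $\theta$-scaled fraction of the fiberwise positive differences $(M_{j,\bm y}-M_{i,\bm y})^+$ from the $j$-th to the $i$-th hyperplane along $X_q$, with $\theta$ chosen so the two $X_q$-marginals swap, then verify non-negativity, the marginal identities, and the telescoping of the dot-product difference into $\theta(a^{(X_q)}_i-a^{(X_q)}_j)\sum_{\bm y}v_{\bm y}\prod_{p\neq q}a^{(X_p)}_{y_p}\geq 0$. The argument is correct and matches the paper's proof step for step (including the shared, harmless imprecision that only $\|\bm N\|_\infty\leq\|\bm M\|_\infty$ and $\bm N\leq B$ are actually verified rather than the stated equality of sup-norms).
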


By repeatedly applying the proposition to dimension $q$ we can ensure
that $\texttt{SUM}_{\bm X-\set{X_q}}(\bm M)$ is non-increasing; the
proof of Lemma~\ref{lemma:matching} follows by repeating this for each
dimension $q$.

\begin{proof} (of Proposition~\ref{prop:matching:1}) Define:
  \begin{align*}
    \delta \defeq & f_{j}-f_{i}
  \end{align*}
  The idea of the proof is to modify $\bm M$ by pushing a total mass
  of $\delta$ from the hyperplane $j$ to the hyperplane $i$.  When
  doing so we must ensure that the new values in hyperplane $i$ don't
  exceed $B$, those in hyperplane $j$ don't become $<0$, and we don't
  push more mass than $\delta$.

  Denote by $x^+ \defeq \max(x, 0)$ and define the following
  quantities, where $\bm k \in \prod_{p=1,d; p\neq q} [n_p]$, and
  $\bm k i$ denotes the tuple
  $(k_1, \ldots, k_{q-1},i,k_{q+1},\ldots k_d)$:
  \begin{align*}
    \delta_{\bm k} \defeq & M_{\bm k j} - M_{\bm k i}
    &   \theta \defeq & \frac{\delta}{\sum_{\bm k} \delta_{\bm k}^+} 
    & \varepsilon_{\bm k} \defeq &\theta \cdot \delta_{\bm k}^+
  \end{align*}
  Intuitively, $\delta_{\bm k}^+$ is the maximum amount of mass we
  could move from $M_{\bm kj}$ to $M_{\bm ki}$ without causing
  $M_{\bm kj}<0$ or $M_{\bm ki}>B$ (when $\delta_{\bm k}<0$ then we
  cannot move any mass in coordinate $\bm k$), while $\theta$
  represents the fraction of maximum mass that equals to the desired
  $\delta$.

  To define $\bm N$ formally, we introduce some notations.  For each
  variable $X_p \in \bm X$ and $t \in [n_p]$, let
  $\bm e^{(X, t)} \defeq (0,\ldots,0,1,0,\ldots,0)$ where the sole $1$
  is on position $t$.  Given $\bm t=(t_1, \ldots, t_d) \in [\bm n]$
  denote by
  $\bm D^{(\bm t)} \defeq \bigotimes_{p=1,d} \bm e^{(X_p, t_p)}$.
  This tensor has value $1$ on position $(t_1,\ldots, t_d)$ and $0$
  everywhere else.  Then, we define:
  \begin{align*}
    \bm N \defeq & \bm M + \sum_{\bm k} \varepsilon_{\bm k} (\bm D^{(\bm k i)} - \bm D^{(\bm k j)})
  \end{align*}
  Notice that $\delta = \sum_{\bm k} \delta_{\bm k} > 0$, by the
  assumption that $\bm f$ has no inversion at $i,j$, and
  $0 < \theta \leq 1$.  We start by checking that all entries in
  $\bm N$ are $\geq 0$ and $\leq B$. The only entries that have
  decreased are those in the hyperplane $X_q = j$:
  \begin{align*}
    N_{\bm k j} = & M_{\bm k j} - \theta \delta^+_{\bm k}  \geq  M_{\bm k j} -  \delta^+_{\bm k}\\
    = & M_{\bm k j}-\max(M_{\bm k j}-M_{\bm k i}, 0) = \min(M_{\bm k i}, M_{\bm k, j}) \geq 0
  \end{align*}
  The only entries that have increased are in hyperplane $X_p=i$:
  \begin{align*}
    \bm N_{\bm k i} = & \bm M_{\bm k i} + \theta \delta_{\bm k}^+ \leq \bm M_{\bm k i} + \delta_{\bm k}^+\\
= & M_{\bm k i} + \max(\bm M_{\bm k j} - \bm M_{\bm k i}) = \max(\bm  M_{\bm k j}, \bm M_{\bm k i}) \leq B
  \end{align*}
  By Eq.~\eqref{eq:push:aggregates:down} we have:
  \begin{align*}
    \texttt{SUM}_{\bm X-\set{X_p}}\left(\bm  D^{\bm t}\right)=
    &\texttt{SUM}_{\bm X-\set{X_p}} \left(\bigotimes_{\ell=1,d} \bm
      e^{(X_\ell, t_\ell)}\right) =
\bm e^{(X_p,t_p)} \otimes \bigotimes_{\ell=1,d; \ell\neq
      p}\texttt{SUM}_{X_\ell}(\bm e^{(X_\ell,t_\ell)})=\bm e^{(X_p, t_p)}
  \end{align*}
  because $\texttt{SUM}_{\bm X}\left(\bm e^{(X_\ell, t_\ell)}\right)=1$.
  Therefore, $\texttt{SUM}_{\bm X-\set{X_p}}\left(\bm D^{(\bm k i)}\right)$ is $\bm e^{X_p,
    k_p}$ when $p \neq q$, and is $\bm e^{X_q, i}$ when $p = q$, and
  we obtain:
  \begin{align*}
p\neq q: &&   \texttt{SUM}_{\bm X-\set{X_p}}(\bm N-\bm N) = &
                                                              \sum_{\bm k} \varepsilon_{\bm k} \texttt{SUM}_{\bm X-\set{X_p}}\left(\bm D^{(\bm k i)} - \bm D^{(\bm k j)}\right)
= 0 \\
p=q: && \texttt{SUM}_{\bm X-\set{X_q}}(\bm N-\bm N) = & \sum_{\bm k}
                                                   \varepsilon_{\bm k}
                                                   \texttt{SUM}_{\bm X-\set{X_q}}\left(\bm D^{(\bm k i)} - \bm D^{(\bm k j)}\right)\\
&& = & \left(\sum_{\bm k} \varepsilon_{\bm k}\right) (\bm e^{X_q, i}-\bm e^{X_q, j})
= \delta (\bm e^{X_q, i}-\bm e^{X_q, j})
  \end{align*}
  proving that $\texttt{SUM}_{\bm X-\set{X_q}}(\bm N) = \bm f \circ \tau_{ij}$.

  Finally, we check Inequality~\eqref{eq:m:leq:n}. First, we use the
  definition of dot product in Def.~\ref{def:tensor:ops} and
  Eq.~\eqref{eq:push:aggregates:down} to derive:
  \begin{align*}
    \bm D^{(\bm t)} \cdot \bm a^{(X_1)} \cdots \bm a^{(X_d)}= & \texttt{SUM}_{\bm X} \left(\bm D^{(\bm t)} \otimes \bm a^{(X_1)}  \otimes\cdots\otimes a^{(X_d)}\right) 
    \\=& \texttt{SUM}_{\bm X} \left((\bm e^{(X_1,t_1)}\otimes \bm a^{(X_1)})\otimes \cdots \otimes  (\bm e^{(X_d,t_d)}\otimes \bm a^{(X_d)}) \right)\\
    \\= & \texttt{SUM}_{X_1}(\bm e^{(X_1,t_1)}\otimes \bm a^{(X_1)})\cdots\texttt{SUM}_{X_d}(\bm e^{(X_d,t_d)}\otimes \bm a^{(X_)})
        \\=& a_{t_1}^{(X_1)} \times \cdots \times a_{t_d}^{(X_d)} = \prod_{p=1,d} a_{t_p}^{(X_p)}
  \end{align*}
  This implies:
  \begin{align*}
    \left(\bm N - \bm M\right) \cdot \bm a^{(X_1)} \cdots \bm a^{(X_d)} =
    & \sum_{\bm k} \varepsilon_{\bm k} (\bm D^{(\bm ki)}-\bm D^{(\bm kj)}) \cdot \bm a^{(X_1)} \cdots \bm a^{(X_d)}
    \\ =& \sum_{\bm k} \left(\varepsilon_{\bm k}a_i^{(X_q)}\prod_{p=1,d; p\neq q} a_{k_p}^{(X_p)}\right)-
        \sum_{\bm k} \left(\varepsilon_{\bm k}a_j^{(X_q)}\prod_{p=1,d; p\neq q} a_{k_p}^{(X_p)}\right)\\
    \\= &  \left(a^{(X_q)}_i-a^{(X_q)}_j\right) \sum_{\bm k} \varepsilon_{\bm k} \prod_{p=1,d; p\neq q} a^{(X_p)}_{k_q}
  \end{align*}
  This quantity is $\geq 0$ because $i < j$ and $a^{(X_q)}$ is
  non-increasing by assumption.
\end{proof}

\subsubsection{Proof of  Item~\ref{item:th:main:star:2} (a)}
\label{app:item:1:a}

\noindent\textbf{Item~\ref{item:th:main:star:2}:} \textit{$\bm C$ is a solution to
  Problem~\ref{prob:pessimistic:tensor}, i.e.
  $\bm C \in \calM_{\bm f^{(\bm X)},\infty}$ and it satisfies
  Eq.~\eqref{eq:c:is:max}.  Furthermore, it is tight in the following
  sense: there exists a tensor $\bm M \in \calM^+_{\bm f^{(\bm X)},B}$
  and non-increasing vectors $\bm a^{(p)} \in \R_+^{[n_p]}$, $p=1,d$,
  such that inequality~\eqref{eq:c:is:max} (with $\bm \sigma$ the
  identity) is an equality.}
\\\\
We split Item~\ref{item:th:main:star:2} into two parts:
\begin{description}
\item[(a)] $\bm C$ satisfies inequality~\eqref{eq:c:is:max} and there
  exists tensors $\bm M, \bm a^{(p)}$ such that the inequality becomes
  an equality, and
\item[(b)] $\bm C \in \calM_{\bm f^{(\bm X)},\infty}$.
\end{description}
Here we prove item (a), and defer item (b) to Sec.~\ref{app:item:1:b}.

We have seen in Lemma~\ref{lemma:sigma:identity} that it suffices to
take $\bm \sigma$ to be the identity.  Therefore, in order to $\bm C$
satisfies inequality~\eqref{eq:c:is:max}, it suffices to show that,
for all non-decreasing vectors $\bm a^{(X_p)}\in \R_+^{[n_p]}$,
$p=1,d$, and for every tensor $\bm M \in \calM^+_{\bm f^{(\bm X)},B}$,
the following holds:
\begin{align}
\bm M \cdot \bm a^{(X_1)} \cdots \bm a^{(X_d)} \leq\, & \bm C \cdot \bm a^{(X_1)} \cdots \bm a^{(X_d)}
\label{eq:c:is:max:no:sigma}
\end{align}
Call $\bm a \in \R^n$ a {\em one-zero vector} if
$\bm a = (1,1,\ldots,1,0,\ldots,0)$.  For $m=1,n$, denote by
$\bm b^{(m)}$ the one-zero vector with exactly $m$ 1's.  Every
non-increasing $\bm a \in \R_+^n$ is a positive linear combination of
$\bm b^{(m)}$'s.
\begin{align*}
  \bm a = & (a_1-a_2) \bm b^{(1)} + (a_2-a_3) \bm b^{(2)} + \cdots + (a_{n-1}-a_n) \bm b^{(n-1)} + a_n \bm b^{(n)}
\end{align*}
This implies that, in order to prove
inequality~\eqref{eq:c:is:max:no:sigma}, it suffices to check it when
each vector $\bm a^{(X_p)}$ is some one-zero vector $\bm b^{(X_p,m)}$,
where we include $X_p$ in the superscript to indicate that this is an
$X_p$-vector.  We start by observing that the LHS
of~\eqref{eq:c:is:max:no:sigma} is:
\begin{align}
\bm M \cdot \bm b^{(X_1,m_1)} \cdots \bm b^{(X_d,m_d)} =\, & (\Sigma_{X_1}\cdots \Sigma_{X_d} \bm M)_{\bm m}\label{eq:step:2}
\end{align}
and similarly for the RHS.  Next, by the definition of $\bm V_{\bm m}$
in Eq.~\eqref{eq:def:v},
$(\Sigma_{X_1} \cdots \Sigma_{X_d} \bm M)_{\bm m} = \sum_{\bm s \leq
  \bm m} M_{\bm s} \leq \bm V_{\bm m}$, therefore:
\begin{align}
(\Sigma_{X_1}\cdots \Sigma_{X_d} \bm M)_{\bm m} \leq & \bm V_{\bm m} \label{eq:step:1}
\end{align}
Recall that $\bm C \defeq \Delta_{X_1}\cdots \Delta_{X_d}\bm V$.  By
repeatedly applying Eq.~\eqref{eq:sigma:delta:cancel}, we derive:
\begin{align}
  \Sigma_{X_1}\cdots \Sigma_{X_d} \bm C=\,& \bm V \label{eq:step:3}
\end{align}
Therefore, we obtain that, for all
$\bm M \in \calM^+_{\bm f^{(\bm X)},B}$:
\begin{align*}
  \bm M \cdot \bm b^{(X_1,m_1)} \cdots \bm b^{(X_d,m_d)}\stackrel{\eqref{eq:step:2},\eqref{eq:step:1}}{\leq}&\bm V_{\bm m}
\stackrel{\eqref{eq:step:3}}{=} (\Sigma_{X_1}\cdots \Sigma_{X_d} \bm C)_{\bm m}\\
\stackrel{\eqref{eq:step:2}}{=}&\bm C \cdot \bm b^{(X_1,m_1)} \cdots \bm b^{(X_d,m_d)}
\end{align*}
which proves~\eqref{eq:c:is:max:no:sigma}.  Furthermore, if $\bm M$ is
the optimal solution to the optimization problem defining
$\bm V_{\bm m}$, then~\eqref{eq:step:1} is an equality, and so
is~\eqref{eq:c:is:max:no:sigma}, proving that $\bm C$ is tight.  This
completes the proof of item~\ref{item:th:main:star:2}.

\subsubsection{Proof of Item~\ref{item:th:main:star:3}}

\noindent\textbf{Item~\ref{item:th:main:star:3}:} \textit{If there exists any solution $\bm C' \in \calM^+_{\bm f^{(\bm X)}, B}$ to Problem~\ref{prob:pessimistic:tensor}, then $\bm C'=\bm C$.}
\\\\
Assume that $\bm C'$ satisfies the assumptions in item~\ref{item:th:main:star:3}.  We will use again one-zero vectors.  On one hand, $\bm C$ is a solution to Problem~\ref{prob:pessimistic:tensor}, and
$\bm C' \in \calM^+_{\bm f^{(\bm X)},B}$, therefore:
  \begin{align}
    \bm C' \cdot \bm b^{(X_1,m_1)} \cdots \bm b^{(X_d,m_d)} \leq & \bm C \cdot \bm b^{(X_1,m_1)} \cdots \bm b^{(X_d,m_d)} \label{eq:step:4}
  \end{align}
  This implies
  $\Sigma_{X_1} \cdots \Sigma_{X_d} \bm C' \leq \Sigma_{X_1} \cdots
  \Sigma_{X_d} \bm C$.  On the other hand, $\bm C'$ is a solution to
  Problem~\ref{prob:pessimistic:tensor}, therefore,
  $\forall \bm M \in \calM^+_{\bm f^{(\bm X)}, B}$:
  \begin{align*}
    \bm M \cdot \bm b^{(X_1,m_1)} \cdots \bm b^{(X_d,m_d)} \leq & \bm C' \cdot \bm b^{(X_1,m_1)} \cdots \bm b^{(X_d,m_d)}
  \end{align*}
  We choose $\bm M$ to maximize the LHS, then the LHS becomes
  $V_{\bm m}$:
  \begin{align}
    V_{\bm m} \leq & \bm C' \cdot \bm b^{(X_1,m_1)} \cdots \bm b^{(X_d,m_d)}
                     = (\Sigma_{X_1}\cdots \Sigma_{X_d} \bm C')_{\bm m} \label{eq:step:5}
  \end{align}
  We have obtained
  $\Sigma_{X_1} \cdots \Sigma_{X_d} \bm C = \bm V \leq \Sigma_{X_1}
  \cdots \Sigma_{X_d} \bm C'$.  The two inequalities imply
  $\Sigma_{X_1}\cdots \Sigma_{X_d} \bm C = \Sigma_{X_1}\cdots
  \Sigma_{X_d} \bm C'$, therefore $\bm C = \bm C'$.  This completes the proof of item~\ref{item:th:main:star:3}.

\subsubsection{Linear Algebra Background}
For the next four items,~\ref{item:th:main:star:6}, ~\ref{item:th:main:star:1}, ~\ref{item:th:main:star:4}, and~\ref{item:th:main:star:5}, we need to take a deeper look at the structure of the tensors $\bm V, \bm C$, through the lens of the dual linear program associated with the primal linear program that defines $\bm V$ in Equation~\eqref{eq:def:v}.  Therefore, we introduce a few definitions and theorems from linear algebra theory.

Recall that a linear program has the form:\\
\noindent \fbox{\parbox{0.4\textwidth}{\footnotesize
\begin{align*}
  \mbox{Maximize:\ } & \bm c^T \bm X \\
\mbox{Where:\ } & \bm A \cdot \bm X \leq \bm b \\
   & \bm X \geq 0
\end{align*}
}}\\\\
The dual has the form:\\\\
\noindent \fbox{\parbox{0.4\textwidth}{\footnotesize
\begin{align*}
  \mbox{Minimize:\ } & \bm Y^T \bm b \\
\mbox{Where:\ } & \bm Y^T \bm A \geq \bm c^T \\
   & \bm Y^T \geq 0
\end{align*}
}}\\
\noindent Intuitively, each variable in the dual can be thought of as representing a constraint in the primal. For example, it is known that when variable in the dual is nonzero, then the constraint that it represents in the primal is tight. More importantly, the dual is important due to the following  \textit{Strong Duality Theorem},
\begin{thm}
\label{thm:strong-duality}
Let $X^*$ and  $\bm Y^*$ be optimal solutions to the primal and dual program, respectively. Then, the following holds,
  \begin{equation*}
      c^TX^* = Y^{*T}b
  \end{equation*}
\end{thm}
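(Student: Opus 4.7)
The plan is to derive Strong Duality in two steps: first establish Weak Duality, and then produce a dual-feasible $\bar Y$ matching the primal optimum via a Farkas-type infeasibility argument. Throughout, let $v^* \defeq c^T X^*$ denote the optimal primal value.

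\emph{Step 1 (Weak Duality).} For any primal-feasible $X \geq 0$ with $AX \leq b$ and any dual-feasible $Y \geq 0$ with $Y^T A \geq c^T$, I would chain
\[
c^T X \;\leq\; (Y^T A)\,X \;=\; Y^T (AX) \;\leq\; Y^T b,
\]
where the first inequality uses $c^T \leq Y^T A$ together with $X \geq 0$, and the second uses $AX \leq b$ together with $Y \geq 0$. Specializing to $X = X^*$ and $Y = Y^*$ gives $v^* \leq Y^{*T} b$.

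\emph{Step 2 (Matching the bound).} For the reverse direction, I would construct a dual-feasible $\bar Y$ with $\bar Y^T b \leq v^*$. For any $\varepsilon > 0$, the augmented system $AX \leq b$, $X \geq 0$, $c^T X \geq v^* + \varepsilon$ is infeasible by optimality of $X^*$. Applying Farkas' Lemma to this infeasibility, there exist nonnegative multipliers $Y_\varepsilon \geq 0$ and $\lambda_\varepsilon \geq 0$ with $Y_\varepsilon^T A \geq \lambda_\varepsilon c^T$ and $Y_\varepsilon^T b < \lambda_\varepsilon(v^* + \varepsilon)$. A short argument rules out $\lambda_\varepsilon = 0$: if $\lambda_\varepsilon = 0$ then $Y_\varepsilon^T A \geq 0$ and $Y_\varepsilon^T b < 0$, giving $0 \leq (Y_\varepsilon^T A) X^* = Y_\varepsilon^T (A X^*) \leq Y_\varepsilon^T b < 0$, a contradiction. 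Scaling by $1/\lambda_\varepsilon$ yields a dual-feasible $\bar Y_\varepsilon \defeq Y_\varepsilon/\lambda_\varepsilon$ with $\bar Y_\varepsilon^T b < v^* + \varepsilon$. Passing to the limit $\varepsilon \to 0$ along a convergent subsequence (using that the dual sublevel set below $Y^{*T}b$ is closed and bounded in the relevant directions, thanks to the existence of $Y^*$) produces a dual-feasible $\bar Y$ with $\bar Y^T b \leq v^*$, whence $Y^{*T} b \leq v^*$ by optimality of $Y^*$.

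Combining the two steps yields $c^T X^* = Y^{*T} b$, the desired equality. The main obstacle is the Farkas-type step together with the case check on $\lambda_\varepsilon$ and the limit argument; an alternative route avoids Farkas entirely by invoking the simplex algorithm and reading off a dual-feasible solution from the reduced costs at an optimal basic feasible solution, which by complementary slackness automatically achieves the primal objective value.
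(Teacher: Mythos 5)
The paper does not actually prove this statement: it is the classical Strong Duality Theorem of linear programming, stated in the appendix purely as imported background (alongside the total-unimodularity facts) and used only to certify optimality of the dual solutions in the analysis of the value tensor $\bm V$. So there is no in-paper argument to compare against, and the relevant question is only whether your proof is sound. It is: weak duality via the chain $c^T X \leq (Y^T A) X = Y^T (AX) \leq Y^T b$, followed by the Farkas/Gale alternative applied to the infeasible augmented system $AX \leq b$, $X \geq 0$, $c^T X \geq v^* + \varepsilon$, with the standard case analysis ruling out $\lambda_\varepsilon = 0$ using the primal-feasible point $X^*$, is the textbook derivation. One small cleanup: the convergent-subsequence step at the end is unnecessary, and its justification (that the dual sublevel set is bounded ``in the relevant directions'') is the one shaky spot, since dual sublevel sets need not be bounded. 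You do not need any limit point of the $\bar Y_\varepsilon$ themselves: the theorem's hypothesis already hands you a dual optimal $Y^*$, so dual feasibility of $\bar Y_\varepsilon$ gives $Y^{*T} b \leq \bar Y_\varepsilon^T b < v^* + \varepsilon$ for every $\varepsilon > 0$, and letting $\varepsilon \downarrow 0$ as a real parameter yields $Y^{*T} b \leq v^*$ directly, which combined with weak duality finishes the proof.
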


\begin{defn}
  A matrix $A$ is totally unimodular iff every square submatrix has a
  determinant equal to $0,+1$ or $-1$.
\end{defn}

\noindent To prove that a matrix is totally unimodular we state the following known fact:
\begin{thm}
\label{thm:totally-unimodular-conditions}
If the rows of a matrix $\bm A$ can be separated into sets $S_1, S_2$ and satisfy the following conditions, then $\bm A$ is totally unimodular,
    \begin{enumerate}
    \item Every entry in $\bm A$ is either $0$ or $1$
    \item Every column in $\bm A$ has at most two non-zero entries.
    \item If there are two non-zero entries in a column of $\bm A$, then one entry is in a row in $S_1$ and the other is in a row in $S_2$.
    \end{enumerate}
\end{thm}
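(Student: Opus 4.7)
The plan is to proceed by induction on the dimension $k$ of an arbitrary square submatrix $B$ of $\bm A$. The base case $k=1$ is immediate since each entry of $\bm A$ is $0$ or $1$ by condition (1), so $\det(B) \in \{0, 1\} \subseteq \{-1, 0, +1\}$.

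For the inductive step, I would split into three cases based on the columns of $B$. First, if some column of $B$ is entirely zero, then $\det(B)=0$. Second, if some column of $B$ has exactly one nonzero entry (necessarily equal to $1$), I would expand the determinant of $B$ along that column to obtain $\det(B) = \pm \det(B')$ for some $(k-1) \times (k-1)$ submatrix $B'$ of $\bm A$; by the induction hypothesis $\det(B') \in \{-1, 0, +1\}$, so the same holds for $B$. Third, every column of $B$ contains exactly two nonzero entries. Then condition (2), together with condition (3) applied to the columns of $\bm A$ from which those of $B$ are drawn, forces the two $1$'s in each column of $B$ to lie in rows coming from $S_1$ and $S_2$ respectively.

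The crux of the argument is this third case. Let $I_1, I_2$ denote the subsets of row indices of $B$ inherited from $S_1$ and $S_2$. In every column of $B$, exactly one of the two $1$'s lies in a row indexed by $I_1$ and the other in a row indexed by $I_2$. Therefore the sum of the rows of $B$ indexed by $I_1$ equals the all-ones row vector of length $k$, and likewise for the sum of the rows indexed by $I_2$. Subtracting these two equal row sums yields a nontrivial linear dependence among the rows of $B$, so $\det(B) = 0$. Combining the three cases closes the induction.

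The main obstacle is bookkeeping rather than conceptual: one must verify that the partition $\{S_1, S_2\}$ of the rows of $\bm A$ restricts to a valid partition of the row indices of $B$, and that whenever a column of $B$ has two nonzeros, the corresponding column of $\bm A$ also has exactly two nonzeros (so condition (3) applies and places them in different parts). Once this is checked, everything reduces to standard cofactor expansion and a one-line linear dependence argument.
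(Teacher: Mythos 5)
Your proof is correct. Note that the paper does not actually prove this statement: it is quoted as a ``known fact'' (it is the classical Heller--Tompkins sufficient condition for total unimodularity), so there is no in-paper argument to compare against. Your argument is the standard textbook proof: induct on the size $k$ of a square submatrix $B$; handle columns with zero or one nonzero by triviality or cofactor expansion plus the inductive hypothesis; and in the remaining case, where every column of $B$ has exactly two $1$'s, observe that condition (2) forces those two entries to be the \emph{only} nonzeros of the corresponding column of $\bm A$, so condition (3) places one in an $S_1$-row and one in an $S_2$-row, whence $\sum_{i\in I_1} r_i - \sum_{i\in I_2} r_i = \bm 0$ is a nontrivial dependence and $\det(B)=0$. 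The case split is exhaustive and the bookkeeping you flag (the partition restricting to $B$, and a two-nonzero column of $B$ coming from an exactly-two-nonzero column of $\bm A$) goes through exactly as you describe, so the argument is complete.
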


\noindent This unimodularity property is useful because of the following fact,
\begin{thm} \label{thm:totally-unimodular-integral} If the constraint
  matrix $A$ of an LP (linear program) is totally unimodular and
  $\bm b$ is integral, then there exists an optimal solution $\bm X$
  to the primal LP that is integral.  If $\bm c^T$ is also integral,
  then the optimal value is integral.
\end{thm}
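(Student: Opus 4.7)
\textbf{Proof proposal for Theorem~\ref{thm:totally-unimodular-integral}.} My plan is to reduce to a basic feasible solution (BFS) at a vertex of the feasible polyhedron, and then apply Cramer's rule, exploiting total unimodularity to conclude that the resulting coordinates are integers. First I would rewrite the primal in standard equality form by introducing a vector of slack variables $\bm s \geq 0$, so that the constraints $\bm A \bm X \leq \bm b$, $\bm X \geq 0$ become $[\bm A \mid \bm I]\binom{\bm X}{\bm s} = \bm b$, $\binom{\bm X}{\bm s} \geq 0$. The key preliminary fact I need is that augmenting $\bm A$ with the identity $\bm I$ preserves total unimodularity: every square submatrix of $[\bm A \mid \bm I]$ can be expanded along its columns coming from $\bm I$, reducing its determinant (up to sign) to a determinant of a square submatrix of $\bm A$, which is $0, \pm 1$ by assumption.

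Next, I would invoke the fundamental theorem of linear programming to assert that, if an optimal solution exists, then an optimum is attained at a vertex of the feasible polyhedron, i.e., at a basic feasible solution. A BFS corresponds to selecting a basis $B$: a set of columns of $[\bm A \mid \bm I]$ indexing an invertible square submatrix $M_B$, with the non-basic coordinates of $\binom{\bm X}{\bm s}$ set to $0$ and the basic coordinates determined by $M_B \bm y_B = \bm b$. By the previous paragraph $\det(M_B) = \pm 1$, and applying Cramer's rule yields $(\bm y_B)_i = \det(M_B^{(i)})/\det(M_B)$ where $M_B^{(i)}$ is obtained from $M_B$ by replacing its $i$-th column by $\bm b$. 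Since $\bm b$ is integral and every other entry of $M_B^{(i)}$ is an entry of a totally unimodular matrix (hence in $\{0, \pm 1\}$), cofactor expansion shows $\det(M_B^{(i)}) \in \Z$, and therefore $(\bm y_B)_i \in \Z$. The non-basic coordinates are $0$, so the entire BFS is integral; in particular, the $\bm X$-components give an integral optimal $\bm X^*$.

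For the final assertion, if $\bm c^T$ is also integral then $\bm c^T \bm X^* \in \Z$ is an immediate consequence, since it is an integer combination of integers. The main technical obstacle I anticipate is not the Cramer's rule computation itself but rather the bookkeeping around the reduction to standard form: one must verify carefully that every vertex of $\{\bm X \geq 0 : \bm A \bm X \leq \bm b\}$ indeed lifts to a BFS of the enlarged system with respect to some basis of $[\bm A \mid \bm I]$, and that the integrality of $\bm s$ is irrelevant to the conclusion about $\bm X$. A minor subtlety is to handle the degenerate cases where the LP is unbounded or infeasible; in those cases the claim is vacuous (no ``optimal solution'' exists), so the theorem is stated under the implicit assumption that an optimum is attained, which is standard.
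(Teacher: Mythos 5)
Your proof is correct: it is the standard argument via basic feasible solutions and Cramer's rule, using the fact that appending an identity block of slack columns preserves total unimodularity so that every basis matrix has determinant $\pm 1$. The paper itself does not prove this theorem --- it states it as a known background fact from linear programming theory --- so there is nothing to compare against; your handling of the reduction to standard form and of the degenerate (infeasible/unbounded) cases is careful and complete.
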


\subsubsection{Proof of Item~\ref{item:th:main:star:6}}
\noindent\textbf{Item~\ref{item:th:main:star:6} (Formally):} \textit{\begin{enumerate}
    \item If $d=2$ then $\bm C \geq 0$; if $\bm f^{(X_1)}, \bm f^{(X_2)}, B$ are integral, then $\bm C$ is integral. 
    \item If $d \geq 3$ and $B < \infty$ then $\bm C$ may be $<0$.
\end{enumerate}}
We start by proving the positive statement: When $d=2$ then $\bm C \geq 0$ and, if the degree sequences are integral, then so is $\bm C$.

If $\bm A$ is the matrix of an LP, then the matrix of dual LP is its transpose, $\bm A^T$; one is totally unimodular iff the other is. We will prove that, when $d=2$ then the LP in Eq.~\eqref{eq:def:v} that defines $\bm V$ has a matrix that is totally unimodular: this implies that, if both degree sequences are integral, then $\bm V$ is integral, hence so is $\bm C$ (by Eq.~\eqref{eq:def:c}).  We review here both the primal and the dual LP defining $\bm V$, specialized to the case when $d = 2$.  We denote the two degree sequences by $\bm f \in \R_+^{[n_1]}$ and $\bm g \in \R_+^{[n_2]}$ (rather than $\bm f^{(X_1)}, \bm f^{(X_2)}$).  For every $(p,q) \leq (n_1,n_2)$, the primal and dual LPs are the following:
\\\\
\noindent The primal LP:\\\\
\noindent \fbox{\parbox{0.4\textwidth}{\footnotesize
\begin{align}
  \label{eq:primal:d:2} 
  V_{pq} \defeq \mbox{\ Maximize:\ }
  & \sum_{(i,j) \leq (p,q)}  M_{ij} \\
  \mbox{Where:\ }
  & \forall i\leq p: \sum_{j=1,q} M_{ij} \leq f_i \nonumber \\
  & \forall j\leq q: \sum_{i=1,p} M_{ij} \leq g_j \nonumber \\
  & \forall (i,j)\leq(p,q): M_{ij} \leq B \nonumber
\end{align}
}}\\
\noindent The dual LP:\\\\
\noindent \fbox{\parbox{0.4\textwidth}{\footnotesize
\begin{align}
  \label{eq:dual:d:2}
  V_{pq} \defeq \mbox{\ Minimize:\ }
  & \sum_{i\leq p} \alpha_i f_i + \sum_{j \leq q} \beta_j g_j + B\sum_{(i,j)\leq(p,q)}\mu_{ij}\\
  \mbox{Where:\ }
  & \forall (i,j)\leq(p,q): \alpha_i+\beta_j+\mu_{ij} \geq 1\nonumber
\end{align}
}}

\begin{lmm} \label{lemma:totally:unimodular} The matrix $\bm A$ of the LP defining $\bm V$ in~\eqref{eq:primal:d:2} is totally unimodular. It follows that there exists an integral optimal solution $\bm \alpha, \bm \beta, \bm \gamma$ of the dual LP.  Furthermore, if the two degree sequences $\bm f, \bm g$ are also integral, then $\bm V$ and $\bm C \defeq \Delta_I\Delta_J \bm V$ are integral as well.
\end{lmm}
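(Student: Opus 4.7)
The plan is to exhibit the explicit structure of the constraint matrix $\bm A$ of~\eqref{eq:primal:d:2}, verify total unimodularity via Theorem~\ref{thm:totally-unimodular-conditions} after isolating the ``transportation'' portion, and then derive integrality of $\bm V$ and $\bm C$ from Theorem~\ref{thm:totally-unimodular-integral} combined with strong duality.

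First I would index the columns of $\bm A$ by the variables $M_{ij}$ for $(i,j)\le(p,q)$ and partition the rows into three families: the $p$ row-sum constraints $\sum_j M_{ij}\le f_i$, the $q$ column-sum constraints $\sum_i M_{ij}\le g_j$, and the $pq$ box constraints $M_{ij}\le B$. Each column contains exactly three nonzero entries (a single $1$ from each family), so Theorem~\ref{thm:totally-unimodular-conditions} does not apply directly to $\bm A$. Instead I would apply it to the submatrix $\bm A'$ consisting of only the row-sum and column-sum rows: every column of $\bm A'$ has exactly two $1$'s, one in each family, so taking $S_1$ to be the row-sum rows and $S_2$ to be the column-sum rows meets the three hypotheses. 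Hence $\bm A'$ is totally unimodular.

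Next I would argue that augmenting $\bm A'$ by the $pq$ unit-vector rows (the box constraints) preserves total unimodularity. For any square submatrix $\bm B$ of $\bm A$, either $\bm B$ contains no box-constraint row, in which case $\bm B$ is a submatrix of $\bm A'$ and $\det\bm B\in\{0,\pm 1\}$; or $\bm B$ contains a unit-vector row $r$. If the unique $1$ of $r$ lies inside $\bm B$, cofactor expansion along $r$ reduces $\det\bm B$ to $\pm\det\bm B'$ for a strictly smaller submatrix $\bm B'$ of $\bm A$ and induction on size finishes; otherwise $r$ is identically zero in $\bm B$, giving $\det\bm B=0$. Thus $\bm A$, and hence $\bm A^\top$, is totally unimodular.

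With total unimodularity in hand I would close out the three claims. Because $\bm A^\top$ is totally unimodular and the dual constraints $\alpha_i+\beta_j+\mu_{ij}\ge 1$ have the all-ones (hence integral) right-hand side, Theorem~\ref{thm:totally-unimodular-integral} yields an integral optimal dual solution $(\bm\alpha,\bm\beta,\bm\mu)$, regardless of whether $\bm f,\bm g,B$ are integers. When $\bm f,\bm g,B$ are integral, evaluating the dual objective $\sum_i\alpha_i f_i+\sum_j\beta_j g_j+B\sum_{ij}\mu_{ij}$ at this integral optimum gives an integer, so by strong duality (Theorem~\ref{thm:strong-duality}) $V_{pq}\in\Z$ for every $(p,q)\le(n_1,n_2)$; then $\bm C=\Delta_I\Delta_J\bm V$ is an integer combination of entries of $\bm V$ and is also integral. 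The main obstacle I anticipate is the cofactor-expansion induction, which requires a careful case split; the rest is bookkeeping, and the two-class restriction of Theorem~\ref{thm:totally-unimodular-conditions} is the only subtlety forcing us to peel off the box constraints before applying it.
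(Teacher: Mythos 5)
Your proposal matches the paper's own proof essentially step for step: the same partition of the rows into row-sum, column-sum, and box-constraint families, the same cofactor-expansion argument to peel off the identity-matrix rows before applying Theorem~\ref{thm:totally-unimodular-conditions} with $S_1$ the row-sum rows and $S_2$ the column-sum rows, and the same appeal to Theorem~\ref{thm:totally-unimodular-integral} for integrality. The only difference is that you spell out the integrality of $\bm V$ via the dual and strong duality (correctly noting that $B$ must also be integral), which the paper leaves implicit; this is a matter of detail, not of approach.
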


\begin{proof} Let $\bm A$ be the matrix of the primal  LP~\eqref{eq:primal:d:2}. We show that this matrix is unimodular by showing that it satisfies the conditions of Thm. \ref{thm:totally-unimodular-conditions}. We describe the three important sets of rows in $\bm  A$:
  (1) Rows that arise from row-sum constraints of the form $\sum_jM_{ij} \leq f_i$; let $S_1$ denote this set of rows: $|S_1|=n_1$. 
  (2) Rows that arise from column-sum constraints of the form $\sum_iM_{ij} \leq g_j$; let $S_2$ denote this set of rows,
  $|S_2|=n_2$. 
  (3) Rows that arise from pairwise bounds of the form $M_{ij} \leq B$: there are $n_1n_2$ such rows. This can be visualized as follows where the groups appear in order,
    \begin{align*}
      A =   \begin{pNiceMatrix}
        \ldots & 0 & 1 & 1 & 1 & 0 & 0 & 0 & 0 & \ldots \\
        \ldots & 0 & 0 & 0 & 0 & 1 & 1 & 1 & 0 & \ldots \\
        & & & & \vdots & & & &\\ 
        \hline\\
        \ldots & 0 & 1 & 0 & 0 & 1 & 0 & 0 & 0 & \ldots \\
        \ldots & 0 & 0 & 1 & 0 & 0 & 1 & 0 & 0 & \ldots \\
        & & & & \vdots & & & &  \\ 
        \hline\\
        1 & 0 & 0 & 0 & 0 & 0 & 0 & \ldots & 0 & 0 \\
        & & & & \vdots & & & &\\ 
        0 & 0 & \ldots & 0 & 1 & 0 & 0 & \ldots & 0 & 0\\
        & & & & \vdots & & & &\\ 
        0 & 0 & \ldots & 0 &  0 & 0 & 0 & 0 & 0 & 1 \\
        \end{pNiceMatrix}
    \end{align*}

The third group of rows form the $n_1n_2 \times n_1n_2$ identity matrix, which has 1 on the diagonal and 0 everywhere else, because there is one constraint $M_{ij} \leq B$ for every variable $M_{ij}$.  Consider any square submatrix $\bm A_0$ of $\bm A$.  If it contains any row from the third group of rows, then that row is either entirely 0 (hence $\det \bm A_0 = 0$) or it contains a single value 1, hence the determinant is equal to that obtained by removing the row and column of that value 1.  Therefore, it suffices to prove that the submatrix of $\bm A$ consisting of the first two groups of rows is totally unimodular.

So consider the matrix consisting of the rows in the sets $S_1$ and $S_2$ above. Looking at our conditions from Theorem~\ref{thm:totally-unimodular-conditions}, the first condition holds immediately (all entries in $\bm A$ are 0 or 1). The second condition holds because every $M_{ij}$ occurs exactly in one row and in one column, hence it occurs in exactly two constraints.  Lastly, the last condition holds because the set of rows $S_1$ contain all row-wise constraints for $M_{ij}$ and the rows $S_2$ contain all column-wise constraints for $M_{ij}$, hence the two 1-entries will occur one in a row in $S_1$ and the other in a row in $S_2$.  This proves that the conditions Theorem~\ref{thm:totally-unimodular-conditions} are satisfied, hence $A$ is totally unimodular.
\end{proof}

Next, we will prove that $\bm C \geq 0$, and for that we need a deeper
look at the dual LP~\eqref{eq:dual:d:2}.  Before we do that, we build
some intuition by considering a simple example:

\begin{ex} Consider the $2 \times 2$ problem given by the degree
  sequences $\bm f = \bm g = (2B, 1)$, where $B$ is some large number,
  representing the max tuple multiplicity.  To compute $\bm V$ and
  $\bm C$, we observe that the following are optimal solutions to the
  primal LP~\eqref{eq:primal:d:2} and dual LP~\eqref{eq:dual:d:2}
  respectively:
  \begin{align*}
    \mbox{Primal:\ } && M_{11} = & B & M_{ij}=& 1 &\mbox{\ for $(1,1)<(i,j)<(2,2)$}M_{22}=0\\
    \mbox{Dual:\ } && \alpha_1=\beta_1=& 0 & \alpha_2=\beta_2= & 1 & \mu_{11}=1   \mu_{ij}=0 \mbox{\ for $(1,1)<(i,j)\leq(2,2)$}
  \end{align*}
  To check that these are optimal solutions it suffices to observe
  that the objective function of the primal, $\sum_{ij} M_{ij}=B + 2$,
  is equal to the objective function of the dual,
  $\sum_i \alpha_if_i + \sum_j \beta_j g_j + B \sum_{ij} \mu_{ij} = 1
  + 1 + B = B+2$ Notice that the pessimistic matrix is:
  \begin{align*}
    \bm C = & 
    \begin{pNiceMatrix}[first-row,first-col]
      & 2B & 1 \\
      2B  & B & 1\\
      1   & 1 & 0
      \end{pNiceMatrix}
  \end{align*}
\end{ex}

The proof of $\bm C \geq 0$ follows from several lemmas.  First:
\begin{lmm}
\label{lmm:dual:sorting}
There exists an optimal solution, $\bm \alpha, \bm \beta, \bm \mu$, to
the dual LP~\eqref{eq:dual:d:2} satisfying:
  $$\alpha_i\leq \alpha_{i+1}\quad \beta_j\leq \beta_{j+1}\,\,\forall \,\,i,j$$
\end{lmm}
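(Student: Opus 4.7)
The plan is an exchange (bubble-sort) argument on an arbitrary optimal dual solution. Suppose some optimal $(\bm\alpha,\bm\beta,\bm\mu)$ has $\alpha_i > \alpha_{i+1}$ for a pair of adjacent indices. I would construct a new candidate $(\bm\alpha',\bm\beta,\bm\mu')$ by swapping the two scalars $\alpha_i \leftrightarrow \alpha_{i+1}$ and simultaneously swapping the two rows of $\bm\mu$ indexed by $i$ and $i+1$, i.e. setting $\mu'_{ij} \defeq \mu_{i+1,j}$ and $\mu'_{i+1,j} \defeq \mu_{ij}$ for every $j\leq q$, while leaving every other entry alone.

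Feasibility is preserved because the dual constraint system is symmetric in its $i$-indexing: the constraint $\alpha_i + \beta_j + \mu_{ij} \geq 1$ at position $(i,j)$ under the new solution equals the original constraint at position $(i+1,j)$, which held by assumption, and similarly at position $(i+1,j)$. For the objective, the $\beta$-contribution is untouched, the $\bm\mu$-contribution $B\sum \mu_{ij}$ is unchanged because we only permuted entries, and the $\alpha$-contribution changes by
\begin{align*}
(\alpha_{i+1}f_i + \alpha_i f_{i+1}) - (\alpha_i f_i + \alpha_{i+1} f_{i+1}) = (\alpha_{i+1}-\alpha_i)(f_i-f_{i+1}).
\end{align*}
Since $\bm f$ is non-increasing (it is a degree sequence, so $f_i \geq f_{i+1}$) and we assumed $\alpha_{i+1} < \alpha_i$, this difference is $\leq 0$. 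Hence the new solution is feasible and its objective does not exceed that of the original; by optimality of the original, the new solution is also optimal.

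Next I would iterate this adjacent swap as a bubble sort on $\bm\alpha$: after finitely many steps, the resulting $\bm\alpha$ is non-decreasing. The identical argument, swapping $\beta_j\leftrightarrow\beta_{j+1}$ together with columns of $\bm\mu$ and using $g_j\geq g_{j+1}$, then sorts $\bm\beta$. The two sorting procedures commute in the sense that sorting $\bm\beta$ only permutes columns of $\bm\mu$ and modifies $\beta$-entries, so the non-decreasing order of $\bm\alpha$ established in the first phase is not disturbed.

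The main obstacle in executing the plan cleanly is bookkeeping the $\bm\mu$ variables correctly: one has to verify that the simultaneous permutation of the $\alpha$-indices and the corresponding rows (resp. columns) of $\bm\mu$ yields exactly the original constraint set after relabeling, and that the sum $\sum_{(i,j)\leq(p,q)}\mu_{ij}$ is invariant. Both hinge on the symmetric structure of the dual and are routine once set up, so the only substantive ingredient is the monotonicity of the degree sequences $\bm f$ and $\bm g$, which is exactly what forces the exchange inequality to have the right sign.
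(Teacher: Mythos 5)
Your proposal is correct and follows essentially the same route as the paper's proof: an adjacent exchange argument that swaps $\alpha_i \leftrightarrow \alpha_{i+1}$ together with the corresponding rows of $\bm\mu$, verifies feasibility by the symmetry of the dual constraints, and uses the monotonicity of $\bm f$ to show the objective change $(\alpha_{i+1}-\alpha_i)(f_i-f_{i+1}) \leq 0$, then repeats for $\bm\beta$. Your sign bookkeeping is in fact slightly cleaner than the paper's, which states the difference with an inverted sign but reaches the same conclusion.
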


\begin{proof}
  Suppose that there exists an optimal solution such that
  $\alpha_i>\alpha_{i+1}$. Let $\bm \alpha', \bm \mu'$ be an
  alternative solution where we swap the values at $i$ and $i+1$, i.e.
  $\alpha_i'=\alpha_{i+1},\alpha_{i+1}'=\alpha_{i},
  \mu_{ij}'=\mu_{i+1,j},\mu_{i+1,j}'=\mu_{ij}$. We can quickly check
  that this is a valid solution, i.e. it satisfies the constraints for
  all $j$,
  \begin{align*}
  \alpha_{i+1}+\beta_j+\mu_{i+1,j} \geq 1\,\, &\rightarrow  \,\, \alpha_{i}'+\beta_j+\mu_{i,j}' \geq 1\\
  \alpha_{i}+\beta_j+\mu_{i,j} \geq 1\,\, &\rightarrow  \,\, \alpha_{i+1}'+\beta_j+\mu_{i+1,j}' \geq 1
  \end{align*}
  Further, we argue that the alternative solution produces at least as large an objective value. Because the values of $\mu$ have simply been swapped at $i$ and $i+1$, the sum of the $\mu$ values is the same between the two solutions. Therefore, the difference in their objective values is the following,
  \begin{align*}
  \alpha_if_i + \alpha_{i+1}f_{i+1}-(\alpha_i'f(i)+\alpha_{i+1}f_{i+1})&= (\alpha_i-\alpha_{i+1})f_i+(\alpha_{i+1}-\alpha_{i})f_{i+1}\\
  &=(\alpha_i-\alpha_{i+1})(f_i-f_{i+1})\leq 0     
  \end{align*}
  The final inequality comes from the fact that $f$ is non-increasing and $\alpha_i>\alpha_{i+1}$. Therefore, from any unsorted solution, you can generate an equal or better solution from sorting the $\alpha$'s (the same logic holds identically for the $\beta$'s). This implies that there exists an optimal solution with sorted values, proving the lemma.
\end{proof}

The combination of Lemmas~\ref{lemma:totally:unimodular} and
~\ref{lmm:dual:sorting} implies that the solutions to the dual have
the following form:
\begin{align*}
  \bm \alpha = & (\underbrace{0,\ldots,0}_{s \mbox{ values $=0$}},1,\ldots,1) 
& \bm \beta = & (\underbrace{0,\ldots,0}_{t \mbox{ values $=0$}},1,\ldots,1) \\ \\
  (i,j) \leq & (s,t):\ \mu_{ij} = 1 \ \  (i,j)\not\leq (s,t):\ \mu_{ij}=0
\end{align*}

In other words, both $\alpha,\beta$ are 0-1 vectors with $0 \leq s \leq p$ and $0 \leq t\leq q$ values equal to 0 respectively, followed by the remainder 1's. Note that there is a unique optimal $s,t$ for each sub-matrix problem $V_{p,q}$ which we denote $s_{p,q},t_{p,q}$.  As in Theorem~\ref{th:main:star} item~\ref{item:th:main:star:5} we denote by $\bm F =\Sigma \bm f$ and $\bm G = \Sigma\bm g$ the cumulative degree sequences, i.e. $F_k = \sum_{i \leq k} f_i$ and $G_\ell=\sum_{j\leq \ell} g_j$.
Therefore the expression for $V_{pq}$ becomes:
\begin{equation}
  V_{p,q} = F_p-F_{s_{p,q}}+G_q-G_{t_{p,q}}+s_{p,q}t_{p,q}B
\end{equation}

\begin{lmm} \label{lmm:st:fixed}
If we fix a value of $s_{p,q}$, the optimal value of $t_{p,q}$ is the smallest $t$ such that $g_{t+1}\leq s_{p,q}*B \leq g_{t}$.  Similarly, if we fix $t_{p,q}$, then the optimal value of $s_{p,q}$ is such that $f_{s_{p,q}+1}\leq t_{p,q}*B\leq f_{s_{p,q}}$.
\end{lmm}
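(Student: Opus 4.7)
\textbf{Proof plan for Lemma~\ref{lmm:st:fixed}.} The dual is a minimization problem, so I will analyze $V_{p,q}$ as a discrete function of $t$ with $s \defeq s_{p,q}$ held fixed. Writing
\[
V(t) \defeq F_p - F_s + G_q - G_t + s t B,
\]
only the last two terms depend on $t$. The plan is to exploit the monotonicity of $\bm g$ to show that this is a discrete convex function of $t$ and locate its minimizer.

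First I would compute the forward difference
\[
V(t+1) - V(t) = -(G_{t+1} - G_t) + sB = -g_{t+1} + sB.
\]
Since $\bm g$ is non-increasing, the sequence $g_{t+1}$ decreases in $t$, so this difference is non-decreasing in $t$. Consequently $V(\cdot)$ is discretely convex: it is non-increasing while $g_{t+1} > sB$ and non-decreasing once $g_{t+1} \le sB$.

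Next I would identify $t_{p,q}$ as the smallest index at which the forward difference becomes non-negative. For $t_{p,q}$ to be this minimizer, two conditions must hold simultaneously: $V(t_{p,q}+1) - V(t_{p,q}) \ge 0$, which gives $g_{t_{p,q}+1} \le sB$, and $V(t_{p,q}) - V(t_{p,q}-1) \le 0$, which gives $g_{t_{p,q}} \ge sB$. Combining these two inequalities yields exactly
\[
g_{t_{p,q}+1} \le s_{p,q} B \le g_{t_{p,q}},
\]
as claimed. Boundary cases ($sB$ larger than $g_1$ or smaller than $g_q$) are handled by taking $t_{p,q} = 0$ or $t_{p,q} = q$ respectively, using the convention $g_0 = +\infty$ and $g_{q+1} = 0$; here discrete convexity guarantees these extremes are optimal.

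The second statement follows by symmetry: swapping the roles of $(\bm f, s, p)$ and $(\bm g, t, q)$ in the dual objective gives an identical argument with the forward difference $-f_{s+1} + tB$, yielding $f_{s_{p,q}+1} \le t_{p,q} B \le f_{s_{p,q}}$. I do not anticipate any serious obstacle: the key is simply recognizing that the cumulative terms $F, G$ make $V$ piecewise-linear in each variable with slope changes dictated by $\bm f, \bm g$, so the non-increasing assumption on the degree sequences directly gives discrete convexity, and a standard first-order optimality argument concludes the proof.
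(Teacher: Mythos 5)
Your proposal is correct and follows essentially the same route as the paper: the paper also computes the discrete derivative $\Delta_t V_{p,q} = s_{p,q}B - g_{t}$ and identifies the optimal $t$ as the smallest index at which this becomes non-negative. You are in fact slightly more careful than the paper, since you make explicit the discrete convexity of $V$ in $t$ (following from $\bm g$ non-increasing) that is needed for the first-order condition to certify a global minimizer.
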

\begin{proof}
This can be seen by taking the discrete derivative of the above expression with respect to $t_{p,q}$,
\begin{equation}
  \Delta_{t_{p,q}}V_{p,q} = s_{p,q}B-g_{t_{p,q}}
\end{equation}
This demonstrates that if we set $t$ as before, it is the smallest value of $t$ such that increasing it by $1$ would increase $V_{p,q}$. This directly implies that it is the optimal integral value of $t$. The same logic applies directly to $s_{p,q}$ when $t_{p,q}$ is fixed.
\end{proof}

\begin{lmm}\label{lmm:st:fact}
Using the notation of $s_{p,q},t_{p,q}$ for the optimal solution to the dual for $V_{p,q}$,
\begin{align}
s_{p,q} \in \{s_{p-1,q},p\} \quad t_{p,q} \in \{t_{p,q-1},q\}
\end{align}
\end{lmm}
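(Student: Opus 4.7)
The plan is to exploit the explicit closed form of $V_{p,q}$ derived just before the lemma: from Lemmas~\ref{lemma:totally:unimodular} and~\ref{lmm:dual:sorting}, the optimal dual variables $\bm\alpha,\bm\beta$ are sorted $0$--$1$ vectors with splits at positions $s,t$ respectively, and then $\mu_{ij}=1$ exactly when $(i,j)\leq(s,t)$, so that
\[
V_{p,q}=\min_{\substack{0\leq s\leq p\\ 0\leq t\leq q}}\bigl[(F_p-F_s)+(G_q-G_t)+stB\bigr],
\]
where $F,G$ are the cumulative sequences of $\bm f,\bm g$. The entire lemma will be read off from this form by a short case analysis.

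For the first inclusion $s_{p,q}\in\{s_{p-1,q},p\}$, I would split on whether the optimal $s_{p,q}$ equals $p$ or is strictly smaller. The case $s_{p,q}=p$ is immediate. If $s_{p,q}<p$, observe that restricting the minimization to $s\in\{0,\ldots,p-1\}$ and writing $F_p-F_s=f_p+(F_{p-1}-F_s)$ yields
\[
\min_{\substack{0\leq s\leq p-1\\ 0\leq t\leq q}}\bigl[(F_p-F_s)+(G_q-G_t)+stB\bigr]=f_p+V_{p-1,q},
\]
with the minimum attained at $(s_{p-1,q},t_{p-1,q})$. Hence when $s_{p,q}<p$ we may take $s_{p,q}=s_{p-1,q}$, completing the first clause.

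The second inclusion $t_{p,q}\in\{t_{p,q-1},q\}$ follows by the same argument with the roles of the two coordinates interchanged ($s\leftrightarrow t$, $\bm\alpha\leftrightarrow\bm\beta$, $F\leftrightarrow G$, $p\leftrightarrow q$), a symmetry that is manifest in the closed form above. The main thing to keep in mind is that the dual optima may not be unique; the lemma should be read as an existence statement about a choice of optimizer, and in the borderline case where $f_p+V_{p-1,q}$ equals the value attained at $s=p$, either choice is admissible and consistent with the stated inclusion. Beyond this minor bookkeeping, I do not anticipate any real obstacle, since the structural work---sortedness, integrality of the dual, and the closed form of $V_{p,q}$---has already been carried out in the preceding lemmas.
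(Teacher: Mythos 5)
Your proposal is correct and follows essentially the same route as the paper: both arguments rest on the observation that for $s\leq p-1$ the $(p,q)$-objective equals the $(p-1,q)$-objective plus the constant $f_p$, so the restricted argmin coincides with $(s_{p-1,q},t_{p-1,q})$, leaving $s=p$ as the only other possibility. If anything, your write-up (including the remark on non-unique optima) is more complete than the paper's own proof, which breaks off mid-sentence.
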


\begin{proof}
We first note the following. If, $s,s'\leq p-1$, and $t,t'\leq q-1$, then,
$$V_{p-1,q}(s,t)\leq V_{p-1,q}(s',t') \leftrightarrow V_{p,q}(s,t)\leq V_{p,q}(s',t')$$
This is easily checked by noticing $V_{p,q}(s,t)=V_{p-1,q}(s,t) +f_p$ and similarly $V_{p,q}(s',t') = V_{p-1,q}(s',t')+f_p$. This implies that if some $s<p-1$ is optimal for the $(p,q)$ sub-problem, then it must be optimal for the $(p-1,q)$ sub-problem. Therefore, a new optimal value can only come from the new 
\end{proof}
At this point, we list a few useful facts about this structure which are easily checked.
\begin{lmm}\label{lmm:2d:facts}
  The following are facts about $s_{p,q},t_{p,q},V_{p,q}$,
  \begin{enumerate}
      \item $s_{p,q}\geq s_{p-1,q}, t_{p,q}\geq t_{p,q-1}$
      \item $t_{p,q}\leq t_{p-1,q}, s_{p,q}\leq s_{p,q-1}$
      \item $V_{p,q}-V_{p-1,q}\leq f_p, V_{p,q}-V_{p,q-1}\leq g_q$
  \end{enumerate}
\end{lmm}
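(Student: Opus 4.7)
I would handle the three items in sequence, leveraging Lemmas~\ref{lmm:st:fixed} and~\ref{lmm:st:fact} for items~1 and~2, and a direct feasibility argument on the primal LP~\eqref{eq:primal:d:2} for item~3.

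Item~1 follows immediately from Lemma~\ref{lmm:st:fact}: the lemma gives $s_{p,q}\in\{s_{p-1,q},p\}$, and since $s_{p-1,q}\le p-1<p$ (the dual $\bm\alpha$-vector of the $(p{-}1,q)$ problem has only $p-1$ components), both possibilities satisfy $s_{p,q}\ge s_{p-1,q}$. The inequality $t_{p,q}\ge t_{p,q-1}$ is identical up to a symmetric swap of the roles of $(s,p,\bm f)$ and $(t,q,\bm g)$.

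For item~2, I would combine item~1 with the monotone reciprocal relation of Lemma~\ref{lmm:st:fixed}. That lemma characterizes the optimal $t$ given $s$ as the smallest $t$ satisfying $g_{t+1}\le sB\le g_t$, which, since $\bm g$ is non-increasing, is a non-increasing function of $sB$. By item~1, $s_{p,q}\ge s_{p-1,q}$, so applying this function yields $t_{p,q}\le t_{p-1,q}$. The inequality $s_{p,q}\le s_{p,q-1}$ follows symmetrically, using the reciprocal statement of Lemma~\ref{lmm:st:fixed} that expresses the optimal $s$ as a non-increasing function of $t$, together with $t_{p,q}\ge t_{p,q-1}$ from item~1.

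For item~3, let $\bm M^\ast$ be an optimal solution of the primal LP defining $V_{p,q}$ and let $\bm M'$ be its restriction to indices $(i,j)\le(p-1,q)$. Then $\bm M'$ is feasible for the $V_{p-1,q}$ LP: the row-sum and element-bound constraints are inherited verbatim, while each column-sum constraint is only loosened by dropping the entry $M^\ast_{p,j}$. Hence $V_{p-1,q}\ge\sum_{(i,j)\le(p-1,q)}M^\ast_{ij}=V_{p,q}-\sum_j M^\ast_{p,j}\ge V_{p,q}-f_p$, where the final step uses the row-sum constraint $\sum_j M^\ast_{p,j}\le f_p$ inherent to the $(p,q)$-problem. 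Rearranging yields $V_{p,q}-V_{p-1,q}\le f_p$, and the bound $V_{p,q}-V_{p,q-1}\le g_q$ is obtained identically by restricting to columns $j\le q-1$ instead.

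The main subtle point is item~2: Lemmas~\ref{lmm:st:fixed} and~\ref{lmm:st:fact} only pin the optimal pair $(s,t)$ down up to ties, so in order to treat ``optimal $t$ given $s$'' as a well-defined monotone map one has to commit to a consistent tie-breaking convention across the whole family of LPs indexed by $(p,q)$ (e.g., always select the smallest optimal index). Under such a convention the boundary case where the unconstrained optimal $t$ exceeds the cap $t\le q$ is harmless, since both $t_{p-1,q}$ and $t_{p,q}$ are then clipped to $q$. Once this convention is fixed, items~1--3 compose into a short formal proof.
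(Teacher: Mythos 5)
Your proof is correct, and for items~1 and~2 it follows the paper's argument exactly: item~1 is read off from Lemma~\ref{lmm:st:fact}, and item~2 combines item~1 with the monotone characterization in Lemma~\ref{lmm:st:fixed} (larger $s$ forces a smaller or equal optimal $t$ since $\bm g$ is non-increasing, and symmetrically for $s$ given $t$). The only divergence is item~3: you argue on the \emph{primal} LP~\eqref{eq:primal:d:2}, restricting an optimal $\bm M^\ast$ of the $(p,q)$-problem to the indices $(i,j)\le(p-1,q)$ and observing that feasibility is inherited while the objective drops by $\sum_j M^\ast_{pj}\le f_p$; the paper instead argues on the \emph{dual}, extending an optimal dual solution of the $(p-1,q)$-problem to the $(p,q)$-problem by setting $\alpha_p=1$, which raises the dual objective by exactly $f_p$. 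The two arguments are of equal length and both valid; the dual version has the mild advantage of being the same move used elsewhere in the appendix (e.g., in the proof that $\bm C\in\calM_{\bm f^{(\bm X)},\infty}$ and in Lemma~\ref{lemma:delta:2:negative}), while yours avoids invoking strong duality at all. Your remark about tie-breaking in item~2 is a fair observation; the paper sidesteps it by asserting that the optimal $(s_{p,q},t_{p,q})$ is unique for each subproblem, which plays the same role as your ``always pick the smallest optimal index'' convention.
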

\begin{proof}
The first item comes directly from applying the previous lemma. The second comes from applying first fact and Lmm. \ref{lmm:st:fixed}. If $s_{p,q}\geq s_{p-1,q}$ and $g$ is non-increasing, then the description of $t$ for a fixed $s$ immediately implies $t_{p,q}\leq t_{p-1,q}$. The third comes from the fact that you could always transition from $p-1,q$ to $p,q$ by setting $\alpha_p=1$, similarly for $q$.
\end{proof}

Okay, we are now prepared to prove that $C>0$ when $d=2$. 
\begin{thm}
    In two dimensions, $C$ will have non-negative entries. This is equivalent to the following two statements,
    \begin{align}
    V_{p,q}-V_{p-1,q}&\geq V_{p,q-1}-V_{p-1,q-1}\\
    V_{p,q}-V_{p,q-1}&\geq V_{p-1,q}-V_{p-1,q-1}
    \end{align}
\end{thm}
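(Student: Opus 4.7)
The plan is to use the dual LP representation established in Lemmas~\ref{lemma:totally:unimodular}--\ref{lmm:st:fact}: thanks to total unimodularity and the sorting lemma, we can write $V_{p,q} = \min_{s \leq p,\,t \leq q} \phi_{p,q}(s,t)$, where $\phi_{p,q}(s,t) \defeq (F_p - F_s) + (G_q - G_t) + Bst$. Since the two displayed inequalities are each equivalent to the single supermodularity-type inequality $V_{p,q} + V_{p-1,q-1} \geq V_{p-1,q} + V_{p,q-1}$, it suffices to establish that. My strategy is to let $(s^*,t^*)$ attain the min in the dual for $V_{p,q}$ and to split on whether $s^*,t^*$ lie strictly inside $\{1,\dots,p-1\}\times\{1,\dots,q-1\}$ or on the upper boundary; in each case I plug $(s^*,t^*)$, or a cheap substitute, into the dual at a neighboring cell and combine with the slope bounds $V_{p-1,q}-V_{p-1,q-1}\leq g_q$ and $V_{p,q-1}-V_{p-1,q-1}\leq f_p$ supplied by Lemma~\ref{lmm:2d:facts}.

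In the interior case $s^*\leq p-1$, $t^*\leq q-1$, the pair $(s^*,t^*)$ is feasible for the dual at $(p-1,q-1)$, and its objective there is $\phi_{p,q}(s^*,t^*)-f_p-g_q = V_{p,q}-f_p-g_q$, yielding $V_{p,q}-V_{p-1,q-1}\geq f_p+g_q$. Adding the two slope bounds from Lemma~\ref{lmm:2d:facts} and rearranging then gives the desired inequality. When exactly one upper coordinate saturates, say $s^*=p$, $t^*\leq q-1$, the pair $(p,t^*)$ is feasible for $V_{p,q-1}$, yielding $V_{p,q}-V_{p,q-1}\geq g_q\geq V_{p-1,q}-V_{p-1,q-1}$, which rearranges to the same supermodularity inequality; the mirror case $t^*=q$, $s^*\leq p-1$ is symmetric.

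The main obstacle is the corner case $s^*=p$, $t^*=q$, where $V_{p,q}=Bpq$ and no obvious feasibility trick produces a useful point at a smaller cell. The key observation is that optimality of $(p,q)$ in the dual forces $\phi_{p,q}(p-1,q)\geq V_{p,q}$ and $\phi_{p,q}(p,q-1)\geq V_{p,q}$, which simplify to $f_p\geq Bq$ and $g_q\geq Bp$. With these inequalities, the primal $B$-fill of each of the $(p-1)\times(q-1)$, $(p-1)\times q$, and $p\times(q-1)$ submatrices is feasible, and the matching dual upper bounds are obtained by taking $(s,t)$ equal to the cell corner; together these pin down $V_{p-1,q-1}=B(p-1)(q-1)$, $V_{p-1,q}=B(p-1)q$, and $V_{p,q-1}=Bp(q-1)$. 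A short algebraic simplification then yields $V_{p,q}+V_{p-1,q-1}-V_{p-1,q}-V_{p,q-1}=B\geq 0$, closing this last case and hence the proof.
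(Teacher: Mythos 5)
Your proof is correct, and while it runs on the same engine as the paper's --- the structured dual representation $V_{p,q}=\min_{s\le p,\,t\le q}\bigl[(F_p-F_s)+(G_q-G_t)+Bst\bigr]$ together with the slope bounds $V_{p,q}-V_{p-1,q}\le f_p$ and $V_{p,q}-V_{p,q-1}\le g_q$ --- the case decomposition is genuinely different and, I think, cleaner. The paper splits on how the optimizers $s_{p,q},t_{p,q}$ relate \emph{across} adjacent cells (Cases 1--4, whose exhaustiveness rests on Lemma~\ref{lmm:st:fact}, i.e.\ $s_{p,q}\in\{s_{p-1,q},p\}$ and $t_{p,q}\in\{t_{p,q-1},q\}$ --- a lemma whose proof in the paper is itself left dangling). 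You instead split on where the optimizer $(s^*,t^*)$ of the \emph{single} cell $(p,q)$ sits relative to the corner, which is exhaustive by construction and needs no cross-cell monotonicity facts. Your interior and one-sided cases are verbatim weak-duality arguments (plugging $(s^*,t^*)$ into a neighboring dual and paying $f_p$, $g_q$, or both), and your corner case is more informative than the paper's Case~4: from the local optimality conditions $f_p\ge Bq$ and $g_q\ge Bp$ you get exact values $V_{p-1,q-1}=B(p-1)(q-1)$, $V_{p-1,q}=B(p-1)q$, $V_{p,q-1}=Bp(q-1)$ via the primal $B$-fill, so the discrete mixed second derivative equals exactly $B$ there, whereas the paper only establishes the inequality by an ad hoc dual extension. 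The only thing worth flagging is a degenerate-boundary remark: when $p=1$ or $q=1$ the ``neighboring cell'' has an empty index range and $V=0$; your argument still goes through because you only ever use the weak-duality direction $V\le\phi$, but it is worth saying so explicitly.
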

\begin{proof}
  We will do this by breaking the problem down into a series of cases based on the values of $s_{p,q}$,$s_{p,q-1}$, and $t_{p,q}$.
  \\
  \textbf{CASE 1}: $s_{p,q-1}=s_{p-1,q-1}$\\
  This directly implies $V_{p,q-1}-V_{p-1,q-1}=f_p$. Further, because $s_{p,q}\leq s_{p,q-1}\leq p-1$, this also implies $V_{p,q}-V_{p-1,q}=f_p$, proving the first inequality.\\\\
  \textbf{CASE 2}: $s_{p,q-1}=p$ and $s_{p,q}=s_{p-1,q}$\\
  From fact 3, we know $V_{p,q-1}-V_{p-1,q-1}\leq f_p$, and from $s_{p,q}=s_{p-1,q}$ we know $V_{p,q}-V_{p-1,q}=f_p$, proving the first inequality.\\\\
  \textbf{CASE 3}: $s_{p,q-1}=s_{p,q}=p$ and $t_{p,q}=t_{p,q-1}$\\ From fact 3, we know $V_{p-1,q}-V_{p-1,q-1}\leq g_q$, and from $t_{p,q}=t_{p,q-1}$ we know $V_{p,q}-V_{p,q-1}=g_q$, proving the second inequality.\\\\
  \textbf{CASE 4}:  $s_{p,q-1}=s_{p,q}=p$ and $t_{p,q}=q$\\
  Because $V_{p,q}=p*q*B$ and $V_{p,q-1}\leq p*(q-1)*B$, we can lower bound the first difference as $V_{p,q}-V_{p,q-1}\geq p\cdot B$. 
  
  Looking at $V_{p-1,q}-V_{p-1,q-1}$, setting $mu_{i,q} = 1, \beta_q=0$ and keeping all other variables consistent with the optimal solution to $V_{p-1,q-1}$ is a valid solution to $V_{p-1,q}$. This solution has value $V_{p-1,q-1}+(p-1)B$, so $V_{p-1,q}\leq V_{p-1,q-1}+(p-1)B$. Therefore, we can upper bound the difference $V_{p-1,q}-V_{p-1,q-1}\leq (p-1)B$.  Therefore, the second inequality is true.\\
  
  Lmm \ref{lmm:st:fact} directly implies that our cases are complete.
\end{proof}

  Now we prove the negative result.

  \begin{ex} \label{ex:negative} Consider $d=3$ dimensions, the degree
    constraints $\bm f=\bm g=(\infty, \infty)$, $\bm h=(\infty, 2B)$
    and the max tuple multiplicity $B>0$.  (Here $\infty$ means ``a
    very large number'', much larger than $B$.)  Define:


\begin{figure}[H]
    \centering
    \includegraphics[width=\textwidth]{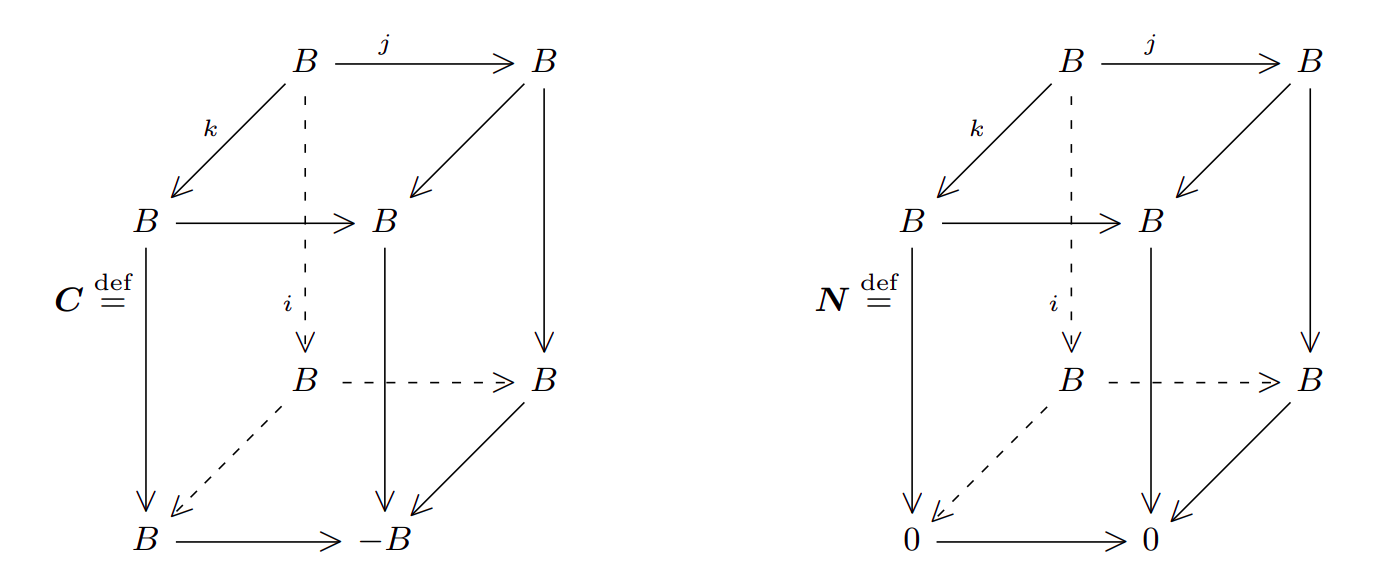}
    \label{fig:negative-example-diagram}
\end{figure}

Here $C_{222}=-B$, proving the claim in item~\ref{item:th:main:star:1}.  It remains to check that $\bm C$ is the pessimistic tensor, which we do as in Example~\ref{ex:greater:than:b}.  Each sub-tensor $m_1 \times m_2 \times m_3$, other than the full tensor $2 \times 2 \times 2$, has all entries $=B$.  Therefore, it is consistent tensor, and the sum of all its elements is a tight upper bound on the sum all elements of any consistent tensor of that size. It remains to check the full tensor $2 \times 2 \times 2$.  Here the sum is $=6B$. Every consistent tensor $\bm M$ has a sum $\leq 6B$,
because:
\begin{align*}
  \sum_{i,j,k = 1,2} M_{ijk} = &   \sum_{i,j = 1,2} M_{ij1} + \sum_{i,j = 1,2} M_{ij2}\\
\leq & 4B \mbox{ (because of the bound $B$) } + 2B \mbox{ (because of $h_2=2B$)}
 = 6B
\end{align*}
The tensor $\bm N$ shown above is a consistent tensor that has the sum
$=6B$.  This proves that $\bm C$ is the pessimistic tensor.
\end{ex}

\subsubsection{Proof of Item~\ref{item:th:main:star:2} (b)}
\label{app:item:1:b}
\noindent\textbf{Item~\ref{item:th:main:star:2}:} \textit{$\bm C$ is a solution to
    Problem~\ref{prob:pessimistic:tensor}, i.e.
    $\bm C \in \calM_{\bm f^{(\bm X)},\infty}$ and it satisfies
    Eq.~\eqref{eq:c:is:max}.  Furthermore, it is tight in the
    following sense: there exists a tensor
    $\bm M \in \calM^+_{\bm f^{(\bm X)},B}$ and non-increasing vectors
    $\bm a^{(p)} \in \R_+^{[n_p]}$, $p=1,d$, such that
    inequality~\eqref{eq:c:is:max} (with $\bm \sigma$ the identity) is
    an equality.}
\\\\

We have already proven part of it in Sec.~\ref{app:item:1:a}.  Here,
it remains to show the second part, that
$\bm C \in \calM_{\bm f^{(\bm X)},\infty}$.  To simplify the
notations, we will restrict the proof to $d=3$ dimensions, both here
and in the proof of items~\ref{item:th:main:star:1},
~\ref{item:th:main:star:4}, and~\ref{item:th:main:star:5}. This is
meant to keep the notations readable and intuitive, and our discussion
generalizes to arbitrary dimensions in a straightforward manner. We
will rename the variables $X_1, X_2, X_3$ to $I,J,K$, and the degree
sequences to
$\bm f \in \R_+^{n_1}, \bm g \in \R_+^{n_2}, \bm h \in \R_+^{n_3}$.

We repeat here the linear program~\eqref{eq:def:v} defining $\bm V$, by specializing it to $d=3$.  For $(m_1,m_2,m_3) \leq (n_1,n_2,n_3)$, the value $V_{m_1m_2m_3}$ is the solution to the following linear program (all indices $i,j,k$ range over $[m_1]$, $[m_2]$, $[m_3]$ respectively):

\noindent \fbox{\parbox{0.4\textwidth}{\footnotesize
\begin{align*}
& V_{m_1m_2m_3} = \mbox{Maximize:\ } \sum_{(i,j,k)\in [m_1]\times [m_2]\times [m_3]} M_{ijk},  \mbox{\ \ Where $M_{ijk}\geq 0$ and:} \\
& \forall i:\ \sum_{jk} M_{ijk} \leq f_i,\ 
\forall j:\ \sum_{ik} M_{ijk} \leq g_j,\ 
\forall k:\ \sum_{ij} M_{ijk} \leq h_k,\ 
\forall ijk:\ M_{ijk} \leq  B
\end{align*}
}}

Equivalently, $V_{m_1m_2m_3}$ is the solution to the dual linear
program:

\noindent \fbox{\parbox{0.4\textwidth}{\footnotesize
\begin{align}
& V_{m_1m_2m_3} = \mbox{Minimize:\ } \sum_i \alpha_if_i + \sum_j \beta_j g_j + \sum_k  \gamma_k h_k + B \sum_{ijk} \mu_{ijk} \label{eq:dual}\\
& \mbox{Where: $\alpha_i,\beta_j,\gamma_k,\mu_{ijk}\geq 0$ and:\ }  \forall i,j,k:\ \alpha_i + \beta_j + \gamma_k + \mu_{ijk} \geq 1\nonumber
\end{align}
}}

In order to prove $\bm C \in \calM_{\bm f^{(\bm X)},\infty}$, we show
a stronger result: any sub-tensor of $\bm C$ is consistent with the
degree sequences $\bm f^{(\bm X)}$.

\begin{lmm} For any $(m_1,m_2,m_3)\leq (n_1,n_2,n_3)$, let $\bm C'$ be the $[m_1]\times [m_2] \times [m_3]$-subtensor of $\bm C$. Then $\bm C'$ is consistent with the degree sequences $\bm f, \bm g, \bm h$.
\end{lmm}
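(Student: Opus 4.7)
The plan is to compute the marginal sum of the sub-tensor $\bm C'$ along every variable except one, say $X_p$, express it as a single discrete derivative of $\bm V$, and then bound this derivative by the degree sequence using feasibility of the LP defining $\bm V$. I will write the argument for $d=3$ with variables $I,J,K$ and degree sequences $\bm f,\bm g,\bm h$, since the general case is notationally identical.

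First, I would fix $i \leq m_1$ and compute $\sum_{j \leq m_2, k \leq m_3} C'_{ijk}$. Using $\bm C = \Delta_I \Delta_J \Delta_K \bm V$ together with the telescoping identity $\Sigma_X \Delta_X = \text{id}$ from Eq.~\eqref{eq:sigma:delta:cancel} (with the convention $V_{0jk} = V_{i0k} = V_{ij0} = 0$ coming from the empty LP), summing $k$ from $1$ to $m_3$ collapses $\Delta_K$ and pins the third index to $m_3$; then summing $j$ from $1$ to $m_2$ collapses $\Delta_J$ and pins the second index to $m_2$. This yields
\begin{align*}
\sum_{j \leq m_2,\,k \leq m_3} C'_{ijk} \;=\; (\Delta_I \bm V)_{i,m_2,m_3} \;=\; V_{i,m_2,m_3} - V_{i-1,m_2,m_3}.
\end{align*}
The analogous identity holds for marginalization along any other pair of dimensions, giving $\Delta_J V_{m_1,j,m_3}$ and $\Delta_K V_{m_1,m_2,k}$ respectively.

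Second, I would bound this difference by $f_i$ using the primal LP~\eqref{eq:def:v}. Let $\bm M^*$ be an optimal solution for $V_{i,m_2,m_3}$. Its restriction to coordinates with first index strictly below $i$ is feasible for $V_{i-1,m_2,m_3}$, hence
\begin{align*}
V_{i,m_2,m_3} \;=\; \sum_{i' < i,\,j,k} M^*_{i'jk} \;+\; \sum_{j,k} M^*_{ijk} \;\leq\; V_{i-1,m_2,m_3} \;+\; f_i,
\end{align*}
where the last inequality uses the row-sum constraint $\sum_{j,k} M^*_{ijk} \leq f_i$. Rearranging gives $\sum_{j \leq m_2,\,k \leq m_3} C'_{ijk} \leq f_i$. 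Exchanging the roles of the variables yields the analogous bounds $\leq g_j$ and $\leq h_k$ for the other two marginals, establishing $\bm C' \in \calM_{\bm f,\bm g,\bm h,\infty}$ and hence the lemma.

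I do not anticipate a significant obstacle: the entire argument rests on the collapse of $\Sigma_X \Delta_X$ and on the LP constraint satisfied by the optimal tensor. The one subtlety to flag is the zero-boundary convention for $\bm V$, which is forced by the LP having value $0$ whenever any $m_q = 0$; once that is stated, the telescoping is clean. For general $d$, the same proof goes through verbatim: marginalizing out $d-1$ indices leaves $\Delta_{X_p} V_{m_1,\ldots,s_p,\ldots,m_d}$, and the LP row-constraint in direction $X_p$ gives the bound $f^{(X_p)}_{s_p}$.
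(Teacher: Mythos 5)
Your proof is correct, and its first half---marginalizing $\bm C'$ over all but one dimension and telescoping $\Sigma\Delta$ via Eq.~\eqref{eq:sigma:delta:cancel} to reduce the claim to $V_{i,m_2,m_3}-V_{i-1,m_2,m_3}\leq f_i$---is exactly the paper's reduction. Where you diverge is in how you prove this one-step inequality: the paper takes an optimal solution of the \emph{dual} LP~\eqref{eq:dual} for the $(i-1)\times m_2\times m_3$ problem and extends it to the $i\times m_2\times m_3$ problem by setting $\alpha_i\defeq 1$ and the new $\mu$'s to $0$, which raises the dual objective by exactly $f_i$ and so bounds $V_{i,m_2,m_3}$ from above; you instead take an optimal solution $\bm M^*$ of the \emph{primal} LP~\eqref{eq:def:v} for the larger problem, observe that its restriction to first index $<i$ is feasible for the smaller problem, and peel off the slab $\sum_{j,k}M^*_{ijk}\leq f_i$. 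Both arguments are sound; yours is marginally more self-contained, since it uses only primal feasibility and never invokes the fact that $V$ equals the optimal dual value, whereas the paper's does. Your flag about the zero-boundary convention $V_{0,\cdot,\cdot}=0$ is the right subtlety to raise and is consistent with the paper's convention $a_0\defeq 0$ in Eq.~\eqref{eq:delta:sigma}.
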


\begin{proof}
  We will show that $\texttt{SUM}_{JK}\left(\bm C'\right) \leq \bm f$,
  which proves that $\bm C$ is consistent with $\bm f$; consistency
  with $\bm g, \bm h$ is proven similarly.  Let $\bm V'$ be the
  $[m_1]\times [m_2] \times [m_3]$-subtensor of $\bm V$:
  \begin{align*}
    \texttt{SUM}_{JK}(\bm C') \stackrel{\eqref{eq:def:c}}{=}& \texttt{SUM}_{JK}(\Delta_I\Delta_J\Delta_K\bm V')=
                                                             \Delta_I\left(\texttt{SUM}_{JK}\left(\Delta_J\Delta_k\bm V'\right)\right)\\
    \stackrel{\eqref{eq:sigma:delta:cancel}}{=} & \Delta_I \left((V_{im_2m_3})_{i=1,m_1}\right) = (V_{im_2m_3}-V_{(i-1)m_2m_3})_{i=1,m_1}
  \end{align*}
  Consider an optimal dual solution
  $\bm \alpha, \bm \beta, \bm \gamma, \bm \mu$ to the
  $(i-1) \times m_2 \times m_3$ problem~\eqref{eq:dual}, which defines the value of $V_{(i-1)m_2m_3}$.  We extended it to a solution to the $i \times m_2 \times m_3$ problem, by setting $\alpha_i \defeq 1$,  $\gamma_{ijk} \defeq 0$ for all $j\in [m_2], k\in [m_3]$.  The expression~\eqref{eq:dual} has increased by $f_i$.  At minimality, the value of $V_{im_2m_3}$ is therefore $\leq V_{(i-1)m_2m_3}+f_i$, proving that $\left(\texttt{SUM}_{JK}(\bm C')\right)_i \leq f_i$.
\end{proof}

\subsubsection{Proof of Item ~\ref{item:th:main:star:1}}

\noindent\textbf{Item ~\ref{item:th:main:star:1}:} \textit{f $B < \infty$, then $\bm C$ may not be consistent with $B$, even if $d=2$.}
\\\\
To prove item~\ref{item:th:main:star:1}, we given an example where $\bm C$ is inconsistent.

\begin{ex} \label{ex:greater:than:b} Consider $d=2$ dimensions.  Let
  the degree sequences be $\bm f = \bm g = (2B, 2B, B+1)$, where $B$
  is some large number representing the max tuple multiplicity.  Consider the two
  matrices below:
\begin{small}
  \begin{align*}
    \mathbf{C} = 
    &
      \begin{pNiceMatrix}[first-row,first-col]
           & 2B & 2B & B+1 \\
        2B  & B & B & 0 \\
        2B  & B & B & 0 \\
        B+1 & 0 & 0 & B+1
      \end{pNiceMatrix}
& 
    \mathbf{N} = 
    &
      \begin{pNiceMatrix}[first-row,first-col]
           & 2B & 2B & B+1 \\
        2B  & B & B & 0 \\
        2B  & B & B-1 & 1 \\
        B+1 & 0 & 1 & B
      \end{pNiceMatrix}
  \end{align*}
\end{small}

  We claim $\bm C$ is the worst-case matrix for the given
  $\bm f, \bm g, B$: since $C_{33}>B$, this represents an example of a
  worst-case matrix that is inconsistent.  It suffices to check that
  $\sum_{i\leq m_1, j \leq m_2} C_{ij} = V_{m_1m_2}$ for all
  $m_1, m_2 \leq 3$ (see Eq.~\eqref{eq:step:3}), i.e. we will show
  that the sum of the $m_1 \times m_2$ submatrix of $\bm C$ is the
  maximum value of the sum of any consistent $m_1 \times m_2$
  submatrix $\bm M$.  One can check directly that
  $\sum_{i\leq m_1, j\leq m_2} C_{ij} = \min(\sum_{i\leq m_1}f_i,
  \sum_{j\leq m_2}g_j)$, which proves that any consistent $\bm M$ has
  $\sum_{ij} M_{ij} \leq \sum_{ij} C_{ij}$.  To prove equality, we
  notice that, when $(m_1,m_2)\neq (3,3)$, the submatrix of $\bm C$ is
  consistent, hence we have equality.  For $m_1=m_2=3$, the full
  matrix $\bm C$ is not consistent (because $C_{33}>B$), but the
  matrix $\bm N$ shown above is consistent and has the same sum as
  $\bm C$, namely $5B+1$.
\end{ex}

\subsubsection{Proof of Item~\ref{item:th:main:star:4}}

\noindent\textbf{Item~\ref{item:th:main:star:4}:} \textit{For any non-increasing vectors $\bm a^{(X_p)} \in \R_+^{[n_p]}$, $p=2,d$, the vector $\bm C \cdot \bm a^{(X_2)} \cdots \bm a^{(X_d)}$ is in $\R^{[n_1]}_+$ and non-increasing.}

Recall that we continue to restrict the discussion to $d=3$, in order to simplify the notation and reduce clutter.  The general case is similar, and omitted.  The proof of this item requires two lemmas.
First:

\begin{lmm} \label{lemma:by:parts}
  For any vectors $\bm x, \bm y \in \R^n$, the following
  ``summation-by-parts'' holds:
  \begin{align}
    \sum_{i=1,n} (\Delta \bm x)_i\cdot  y_i = & x_n y_n - \sum_{i=1,n-1} x_i\cdot(\Delta \bm y)_{i+1}  \label{eq:sum:by:parts}
  \end{align}
\end{lmm}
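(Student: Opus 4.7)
The plan is to prove the identity by a direct telescoping / reindexing argument, which is the discrete analogue of integration by parts. I would start by expanding the definition $(\Delta \bm x)_i = x_i - x_{i-1}$ (with $x_0 = 0$) and splitting the left-hand side into two sums:
\begin{align*}
\sum_{i=1}^{n} (\Delta \bm x)_i\, y_i = \sum_{i=1}^{n} x_i y_i - \sum_{i=1}^{n} x_{i-1} y_i.
\end{align*}

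Next, I would reindex the second sum by letting $j = i-1$, which turns it into $\sum_{j=0}^{n-1} x_j y_{j+1}$. Using the convention $x_0 = 0$, the $j=0$ term vanishes, and I can pull out the $i = n$ term from the first sum to separate $x_n y_n$. This leaves me with $x_n y_n + \sum_{i=1}^{n-1}(x_i y_i - x_i y_{i+1})$, which immediately factors as $x_n y_n - \sum_{i=1}^{n-1} x_i (y_{i+1} - y_i)$. Recognizing $y_{i+1} - y_i = (\Delta \bm y)_{i+1}$ by the definition in Eq.~\eqref{eq:delta:sigma} yields the desired identity.

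There is no real obstacle here; the only subtlety is being careful with the boundary term at $i=0$ (which is handled by the convention $x_0 = 0$ from Eq.~\eqref{eq:delta:sigma}) and making sure the reindexing of $\Delta \bm y$ lines up correctly. I would keep the presentation short (essentially the three-line computation above) since the result is a standard discrete integration-by-parts formula and will be used purely as a bookkeeping tool in the subsequent monotonicity arguments for $\bm C \cdot \bm a^{(X_2)}\cdots \bm a^{(X_d)}$.
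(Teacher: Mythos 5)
Your proposal is correct and follows essentially the same route as the paper's proof: expand $(\Delta \bm x)_i = x_i - x_{i-1}$ with $x_0=0$, reindex the shifted sum, peel off the boundary term $x_n y_n$, and regroup to recognize $(\Delta \bm y)_{i+1}$. The bookkeeping at the $i=0$ and $i=n$ boundaries is handled the same way in both arguments, so there is nothing further to add.
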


The formula is analogous with integration by parts:
$\int_0^n f' g dx = fg\mid_0^n - \int_0^n f g' dx$.  The proof follows
immediately by expanding and rearranging the LHS as follows:
\begin{align*}
  \sum_{i=1,n} (\Delta \bm x)_i \cdot y_i = & \sum_{i=1,n} x_i y_i - \sum_{i=1,n} x_{i-1} y_i 
\\=& \sum_{i=1,n} x_i y_i - \sum_{i=1,n-1} x_i y_{i+1}
\\=& x_n y_n - \left(\sum_{i=1,n-1} x_i y_{i+1}-\sum_{i=1,n-1} x_i y_i\right)
\\=& x_n y_n - \sum_{i=1,n-1} x_i (\Delta \bm y)_{i+1}
\end{align*}

The second lemma assets that the second discrete derivative of $\bm V$
on any variable is always $\leq 0$.  Recall that $\bm V$ is the {\em
  value tensor}, introduced in Def.~\ref{def:pessimiistic:tensor}, and
we have described it in terms of a primal linear program, and a dual
linear program~\eqref{eq:dual}.

\begin{lmm} \label{lemma:delta:2:negative}
$\Delta^2_I \bm V \leq 0$,  $\Delta^2_J \bm V \leq 0$, and $\Delta^2_K \bm V \leq 0$. 
\end{lmm}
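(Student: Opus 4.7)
The plan is to prove $\Delta_I^2 \bm V \leq 0$; the companion inequalities $\Delta_J^2 \bm V \leq 0$ and $\Delta_K^2 \bm V \leq 0$ follow by identical arguments exchanging the roles of the three dimensions, using the non-increasing property of $\bm g$ and $\bm h$ in place of $\bm f$. Writing $V(i) \defeq V_{i,m_2,m_3}$ with $(m_2,m_3)$ fixed, the goal is the discrete concavity inequality $V(i+1) - V(i) \leq V(i) - V(i-1)$, which I will deduce from the dual LP~\eqref{eq:dual}.

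The first step is a structural reformulation of the dual. Fix $\bm\beta,\bm\gamma \geq 0$; the remaining minimization over $(\bm\alpha,\bm\mu)$ then decouples across rows of $I$, because each dual constraint $\alpha_{i'}+\beta_j+\gamma_k+\mu_{i'jk}\geq 1$ and each relevant summand of the objective involves only one $i'$. Let
\[
\phi(f;\bm\beta,\bm\gamma) \defeq \min_{\alpha \geq 0}\ \alpha f \;+\; B \sum_{j,k} \max\bigl(0,\ 1-\alpha-\beta_j-\gamma_k\bigr)
\]
denote the optimal cost contribution of a single row with row-capacity $f$. Two properties of $\phi$ are essential. Concavity of $\phi$ in $f$ follows because $\phi$ is the infimum of functions affine in $f$; monotonicity (non-decreasing) in $f$ follows because the coefficient of $\alpha$ in the inner objective is precisely $f\geq 0$, so raising $f$ can only raise the minimum. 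After this reformulation,
\[
V(i) = \min_{\bm\beta,\bm\gamma \geq 0} G(\bm\beta,\bm\gamma,i), \qquad G(\bm\beta,\bm\gamma,i) \defeq \sum_j \beta_j g_j + \sum_k \gamma_k h_k + \sum_{i' \leq i} \phi(f_{i'};\bm\beta,\bm\gamma).
\]

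The second step is a common-pivot argument. Choose $(\bm\beta^*,\bm\gamma^*)$ attaining the minimum in $V(i)$; then $V(i) = G(\bm\beta^*,\bm\gamma^*,i)$ exactly, while $V(i+1) \leq G(\bm\beta^*,\bm\gamma^*,i+1)$ and $V(i-1) \leq G(\bm\beta^*,\bm\gamma^*,i-1)$ since $(\bm\beta^*,\bm\gamma^*)$ is merely feasible for those duals. Subtracting gives the two inequalities
\[
V(i+1) - V(i) \leq \phi(f_{i+1};\bm\beta^*,\bm\gamma^*), \qquad V(i) - V(i-1) \geq \phi(f_i;\bm\beta^*,\bm\gamma^*).
\]
Monotonicity of $\phi(\cdot;\bm\beta^*,\bm\gamma^*)$ together with $f_{i+1} \leq f_i$ (the non-increasing assumption on the degree sequence, Def.~\ref{def:consistent}) yields $\phi(f_{i+1};\bm\beta^*,\bm\gamma^*) \leq \phi(f_i;\bm\beta^*,\bm\gamma^*)$, closing the chain and establishing $\Delta_I^2 \bm V \leq 0$.

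The main obstacle is the initial structural step: recognizing that holding $\bm\beta,\bm\gamma$ fixed collapses the dual into a sum of independent per-row one-dimensional LPs whose common value function $\phi$ inherits the monotonicity in $f$ needed to compare consecutive increments. The second, and arguably more delicate, ingredient is the insistence on using the \emph{same} pivot $(\bm\beta^*,\bm\gamma^*)$ to bound both $V(i+1)-V(i)$ from above and $V(i)-V(i-1)$ from below; without a common pivot, the individual optimizers for $V(i+1)$ and $V(i-1)$ could differ and no termwise comparison would be possible.
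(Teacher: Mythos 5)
Your proof is correct and follows essentially the same route as the paper's: both arguments pivot on an optimal dual solution of the middle subproblem \eqref{eq:dual}, bound the two neighboring values $V(i-1)$ and $V(i+1)$ by deleting, respectively appending, one hyperplane's worth of dual variables, and close the comparison using the non-increasing degree sequence. Your explicit decoupling into the per-row value function $\phi$ and re-optimization of the appended row is a mild refinement of the paper's verbatim copy of the last plane (which increases the dual objective by \emph{at most} $w$, the step where the paper is slightly imprecise), but the underlying common-pivot mechanism is identical.
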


\begin{proof}
  $\Delta^2_K \bm V \leq 0$ is equivalent to the statement:
  \begin{align}
\forall (m_1, m_2, k) \leq (n_1,n_2,n_3) &&   2 V_{m_1m_2(k-1)} \geq & V_{m_1m_2(k-2)}+ V_{m_1m_2k}\label{eq:concave}
  \end{align}
  Consider an optimal dual solution
  $\bm \alpha, \bm \beta, \bm \gamma, \bm \mu$ to the
  $m_1 \times m_2 \times (k-1)$ problem.  Then $V_{m_1m_2(k-1)}$ is
  given by the objective function of the dual linear program, i.e. by
  expression~\eqref{eq:dual}.  Denote by
  $w \defeq \alpha_{k-1}f_{k-1} + B \sum_{ij} \mu_{ij(k-1)}$, i.e. the
  contribution to~\eqref{eq:dual} of all terms in the plane $K = k-1$.
  On one hand, we derive from this solution a solution to the
  $m_1 \times m_2 \times (k-2)$ problem by simply removing the
  variables in the plane $K=k-1$: expression~\eqref{eq:dual} decreases
  by the quantity $w$, which proves
  $V_{m_1m_2(k-2)} \leq V_{m_1m_2(k-1)}-w$.  On the other hand we can
  extend the solution to a solution of the $m_1 \times m_2 \times k$
  problem, by setting the variables in the plane $K=k$ to those in the
  plane $K=k-1$, more precisely $\alpha_k \defeq \alpha_{k-1}$,
  $\mu_{ijk} \defeq \mu_{ij(k-1)}$: it is obvious that all new
  constraints are satisfied, and the expression~\eqref{eq:dual} has
  increased by $w$.  This implies
  $V_{m_1m_2k} \leq V_{m_1m_2(k-1)}+w$.  By adding these two
  inequalities we prove~\eqref{eq:concave}, as required.
\end{proof}

To prove item~\ref{item:th:main:star:4}, let
$\bm a \in \R_+^{n_1}, \bm b \in \R_+^{n_2}$ be non-increasing, and
let $\bm v \defeq \bm C \cdot \bm a \cdot \bm b$.  We need to prove
that $\bm v \geq 0$ and that $\bm v$ is non-increasing.  We start by
proving that $\bm v \geq 0$ and for that it suffices to assume that
both $\bm a$ and $\bm b$ are one-zero vectors, because any
non-decreasing vector is a non-negative linear combination of one-zero
vectors (see Sec.~\ref{sec:bowtie}).  Suppose the first $m_1$ values
in $\bm a$ are 1 and the others are 0, and similarly the first $m_2$
values in $\bm b$ are 1 and the others are 0.  Then, using an argument
similar to that in Eq.~\eqref{eq:step:2} we have
$(\bm C \cdot \bm a \cdot \bm b)_k = \sum_{(i,j) \leq (m_1,m_2)}
C_{ijk}$ and therefore:
\begin{align*}
v_k = & (\bm C \cdot \bm a \cdot \bm b)_k = (\Sigma_I \Sigma_J \bm C)_{m_1  m_2 k} = (\Sigma_I \Sigma_J
        \Delta_I \Delta_J \Delta_K \bm V)_{m_1  m_2 k} = (\Delta_K \bm
        V)_{m_1 m_2 k} = V_{m_1 m_2 k} - V_{m_1 m_2 (k-1)}
\end{align*}
The value $V_{m_1 m_2 (k-1)}$ is defined by a
$m_1 \times m_2 \times (k-1)$ linear program in Eq.~\eqref{eq:def:v}.
Let $\bm M$ be the optimal solution, such that
$V_{m_1 m_2 (k-1)} = \sum_{(i,j,p) \leq (m_1,m_2,k-1)} M_{ijp}$.  We
define a new tensor $\bm M'$ obtained from $\bm M$ by setting
$M'_{ijk} \defeq 0$ and setting $M'_{ijp}=M_{ijp}$ when $p \neq k$.
Then $\bm M'$ is a feasible solution to the
$m_1 \times m_2 \times k$-linear program Eq.~\eqref{eq:def:v}, proving
that, at optimality,
$V_{m_1 m_2 k} \geq \sum_{(i,j,p) \leq (m_1,m_2,k)}M'_{ijp} =
V_{m_1m_2(k-1)}$, proving that $v_k \geq 0$.

Next we prove that $\bm v$ is non-increasing, or, equivalently, that
$\Delta_K \bm v$ is $\leq 0$.  By definition,
$v_k \defeq (\bm C \cdot \bm a \cdot \bm b)_k = \sum_{ij} C_{ijk} a_i
b_j$.  Since $\bm C=\Delta_I\Delta_J\Delta_K \bm V$, the summation-by
parts formula~\eqref{eq:sum:by:parts} allows us to move the
derivatives $\Delta_I\Delta_J$ from $\bm V$ to $\bm a, \bm b$ and
obtain the following:
\begin{align}
  \Delta_K \left(\sum_{ij} C_{ijk} a_i b_j\right) \stackrel{\eqref{eq:def:c}}{=}  & \Delta_K\left(\sum_{ij} \left(\Delta_I\Delta_J\Delta_KV_{ijk}\right) a_i b_j\right)  = \Delta_K^2 \left(\sum_j \left(\Delta_J\sum_i (\Delta_I V_{ijk})a_i\right)b_j \right)\nonumber\\
\sum_j \left(\Delta_J\sum_i (\Delta_I V_{ijk})a_i\right)b_j \stackrel{\eqref{eq:sum:by:parts}}{=} & \left(\sum_j \left(\Delta_J \left(V_{n_1jk}a_{n_1}-\sum_iV_{ijk}(\Delta_I \bm a)_{i+1}\right)\right)b_j\right)\nonumber\\
= & \left(\sum_j \left(\Delta_J \left(V_{n_1jk}a_{n_1}\right)\right)b_j\right)-\left(\sum_j \left(\Delta_J \left(\sum_iV_{ijk}(\Delta_I \bm a)_{i+1}\right)\right)b_j\right)\nonumber\\
\stackrel{\eqref{eq:sum:by:parts}}{=}&\left(V_{n_1n_2k}a_{n_1}b_{n_2}\right) - \left(\sum_j V_{n_1jk} a_{n_1} (\Delta_J \bm b)_{j+1}\right)
-   \left(\sum_i V_{in_2k}(\Delta_I \bm a)_{i+1}b_{n_2}\right)+\\&\quad\quad  \left(\sum_{ij} V_{ijk}(\Delta_I \bm a)_{i+1}(\Delta_J \bm b)_{j+1}\right)\label{eq:by:parts:3}
\end{align}
(In general, for $d \geq 3$, this expression is a sum of $2^{d-1}$
terms.)  Therefore:
\begin{align}
    \Delta_K \left(\sum_{ij} C_{ijk} a_i b_j\right) =
&
\left(\Delta_K^2(V_{n_1n_2k})a_{n_1}b_{n_2}\right) - \left(\sum_j \Delta_K^2(V_{n_1jk}) a_{n_1} (\Delta_J \bm b)_{j+1}\right)
\\&-   \left(\sum_i \Delta_K^2(V_{in_2k})(\Delta_I \bm a)_{i+1}b_{n_2}\right)
+   \left(\sum_{ij} \Delta_K^2(V_{ijk})(\Delta_I \bm a)_{i+1}(\Delta_J \bm b)_{j+1}\right)
\label{eq:monotonicity}
\end{align}

Since $\bm a, \bm b$ are non-increasing, we have
$\Delta_I \bm a \leq 0$ and $\Delta_J \bm b \leq 0$.  By
Lemma~\ref{lemma:delta:2:negative} each term $\Delta_K^2 V$ is
$\leq 0$, which proves that the expression~\eqref{eq:monotonicity} is
$\leq 0$. 

\subsubsection{Proof of Item~\ref{item:th:main:star:5}}

Finally, we prove item~\ref{item:th:main:star:5}, and continue to restrict the discussion to $d=3$ dimensions. 
\\\\
\noindent\textbf{Item ~\ref{item:th:main:star:5}:} \textit{ Assume $B = \infty$.  Then the following holds:
    \begin{align}
      \forall \bm m \in [\bm n]: &&    V_{\bm m} = & \min\left(F^{(X_1)}_{m_1}, \ldots, F^{(X_d)}_{m_d}\right) 
    \end{align}
    where $F^{(X_p)}_r \defeq \sum_{j\leq r} f^{(X_p)}_j$ is the CDF
    associated to the PDF $\bm f^{(X_p)}$, for $p=1,d$. Moreover,
    $\bm C$ can be computed by Algorithm~\ref{alg:fast_C_alg}, which
    runs in time $\mathbf{O}(\sum_p n_p)$.  This further implies that
    $\bm C \geq 0$, in other words
    $\bm C \in \calM^+_{\bm f^{(\bm X)}, \infty}$.}
\\\\
We start by showing Eq.~\eqref{eq:v:is:max}: for all $(m_1, m_2, m_3) \leq (n_1,n_2,n_3)$,  $V_{m_1m_2m_3} = \min(F_{m_1},G_{m_2},H_{m_3})$.  Assume w.l.o.g. that $F_{m_1} = \min(F_{m_1},G_{m_2},H_{m_3})$.  We claim that the following is an optimal solution to the dual program~\eqref{eq:dual}: $\bm \alpha = (1,1,\cdots,1)$, $\bm \beta = \bm 0$, $\bm \gamma = \bm 0$, $\bm \mu = \bm 0$: the claim implies that the value of $V_{m_1m_2m_3}$ given by Eq.~\eqref{eq:dual} is $V_{m_1m_2m_3}=\sum_{i=1,m_1} \alpha_i f_i = F_{m_1}$, as required. To prove the claim, let $\bm \alpha, \bm \beta, \bm \gamma$ be any
optimal solution to the dual; obviously, $\bm \mu = \bm 0$, because
$B=\infty$.  We first find another optimal solution where
$\bm \gamma = \bm 0$.  Denoting
$\varepsilon \defeq \min_{ij} (\alpha_i + \beta_j)$, we have
$\gamma_k \geq 1 - \varepsilon$ for all $k$.  Define
$\alpha_i' \defeq \alpha_i + (1-\varepsilon)$ for all $i$, and
$\gamma_k' \defeq 0$.  Then $\bm \alpha', \bm \beta, \bm \gamma'$ is
also a feasible solution, and the value of~\eqref{eq:dual} is no
larger, because:
$\sum_i \alpha_i' f_i = \sum_i \alpha_i f_i + (1-\varepsilon) \sum_i
f_i \leq \sum_i f_i \alpha_i + (1-\varepsilon) \sum_k h_k \leq \sum_i
f_i \alpha_i + \sum_k \gamma_k h_k$.  We repeat this for $\beta$ (and
omit the details, for lack of space)
and obtain a new feasible solution
$\bm \alpha'', \bm \beta' (=0), \bm \gamma' (=0)$ where the value of
Eq.~\eqref{eq:dual} is still optimal, proving the claim.

It remains to prove that Algorithm~\ref{alg:fast_C_alg} is correct.
We show that if $\bm f, \bm g, \bm h$ are any 3 vectors (not
necessarily decreasing) then the tensor $\bm C$ computed by the
algorithm satisfies
$\Sigma_I \Sigma_J \Sigma_K \bm C = \min(\bm F, \bm G, \bm H)$,
i.e. $\sum_{i=1,m_1}\sum_{j=1,m_2}\sum_{k=1,m_3}C_{ijk} = \min(F_{m_1},
G_{m_2}, H_{m_3})$ (correctness follows from~\eqref{eq:step:1}).
Assume w.l.o.g. that $f_1 = \min(f_1, g_1, h_1)$; then the algorithm
sets $C_{111}=f_1$ and leaves $C_{1m_2m_3}=0$ for all
$(m_2,m_3)\neq (1,1)$ (since $\bm C$ is initially 0).  Therefore the
claim holds for $m_1=1$: $\sum_{j=1,m_2}\sum_{k=1,m_3}C_{1ij}=f_1$.
For $m_1 > 1$, denote by $\bm C'$ the rest of the tensor produced by
the algorithm, i.e. using $\bm f' \defeq (f_2,f_3, \ldots, f_{n_1})$,
$\bm g' \defeq (g_1-f_1, g_2, \ldots, g_{n_2})$ and
$\bm h' \defeq (h_1-f_1, h_2, \ldots, h_{n_3})$. Then:
\begin{align*}
\sum_{i=1,m_1}&\sum_{j=1,m_2}\sum_{k=1,m_3}C_{ijk} = f_1 + \sum_{i=2,m_1}\sum_{j=1,m_2}\sum_{k=1,m_3}C'_{ijk} \\
= & f_1 + \min(F'_{m_1}, G'_{m_2}, H'_{m_3})\mbox{\ \ \ \ \ \ \ \ \ /*  by induction on $\bm C'$ */}\\
= & \min(f_1+F'_{m_1}, f_1+G'_{m_2},f_1+H'_{m_3})=\min(F_{m_1},G_{m_2},H_{m_3})
\end{align*}
because $f_1+ F'_{m_1} = f_1+\sum_{i=2,m_1}f_i = F_{m_1}$,
$f_1 + G'_{m_2}=f_1 + (g_1-f_1)+\sum_{j=2,m_2}g_j = G_{m_2}$ and
similarly $f_1+H'_{m_3}=H_{m_3}$.

\subsection{Proof of Theorem \ref{th:general}}

\label{app:th:general}

We first illustrate Algorithm~\ref{alg:bottom:up} with an example.

\begin{ex}
  Let:
  \begin{align*}
 Q(X,Y,Z,U) := R(X,Y)\Join S(Y,Z) \Join T(Z,U) \Join K(U)
  \end{align*}
  Assume we are given degree sequences for each attribute of each
  relation, and compute the corresponding worst case tensors (one for
  each relation):
  $\bm C^{(R)}, \bm C^{(S)}, \bm C^{(T)}, \bm C^{(K)}$: the first
  three are matrices, the last is a vector. The degree sequence bound
  is:
  \begin{align*}
    \texttt{SUM}_{XYZU} \left(\bm C^{(R)}\otimes \bm C^{(S)} \otimes \bm C^{(T)} \otimes \bm C^{(K)}\right)=
    & \sum_{xyzu} C^{(R)}_{xy} C^{(S)}_{yz} C^{(T)}_{zu} C^{(K)}_u
  \end{align*}
  Algorithm~\ref{alg:bottom:up} allows us to compute this expression
  as follows.  We choose any relation to designate as root: we will chose (arbitrarily) $S$ as the root, associate a vector
  $\bm a^{(X)}, \bm a^{(Y)}, \bm a^{(Z)}, \bm a^{(U)}$ to each
  variable, and a vector
  $\bm w^{(R)}, \bm w^{(S)}, \bm w^{(T)}, \bm w^{(K)}$ to each
  relation.  Then, we compute the degree sequence bound as follows,
  where we show the type of each vector, assuming that the domains of
  the variables $X, Y, Z, U$ are $[n_X], [n_Y], [n_Z], [n_U]$
  respectively:
  \begin{align*}
    a^{(U)} := &  C^{(K)} & \in & \R_+^{[n_U]}\\
    w^{(T)} := & C^{(T)} \cdot a^{(U)} & \in & \R_+^{[n_Z]} \\
    a^{(Z)}:= & w^{(T)}  & \in & \R_+^{[n_Z]}\\
    a^{(X)} := & \bm 1  & \in & \R_+^{[n_X]} && =\mbox{the column vector $(1,1,\cdots,1)^T$, since $X$ has no children}\\
    w^{(R)} := & C^{(R)} \cdot a^{(X)}  & \in & \R_+^{[n_Y]} \\
    a^{(Y)} := & w^{(R)}  & \in & \R_+^{[n_Y]}&& \\
    \texttt{return} & \left( C^{(S)} \cdot a^{(Y)} \cdot a^{(Z)}\right)
  \end{align*}
\end{ex}


We now turn to the proof of Theorem \ref{th:general}, and begin by
making two observations.  Let $T$ be the incidence graph of the query
(Sec.~\ref{sec:problem}), and recall that in this paper we always
assume that $T$ is an undirected tree.  First, consider any relation
$R \in \bm R$, and assume w.l.o.g. that its variables are
$\bm X_R = \set{X_1, \ldots, X_k}$.  By removing the node $R$ from the
tree, we partition the tree into $k$ disjoint connected components,
each containing one of the variables $X_i$.  Then we can group the
products in Eq.~\eqref{eq:prob:main} into $k+1$ factors::
\begin{align*}
  \text{Eq.~\eqref{eq:prob:main}} = 
  &
    \texttt{SUM}_{\bm X}\left( \left(\bm M^{(R)} \circ \bm \sigma^{(R)}\right)
    \otimes \bm E_1 \otimes \cdots \otimes \bm E_k\right)
\end{align*}
where $\bm E_i$ is the tensor expression containing all relations in
the $i$'th component.  Let the variables of $\bm E_i$ be
$\set{X_i} \cup \bm Y_i$, therefore
$\bm X = \bm X_R \cup \bm Y_1 \cup \cdots \cup \bm Y_k$, and define
$\bm b_i \defeq \texttt{SUM}_{\bm Y_i}(\bm E_i)$, i.e. we sum out all
variables other than $X_i$ to obtain a vector $\bm b_i$. Using
Eq.~\eqref{eq:push:aggregates:down} we derive:
\begin{align}
  \text{Eq.~\eqref{eq:prob:main}}=
  &
    \texttt{SUM}_{\bm X_R}\left( \left(\bm M^{(R)} \circ \bm \sigma^{(R)}\right)  \otimes \left(\texttt{SUM}_{\bm Y_1}\bm E_1\right) \otimes \cdots  \otimes \left(\texttt{SUM}_{\bm Y_k}\bm E_k\right)\right)\\
    &=
    \texttt{SUM}_{\bm X_R}\left( \left(\bm M^{(R)} \circ \bm \sigma^{(R)}\right)
    \otimes \bm b_1 \otimes \cdots \otimes \bm b_k\right) \label{eq:partitioned}
\end{align}

Second, we will compute the expression \eqref{eq:prob:main} bottom-up
on the tree, following Algorithm~\ref{alg:bottom:up}.  Recall that the
algorithm works by choosing arbitrarily a variable $X_0 \in \bm X$ to
designate as root of the tree $T$, then orienting the tree such that
all edges point away from the root.  Each relation $R$ has a unique
parent variable, call it $X_1$, and zero or more children, call them
$X_2, \ldots, X_k$.  Algorithm~\ref{alg:bottom:up} computes the
expression \eqref{eq:prob:main} bottom up, by computing a $Z$-vector
$\bm a^{(Z)}$ for each variable $Z$, and an $X_1$-vector $w^{(R)}$ for
each relation $R$ with parent $X_1$.  Then, expression
\eqref{eq:prob:main} is equal to $\texttt{SUM}_{X_0}(\bm a^{(X_0)})$.

We can now describe the proof formally. Call a subset
$\bm S \subseteq \bm X \cup \bm R$ {\em closed} if, for any node in
$S$, all its children (in the directed tree $T$) are also in $S$.
\begin{defn}
  A set $\bm S \subseteq \bm X \cup \bm R$ is {\em good} if it is
  closed and satisfies the following two conditions:
  \begin{enumerate}
  \item The following inequality holds:
    \begin{align}
      \mbox{Eq.~\eqref{eq:prob:main}}=  \texttt{SUM}_{\bm X}\left( \bigotimes_{R \in \bm R}(\bm  M^{(R)} \circ \bm \sigma^{(R)})\right)\leq 
      & \texttt{SUM}_{\bm X}\left(\left( \bigotimes_{R \in \bm S}\bm C^{\bm f^{(R, \bm X_R)},B^{(R)}}\right)
        \otimes \left( \bigotimes_{R \in \bm R-\bm S}(\bm  M^{(R)}
        \circ \bm \sigma^{(R)})\right)\right) \label{eq:good}
    \end{align}
    In other words, the value of the expression \eqref{eq:prob:main}
    does not decrease if we substitute each tensor
    $\left(\bm M^{(R)} \circ \bm \sigma^{(R)}\right)$ for
    $R \in \bm S$, with $\bm C^{\bm f^{(R, \bm X_R)},B^{(R)}}$.
  \item Consider running Algorithm~\ref{alg:bottom:up} to compute the
    RHS of Eq.~\eqref{eq:good}.  Then for every variable
    $Z \in \bm S$, the vector $\bm a^{(Z)}$ is non-negative and
    non-increasing.
  \end{enumerate}
\end{defn}
We will prove that the entire set $\bm X \cup \bm R$ is good, and for
that we prove the following claim.

\begin{claim} \label{claim:good}
  Suppose $\bm S$ is good, and let $X \not\in \bm S$ (or
  $R \not\in \bm S$) be any variable (or a relation name) that is not
  in $\bm S$ such that all its children are in $\bm S$.  Then
  $\bm S \cup \set{X}$ (or $\bm S \cup \set{R}$ respectively) is also
  good.
\end{claim}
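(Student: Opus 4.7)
The proof splits into two cases according to whether the element added is a variable or a relation; the relation case is where the full strength of Theorem~\ref{th:main:star} items~\ref{item:th:main:star:2} and~\ref{item:th:main:star:4} will be needed.

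\textbf{Case 1 (adding a variable $X$).} The right-hand side of Eq.~\eqref{eq:good} is a tensor product indexed only by \emph{relations} in $\bm S$, so enlarging $\bm S$ by a single variable leaves that expression unchanged, and condition~1 for $\bm S\cup\{X\}$ is inherited directly from condition~1 for $\bm S$. For condition~2, Algorithm~\ref{alg:bottom:up} defines $\bm a^{(X)}=\bigotimes_{R'\in\text{children}(X)}\bm w^{(R')}$, and I would strengthen the inductive invariant for $\bm S$ by also asserting that each $\bm w^{(R')}$ with $R'\in\bm S$ is non-negative and non-increasing. That auxiliary property follows from Theorem~\ref{th:main:star} item~\ref{item:th:main:star:4}: $\bm w^{(R')}=\bm C^{\bm f^{(R',\bm X_{R'})},B^{(R')}}\cdot\bm a^{(Y_2)}\cdots\bm a^{(Y_\ell)}$, and each $\bm a^{(Y_p)}$ for a child $Y_p$ of $R'$ is non-increasing by condition~2 of $\bm S$; item~\ref{item:th:main:star:4} then guarantees $\bm w^{(R')}$ is non-negative and non-increasing. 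The elementwise product of non-negative non-increasing vectors preserves both properties, so condition~2 holds at $X$.

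\textbf{Case 2 (adding a relation $R$).} Condition~2 is preserved vacuously because no new variable enters $\bm S$. For condition~1, I use the partition identity~\eqref{eq:partitioned} to isolate the factor at $R$ in the RHS of~\eqref{eq:good} for $\bm S$, writing it as
\begin{align*}
    \texttt{SUM}_{\bm X_R}\bigl((\bm M^{(R)}\circ\bm\sigma^{(R)})\otimes \bm b_1\otimes \bm a^{(X_2)}\otimes\cdots\otimes\bm a^{(X_k)}\bigr),
\end{align*}
where $X_1$ is the parent variable of $R$, $X_2,\ldots,X_k$ its children (all in $\bm S$ by hypothesis), and $\bm b_1$ collects the unreplaced tensors above $R$. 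Each $\bm a^{(X_p)}$ with $p\geq 2$ is non-negative and non-increasing by the Case~1 strengthening, while $\bm b_1$ need not be. The plan is to let $\tau_1$ be the permutation that sorts $\bm b_1$ into non-increasing order, absorb $\tau_1$ into $\bm\sigma^{(R)}$ along dimension $X_1$ using invariance of sums under simultaneous permutation (Sec.~\ref{sec:problem}), and then apply Theorem~\ref{th:main:star} item~\ref{item:th:main:star:2}---now legal because every input vector is non-increasing---to replace $\bm M^{(R)}\circ\bm\sigma^{(R)}\circ\bm\tau$ with $\bm C^{\bm f^{(R,\bm X_R)},B^{(R)}}$ without decreasing the sum.

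The principal obstacle is that the bound produced by item~\ref{item:th:main:star:2} features $\bm b_1\circ\tau_1$ rather than $\bm b_1$, whereas the target RHS of~\eqref{eq:good} for $\bm S\cup\{R\}$ uses the unpermuted $\bm b_1$. I would reconcile this by strengthening the inductive invariant to track a \emph{composite permutation} on each boundary dimension between $\bm S$ and its complement: the sorting $\tau_1$ introduced at $R$ is carried upward and absorbed into the factors $\bm M^{(R')}\circ\bm\sigma^{(R')}$ for $R'$ in the subtree above $R$, again by invariance of the sum over $\bm X$. These accumulated permutations are harmless at intermediate stages and collapse to the identity by the time $\bm S=\bm R\cup\bm X$, at which point the substituted expression is literally $DSB(Q)$. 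Getting this bookkeeping right is the main technical content; once it is in place, condition~1 for $\bm S\cup\{R\}$ follows from the chain of inequalities established above and the induction closes.
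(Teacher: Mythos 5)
Your proposal follows essentially the same route as the paper's proof: the same partition of the product via Eq.~\eqref{eq:partitioned}, the same sorting permutation $\tau$ on the parent factor $\bm b_1$ absorbed by invariance of sums, the same appeal to item~\ref{item:th:main:star:2} of Theorem~\ref{th:main:star} to substitute the worst-case tensor, and the same use of item~\ref{item:th:main:star:4} to propagate non-negativity and monotonicity up the tree in the variable case. The one place you go beyond the paper is in explicitly tracking the composite permutations accumulated on the boundary dimensions (the paper's proof concludes ``good'' from an inequality still featuring $\bm b_1\circ\tau$ and leaves this bookkeeping implicit); your observation that these permutations are harmless because they are eventually absorbed into tensors that are themselves replaced by permutation-independent worst-case tensors is the correct resolution of that gap.
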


The claim immediately implies that $\bm X \cup \bm R$ is good, because
we start with $\bm S = \emptyset$, which is trivially good, then add
one by one each variable and each relation name to $\bm S$, until
$\bm S = \bm X \cup \bm R$.  It remains to prove the claim.

\begin{proof} (Of Claim~\ref{claim:good}) We consider two cases,
  depending on whether the node that we want to add to $\bm S$ is a
  relation name or a variable.

  {\bf Case 1}.  There exists a relation $R \not\in \bm S$ that is not
  yet in $\bm S$, such that all its children $X_2, \ldots, X_k$ are in
  $\bm S$.  We partition the expression \eqref{eq:prob:main} as before:
  \begin{align*}
    \text{Eq.~\eqref{eq:prob:main}}=
    &
      \texttt{SUM}_{\bm X} \left(\left(\bm M^{(R)} \circ \bm \sigma^{(R)}\right)
      \otimes \bm E_1 \otimes \cdots \otimes \bm E_k \right)
  \end{align*}
  For each $i=2,k$ let $\hat{\bm E}_i$ be the expression obtained by
  substituting every tensor
  $\left(\bm M^{(R')} \circ \bm \sigma^{(R')}\right)$ occurring in
  $\bm E_i$ with $\bm C^{\bm f^{(R', \bm X_{R'})},B^{(R')}}$.  Since
  all children $X_2, \ldots, X_k$ of $R$ are in are $\bm S$, it
  follows that reach relation name that occurs in
  $\bm E_2, \ldots, \bm E_k$ is in $\bm S$. Since $\bm S$ is good,
  this implies:
  \begin{align*}
    \texttt{SUM}_{\bm X} \left(\left(\bm M^{(R)} \circ \bm \sigma^{(R)}\right)
    \otimes \bm E_1 \otimes \cdots \otimes \bm E_k \right) \leq
    &       \texttt{SUM}_{\bm X} \left(\left(\bm M^{(R)} \circ \bm \sigma^{(R)}\right)
      \otimes \bm E_1 \otimes \hat{\bm E}_2 \otimes \cdots \otimes \hat{\bm E}_k \right)
  \end{align*}
  We now use the argument in Eq.~\eqref{eq:partitioned}:
  \begin{align*}
        \texttt{SUM}_{\bm X}& \left(\left(\bm M^{(R)} \circ \bm \sigma^{(R)}\right)
      \otimes \bm E_1 \otimes \hat{\bm E}_2 \otimes \cdots \otimes \hat{\bm E}_k \right)
= \\    & \texttt{SUM}_{\bm X_R} \left(\left(\bm M^{(R)} \circ \bm \sigma^{(R)}\right)
      \otimes \left(\texttt{SUM}_{\bm Y_1}\bm E_1\right) \otimes \left(\texttt{SUM}_{Y_2}\hat{\bm E}_2\right) \otimes \cdots \otimes \left(\texttt{SUM}_{Y_k}\hat{\bm E}_k\right)\right)
  \end{align*}
  As before we denote by $\bm b_1 = \texttt{SUM}_{\bm Y_1}\bm E_1$,
  but notice that, for all $i=2,k$, the expression
  $\texttt{SUM}_{Y_i}\hat{\bm E}_i$ is precisely the vector
  $\bm a^{(X_i)}$ used in the bottom-up computation of
  Algorithm~\ref{alg:bottom:up}:
  \begin{align*}
    \texttt{SUM}_{\bm X_R}& \left(\left(\bm M^{(R)} \circ \bm \sigma^{(R)}\right)
      \otimes \left(\texttt{SUM}_{\bm Y_1}\bm E_1\right) \otimes \left(\texttt{SUM}_{Y_2}\hat{\bm E}_2\right) \otimes \cdots \otimes \left(\texttt{SUM}_{Y_k}\hat{\bm E}_k\right)\right)
 = \\& \texttt{SUM}_{\bm X_R} \left(\left(\bm M^{(R)} \circ \bm \sigma^{(R)}\right) \otimes \bm b_1 \otimes \bm a^{(X_2)} \otimes \cdots \otimes \bm a^{(X_k)}\right)
  \end{align*}
  Since $\bm S$ is good, it follows that the vectors
  $\bm a^{(X_2)}, \ldots, \bm a^{(X_k)}$ are already sorted.  We sort
  $\bm b_1$ as well, by applying some permutation $\tau$, and obtain:
  \begin{align*}
    \texttt{SUM}_{\bm X_R}& \left(\left(\bm M^{(R)} \circ \bm \sigma^{(R)}\right) \otimes \bm b_1 \otimes \bm a^{(X_2)} \otimes \cdots \otimes \bm a^{(X_k)}\right)
=\\&    \texttt{SUM}_{\bm X_R} \left(\left(\bm M^{(R)} \circ \bm \sigma^{(R)}\circ (\tau,\texttt{id},\ldots,\texttt{id})\right)   \otimes (\bm b_1\circ \tau) \otimes \bm a^{(X_2)} \otimes \cdots \otimes \bm a^{(X_k)}\right)
  \end{align*}
  Since all vectors
  $\bm b_1 \circ \tau, \bm a^{(X_2)}, \ldots, \bm a^{(X_k)}$ are
  sorted, we can use item~\ref{item:th:main:star:2} of
  Theorem~\ref{th:main:star} to derive:
  \begin{align*}
    \texttt{SUM}_{\bm X_R}& \left(\left(\bm M^{(R)} \circ \bm \sigma^{(R)}\circ (\tau,\texttt{id},\ldots,\texttt{id})\right)   \otimes (\bm b_1\circ \tau) \otimes \bm a^{(X_2)} \otimes \cdots \otimes \bm a^{(X_k)}\right)
\leq\\ & \texttt{SUM}_{\bm X_R} \left(\bm C^{\bm f^{(R, \bm X_R)},B^{(R)}} \otimes (\bm b_1\circ \tau) \otimes \bm a^{(X_2)} \otimes \cdots \otimes \bm a^{(X_k)}\right)
  \end{align*}
  Combining all inequalities above implies:
  \begin{align*}
    \text{Eq.~\eqref{eq:prob:main}}\leq & \texttt{SUM}_{\bm X_R} \left(\bm C^{\bm f^{(R, \bm X_R)},B^{(R)}} \otimes (\bm b_1\circ \tau) \otimes \bm a^{(X_2)} \otimes \cdots \otimes \bm a^{(X_k)}\right)
  \end{align*}
  which proves that $\bm S \cup \set{R}$ is good. This completes the
  proof of Case 1.

  {\bf Case 2} Suppose there exists a variable $X \not\in \bm S$ such
  that all its children relations $R_1, R_2, \ldots, R_m$ are in
  $\bm S$.  Let $R_i$ be one of its children.  Since $R_i$ is in
  $\bm S$, by induction hypothesis its tensor in the RHS of
  Eq.~\eqref{eq:good} is the pessimistic tensor
  $\bm C \defeq \bm C^{\bm f^{(R_i, \bm X_{R_i})},B^{(R_i)}}$.  Let
  $X_2, \ldots, X_K$ be the other variables of $R_i$ (i.e. other than
  its parent variable $X$).  Since $\bm S$ is closed, we have
  $X_2, \ldots, X_k \in \bm S$, and, since $\bm S$ is good, it follows
  that the vectors $\bm a^{(X_2)}, \ldots, \bm a^{(X_k)}$ are
  non-negative, and non-increasing.  Algorithm~\ref{alg:bottom:up}
  computes $\bm w^{(R_i)}$, as follows:
  \begin{align*}
    \bm w^{(R_i)} := & \bm C \cdot \bm a^{(X_2)}\cdots \bm a^{(X_k)}
  \end{align*}
  Item~\ref{item:th:main:star:2} of Theorem~\ref{th:main:star} implies
  that $\bm w^{(R_i)}$ is also non-negative and non-decreasing.  Next,
  the algorithm computes $\bm a^{(X)}$ as:
  \begin{align*}
    \bm a^{(X)} := & \bm w^{(R_1)} \otimes \cdots \otimes \bm w^{(R_m)}
  \end{align*}
  where the tensor product is an element-wise product.  Since each
  $\bm w^{(R_i)}$ is non-negative and non-decreasing, their product
  $\bm a^{(X)}$ will also be non-negative and non-decreasing, as
  required.  Thus, $\bm S \cup \set{X}$ is also good, which completes
  the proof of case 2.
\end{proof}


\subsection{Chain Bound for Berge Acyclic Queries (Theorem~\ref{thm:polyamtroid:bound:sum:of:chains})}

\label{app:chain:bound}

In this section we review the polymatroid bound, then prove its closed
form expression for Berge-acyclic queries stated in
Theorem~\ref{thm:polyamtroid:bound:sum:of:chains}.

{\bf Background on Information Theory} We need to review some notions
from information theory.  Fix a set of variables $\bm X$.  A {\em
  polymatroid} is a function $h: 2^{\bm X} \rightarrow \R_+$
satisfying the following three conditions, where $X, Y$ are sets of
variables.

\begin{align*}
    h(\emptyset) = & 0 \\
    h(X\cup Y) \geq & h(X) \\
    h(X) + h(Y) \geq & h(X \cup Y) + h(X \cap Y)
\end{align*}

The last two conditions are called {\em monotonicity} and {\em
  submodularity}.  We follow standard convention and write $XY$ for
the union $X\cup Y$. A {\em conditional expression} is defined
\begin{align*}
    h(Y|X) \defeq & h(XY) - h(X)
\end{align*}

Given a joint distribution over random variables $\bm X$, we denote by
$H$ its entropy.  Then the mapping $Z (\subseteq \bm X) \mapsto H(Z)$
is a polymatroid.  When $\bm X$ has 4 or more variables, then there exists polymatroids $h$ that are not entropic functions.

{\bf The polymatroid Bound} The polymatroid bound was introduced
in~\cite{DBLP:conf/pods/Khamis0S17,DBLP:journals/corr/Khamis0S16}.  A
good overview can be found in~\cite{DBLP:conf/pods/000118}.  We review
it here briefly.

Fix a general conjunctive query $Q$, not necessarily Berge-acyclic.
Recall that $N_R$ and $f_1^{(R,Z)}$ denote the upper bound on the
cardinality of $R$ and of the maximum degree of its variable $Z$.  The
original definition also considers max degrees for sets of variables,
but in this paper we restrict to a single variable.  Denote by $HDC$
(``degree constraints applied to $h$'') the following set of
constraints on a polymatroid $h$:
\begin{align}
  HDC: && \forall R \in \bm R(Q):\ \  h(\bm X_R) \leq & \log N_R  & \forall Z \in \bm X_R:\ \  h(\bm X_R|Z) \leq & \log f_1^{(R,Z)} \label{eq:constraints:polymatroid}
\end{align}

The polymatroid bound of a query $Q$, which we denote here by $PB(Q)$,
is defined in~\cite[Theorem 1.5]{DBLP:journals/corr/Khamis0S16} by
specifying its logarithm:
\begin{align}
  \log(PB(Q)) \defeq & \max \setof{h(\bm X)}{h \mbox{ is a polymatroid that satisfies $HDC$}}\label{eq:polymatroid:bound}
\end{align}

More precisely, $PB(Q)$ is the exponent of the RHS of the expression
above.  Since $HDC$, the monontonicity, and the submodularity
constraints are all linear inequalities, it follows that $\log(PB(Q))$
can be computed by solving a linear optimization problem.  When the
query has 4 or more variables, then the polymatroid bound is not tight
in general.

{\bf Linear Inequalities} An alternative way to describe the
polymatroid bound uses {\em linear inequalities}, which we describe
next.

Fix a vector $\bm c \in \R^{2^{\bm X}}$.  We view the vector as
defining a linear expression $E$ over polymatroids, namely:
\begin{align}
  E(h) \defeq & \sum_{V \subseteq \bm X}c_V h(V) \label{eq:e:linear:expression}
\end{align}
We will assume that $\bm c$ has integer coefficients, and consider
linear inequalities of the form:
\begin{align}
  E(h) \geq & 0 \label{eq:h:inequality}
\end{align}
We say that the inequality~\eqref{eq:h:inequality} is {\em valid for
  all polymatroids}, or {\em valid for all entropic functions}, if it
holds for all polymatroids $h$, or for all entropic functions $H$
respectively.

{\bf The Polymatroid Bound through Inequalities} Given a conjunctive
query $Q$, we say that a linear expression $E(h)$ is {\em  associated}
to the query  $Q$ if it has the form:
\begin{align}
  E(h) \defeq & \sum_{R\in \bm R(Q)} a_R h(\bm X_R) + \sum_{R\in \bm R(Q), Z \in \bm X_R}b_{R,Z} h(\bm X_R | Z)\label{eq:h:expression:q}
\end{align}
where all coefficients $a_R, b_{R,Z}, k$ are natural numbers.  In
other words the linear expression contains only the terms $h(\bm X_R)$
and $h(\bm X_R|Z)$ that are also used in the HDC constraints. If the
inequality $E(h) \geq k\cdot h(\bm X)$ is valid for all polymatroids
$h$, then the following holds for the polymatroid bound $PB(Q)$:
\begin{align}
  k \cdot \log PB(Q) \leq & \sum_R a_R \log N_R + \sum_{R, Z \in \bm  X_R} b_{R,Z} \log(f_1^{(R,Z)})  \label{eq:polymatroid:bound:inequality}
\end{align}
By applying Farkas' lemma one can prove that there exists a valid
inequality for which~\eqref{eq:polymatroid:bound:inequality} becomes
an inequality.  In other words, one could compute $PB(Q)$ by trying
out all (infinitely many) linear expressions $E(h)$ associated to $Q$,
trying all numbers $k$, then take the minimum of the
expression~\eqref{eq:polymatroid:bound:inequality} where the
inequality $E(h) \geq k\cdot h(\bm X)$ is valid.

\begin{ex} \label{eq:chain:bound:2} We continue with the query
  $Q=R(X,Y),S(Y,Z),T(Z,U),K(U,V)$ in Example~\ref{eq:chain:bound}. The
  following are valid inequalities associated to $Q$:
  \begin{align*}
    & h(YZ)+h(X|Y)+h(U|Z)+h(V|U) \geq  h(XYZU) \\
    & h(XY)+h(ZU)+h(V|U) \geq h(XYZU) \\
    & h(XY)+h(UV)+h(Z|U) \geq h(XYZU) \\
    & h(XY)+h(ZU)+h(UV) \geq h(XYZU) \\
    & 3h(XY)+h(YZ)+h(ZU)+h(UV)+h(X|Y)+h(U|Z)+h(Z|U)+2h(V|U) \geq  4h(XYZU)
  \end{align*}
  All five are valid inequalities for polymatroids.  The first four
  are easy to check directly, and the fifth is simply the sum of the
  first four (and thus is also valid).  These inequalities lead to
  the following upper bounds on $PB(Q)$:
  \begin{align*}
    PB(Q) \leq & f_1^{(R,Y)}N_S f_1^{(T,Z)}f_1^{(K,U)}\\
    PB(Q) \leq & N_R \cdot N_T f^{(K,U)} \\
    PB(Q) \leq & N_R \cdot f^{(T,U)}N_k \\
    PB(Q) \leq & N_R \cdot N_T \cdot N_K \\
    PB(Q) \leq & \left(N_R^3 N_S N_TN_Uf_1^{(R,Y)}f_1^{(T,Z)}f_1^{(T,U)}\left(f_1^{(K,U)}\right)^2\right)^{1/4}
  \end{align*}
  Notice that the last expression is the geometric mean of the first
  four.
\end{ex}

{\bf Term Cover} We describe a simple necessary (but not sufficient)
condition for an inequality to be valid for all entropic functions.
We say that a term $h(V)$ in $E(h)$ {\em covers} a set of variables
$U \subseteq \bm X$, if $U \cap V \neq \emptyset$.  For any linear
expression $E(h) = \sum_V c_Vh(V)$ define the coverage of $U$ by $E$
as:
\begin{align*}
  \texttt{cover}(U,E) \defeq &  \sum_{V: V \cap U \neq \emptyset} c_V
\end{align*}
Once can check that, if $E(h) \geq 0$ holds for all entropic
functions, then $\forall U \subseteq \bm X$, if $U \neq \emptyset$,
then $\texttt{cover}(U,E) \geq 0$, by observing that
$\texttt{cover}(U,E)$ is the value of $E(h)$ computed on the following
polymatroid, called a {\em step function}: $h(V) \defeq 0$ if
$V \cap U = \emptyset$ and $h(V) \defeq 1$ otherwise.  In fact, $h$ is
an entropy, namely it is the entropy of the following probability
space:

\begin{tabular}{|c|c|l} \cline{1-2}
  $U$ & $\bm X-U$ & \\ \cline{1-2}
  $0\cdots 0$ & $0 \cdots 0$ & $1/2$ \\
  $1\cdots 1$ & $0 \cdots 0$ & $1/2$ \\ \cline{1-2}
\end{tabular}

{\bf Simple Inequalities} While a characterization of general
information-theoretic inequalities is still open, a complete
characterization was described in~\cite{DBLP:conf/pods/KhamisK0S20}
for a restricted class.  We present here a slightly more general
version than stated in Theorem 3.9
(ii)~\cite{DBLP:conf/pods/KhamisK0S20}, which follows immediately by using
the same Lemma 3.10 as done in~\cite{DBLP:conf/pods/KhamisK0S20}.

A {\em simple expression} is a linear expression of the form:
\begin{align}
  E(h) \defeq &\sum_{V\subseteq \bm X} a_Vh(V) - \sum_{Z \in \bm X}b_Z h(Z) \label{eq:h:expression:simple}
\end{align}
where all coefficients are natural numbers.  In other words, $E$ must
have only non-negative coefficients, except for singleton terms
$h(Z)$, where $Z$ is a single variable, which may be negative.  A {\em
  simple inequality} is an inequality of the form:
\begin{align}
  E(h)  \geq  & k \cdot h(\bm X) \label{eq:h:inequality:simple}
\end{align}
where $E$ is a simple inequality.  The following was proven
in~\cite{DBLP:conf/pods/KhamisK0S20}.

\begin{thm} \label{th:simple:inequalities} Consider a simple
  expression $E(h)$, as in Eq.~\eqref{eq:h:expression:simple}.Then the
  following are equivalent:
  \begin{enumerate}
  \item The inequality $E(h) \geq k h(\bm X)$ is valid for all polymatroids.
  \item The inequality $E(h) \geq k h(\bm X)$ is valid for all entropic functions.
  \item For every set of variables $U \neq \emptyset$,
    $\texttt{cover(U,E)} \geq k$.
  \end{enumerate}
\end{thm}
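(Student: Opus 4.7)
The theorem asserts the equivalence of three conditions, so I would prove the cycle $(1)\Rightarrow(2)\Rightarrow(3)\Rightarrow(1)$. The implication $(1)\Rightarrow(2)$ is immediate since every entropic function is a polymatroid, as reviewed in the background paragraph.

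For $(2)\Rightarrow(3)$, the excerpt already supplies the key construction. Fix a non-empty $U\subseteq\bm X$ and take the uniform distribution on $\{\bm 0_{\bm X},\mathbf{1}_U\}$, i.e.\ half the mass on the all-zero tuple and half on the $0/1$ tuple whose support is exactly $U$. Its entropy $h_U$ satisfies $h_U(V)=\mathbf{1}[V\cap U\neq\emptyset]$, and in particular $h_U(\bm X)=1$. Plugging $h_U$ into the entropic inequality $E(h)\geq k\,h(\bm X)$ collapses the left-hand side to $\sum_{V:V\cap U\neq\emptyset}c_V=\texttt{cover}(U,E)$, yielding $\texttt{cover}(U,E)\geq k$, which is exactly~(3).

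The principal obstacle is $(3)\Rightarrow(1)$. My plan is to reduce the polymatroid case to the step-function case by a dual/decomposition argument that exploits the restricted shape of $E$: non-negative coefficients $a_V$ on every non-singleton term, with negative coefficients permitted only on singletons, $-b_Z h(Z)$. By LP duality applied to the cone cut out by $h(\emptyset)=0$ together with monotonicity and submodularity, the inequality $E(h)\geq k\,h(\bm X)$ is valid for all polymatroids iff $E(h)-k\,h(\bm X)$ admits a non-negative combination of Shannon-type slacks. I would build this certificate by a greedy peeling: at each step strip off the largest multiple $\lambda_U\cdot h_U$ of a step function that still fits under $h$ modulo a submodular correction, then apply the coverage hypothesis $\texttt{cover}(U,E)\geq k$ to absorb the layer, contributing $\lambda_U\cdot k$ to the right-hand side. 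Iterating until the total extracted weight equals $h(\bm X)$ would give $E(h)\geq k\cdot h(\bm X)$.

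The hard step is certifying that this peeling terminates with total weight exactly $h(\bm X)$ while the singleton-side terms $-b_Z h(\{Z\})$ never consume more than what the step-function layers and the submodular slack can supply. This is precisely where the restriction to simple expressions is essential: a negative coefficient on any non-singleton term would destroy the monotonicity that makes the peeling work, which is consistent with the known fact that the analogous characterization for arbitrary linear inequalities over polymatroids is much harder and remains open in general.
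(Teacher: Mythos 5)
First, a point of reference: the paper does not actually prove Theorem~\ref{th:simple:inequalities}; it imports it from~\cite{DBLP:conf/pods/KhamisK0S20} (Theorem 3.9(ii) and Lemma 3.10 there), so there is no in-paper argument to compare against. Judged on its own terms, your proposal correctly handles the two easy implications: $(1)\Rightarrow(2)$ because every entropic function is a polymatroid, and $(2)\Rightarrow(3)$ via the step function $h_U$ --- this is exactly the observation the paper itself records in its ``Term Cover'' paragraph, and your computation $E(h_U)=\texttt{cover}(U,E)$ with $h_U(\bm X)=1$ is right.

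The gap is $(3)\Rightarrow(1)$, which is the entire content of the theorem and which you offer only as a plan; moreover, the plan cannot work as stated. Your peeling extracts non-negative multiples $\lambda_U h_U$ of step functions from an arbitrary polymatroid $h$ until the total extracted weight equals $h(\bm X)$. The residual $h'=h-\sum_U\lambda_U h_U$ would then be a non-negative monotone set function with $h'(\bm X)=0$, hence identically zero, so the peeling must realize $h$ \emph{exactly} as a non-negative combination of step functions. That is impossible for general polymatroids: take $h(V)=\min(|V|,2)$ on three variables (the rank function of the uniform matroid $U_{2,3}$, which is even entropic). Any representation $h=\sum_U\lambda_U h_U$ has $\sum_U\lambda_U=h(\bm X)=2$, and evaluating on each two-element set forces $\lambda_U=0$ for every singleton $U$; then $\sum_Z h(\{Z\})=\sum_U\lambda_U|U|\geq 2\sum_U\lambda_U=4$, contradicting $\sum_Z h(\{Z\})=3$. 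The clause ``modulo a submodular correction'' is therefore where all the real work would have to happen, and you give no argument that the corrections have the right sign or that the coverage hypothesis can absorb them; the closing sentence about termination is precisely the unproved claim. The actual proof in the cited reference proceeds differently, by induction that eliminates the negative singleton terms using the conditional polymatroid $V\mapsto h(V\cup\{Z\})-h(\{Z\})$ together with the fractional-cover (Shearer-type) bound for the purely non-negative case. As written, your argument establishes only the easy directions.
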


In other words, the sufficient condition mentioned earlier becomes a
necessary condition, in the case of simple inequalities.

We are interested in simple expression of a special form.  A {\em
  conditional simple expression} is:
\begin{align}
  E(h) \defeq &\sum_{U,Y\subseteq \bm X} c_{V|Y} h(V|Y) \label{eq:h:expression:simple:conditional}
\end{align}
where $c_{V|Y} \geq 0$ and whenever $c_{V|Y} > 0$ then $Y$ is either
$\emptyset$ or a single variable.   We prove:

\begin{lmm} \label{lemma:matching:chain:bound} Consider a simple expressions $E$
  as in~\eqref{eq:h:expression:simple}, and suppose that for every set
  of variables $U$, $\texttt{cover}(U,E) \geq 0$.  Then $E$ can be
  written (not uniquely) as a conditional simple expression, i.e. as
  in~\eqref{eq:h:expression:simple:conditional}.
\end{lmm}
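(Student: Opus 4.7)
The plan is to realize the problem as an integer maximum-flow instance and read off the conditional coefficients from any saturating flow. First I would normalize: by transferring mass between the two singleton terms, one may assume that for every $Z \in \bm X$ at most one of $a_{\{Z\}}$ and $b_Z$ is nonzero. The net coefficient $a_{\{Z\}} - b_Z$ of each $h(Z)$ in $E$ is unchanged by this rewrite and, since the cover hypothesis depends only on this net coefficient, it is preserved.

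Next I would build a flow network with source $s$, sink $t$, a supply node for each $V \subseteq \bm X$ with $|V| \geq 2$ (edge $s \to V$ of capacity $a_V$), a demand node for each $Z \in \bm X$ (edge $Z \to t$ of capacity $b_Z$), and infinite-capacity edges $V \to Z$ for every $Z \in V$. The goal is to produce an integral flow of value $D \defeq \sum_Z b_Z$; by max-flow min-cut with integer capacities, this reduces to showing that every finite $s$-$t$ cut has capacity at least $D$. A finite cut selects a set $\mathcal V$ of cut supply edges and a set $T$ of uncut demand edges such that no infinite edge crosses, which forces every $V$ with $|V|\geq 2$ and $V \cap T \neq \emptyset$ to lie in $\mathcal V$. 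Thus the cut has capacity at least $\sum_{V\cap T\neq\emptyset,\,|V|\geq 2} a_V + \sum_{Z\notin T} b_Z$. Applying the cover hypothesis to $T' \defeq \{Z\in T : b_Z>0\}$ and using the normalization (so $a_{\{Z\}}=0$ for $Z\in T'$) yields $\sum_{V\cap T'\neq\emptyset,\,|V|\geq 2} a_V \geq \sum_{Z\in T'} b_Z = \sum_{Z\in T} b_Z$, which gives the required bound.

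Finally, from any integer max flow $(c_{V,Z})$ I would set $c_{V\setminus\{Z\}\,|\,Z} \defeq c_{V,Z}$ for each pair with $Z\in V$ and $|V|\geq 2$, $c_{V|\emptyset} \defeq a_V - \sum_{Z\in V} c_{V,Z}$ for $|V|\geq 2$, $c_{\{Z\}|\emptyset} \defeq a_{\{Z\}}$, and zero elsewhere. All coefficients are nonnegative by flow feasibility, and only $|Y|\leq 1$ is used. A short bookkeeping computation, exploiting $h(V\setminus\{Z\}\,|\,Z) = h(V) - h(Z)$, shows that the coefficient of $h(W)$ in $\sum_{V,Y} c_{V|Y}\,h(V|Y)$ equals $a_W$ when $|W|\geq 2$ and $a_{\{Z\}}-b_Z$ when $W=\{Z\}$, matching $E(h)$.

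The main obstacle, which the normalization is designed to circumvent, is that the cover hypothesis naturally bounds a sum over \emph{all} $V$ intersecting $T$, whereas the flow network can only route demand through $V$ with $|V|\geq 2$. Without the normalization, a positive $a_{\{Z\}}$ for some $Z\in T$ with $b_Z>0$ would inflate the cover bound without contributing any usable flow capacity; reducing to $a_{\{Z\}}b_Z = 0$ rules this out and turns the cover condition into precisely the min-cut condition for the network.
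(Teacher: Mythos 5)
Your proof is correct and is essentially the paper's argument in flow-theoretic clothing: the paper builds the same bipartite supply/demand structure (a bipartite graph with $a_V$ copies of each node $h(V)$ and $b_Z$ copies of each node $h(Z)$, edges when $Z \in V$) and invokes Hall's theorem, where the Hall condition $|N(S)| \geq |S|$ is verified to be exactly $\texttt{cover}(U,E) \geq 0$, precisely as your min-cut condition is. The only cosmetic difference is that the paper dispenses with your normalization step by allowing a demand node $h(Z)$ to be matched to a copy of $h(\set{Z})$, which yields the harmless zero term $h(\set{Z}|Z)$ in the decomposition.
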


Every linear expression $E$ associated to a Berge-acyclic query, as in
Equation~\eqref{eq:h:expression:q} is a conditional simple expression
and, therefore, it is also a simple expression, once we expand the
conditionals.  The lemma provides a simple test for the converse to
hold: take a simple expression and regroup it so it becomes associated
to a query.  For example, $h(XY) + h(YZ) + h(ZU) - h(Y) - h(Z)$ can be
rearranged as $h(X|Y) + h(YZ) + h(U|Z)$, and is associated to a
Berge-acyclic query $R(X,Y),S(Y,Z),T(Z,U)$, but the expression
$E = h(XY)+h(YZ) + h(UV) - h(X) - h(Y) - h(Z)$ cannot, because
$\texttt{cover}(XYZ,E) = -1$.

\begin{proof} (of Lemma~\ref{lemma:matching:chain:bound})
  Recall that the coefficients $a_V, b_Z$ of $E$ are natural numbers.
  Consider the following bipartite graph
  $G=(\text{Nodes}_1, \text{Nodes}_2, \text{Edges})$.  For the set
  $\text{Nodes}_1$ we create, for $Z \in \bm X$, $b_Z$ nodes labeled
  $h(Z)$.  For the set $\text{Nodes}_2$ we create, for each
  $V \subseteq \bm X$, $a_V$ nodes labeled $h(V)$.  Finally, we
  connect two nodes $h(Z)$, $h(V)$ if $Z \in V$.  Next, we use Hall's
  theorem to prove that this bipartite graph has a maximal matching,
  and, for that, it suffices to check that, for each subset
  $S \subseteq \text{Nodes}_1$, $|N(S)| \geq |S|$, where $N(S)$ are
  the neighbors of the nodes in $S$.  If $S$ includes some node
  labeled $h(Z)$, then we can assume w.l.o.g. that it includes all
  $b_Z$ copies, otherwise we simply add them, which only increases $S$
  but does not affect $N(S)$.  Therefore, $S$ is uniquely defined by a
  set of variables $U \subseteq \bm X$, and
  $|S| = \sum_{Z \in U} b_Z$, while
  $|N(S)|=\sum_{V \subseteq \bm X, U\cap V \neq \emptyset}a_V$.  It
  follows that $|N(S)|-|S| = \texttt{cover}(U,E) \geq 0$.  By Hall's
  theorem, the graph admits a maximal matching, in other words every
  node $h(Z)$ is uniquely matched with some $h(V)$ where $Z \in V$,
  hence we combine the two terms $h(V)-h(Z)=h(V|Z)$.  The lemma
  follows from here.
\end{proof}


{\bf Chain Expressions} We define a {\em connected, simple chain expression}
an expression of the form:
\begin{align}
  E(h) \defeq &   h(U_0)+h(U_1|Z_1)+ \cdots + h(U_k|Z_k) \label{eq:chain:expression}
\end{align}
where $U_0, U_1, \ldots, U_k$ form a partition of the variables
$\bm X$, for all $i=1,k$, $Z_i$ is a single variable, and
$Z_i \in U_0 \cup U_1 \cup \ldots \cup U_{i-1}$.  

A chain expression is the simplest linear expression we can associate
to a Berge-acyclic query $Q$.  Namely, choose arbitrarily a root
relation $\texttt{ROOT} \in \bm R(Q)$, orient its tree and, as in
Section~\ref{sec:connection:to:agm:polymatroid:bounds}, denote by
$Z_R \in \bm X_R$ the parent variable of the relation $R$,
i.e. connecting $R$ to $\texttt{ROOT}$.  The {\em connected, simple
  chain expression associated to $Q$} is:
\begin{align}
  E_Q(h) \defeq & h(\bm X_{\texttt{ROOT}}) + \sum_{R \in \bm R(Q), R \neq \texttt{ROOT}} h(\bm X_R-\set{Z_R} | Z_R)
\label{eq:simple:expression:berge}
\end{align}
The inequality $E_Q(h) \geq h(\bm X_R)$ is valid, and defines the
following bound  on $PB(Q)$ (see
Equation~\eqref{eq:polymatroid:bound:inequality}):
\begin{align*}
  PB(Q) \leq & N_{\texttt{ROOT}} \cdot \prod_{R \neq   \texttt{ROOT}}f_1^{(R,Z_R)} = PB(Q,\texttt{ROOT})
\end{align*}
where is precisely the expression $PB(Q,\texttt{ROOT})$ defined in
Sec.~\ref{sec:general}.  We also observe that $E_Q$ does not depend on
the choice of the root relation $R$ (but the bound
$PB(Q,\texttt{ROOT})$ does depend!).  To see this, denote by
$\text{atoms}_Q(Z)$ the set of atoms $R \in \bm R(Q)$ that contain the
variable $Z$, and observe that:
\begin{align*}
  E_Q(h) = & \sum_{R \in \bm R(Q)} h(\bm X_R) - \sum_{Z \in \bm X} (|\text{atoms}_Q(Z)|-1)\cdot h(Z)
\end{align*}
Recall from Sec.~\ref{sec:general} that a {\em cover} of $Q$ is a set
$\bm W = \set{Q_1, \ldots, Q_m}$ where each $Q_i$ is a connected
subquery of $Q$, and each variable of $Q$ occurs in at least one
$Q_i$.  The {\em simple chain expression associated to $\bm W$} is:
\begin{align}
  E_{\bm W}(h) = & E_{Q_1}(h)+ \cdots + E_{Q_m}(h)
\end{align}

One can check that $E_{\bm W}(h) \geq h(\bm X)$ is a valid inequality,
and that $E_{\bm W}$ is an expression associated to $Q$, in the sense
that it is of the form~\eqref{eq:h:expression:q}.  If we write each
$E_{Q_i}$ as a conditional simple expression, i.e. as
in~\eqref{eq:simple:expression:berge}, by choosing some root relation
$\texttt{ROOT}_i$, then it defines the following bound on $PB(Q)$ (see
Equation~\eqref{eq:polymatroid:bound:inequality}):
\begin{align}
PB(Q) \leq &   \prod_{i=1,m} PB(Q_i,\texttt{ROOT}_i) \label{eq:polymatroid:bound:inequality:chain}
\end{align}
where $PB(Q_i,\texttt{ROOT}_i)$ was defined in
Equation~\eqref{eq:pb:one:component}, and the form
$|\texttt{ROOT}_i|\prod_R f_1^{(R,Z_R)}$, i.e. the cardinality of
$\texttt{ROOT}_i$ times the max degrees of the other relations in the
$i$'th component.  We prove that {\em every} valid inequality
associated to the query is some combination of inequalities of this
form:

\begin{thm}[Chain Decomposition] \label{th:chain:decomposition}
  Let $Q$ be a Berge-acyclic query, and let $E(h)$ be an expression
  associated to $Q$, as shown in Equation~\eqref{eq:h:expression:q},
  where all coefficients $a_R, b_{R,Z}$ are natural numbers.  If
  $E(h) \geq k\cdot h(\bm X)$, then there exists $k$ covers
  $\bm W_1, \ldots, \bm W_k$, not necessarily distinct, such that:
  \begin{align*}
    E(h) = & \sum_i E_{\bm W_i}(h) + (\cdots)
  \end{align*}
  where $(\cdots)$ represents a sum of positive terms (i.e. some
  $h(V)$'s or some $h(V|Z)$'s).
\end{thm}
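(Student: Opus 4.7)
The plan is to proceed by induction on $k$. The base case $k = 0$ is immediate: since $E$ has natural-number coefficients in its canonical form $\sum_R a_R h(\bm X_R) + \sum_{R,Z} b_{R,Z} h(\bm X_R | Z)$, it is already a sum of non-negative terms of the required type. For the inductive step ($k \geq 1$), the goal is to construct a single cover $\bm W_1$ of $Q$ that is dominated term-by-term by $E$---that is, the root atom of each component of $\bm W_1$ has coefficient $a_R \geq 1$ in $E$, and each non-root atom $R$ in $\bm W_1$ with parent variable $Z_R$ has $b_{R, Z_R} \geq 1$ in $E$. Given such $\bm W_1$, the residual $E' \defeq E - E_{\bm W_1}$ retains natural-number coefficients, and subtracting the valid inequality $E_{\bm W_1}(h) \geq h(\bm X)$ from $E(h) \geq k \cdot h(\bm X)$ yields $E'(h) \geq (k-1) \cdot h(\bm X)$. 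The induction hypothesis then decomposes $E'$, completing the proof.

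The principal subtask is extracting a single cover from $E$ assuming $E(h) \geq h(\bm X)$ is valid. First, applying Theorem~\ref{th:simple:inequalities} to the simple form of $E$---obtained by expanding each $h(V|Z) = h(V) - h(Z)$---converts the validity hypothesis into the combinatorial condition $\texttt{cover}(U, E) \geq 1$ for every non-empty $U \subseteq \bm X$. Taking $U = \bm X$ in particular yields $\sum_R a_R \geq 1$, so at least one atom can serve as a root. To build $\bm W_1$, we sweep over the incidence tree $T$ of $Q$: seed each component with a root atom $R_0$ having $a_{R_0} \geq 1$, and inductively extend by selecting a neighboring atom $R$ together with a parent variable $Z_R \in \bm X_R$ satisfying $b_{R, Z_R} \geq 1$, opening new rooted components whenever some variable is still uncovered. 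Because $T$ is a tree, the parent-variable assignments are consistent and the resulting collection of components is a bona fide cover of $Q$.

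The main obstacle is verifying that this sweep never gets stuck, i.e., that whenever a variable $Z$ remains uncovered, a suitable root-slot or conditional-slot is available to absorb it. This is a Hall-type feasibility argument in the spirit of Lemma~\ref{lemma:matching:chain:bound}, where the Hall conditions are exactly the family of inequalities $\texttt{cover}(U, E) \geq 1$: thinking of each $a_R$ as providing $a_R$ root-slots at atom $R$ and each $b_{R,Z}$ as providing $b_{R,Z}$ conditional-slots at the $(R,Z)$ edge of $T$, the cover condition on an arbitrary $U$ lower-bounds the total supply of slots incident to $U$ by the demand $|U|$. The Berge-acyclic structure of $Q$ is crucial here: the dependency chains induced by conditional slots (every non-root atom must ultimately connect to some root through its parent-variable path) form a subforest of $T$ and can therefore be resolved bottom-up, so a local conflict can always be repaired by an augmenting-path reassignment justified by the cover inequalities. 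Once $\bm W_1$ is in hand, the induction described above closes the chain decomposition.
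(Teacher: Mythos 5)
Your overall plan---peel off one cover at a time and induct on $k$---runs into a genuine obstruction at the very first step of the induction. You write that ``subtracting the valid inequality $E_{\bm W_1}(h) \geq h(\bm X)$ from $E(h) \geq k \cdot h(\bm X)$ yields $E'(h) \geq (k-1)\cdot h(\bm X)$.'' That deduction is not valid: from $A \geq k a$ and $B \geq a$ one cannot conclude $A - B \geq (k-1)a$, since there is no upper bound on $B$. The correct substitute is the coverage criterion of Theorem~\ref{th:simple:inequalities}: since $\texttt{cover}(U,\cdot)$ is additive, you would need $\texttt{cover}(U, E_{\bm W_1}) \leq \texttt{cover}(U,E) - (k-1)$ for \emph{every} nonempty $U$, i.e.\ $\texttt{cover}(U,E_{\bm W_1}) \leq 1$ on every tight set of $E$. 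Term-by-term domination does not give this. For a single connected component $Q_i$ of a cover, $\texttt{cover}(U, E_{Q_i})$ equals the number of connected components of the subforest of the incidence tree induced by $U$ together with the atoms of $Q_i$ meeting $U$, and this can be $2$ or more when $U$ is spread across the tree (e.g.\ $\texttt{cover}(\{X,Z\}, h(XY)+h(Z|Y)) = 2$ for $R(X,Y)\Join S(Y,Z)$). So your greedy sweep, which only ensures the residual keeps natural coefficients, may select a $\bm W_1$ that over-consumes the coverage of a tight set, leaving a residual that is not valid at level $k-1$. Making this work would require coordinating the choice of $\bm W_1$ with the tight sets of $E$, which your construction does not do.

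The second gap is that the extraction of a single term-dominated cover from a level-$1$ valid expression is asserted rather than proved: the ``sweep'' with ``augmenting-path reassignment justified by the cover inequalities'' is precisely the hard part (for $k=1$ it \emph{is} the theorem), and it is not an instance of Lemma~\ref{lemma:matching:chain:bound}, which only matches negative singleton terms $h(Z)$ to positive terms $h(V)$ with $Z \in V$ to form conditionals --- a strictly easier bipartite matching problem than assembling consistently rooted connected components. The paper avoids both difficulties by inducting on the number of atoms instead of on $k$: it removes a leaf relation $R$ attached through its unique shared variable $Z$, measures the deficit $k' = \min_U \texttt{cover}(U,E')$ of the remaining expression, restores validity at level $k$ on the smaller query by adding back $(k-k')\cdot h(Z)$ (paid for by the surplus $a_{\bm X_R} \geq (k-k') + b_Z$ forced by the coverage of a private variable of $R$), applies the induction hypothesis to obtain $k$ covers of the smaller query, and then attaches $R$ to each of them using the leftover $h(\bm X_R)$ or $h(\bm X_R|Z)$ terms. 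That bookkeeping is what your argument is missing.
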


In other words, the inequality $E(h) \geq k \cdot h(\bm X)$ is
(implied by) the sum of $k$ chain inequalities
$E_{\bm W_i}(h) \geq h(\bm X)$.
Theorem~\ref{thm:polyamtroid:bound:sum:of:chains} follows immediately
from here.  Indeed, each chain inequality
$E_{\bm W_i}(h) \geq h(\bm X)$ defines a bound on $PB(Q)$ of the
form~\eqref{eq:polymatroid:bound:inequality:chain}, and this is
precisely the form used in
Theorem~\ref{thm:polyamtroid:bound:sum:of:chains}.  The original
inequality $E(h) \geq k\cdot h(\bm X)$ defines a bound that is no
better than their geometric means, which, in turn, is no better than
the smallest of these bounds.  It follows that the minimum bound over
everything we can derive from a valid inequality
$E(h) \geq k\cdot h(\bm X)$, can be already derived from a chain
inequality, and that leads to a bound of the
form~\eqref{eq:polymatroid:bound:inequality:chain}.  It remains to
prove Theorem~\ref{th:chain:decomposition}.

{\bf Proof of Theorem~\ref{th:chain:decomposition}} Consider any valid
inequality $E(h) \geq k h(\bm X)$ associated to the query $Q$
(Equation~\eqref{eq:h:expression:q}), i.e. all terms in $E(h)$
correspond to atoms in the query, possibly conditioned by one of their
variables.  After expanding the conditional terms,
$h(\bm X_R|Z) = h(\bm X_R) - h(Z)$, we can write the expression as:
\begin{align*}
  E(h) = & \sum_{R \in \bm R(Q)}a_R h(\bm X_R) - \sum_{Z \in \bm X} b_Z h(Z)
\end{align*}
where $a_R, b_Z$ are natural numbers.  Since $E(h)$ is a simple
expression, the inequality $E(h) \geq k\cdot h(\bm X)$ is valid iff
$\texttt{cover}(U,E) \geq k$ for all sets of variables $U$.

Let $Z$ be some private variable, in other words it occurs in only one
relation $R$, and suppose $b_Z > 0$.  In that case we must also have
$a_{\bm X_R}\geq b_Z$, otherwise $\texttt{cover}(Z,E) < 0$.  Write
$a_R \cdot h(\bm X_R) - b_Z\cdot h(Z)= (a_R - b_Z)\cdot h(\bm X_R)+
b_Z h(\bm X_R)-b_Z\cdot h(Z) = (a_R-b_Z) \cdot h(\bm X_R) + b_Z\cdot
h(\bm X_R|Z)$.  We claim that we can remove $b_Z \cdot h(\bm X_R|Z)$
from $E$, and the resulting expression $E_0$ still satisfies the
inequality $E_0(h) \geq k \cdot h(\bm X)$.  To prove the claim it
suffices to check $\texttt{cover}(U,E_0) \geq k$ for all
$U \subseteq \bm X$.  If $Z \in U$ or $U \cap \bm X_R=\emptyset$, then
$\texttt{cover}(U,E_0)=\texttt{cover}(U,E) \geq k$, hence it suffices
to assume $Z \not \in U$, and $U \cap \bm X_R\neq \emptyset$.  In that
case we prove that $\texttt{cover}(W,E_0)=\texttt{cover}(WZ,E_0)$, and
the latter is $\geq k$ as we saw.  The only term in $E_0$ that covers
$Z$ is $(a_{\bm X_R}-b_Z)\cdot h(\bm X_R)$, and this also covers $W$,
which proves $\texttt{cover}(W,E_0)=\texttt{cover}(WZ,E_0)$.
Therefore, we can assume w.l.o.g. that no isolated variables $Z$ occur
in the expression $E$.

We prove now the theorem by induction on the number of relations in
$Q$.  If $Q$ has a single relation $R(\bm X)$, then all variables are
isolated, hence $E$ must be $a_{\bm X} h(\bm X)$, and since every
variable is covered $\geq k$ times, we have $a_{\bm X} \geq k$.

Assume $Q$ has $>1$ atoms, let $R$ be a leaf relation, and let $Q'$ be
the query without the relation $R$; we assume that the theorem holds
for $Q'$.  Since $Q$ is Berge-acyclic, there exists a single variable
$Z \in \bm X_R$ that also occurs in $Q'$.  Therefore, $E(h)$ can be
written as:

\begin{align*}
  E(h) = & a_{\bm X_R} \cdot h(\bm X_R) - b_Z \cdot h(Z) + E'(h)
\end{align*}

where $E'(h)$ contains the remaining terms of the expression $E(h)$,
and thus, is an expression associated to the query $Q'$, i.e. it only
contains terms that correspond to atoms and variables in $Q'$.  Let
$\bm X'$ be the variables of $Q'$.  Our plan is to apply induction
hypothesis to $Q'$, and for that we check the coverage in $E'$ of sets
of variables $U' \subseteq \bm X'$.  If $Z \not\in U'$, then
$\texttt{cover}(U',E')= \texttt{cover}(U',E) \geq k$.  If $Z \in U'$,
then we may have $\texttt{cover}(U',E') < k$.   Define:
\begin{align*}
  k' \defeq & \min_{U' \subseteq \bm X'} \texttt{cover}(U',E')
\end{align*}
and let $U' \subseteq \bm X'$ a set s.t. $\texttt{cover}(U',E')=k'$
(i.e. argmin of the expression above).  Assuming $k' < k$, we must
have $Z \in U'$ (the case when $k' \geq k$ is simple, because in that
case $E'$ already satisfies $E'(h) \geq k\cdot h(\bm X')$, and we omit
it).

We use the fact that
$\texttt{cover}(U',E) = a_{\bm X_R} - b_Z + \texttt{cover}(U',E') =
a_{\bm X_R} - b_Z + k' \geq k$ and derive that
$a_{\bm X_R} = \delta+ (k-k')+b_Z$, for some $\delta \geq 0$.  This
implies:
\begin{align}
  E(h) = & a_{\bm X_R} \cdot h(\bm X_R) - b_Z \cdot h(Z) - (k-k')\cdot
           h(Z) + \left((k-k')\cdot h(Z) +  E'(h)\right) \\
=  & \delta \cdot h(\bm X_R) + \left((k-k')+b_Z\right)\cdot h(\bm X_R|Z)
           + \left(\underbrace{(k-k')\cdot h(Z) + E'(h)}_{\defeq E''(h)}\right) \label{eq:yet:nother:label}
\end{align}
Let $\bm X'$ denote the variables of the query $Q'$, including $Z$.
Once can check that, for all $U \subseteq \bm X'$, $\texttt{cover}(U,
E'') \geq k$.  Indeed, if $Z \not\in U$ then
$\texttt{cover}(U,E'')=\texttt{cover}(U,E')\geq k$ as we argued
earlier, and if $Z \in U$ then:
\begin{align*}
  \texttt{cover}(U,E'') = (k-k') + \texttt{cover}(U,E') \geq (k-k')+k' = k
\end{align*}
Therefore, $E''(h) \geq k h(\bm X')$ is a valid simple inequality,
and, by induction on the query $Q'$, we can write $E''$ as:
\begin{align}
  E''(h) = & \sum_{i=1,k} E_{\bm W_i'}(h) + (\cdots)\label{eq:e:double:prime}
\end{align}
where each $W_i'$ is a cover of $Q'$ and $(\cdots)$ hides some
positive terms.  In other words, $E''(h)$ is written as a sum of
simple chain expressions associated to covers $\bm W_i'$ of $Q'$.  We prove
now that $E(h)$ can also be written as a sum of simple chain
expressions associated to covers of $Q$.  For that, we take the first
terms in~\eqref{eq:yet:nother:label}, argue that there are at least
$k$ of them, and combine each with one of the covers $\bm W_i'$. 

More precisely, we consider two cases.  First, if the relation we have
eliminated, $R$, has no private variables.  In other words
$\bm X_R = \set{Z}$.  In that case $E''(h)$ is already of the required
form by the theorem, since it contains all variables of $Q$.  Second,
suppose $R$ has at least one private variable, call it $Y$.  Then,
from Eq.~\eqref{eq:yet:nother:label} we obtain
$\texttt{cover}(Y,E) = \delta + \left((k-k')+b_Z\right) \geq k$.  This
means that we can produce $k$ terms, either $h(\bm X_R)$ or
$h(\bm X_R|Z)$, and write $E(h)$ as a sum of $k$ terms, where each
term has one of these two forms:
\begin{align*}
  & h(\bm X_R) + E_{\bm W_i'}(h) && h(\bm X_R|Z) + E_{\bm W_i'}(h)
\end{align*}
In the first case we extend the cover $\bm W_i'$ with one new set
$\set{R}$, and denote the resulting cover of $Q$ by $\bm W_i$.  In the
second case we extended the cover by including $R$ in the set
containing some other relation that contains $Z$ (if there are
multiple such sets, we choose one arbitrarily) and denote $\bm W_i$
the resulting cover of $Q$.  In both bases, the expression above
becomes $E_{\bm W_i}(h)$, and therefore
$E(h) = \sum_i E_{\bm W_i}(h) + (\cdots)$ as required.


\subsection{Proof of Lemma~\ref{lmm:compress}}

\label{app:lmm:compress}

We start by proving:

\begin{claim} \label{claim:compress} For all $p=2,d$, the following hold:
  \begin{itemize}
  \item If $F_1(i_1-1)\geq F_p(i_p)$ then $C_{i_1\cdots i_p}=0$.
  \item If $F_p(i_p-1)\geq F_1(i_1)$ then $C_{i_1\cdots i_p}=0$.
  \end{itemize}
\end{claim}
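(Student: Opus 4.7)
\textbf{Proof plan for Claim~\ref{claim:compress}.} My approach exploits the fact that $C_{i_1\cdots i_d}$ is an iterated discrete derivative, and a discrete derivative $\Delta_{i_p}$ annihilates any function that is independent of its $p$-th coordinate on $\{i_p-1,i_p\}$. Since the $\Delta_{i_p}$'s commute, it suffices to show that the function $M(j_1,\ldots,j_d)\defeq\max(F_1(j_1),\ldots,F_d(j_d))$, restricted to the $2^d$ corners with $j_q\in\{i_q-1,i_q\}$, is independent of one selected coordinate; the corresponding $\Delta$ then produces $0$ and kills the whole expression.

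For the first bullet, suppose $F_1(i_1-1)\geq F_p(i_p)$ for some $p\geq 2$. Since $F_p$ is non-decreasing, for every $j_p\in\{i_p-1,i_p\}$ we have $F_p(j_p)\leq F_p(i_p)\leq F_1(i_1-1)$. On the other hand, since $F_1$ is non-decreasing, for every $j_1\in\{i_1-1,i_1\}$ we have $F_1(j_1)\geq F_1(i_1-1)\geq F_p(j_p)$. Hence in the max defining $M$, the $p$-th entry is always dominated by the first entry, so $M(j_1,\ldots,j_d)=\max_{q\neq p}F_q(j_q)$, which does not depend on $j_p$. Applying $\Delta_{i_p}$ last in the iterated derivative therefore gives $0$.

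For the second bullet, suppose $F_p(i_p-1)\geq F_1(i_1)$ for some $p\geq 2$. The symmetric argument works: for every $j_1\in\{i_1-1,i_1\}$, $F_1(j_1)\leq F_1(i_1)\leq F_p(i_p-1)\leq F_p(j_p)$ for every $j_p\in\{i_p-1,i_p\}$. So the first entry in the max is dominated by the $p$-th entry, and $M(j_1,\ldots,j_d)=\max_{q\neq 1}F_q(j_q)$ is independent of $j_1$. Applying $\Delta_{i_1}$ then yields $0$.

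There is no real obstacle here; the only subtlety is being careful that the comparison $F_p(j_p)\leq F_1(j_1)$ (or vice versa) holds \emph{simultaneously} at all $2^d$ evaluation points, which is what monotonicity of the $F_q$'s gives us, combined with the hypothesis tying $F_1$ and $F_p$ together at the boundary indices $i_1-1,i_1,i_p-1,i_p$. Once the ``independence of one argument'' observation is made, the conclusion is immediate.
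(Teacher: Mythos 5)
Your proof is correct and follows essentially the same route as the paper's: both arguments use monotonicity of the $F_q$'s to show that one argument of the aggregate is dominated at all $2^d$ corners entering the iterated discrete difference, so the corresponding $\Delta$ annihilates the expression. The only divergence is that you work with the $\max$ as literally written in Eq.~\eqref{eq:compress:c}, whereas the paper's own proof (consistently with $V_{\bm m}=\min_p F^{(X_p)}_{m_p}$ from Theorem~\ref{th:main:star}) uses $\min$ --- which flips which coordinate drops out ($j_p$ versus $j_1$ in the first bullet) --- but the hypothesis forces $F_1(j_1)\geq F_p(j_p)$ uniformly over the corners, so one argument is redundant in either aggregate and the claim holds either way.
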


\begin{proof}
  We prove the first implication only; the second implication is
  similar and omitted.

  Assume that $F_1(i_1-1) \geq F_p(i_p)$.  We also have
  $F_1(i_1)\geq F_p(i_p)$ (because $F_1$ is non-decreasing), and
  therefore the following hold:
  \begin{align*}
    \min(F_1(i_1-1),F_2(i_2),\ldots,F_d(i_d))= & \min(F_2(i_2),\ldots,F_d(i_d))\\
    \min(F_1(i_1),F_2(i_2),\ldots,F_d(i_d))= & \min(F_2(i_2),\ldots,F_d(i_d))
  \end{align*}
  This implies:
  \begin{align}
    \Delta_1 \min(F_1(i_1), \ldots, F_d(i_d))= & 0 \label{eq:here:1}
  \end{align}
  On the other hand, $F_1(i_1) \geq F_p(i_p-1)$  (because $F_p$ is
  non-decreasing), and therefore we can repeat the argument above for
  $i_p-1$ instead of $i_p$, and obtain:
  \begin{align}
    \Delta_1 \min(F_1(i_1), \ldots, F_p(i_p-1), \ldots F_d(i_d)) = &0\label{eq:here:2}
  \end{align}
  Combining~\eqref{eq:here:1} and~\eqref{eq:here:2} we obtain:
  \begin{align*}
    \Delta_p\Delta_1 \min(F_1(i_1), \ldots, F_d(i_d)) = & 0
  \end{align*}
  Since this holds for all indices $i_q$, $q \neq 1, q \neq p$, it
  follows that:
  \begin{align*}
    C_{i_1\cdots i_d} = & \Delta_1 \cdots \Delta_d \min(F_1(i_1),\ldots, F_d(i_d))=0
  \end{align*}
  The second implication of the claim is proven similarly.
\end{proof}

We now prove Lemma~\ref{lmm:compress}.  Assume that
$C_{i_1\cdots i_d} \neq 0$.  The claim implies the following two
conditions:
\begin{align*}
  F_p(i_p)>& F_1(i_1-1) \\
  F_p(i_p-1) < & F_1(i_1)
\end{align*}
By our definition of $F^{-1}$ in Sec.~\ref{sec:compress}, this
implies:
\begin{align*}
  i_p > & F_p^{-1}(F_1(i_1-1))\defeq j \\
  i_p < & F_p^{-1}(F_1(i_1))+1 \defeq k
\end{align*}
(Notice that $j,k$ may be real values.)  This means that in the sum
below we can restrict $i_p$ to range between $j$ and $k$ only:
  \begin{align}
    \sum_{i_p=1,n_p}& C_{i_1\cdots i_d}\cdot a_p(i_p) = \sum_{i_p=\ceil{j},\floor{k}}C_{i_1\cdots i_d}\cdot a_p(i_p) \label{eq:sum:restricted}
  \end{align}
Since $\bm a_p$ is non-increasing we have
  \begin{align*}
 \sum_{i_p=\ceil{j},\floor{k}}C_{i_1\cdots i_d}\cdot a_p(i_p)\geq & \left(\sum_{i_p=\ceil{j},\floor{k}}C_{i_1\cdots i_d}\right)\cdot a_p(\floor{k}) =\left(\sum_{i_p=1,n_p}C_{i_1\cdots i_d}\right)\cdot a_p(\floor{F_p^{-1}(F_1(i_1))}+1)
  \end{align*}
and similarly:
  \begin{align*}
 \sum_{i_p=\ceil{j},\floor{k}}C_{i_1\cdots i_d}\cdot a_p(i_p)
\leq & \left(\sum_{i_p=\ceil{j},\floor{k}}C_{i_1\cdots i_d}\right)\cdot a_p(\ceil{j}) =\left(\sum_{i_p=1,n_p}C_{i_1\cdots i_d}\right)\cdot a_p(\ceil{F_p^{-1}(F_1(i_1-1))})
  \end{align*}
  Inequalities~\eqref{eq:w:efficient:1} and~\eqref{eq:w:efficient:2}
  follow by repeating the argument above for all $p=2,d$, then
  observing that
  \begin{align*}
    \sum_{i_2=1,n_2}\cdots \sum_{i_d=1,n_d}C_{i_1i_2\cdots i_d}=&\Delta_1 \min(F_1(i_1),F_2(n_2),\ldots,F_d(n_d))=\Delta_1 F_1(i_1)
  \end{align*}
  by Eq.~\eqref{eq:delta:sigma} and
  $F_1(i_1) \leq F_1(n_1)\leq F_p(n_p)$.

\subsection{Proof of Theorem~\ref{thm:compress}}
\label{app:thm:compress}
%
%

We begin by proving item (1) of the theorem: $FDSB(Q,\bm f) \geq DSB(Q, \bm f)$. To do this, we first prove the following lemma.
\begin{lmm}
\begin{align}
    DSB(Q, \bm f) \leq FDSB(Q, \bm f, \texttt{ROOT})
\end{align}
\end{lmm}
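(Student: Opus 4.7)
The plan is to prove the inequality by structural induction on the directed tree $T$ of $Q$ (rooted at \texttt{ROOT}), showing that every intermediate quantity produced by Algorithm~\ref{alg:bottom:up:f} pointwise dominates the corresponding quantity produced by Algorithm~\ref{alg:bottom:up}. Concretely, the invariant I would maintain is: for every variable $X$, $\hat{\bm a}^{(X)}(i) \geq \bm a^{(X)}(i)$ for all $i$, and for every non-root relation $R$, $\hat{\bm w}^{(R)}(i) \geq \bm w^{(R)}(i)$ for all $i$. Since the two algorithms share the same bottom-up control flow, the desired inequality $DSB(Q, \bm f) \leq FDSB(Q, \bm f, \texttt{ROOT})$ then drops out by comparing the two return statements.

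\textbf{Base and variable cases.} For a leaf variable $X$, both algorithms produce the all-ones vector. For a unary leaf relation $R(X)$, Algorithm~\ref{alg:bottom:up} returns $\bm w^{(R)} = \bm C^{\bm f^{(R,X)},B^{(R)}} = \bm f^{(R,X)}$, while Algorithm~\ref{alg:bottom:up:f} returns $\hat w^{(R)}(i) = \hat f^{(R,X)}(i)$; by the representation assumption on staircases, these agree at integer arguments. At an internal variable $X$ with children $R_1, \ldots, R_m$, both $\hat{\bm a}^{(X)}$ and $\bm a^{(X)}$ are element-wise products of the respective non-negative vectors $\hat{\bm w}^{(R_j)}$ and $\bm w^{(R_j)}$, so the pointwise inequality transfers immediately.

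\textbf{The key inductive step} is for a non-root relation $R$ with parent $X_1$ and child variables $X_2, \ldots, X_k$. Writing $F_p \defeq \bm F^{(R,X_p)}$, item~\ref{item:th:main:star:5} of Theorem~\ref{th:main:star} identifies $\bm C^{\bm f^{(R,\bm X_R)},\infty}$ with the tensor $\bm C$ of Lemma~\ref{lmm:compress}, so applying inequality~\eqref{eq:w:efficient:2} of that lemma yields
\[
    \bm w^{(R)}(i_1) \leq f^{(R,X_1)}(i_1) \prod_{p=2}^{k} \bm a^{(X_p)}\!\bigl(\ceil{F_p^{-1}(F_1(i_1-1))}\bigr).
\]
Because each staircase $\hat F^{(R,X_p)}$ represents $\bm F^{(R,X_p)}$, it agrees with $F_p$ at integer arguments, so the indices produced here coincide with those used by Algorithm~\ref{alg:bottom:up:f}; the $\max(1,\cdot)$ in the algorithm only lifts a zero index to index $1$, and since $\bm a^{(X_p)}$ is non-increasing this can only enlarge the bound. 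Combined with the inductive hypothesis $\hat{\bm a}^{(X_p)} \geq \bm a^{(X_p)}$, this gives $\hat{\bm w}^{(R)}(i_1) \geq \bm w^{(R)}(i_1)$, closing the induction.

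\textbf{Root finalization and the main obstacle.} The return value of Algorithm~\ref{alg:bottom:up} is the full tensor summation $\bm C^{\texttt{ROOT}} \cdot \bm a^{(X_1)} \cdots \bm a^{(X_k)}$, which does not directly fit the template of Lemma~\ref{lmm:compress} because at the root no variable plays the distinguished ``parent'' role. The cleanest fix is to attach to \texttt{ROOT} a virtual parent dimension of size $|\texttt{ROOT}|$ whose degree sequence is identically $1$: this reduces the root case to one further application of the same bound, and the standard reindexing $\sum_{i_1} f^{(\texttt{ROOT},X_1)}(i_1)\,g(F_1(i_1-1)) = \sum_{i=1}^{|\texttt{ROOT}|} g(i-1)$ lines up the resulting expression with Algorithm~\ref{alg:bottom:up:f}'s return value. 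The main obstacle is precisely this root case, together with careful bookkeeping for boundary conventions (the $\max(1,\cdot)$, the value $a_p(0)$, and the non-uniqueness of $F_p^{-1}$); in each case the resolution is in the same direction as the desired inequality and so does not affect the argument.
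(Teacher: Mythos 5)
Your proposal is correct and follows essentially the same route as the paper's proof: the key step is inequality~\eqref{eq:w:efficient:2} of Lemma~\ref{lmm:compress} applied relation-by-relation up the tree, and the root is handled by exactly the same device the paper uses, namely attaching a virtual parent variable $X_0$ to $\texttt{ROOT}$ with degree sequence identically $1$ so that the root reduces to one more application of the same bound. Your version merely makes explicit the pointwise-domination induction that the paper compresses into the remark that replacing $\bm w$ by $\hat{\bm w}$ throughout the algorithm can only increase the result.
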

\begin{proof}
Let $Q^*$ be an alteration of the query $Q$ where everything is identical except for the addition of a variable $X_0$ which is only present in relation $\texttt{ROOT}$. Additionally, we set the degree sequence equal to $\bm 1$, i.e. $f^{ROOT, X_0}(i)=1\,\,\forall\,\,i\in[1,|ROOT|]$. We claim that $DSB(Q^*, \bm f, \infty) = DSB(Q, \bm f, \infty)$. To see this, consider the final summation for $DSB(Q^*, \bm f, \infty)$, $DSB(Q, \bm f, \infty)$, and $FDSB(Q, \bm f, \texttt{ROOT})$ in algorithm 2,
\begin{align*}
    DSB(Q^*, \bm f) &= \sum_{i_1=1,D_1}\ldots\sum_{i_k=1,D_k}\sum_{i_0=1,|ROOT|}C_{i_1,\ldots,i_k, i_0}\cdot a^{(X_0)}(i_0) \cdot \prod_{p=1,k}a^{(X_p)}_{i_p}\\
    DSB(Q, \bm f) &= \sum_{i_1=1,D_1}\ldots\sum_{i_k=1,D_k}\sum_{i_0=1,|ROOT|}C_{i_1,\ldots,i_k, i_0}\cdot a^{(X_0)}(i_0) \cdot \prod_{p=1,k}a^{(X_p)}_{i_p}\\
    FDSB(Q, \bm f, \texttt{ROOT}) &= \sum_{i=1,|\texttt{ROOT}|}\prod_{p=1,k}a_p(max(1,F^{-1}_p(i)))
\end{align*}
Because $X_0$ is not shared by any other relations, we have $a^{(X_0)}(i_0)=1$. Further, we can break $\sum_{i_0=1,|ROOT|}C_{i_1,\ldots,i_k,i_0}$ into $\sum_{i_0=1,|ROOT|}\Delta_{i_0}\Delta_{i_1}\ldots\Delta_{i_k}V_{i_1,\ldots,i_k,i_0}$. Here, the summation and delta cancel out to leave us with  $\Delta_{i_1}\ldots\Delta_{i_k}V_{i_1,\ldots,i_k, |ROOT|}$. Because $B=\infty$, we know from Thm.~\ref{th:main:star} item 6 that $V$ is a minimum of the form $\Delta_{i_1}\ldots\Delta_{i_k}\min{F_1(i_1),\ldots,F_k(i_k), |ROOT|}$. Therefore, we can remove this final term and we get back to our expression for $DSB(Q,\bm f)$. 

We now prove that $DSB(Q^*,\bm f)\leq FDSB(Q,\bm f, \texttt{ROOT})$. To do this, we rewrite the expression for $DSB(Q^*,\bm f)$ as follows,
\begin{align*}
    DSB(Q^*, \bm f) &=\sum_{i_0=1,|ROOT|} \bm{w}^{(\texttt{ROOT})}(i)
\end{align*}
This is possible because the root relation now has a parent variable $X_0$. Further, according to Lemma~\ref{lmm:compress}, $\bm{w}\leq \hat{\bm w}$, so replacing $\bm{w}$ for $\hat{\bm w}$ throughout the algorithm will only increase the result. Doing this replacement, we get,
\begin{align*}
    DSB(Q^*, \bm f) &\leq \sum_{i_0=1,|ROOT|}  \hat{\bm{w}}^{(\texttt{ROOT})}(i)\\
    DSB(Q^*, \bm f) &\leq \sum_{i_0=1,|ROOT|} f_0(i) \cdot \prod_{i=1,p}a_p(\max(1, F_p^{-1}(F_0(i))))
\end{align*}
Because the degree sequence of $X_0$ is equal to $\bm 1$, we know $f_0(i)=1$ and $F_0(i)=i$.
\begin{align*}
    DSB(Q^*, \bm f) &\leq \sum_{i_0=1,|ROOT|} 1 \cdot \prod_{i=1,p}a_p(\max(1, F_p^{-1}(i)))
\end{align*}
Recognizing the RHS as the summation for $FDSB$, we get the result.
\begin{align*}
    DSB(Q^*, \bm f) &\leq FDSB(Q, \bm f, \texttt{ROOT})
\end{align*}
\end{proof}
Therefore, for any cover $Q_1,\ldots,Q_m$, we have $\prod_{i=1,m}DSB(Q_i, \bm f)\leq \prod_{i=1,m}\min_{ROOT\in\bm R(Q_i)}FDSB(Q_i, \bm f, \texttt{ROOT})$. To finish the proof of part 1, we simply note the result from \ref{lemma:connection:to:pb} that $DSB(Q,\bm f)\leq DSB(Q_1)\cdots DSB(Q_m)$ for any cover $Q_1,\ldots,Q_m$.

We prove the second part of the theorem in three steps. First, we prove some basic facts about the complexity of computing the composition and multiplication of piece-wise functions. Second, we bound the complexity of computing $\hat{\bm w}^{(R)}$ given its inputs are piece-wise functions. Lastly, we extend this bound to fully computing Alg. \ref{alg:bottom:up} which proves the theorem.\\\\
We start by proving the tractability of piece-wise composition,
\begin{lmm}\label{lmm:piecewise:comp}
Let $F(x), G(x)$ be non-decreasing piece-wise linear functions with the set of separators $S_F=[m_1,\ldots,m_{s_F}],S_G=[n_1,\ldots,n_{s_G}]$ with cardinalities $s_F,s_G$. Then, their composition, $F(G(x))$, is a piece-wise linear function with separators $S'=G^{-1}(S_F)\cup S_G = [l_1,\ldots,l_{|S'|}]$ and has the following complexity:
\begin{equation*}
    C_{comp} = O((s_F + s_G)(\log(s_F)+\log(s_G)))
\end{equation*}
\end{lmm}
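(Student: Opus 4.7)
The plan is to argue first that $S' = G^{-1}(S_F) \cup S_G$ is exactly the correct set of separators for $F\circ G$, then to give an algorithm that constructs the linear pieces within the claimed time bound. On any open interval between two consecutive elements of $S_G$, $G$ is a single linear function; on any open interval between two consecutive elements of $G^{-1}(S_F)$, $G$ maps entirely into one linear piece of $F$. Hence on an open interval between consecutive elements of $S'$, both $G$ and $F$ act linearly, so $F\circ G$ is linear there. This gives the claimed separator structure.

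For the algorithm, the first step is to compute $G^{-1}(m_i)$ for each $m_i\in S_F$. Since $G$ is piecewise linear and non-decreasing, its values at the separators $S_G$ split $\R_+$ into $s_G+1$ intervals of $y$-values; for each $m_i$, a binary search in the sorted array $(G(n_j))_{j=1,\ldots,s_G}$ locates the piece of $G$ that attains value $m_i$ in $O(\log s_G)$ time, and then $G^{-1}(m_i)$ is obtained in $O(1)$ by inverting that linear piece. Doing this for all $m_i$ costs $O(s_F\log s_G)$. Because $G$ is non-decreasing and $S_F$ is sorted, the resulting list $G^{-1}(S_F)$ is also sorted, so merging with the sorted $S_G$ into the sorted separator list for $S'$ takes $O(s_F+s_G)$.

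Finally I would walk through the merged separator list once, maintaining a pointer into the pieces of $G$ and a pointer into the pieces of $F$. For each segment of $S'$, the local piece of $G$ is known from the $S_G$-pointer, and the local piece of $F$ is known from the construction of $G^{-1}(S_F)$; composing two linear functions takes $O(1)$. If one prefers to avoid carrying bookkeeping through the merge, one can instead binary-search inside $S_G$ and $S_F$ for each of the $|S'|\le s_F+s_G$ segments, yielding $O((s_F+s_G)(\log s_F+\log s_G))$, which is the bound claimed in the lemma.

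The main thing to be careful about is handling corner cases: separators where $G$ is flat (so $G^{-1}$ is multivalued and one has to invoke the convention for $F^{-1}$ given in Sec.~\ref{sec:compress}), and separators of $F$ that lie above $G(n_G)$ or below $G(0)$, which contribute no new points to $S'$. Both are handled by the binary-search step: a value $m_i\notin [G(0),G(n_G)]$ is simply discarded, and flat pieces of $G$ collapse to repeated separators that can be deduplicated in the merge. Aside from these bookkeeping issues, the proof is a routine combination of binary search and a linear merge.
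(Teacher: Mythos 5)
Your proposal is correct and follows essentially the same route as the paper's proof: show that $S' = G^{-1}(S_F)\cup S_G$ refines both piece structures so that $F\circ G$ is linear on each resulting interval, then charge $O(\log s_F + \log s_G)$ per inverse computation and per lookup over $s_F + s_G$ segments. Your version is somewhat more explicit about the algorithmic details (binary search, sorted merge) and about corner cases such as flat pieces of $G$ and separators of $F$ outside the range of $G$, which the paper glosses over, but the underlying argument is the same.
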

\begin{proof}
Because $S'$ refines the separators of $S_G$, $G(x)$ is a simple linear function when restricted to one interval of $S'$. Further, consider a interval defined by this new set of separators $(l_{i}, l_{i+1}]$, then this interval is contained within an interval of $S_G$ because otherwise there would be a separator of $S_G$ which is not within $S'$. Further, $(G(l_i),G(l_{i+1})]$ is within an interval of $S_F$. Suppose the latter were not true, then there would be a separator $m_j$ such that $l_{i}<G^{-1}(m_j)<l_{i+1}$. Because $G^{-1}(m_j)$ is in the set $S'$, this is impossible. Therefore, the function $F(G(x))$ can be fully defined by simply looking up the linear function at $G(m)$ and at $F(G(m))$ and composing them for every interval $(l_i,l_{i+1}]$ in $S'$.

Looking at the computational time piece-by-piece, we have to compute the result of $G^{-1}$ $s_F$ times and lookup the linear functions of $F$ and $G$ at a particular point $s_F+s_G$ times. Each of these computations/lookups takes $\log(s_G),\log(s_F)$ operations, respectively, so the overall computational complexity is $O((s_F+s_G)(\log(s_G) + \log(s_F)))$.
\end{proof}
The lemma for multiplication proceeds very similarly,
\begin{lmm}\label{lmm:piecewise:mult}
Let $f_1(x),...,f_d(x)$ be non-decreasing $s_i$-staircase functions with the set of separators denoted $S_{F_i}$. Then, their multiplication, $\prod_{p\in[1,d]}F_p(x)$, is an $\sum_{i=1}^ds_i$-staircase function with divisors $S' = \bigcup_{p\in[1,d]}S_{F_p}=[l_1,\ldots,l_{|S'|}]$ and has the following complexity:
\begin{equation*}
    C_{mult} = O((\sum_{p=1}^ds_{F_p})(\sum_{p=1}^d\log(s_{F_p})))
\end{equation*}
\end{lmm}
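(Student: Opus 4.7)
The plan is to follow the template of Lemma~\ref{lmm:piecewise:comp}, but replace its geometric argument about piecewise-linear pieces with a simpler one, since staircase functions are piecewise constant. First I would prove the structural claim: let $S' \defeq \bigcup_{p=1}^d S_{F_p}$ and let $(l_i, l_{i+1}]$ be any interval cut out by consecutive elements of $S'$. Because $S'$ refines every individual divisor set $S_{F_p}$, the interval $(l_i, l_{i+1}]$ lies entirely inside a single constancy interval of each $F_p$, so $F_p(x)$ takes a single constant value on $(l_i, l_{i+1}]$. The product $\prod_{p=1}^d F_p(x)$ is therefore a product of $d$ constants on this interval and hence itself constant. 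Since $|S'| \leq \sum_p s_{F_p}$, the product is an at-most-$(\sum_p s_{F_p})$-staircase with divisors drawn from $S'$, which is exactly what the lemma claims.

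For the complexity bound, I would split the construction into two phases. Phase (i) builds the sorted separator list $S'$ by a standard $d$-way merge of the already-sorted sequences $S_{F_1}, \ldots, S_{F_d}$, at cost $O\bigl((\sum_p s_{F_p}) \log d\bigr)$, which is dominated by the claimed bound. Phase (ii) evaluates the product on each of the $|S'| \leq \sum_p s_{F_p}$ intervals: for each interval I pick any internal point and perform one binary search into each $S_{F_p}$ to retrieve the constant value of $F_p$ there, at cost $O(\log s_{F_p})$ per lookup. Summing over all $d$ staircases and all $|S'|$ intervals yields the target bound $O\bigl((\sum_p s_{F_p})(\sum_p \log s_{F_p}))$; the per-interval arithmetic (a product of $d$ scalars, with $d \leq \sum_p \log s_{F_p}$ absent degenerate cases) is absorbed into the same cost.

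There is no substantial mathematical obstacle here; the proof is essentially a bookkeeping exercise, parallel to the one carried out for composition. The only mild care needed is in the merge phase, where separators that coincide across several $S_{F_p}$ must be collapsed so that $S'$ satisfies the strict inequality $m_0 < m_1 < \cdots < m_s$ demanded by the staircase definition, and in verifying that the boundary points $0$ and $n$ are inherited correctly from the inputs. Both adjustments fit within the same asymptotic cost, so the proof concludes directly from the structural claim of the first paragraph together with the two-phase cost analysis of the second.
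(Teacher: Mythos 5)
Your proposal is correct and follows essentially the same route as the paper's proof: refine to the merged separator set $S'$, observe each factor is constant on every interval of $S'$ so the product is a $(\sum_p s_{F_p})$-staircase, and charge one $O(\log s_{F_p})$ lookup per factor per interval to get the stated bound. Your added care about the $d$-way merge and collapsing coincident separators is a reasonable tightening of bookkeeping the paper leaves implicit, but it does not change the argument.
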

\begin{proof}
Because the joint set of separators $S'$ refines each factor's separators $S_{F_p}$, within a particular interval $(l_i,l_{i+1}]$ each function is a simple polynomial. Therefore, the function $\prod_{p=1}^dF_p(x)$ can be fully defined by simply looking up each function's constant at $l_i$ and multiplying them for each interval $(l_i,l_{i+1}]$ in $S'$.

Considering the computational time, computing the product requires looking up each factor's constant and computing these constant's product within each interval of $S'$. Each of these lookups takes $\log(s_{F_p})$ operations. So, the overall computational complexity is $O((\sum_{p=1}^ds_{F_p})(\sum_{p=1}^d\log(s_{F_p}))$.
\end{proof}

Given these lemmas, we can bound the computational time of computing $\hat{\bm w}^{(R)}$ and describe the resulting function,
\begin{lmm}
Let each vector $f_p\defeq f^{(R,X_p)}$ be an $s_{R,X_p}$-staircase and $a_p\defeq a^{(X_p)}$ be $t_{X_P}$-staircases with dividers $S_{F_p}$ and $S_{a_p}$, respectively. Then, the vector $\hat{\bm w}^{(R)}$ as defined in Alg. \ref{alg:bottom:up:f} is a $(\sum_{p=1}^d s_{R,X_p}+\sum_{p=2}^dt_{X_p})$-staircase and can be computed in time.
\begin{equation*}
    C_{total,R} = O((ds_{R,X_1}+\sum_{p=2}^d s_{R,X_p}+\sum_{p=2}^dt_{X_p})
    \cdot(d\log(s_{R,X_1})+\sum_{p=2}^d\log(s_{R, X_p})+\sum_{p=2}^d\log(t_{X_p})+ (\sum_{p=2}^dk_{X_p})^2))
\end{equation*}
\end{lmm}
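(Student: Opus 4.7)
The plan is to decompose the computation of $\hat{\bm w}^{(R)}(i_1)$ according to the formula in Algorithm~\ref{alg:bottom:up:f}: write it as
$$\hat w^{(R)}(i_1) = \hat f^{(R,X_1)}(i_1) \cdot \prod_{p=2}^d g_p(i_1),\qquad g_p(i_1)\defeq a^{(X_p)}\!\left(\max\!\left(1,(\hat F^{(R,X_p)})^{-1}\!\left(\hat F^{(R,X_1)}(i_1-1)\right)\right)\right),$$
build each $g_p$ by two successive compositions via Lemma~\ref{lmm:piecewise:comp}, then aggregate everything with one application of Lemma~\ref{lmm:piecewise:mult}. Throughout I would track the number of staircase pieces produced at each step, and then sum to match the claimed bound.

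First I would observe that $\hat F^{(R,X_p)}\defeq\int_0^x \hat f^{(R,X_p)}(t)\,dt$ is a non-decreasing piecewise-linear function with the same separator set of size $s_{R,X_p}$ as $\hat f^{(R,X_p)}$, and that the inverse $(\hat F^{(R,X_p)})^{-1}$ is likewise a piecewise-linear function with $s_{R,X_p}$ pieces. Lemma~\ref{lmm:piecewise:comp} then yields that $(\hat F^{(R,X_p)})^{-1}\circ\hat F^{(R,X_1)}$ is a piecewise-linear function on at most $s_{R,X_1}+s_{R,X_p}$ pieces, computable in time $O((s_{R,X_1}+s_{R,X_p})(\log s_{R,X_1}+\log s_{R,X_p}))$. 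A second application of the same lemma composes this with the $t_{X_p}$-staircase $a^{(X_p)}$ to give $g_p$ as a staircase on at most $s_{R,X_1}+s_{R,X_p}+t_{X_p}$ pieces; the outer $\max(1,\cdot)$ adds at most one breakpoint and does not change the asymptotics.

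Next, I would apply Lemma~\ref{lmm:piecewise:mult} to the $d$-fold product $\hat f^{(R,X_1)}\cdot\prod_{p=2}^d g_p$. A crude union bound on piece-counts would give $d\cdot s_{R,X_1}+\sum_{p=2}^d s_{R,X_p}+\sum_{p=2}^d t_{X_p}$ pieces, which matches the first parenthesis of $C_{total,R}$ in the statement. To sharpen this to the claimed $(\sum_{p=1}^d s_{R,X_p}+\sum_{p=2}^d t_{X_p})$-staircase I would observe that the $s_{R,X_1}$ breakpoints inherited from $\hat F^{(R,X_1)}$ are shared across all $d-1$ compositions $g_p$: they are literally the breakpoints of the common inner function $\hat F^{(R,X_1)}$, so in the union defining the separator set of the product they are counted once, not $d$ times. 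The cost bound then follows by plugging the final piece count into Lemma~\ref{lmm:piecewise:mult}, where each binary-search lookup contributes the logarithmic factor $d\log s_{R,X_1}+\sum_{p=2}^d(\log s_{R,X_p}+\log t_{X_p})$.

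The main obstacle will be the bookkeeping of shared breakpoints in the third step: the union in Lemma~\ref{lmm:piecewise:mult} is stated for arbitrary staircases, and I would need to redo that argument for this specific family whose inner breakpoints coincide, in order to avoid the over-count of a factor $d$ on $s_{R,X_1}$ in the piece-count (although this over-count does survive in the log-cost factor, explaining the asymmetry between the two parentheses in $C_{total,R}$). The additional additive term $(\sum_{p=2}^d k_{X_p})^2$ in the complexity is not defined in the excerpt; under the natural reading that $k_{X_p}$ is the number of children of $X_p$ (so that $a^{(X_p)}=\bigotimes_R \bm w^{(R)}$ is itself a product of that many staircases produced at previous bottom-up steps), the term arises from the cost of evaluating $a^{(X_p)}$ as an explicit product during the lookups. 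Everything else is a routine application of the two piecewise-function lemmas.
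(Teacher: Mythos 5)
Your proposal follows essentially the same route as the paper's proof: decompose $\hat{\bm w}^{(R)}$ into the $d-1$ two-fold compositions plus the factor $\hat f^{(R,X_1)}$, apply Lemma~\ref{lmm:piecewise:comp} twice per factor and Lemma~\ref{lmm:piecewise:mult} once for the product, and observe that the separator set $S_{F_1}$ is shared across all factors so it is counted once in the union (giving the tighter piece count) while the log-cost factor retains the $d\log s_{R,X_1}$ over-count. Your flag on the $(\sum_{p=2}^d k_{X_p})^2$ term is also apt: the paper's own derivation of $C_{total,R}$ never produces or defines that term, so its presence in the lemma statement appears to be a leftover that neither you nor the paper actually proves.
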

\begin{proof}
We start by restating the expression for $\hat{\bm w}^{(R)}$,
\begin{align*}
 \hat{\bm w}^{(R)}  \defeq & \left(\Delta_{i_1} F_1(i_1)\right)\prod_{p\in[2,d]}a_p\left( F_p^{-1}(F_1(i_1-1))\right)
\end{align*}
We can now calculate the set of separators of $\hat{\bm w}^{(R)}$ using the previous two lemmas'. First, we consider the separators for each factor of the product. Based on Lemma \ref{lmm:piecewise:comp} and the fact that $F_p^{-1}\circ F_1 = F_1^{-1}\circ F_p$, we get the following expression for their separators,
\begin{align*}
    S_{a_p\circ F_p^{-1}\circ F_1} = F_1^{-1}(F_p(S_{a_p}))\cup F_1^{-1}(S_{F_p^{-1}})\cup S_{F_1}
\end{align*}
At this point, we take the union of the factors' separators to get the full set of separators for $\hat{\bm w}^{(R)}$,
\begin{align*}
S_{\hat{\bm w}^{(R)}} &= S_{F_1}\cup\left(\bigcup_{p=2}^dF_1^{-1}(F_p(S_{a_p}))\cup F_1^{-1}(S_{F_p^{-1}})\cup S_{F_1}\right) = S_{F_1}\cup\left(\bigcup_{p=2}^dF^{-1}_1(F_p(S_{a_p}))\right)\cup\left(\bigcup_{p=2}^d F_1^{-1}(S_{F_p^{-1}})\right)
\end{align*}
Therefore, $\hat{\bm w}^{(R)}$ has at most $\sum_{p=1}^ds_{R, X_p} + \sum_{p=1}^dt_{X_p}$ separators. 

Next, we consider the computational complexity and degree in two steps: 1) $C_{comp,R}$, defined as the time to calculate all of the $a_p(F^{-1}_p(F_1(i_1)))$ expressions  and 2) $C_{mult,R}$, defined as the time to calculate the product of these expressions. Starting with the compositions, calculating a single expression of the form $a_p(F^{-1}_p(F_1(i_1)))$ requires calculating two compositions $F^{-1}_p \circ F_1$ and $a_p\circ(F^{-1}_p\circ F_1)$. This first composition takes the following operations by Lmm. \ref{lmm:piecewise:comp},
$$O((s_{R,X_p}+s_{R,X_1})(\log(s_{R,X_p}) +\log(s_{R,X_1})))$$
The second composition takes,
$$O((t_{X_p}+s_{R,X_p}+s_{R,X_1})(\log(t_{X_p})+\log(s_{R,X_p}+s_{R,X_1})))$$
Because $\log(s_{R, X_p}+s_1)\leq \log(s_{R, X_p}) + \log(s_1)$, the sum of these two times is upper bounded by,
$$O((t_{X_p}+s_{R,X_p}+s_{R,X_1})(\log(t_{X_p})+\log(s_{R,X_p})+\log(s_{R,X_1})))$$
So, the first step has the following complexity and results in a staircase function because composition with a constant function results in a constant function,
$$C_{comp,R} = O\left(\sum_{p=2}^d(t_{X_p}+s_{R, X_p}+s_{R,X_1})(\log(s_{R, X_p})+\log(s_{R,X_1})+\log(t_{X_p}))\right)$$ 

The product of these functions then takes the following operations by Lmm. \ref{lmm:piecewise:mult}, 
\begin{equation*}
    C_{mult,R} =O\left(\left(s_{R,X_1}+(\sum_{p=2}^ds_{R, X_p}+t_{X_p}+s_{R,X_1})\right)\left(\log(s_{R,X_1})+\sum_{p=2}^d\log(s_{R, X_p}+t_{X_p}+s_{R,X_1})\right)\right)
\end{equation*}
Splitting the logarithms results in the following cleaner expression,
\begin{equation*}
    C_{mult,R} = O\left(\left(ds_{R,X_1}+\sum_{p=2}^ds_{R, X_p}+\sum_{p=2}^dt_{X_p}\right)\left(d\log(s_{R,X_1})+\sum_{p=2}^d\log(s_{R, X_p})+\sum_{p=2}^d\log(t_{X_p})\right)\right)
\end{equation*}
Because $C_{comp,R}\leq C_{mult,R}$ (the sum of a product is less than the product of sums), we can then say that the whole computation has complexity:
\begin{equation*}
    C_{total,R} = O\left(\left(ds_{R,X_1}+\sum_{p=2}^ds_{R, X_p}+\sum_{p=2}^dt_{X_p}\right)\left(d\log(s_{R,X_1})+\sum_{p=2}^d\log(s_{R, X_p})+\sum_{p=2}^d\log(t_{X_p})\right)\right)
\end{equation*}
\end{proof}

Next, we extend this analysis to the full bound computation for a given root,
\begin{thm} Let $Q$ be a Berge-acyclic query with $M$ relations (as in Eq.~\eqref{eq:q}), whose degree sequences are represented by $s_{R,Z}$-staircase functions.  Then, (1) $FDSB(Q, \texttt{ROOT})\geq DSB(Q)$, and (2) it can be computed in the following polynomial time combined complexity,
   \begin{equation*}
     T_{FDSB} = O(M(s+\max_{R\in\bm R,Z\in\bm X_R}(\text{Arity}(R)\cdot s_{R,Z}))(s_{log}+\max_{R\in\bm R,Z\in\bm X_R}(\text{Arity}(R)\cdot\log(s_{R,Z}))))
   \end{equation*}
   where $s = \sum_{R,Z} s_{R,Z}$ and
   $s_{log} = \sum_{R,Z} \log(s_{R,Z})$. 
\end{thm}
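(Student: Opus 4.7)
The plan is to dispatch part (1) as an immediate corollary of the lemma that has just been proved, and to establish part (2) by summing the per-relation cost $C_{total,R}$ over the $M$ nodes of the incidence tree, after first bounding the sizes of the intermediate staircases $\bm{\hat a}^{(X)}$ by an inductive invariant.

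For part (1), the previous lemma already shows $DSB(Q,\bm f)=DSB(Q^*,\bm f)\le FDSB(Q,\bm f,\texttt{ROOT})$, so no further argument is needed; the entire remaining task is the complexity bookkeeping.

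For part (2), Algorithm~\ref{alg:bottom:up:f} makes a single bottom-up pass over the incidence tree, and at each node it does one of two things: at each variable $X$ it forms $\bm{\hat a}^{(X)}$ as the element-wise product of its children's $\bm{\hat w}^{(R)}$ (cost governed by Lemma~\ref{lmm:piecewise:mult}), and at each non-root relation $R$ with parent variable $X_1$ and arity $d$ it forms $\bm{\hat w}^{(R)}$ via the composition/product formula in the algorithm (cost $C_{total,R}$, given by the preceding lemma). The key step is the inductive invariant that for every variable $X$ the step count $t_X$ of $\bm{\hat a}^{(X)}$ satisfies $t_X\le s\defeq\sum_{R,Z}s_{R,Z}$. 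At a leaf variable $t_X=O(1)$, and at an internal variable Lemma~\ref{lmm:piecewise:mult} gives $t_X\le\sum_{R\in\text{children}(X)} $ (step count of $\bm{\hat w}^{(R)}$), which the per-relation lemma bounds by $\sum_{p=1}^{d}s_{R,X_p}+\sum_{p=2}^{d}t_{X_p}$; unrolling this recursion downward through the subtree rooted at $X$ and using disjointness of the subtrees of the children gives the invariant $t_X\le s$.

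With this invariant in hand, substituting $t_{X_p}\le s$ into the $C_{total,R}$ formula from the preceding lemma bounds every per-relation cost by
\begin{equation*}
O\!\left(\bigl(s+\text{Arity}(R)\cdot s_{R,X_1}+\sum_{p=2}^{d}s_{R,X_p}\bigr)\cdot\bigl(s_{log}+\text{Arity}(R)\cdot\log s_{R,X_1}+\sum_{p=2}^{d}\log s_{R,X_p}\bigr)\right),
\end{equation*}
which is in turn absorbed into
$\bigl(s+\max_{R,Z}\text{Arity}(R)\cdot s_{R,Z}\bigr)\bigl(s_{log}+\max_{R,Z}\text{Arity}(R)\cdot\log s_{R,Z}\bigr)$. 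The cost of the per-variable element-wise product step follows the same pattern by Lemma~\ref{lmm:piecewise:mult} and is dominated by the same expression. Summing over the $M$ relations (and $O(M)$ variables of the tree) yields the claimed $T_{FDSB}$.

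The main obstacle is the bookkeeping of staircase sizes as they propagate up the tree: in principle the $t_{X_p}$ grow with each multiplication and composition, and a naive bound would blow up multiplicatively. The invariant $t_X\le s$ is what keeps the analysis linear rather than exponential, and it relies crucially on the fact that the subtrees rooted at the children of $X$ are disjoint so that the total step count accumulated on the way up stays bounded by $s$. A secondary subtlety is that the stated bound contains $\text{Arity}(R)\cdot s_{R,Z}$ and $\text{Arity}(R)\cdot\log s_{R,Z}$ rather than the raw staircase counts; these arise because the parent variable's degree sequence $s_{R,X_1}$ enters the per-relation cost formula multiplied by $d=\text{Arity}(R)$, and taking the worst case over $(R,Z)$ gives the $\max$ form in the theorem statement.
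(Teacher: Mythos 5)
Your proposal is correct and follows essentially the same route as the paper: part (1) is discharged by the preceding lemma, and part (2) rests on the observation that the step count $t_{X}$ of each intermediate staircase equals the total number of segments in the subtree rooted at $X$ (hence is at most $s$ by disjointness of subtrees), after which the per-relation cost $C_{total,R}$ is bounded by the $\max$-form and summed over the $M$ relations. Your explicit framing of this as an inductive invariant $t_X\le s$ is a slightly cleaner packaging of the paper's identical bookkeeping, but not a different argument.
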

\begin{proof}
To produce this bound, we bound the maximum computation necessary to compute any $\hat{\bm w}^{(R)}$ then we multiply that bound by $|R|$, i.e. the number of $w^{(R)}$ expressions that we have to compute. To produce this first bound, we start with the complexity of computing an arbitrary $\hat{\bm w}^{(R)}$,
\begin{equation*}
    C_{total,R} = O\left(\left(ds_{R,X_1}+\sum_{p=2}^ds_{R, X_p}+\sum_{p=2}^dt_{X_p}\right)\left(d\log(s_{R,X_1})+\sum_{p=2}^d\log(s_{R, X_p})+\sum_{p=2}^d\log(t_{X_p})\right)\right)
\end{equation*}
First, we replace $\sum_p s_{R, X_p}+\sum_p t_{X_p}$ with the sum of all segments in the database, similarly for the log terms. To see why this is valid, we note that $t_{X_p}$ is equal to the number of segments in relations in the sub-tree of the incidence graph rooted at $X_p$. By the same logic, $\sum_ps_{R,X_p}+\sum_pt_{X_p}$ is equivalent to the sum of all segments in relations in the subtree of the incidence graph rooted at $R$. Therefore, this expression must be upper bounded by the number of segments in the entire incidence graph, i.e. $\sum_{R\in \bm R}\sum_{Z\in \bm X_R} s_{R,Z})$. This gives us the following expression,
\begin{multline*}
    C_{total,R} = O((d\cdot s_{R,X_1}+\sum_{R\in \bm R}\sum_{Z\in \bm X_R} s_{R,Z})\cdot(d\log(s_{R,X_1})+    \sum_{R\in \bm R}\sum_{Z\in \bm X_R} \log(s_{R,Z})))
\end{multline*}
Next, we need to remove the reliance on $dS_{R,X_1}$ as it refers to the particular table $R$. We do so by replacing it with its maximum over all relations,
\begin{multline*}
C_{total,R} = O((\max_{\bm R, Z\in\bm X_R}(Arity(R)\cdot s_{R,Z})+\sum_{R\in   \bm R}\sum_{Z\in \bm X_R} s_{R,Z})\\(\max_{\bm R, Z\in\bm X_R}(Arity(R)\cdot\log(s_{R,Z}))+\sum_{R\in \bm R}\sum_{Z\in \bm X_R} \log(s_{R,Z})))
\end{multline*}
Therefore, we can bound the overall computation by this maximum
per-table-computation multiplied by the number of tables,
\begin{align*}
C_{total} &= O(\sum_{R\in \bm R}C_{total,R})\\
C_{total}  &= O(|\bm R|\cdot(\max_{\bm R, Z\in\bm X_R}(Arity(R)\cdot s_{R,Z})+\sum_{R\in \bm R}\sum_{Z\in \bm X_R} s_{R,Z})\cdot\\
&\hspace{115pt}(\max_{\bm R, Z\in\bm X_R}(Arity(R)\cdot\log(s_{R,Z}))+\sum_{R\in \bm R}\sum_{Z\in \bm X_R} \log(s_{R,Z})))
\end{align*}
\end{proof}

\subsubsection{Proof of Theorem \ref{thm:compress:covers}}
\label{app:thm:compress:dynamic}

\begin{thm}Let $Q$ be a Berge-acyclic query with $M$ relations (as in Eq.~\eqref{eq:q}), whose degree sequences are represented by $s_{R,Z}$-staircase functions.  Then, (1) $FDSB(Q)\geq DSB(Q)$, and (2) it can be computed in polynomial time combined complexity $O(2^{m}\cdot(2^m + M\cdot T_{FDSB}))$.
\end{thm}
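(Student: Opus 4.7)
The algorithm has two phases. In the first, \emph{precomputation}, I would enumerate all connected subqueries of $Q$. Since the incidence graph of $Q$ is a tree whose relation nodes number $m$, a connected subquery is determined by a subset $C\subseteq\bm R(Q)$ whose induced subgraph is connected; iterating over the $2^m$ subsets and testing connectivity in $O(m)$ time each produces the complete list of at most $2^m$ connected subqueries. For every such $C$, I compute
\[
FDSB^\star(C)\;\defeq\;\min_{R\in\bm R(C)}FDSB(C,R)
\]
by running Algorithm~\ref{alg:bottom:up:f} once for each of the at most $m$ choices of root, at a cost of $m\cdot T_{FDSB}$ per subquery. Tabulating $FDSB^\star$ over all $2^m$ connected subqueries therefore costs $O(2^m\cdot m\cdot T_{FDSB})$ time, accounting for the second term inside the parentheses in the stated bound.

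In the second phase, \emph{dynamic programming}, I would compute
\[
FDSB(Q)=\min_{\bm W\text{ cover}}\prod_{Q_i\in\bm W}FDSB^\star(Q_i)
\]
by a subset DP on $2^{\bm R}$. Define $g(S)$ for $S\subseteq\bm R$ by $g(\emptyset)=1$ and, for $S\ne\emptyset$,
\[
g(S)=\min_{\substack{C\subseteq S,\ r_S\in C\\ C\text{ connected}}} FDSB^\star(C)\cdot g(S\setminus C),
\]
where $r_S$ is the lowest-indexed relation of $S$; anchoring the recurrence at $r_S$ enumerates each disjoint partition of $S$ into connected subqueries exactly once. The DP fills $2^m$ entries, and each entry examines at most $2^m$ candidates $C$, yielding $O(2^m\cdot 2^m)$ total work, which is the first term inside the parentheses. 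The final answer is $\min\{g(S):S\subseteq\bm R,\ \bigcup_{R\in S}\bm X_R=\bm X\}$, read off in an additional $O(2^m\cdot m)$ scan.

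The main obstacle is correctness of the DP, specifically that restricting to disjoint partitions of a variable-covering $S$ is sufficient: the paper's definition of a cover allows the $C_i$'s to share relations, whereas $g$ only enumerates disjoint partitions. I would handle this by an exchange argument, showing that any overlapping cover $\{C_1,\ldots,C_k\}$ can be refined to a disjoint collection of connected subqueries with the same variable footprint and no larger product of $FDSB^\star$ values. The idea is to locate a relation $R$ shared by some $C_i,C_j$, observe that since the incidence graph is a tree the union $C_i\cup C_j$ is itself a connected subquery, and then invoke the analogue of Lemma~\ref{lemma:connection:to:pb}(2) for $FDSB^\star$ (which in turn relies on Theorem~\ref{th:main:star} item~\ref{item:th:main:star:4}, guaranteeing that the vectors $\bm a^{(X)}$ produced by Algorithm~\ref{alg:bottom:up:f} are non-negative and non-increasing) to replace $C_i,C_j$ by $C_i\cup C_j$ without increasing the product, and then iterate. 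Should this refinement argument prove fragile, a safe fallback is to relax the recurrence so that $C$ may overlap with earlier pieces; the per-state work remains $O(2^m)$ because the number of connected subqueries is still at most $2^m$, so the overall bound $O(2^m\cdot(2^m+m\cdot T_{FDSB}))$ is preserved.
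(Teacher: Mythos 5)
Your algorithm for part (2) is essentially the paper's: Algorithm~\ref{alg:compress:dynamic} likewise evaluates $\min_{\texttt{ROOT}}FDSB(\cdot,\texttt{ROOT})$ on connected pieces and runs a subset dynamic program over (binary) partitions of each $S\subseteq\bm R$, keeping only variable-covering $S$, with the same $O(2^m(2^m+m\cdot T_{FDSB}))$ accounting; your recurrence anchored at $r_S$ is a harmless variant of the paper's enumeration of binary partitions. However, there are two genuine gaps. First, you never address part (1) of the statement, $FDSB(Q)\geq DSB(Q)$. The paper obtains it by combining Lemma~\ref{lemma:connection:to:pb}(2) (for every cover, $DSB(Q)\leq DSB(Q_1)\cdots DSB(Q_m)$) with part (1) of Theorem~\ref{thm:compress} ($DSB(Q_i)\leq FDSB(Q_i,\texttt{ROOT})$ for every choice of root), so that every term in the outer minimum defining $FDSB(Q)$ dominates $DSB(Q)$. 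This is a separate claim from the correctness of the dynamic program and needs its own argument.

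Second, your exchange argument for reducing overlapping covers to disjoint ones leans on a property the paper explicitly disclaims. You want $FDSB^\star(C_i\cup C_j)\leq FDSB^\star(C_i)\cdot FDSB^\star(C_j)$ when $C_i,C_j$ share a relation, invoking ``the analogue of Lemma~\ref{lemma:connection:to:pb}(2) for $FDSB^\star$.'' But Section~\ref{sec:compress} states that $FDSB$ is no longer standard query evaluation and that exactly the two properties used in Lemma~\ref{lemma:connection:to:pb} --- independence of the choice of root and submultiplicativity over covers --- may fail for $FDSB(\cdot,\texttt{ROOT})$; that failure is the very reason the outer minimum over covers is introduced in the first place. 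So the merge step of your exchange argument is not supported by anything proved in the paper, and citing Theorem~\ref{th:main:star} item~\ref{item:th:main:star:4} does not rescue it, since that item concerns the exact tensors $\bm C$, not their functional upper bounds. (The paper's own correctness proof also quietly identifies the optimal cover with a disjoint partition, so you have correctly spotted a soft spot; but your specific repair is the part that would break.) Your fallback --- relaxing the recurrence so the chosen connected piece may overlap relations outside $S$ --- is the sound route and preserves the complexity, but as written your recurrence restricts to $C\subseteq S$, so that relaxation has to be carried out explicitly rather than gestured at.
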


\begin{proof}
Claim 1 follows directly from part 2 of Lemma~\ref{lemma:connection:to:pb} and part 1 of Theorem~\ref{thm:compress}, i.e. for all covers $Q_1,\ldots,Q_k$ and roots $\texttt{ROOT}$,
\begin{align}
    DSB(Q) &\leq DSB(Q_1)\cdots DSB(Q_k)\\
    DSB(Q) &\leq FDSB(Q, \texttt{ROOT}
\end{align}
Let $Q_1,\ldots,Q_k$ be the partitions of the optimal cover used in $FDSB(Q)$ and let $\texttt{ROOT}_i\,\,\forall\,\,1\leq i\leq k$ be the optimal roots. Combining the above statements, we get the first part of our theorem,
\begin{align}
    DSB(Q) \leq FDSB(Q_1,\texttt{ROOT}_1)\cdots FDSB(Q_k,\texttt{ROOT}_k) = FDSB(Q)
\end{align}

Claim 2 is the result of the dynamic programming algorithm~\ref{alg:compress:dynamic} which uses $FDSB(Q, \texttt{ROOT})$ as a sub-routine.
\begin{algorithm}[H]
  \caption{Computing $FDSB(Q)$}
\label{alg:compress:dynamic}
\begin{algorithmic}
\STATE $FDSB(Q) = \infty$
\FOR{$i\,\,\in\,\, 1, |\bm R|$}{
    \FOR{subset S in all size i subsets of $\bm R$}{
        \STATE $S_1,\ldots, S_l = ConnectedComponents(S)$
        \STATE $COST(S) = \prod_{p=1,l}\min_{\texttt{ROOT}\in S_p} FDSB(S_p, \texttt{ROOT})$
        \FOR{$Q_1$, $Q_2$ in Partitions of S}{
            \STATE $COST(S) = min(COST(S), COST(Q_1)*COST(Q_2))$
        }\ENDFOR
        \IF{S is cover}{
            \STATE $FDSB(Q) = min(FDSB(Q), COST(S))$
        }\ENDIF
    }\ENDFOR
}
\ENDFOR
\RETURN $FDSB(Q)$
\end{algorithmic}
\end{algorithm}
This algorithm runs in time $O(\cdot 2^m(2^m + mT_{FDSB}))$. This is because the number of subsets of $\bm R$ is $2^m$, so the outer two for loops will only result in $2^m$ iterations. Within each of those iterations, computing the minimum $FDSB$ for each root within connected component requires checking at most $m$ roots, resulting in $O(mT_{FDSB})$. Further, you need to consider every partition of $S$ into two sub-queries which can occur at most $2^m$ ways. However, constant work is done for each of those sub-queries because $COST(Q_1)$ and $COST(Q_2)$ have already been calculated in previous iterations.

We now show that this algorithm is correct, i.e. it returns $FDSB(Q)$. Because it returns a multiplication of FDSB's calculated over partitions in coverings of $Q$, it is immediate to see that the return value is greater than or equal to $FDSB(Q)$. Therefore, we just need to show that its return value is less than or equal to $FDSB(Q)$. Let $Q_1,\ldots,Q_k$ be the optimal partition of $Q$, $\texttt{ROOT}_1,\ldots, \texttt{ROOT}_k$ be the optimal roots, and $S^* = R_1,\ldots,R_r$ be the relations in the optimal covering. Lastly, let $\bm P(S)$ represent the set of disjoint partitions of $S$.

We proceed inductively on the iterations of the outer loop with the following hypothesis,
\begin{align*}
    COST(S)\leq \min_{S_1,\ldots,S_k\in\bm P(S')}\prod_{i=1,k}\min_{\texttt{ROOT}}FDSB(S_i,\texttt{ROOT}) \forall\,\, S\subseteq \bm R
\end{align*}
Therefore, the following is our inductive assumption,
\begin{align*}
    COST(S')\leq \min_{S_1,\ldots,S_k\in\bm P(S')}\prod_{i=1,k}\min_{\texttt{ROOT}}FDSB(S_i,\texttt{ROOT}) \forall\,\, |S'|<i, S'\subset \bm R
\end{align*} 
Let $S\subseteq \bm R$ be a subset of size $i$. We note that any partition of $S$ can be described as a series of binary partitions. From this, we can see immediately that $COST(S)$ satisfies our inductive hypothesis because it is the minimum over all binary partitions as well as the trivial partition, $Q_1 = S, Q_2=\{\}$. Lastly, our base case of $i=1$ is immediate as the trivial partition is the only available partition. This inductive proof immediately gives us that $COST(S)$ is less than or equal to the product of minimum $FDSB$ over all roots for each partition which is equivalent to the definition of $FDSB(Q)$.
\end{proof}

\subsection{Proof of Lemma \ref{lmm:compress:polymatroid}}
\label{app:lmm:compress:polymatroid}

\begin{lmm}
Suppose $Q$ is a Berge-acyclic query with degree sequences $\bm f$, then the following is true,
\begin{align}
     FDSB(Q, \bm f, ROOT) \leq PB(Q, ROOT)
\end{align}
\end{lmm}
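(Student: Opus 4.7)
The plan is to mirror the inductive argument used in Lemma~\ref{lemma:connection:to:pb}, but applied to the staircase-based computation in Algorithm~\ref{alg:bottom:up:f} rather than the exact computation in Algorithm~\ref{alg:bottom:up}. Specifically, I would prove by bottom-up induction on the oriented tree $T$ rooted at \texttt{ROOT} the following two statements simultaneously:
\begin{align*}
(\ast)\quad &\hat{w}^{(R)}(i_1) \leq \prod_{S \in \text{tree}(R)} f_1^{(S,Z_S)} \quad \text{for all non-root } R \text{ and all } i_1,\\
(\ast\ast)\quad &\hat{a}^{(X)}(i) \leq \prod_{S \in \text{tree}(X)} f_1^{(S,Z_S)} \quad \text{for all non-root } X \text{ and all } i,
\end{align*}
where $\text{tree}(N)$ denotes the set of relations in the subtree rooted at node $N$.

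For the base cases: a leaf variable $X$ has no children, so $\hat{a}^{(X)} \equiv 1$; a leaf relation $R$ has no children variables, so $\hat{w}^{(R)}(i_1) = \hat{f}^{(R,X_1)}(i_1) \leq f_1^{(R,X_1)}$ because the degree sequence is non-increasing. For the inductive step at a variable $X$ with children $R_1,\ldots,R_m$, the element-wise product definition $\hat{a}^{(X)} = \bigotimes_j \hat{w}^{(R_j)}$ combined with $(\ast)$ applied to each $R_j$ immediately yields $(\ast\ast)$ at $X$. For the inductive step at a non-root relation $R$ with parent variable $X_1$ and children variables $X_2,\ldots,X_k$, the crucial two monotonicity facts are: (a) $\hat{f}^{(R,X_1)}(i_1) \leq \hat{f}^{(R,X_1)}(1) = f_1^{(R,X_1)}$ and (b) each $\hat{a}^{(X_p)}$ is non-increasing (inherited through the bottom-up computation, since the element-wise product of non-increasing non-negative vectors is non-increasing, and the composition $a_p \circ \hat{F}_p^{-1} \circ \hat{F}_1$ preserves monotonicity because $\hat{F}_p^{-1}$ and $\hat{F}_1$ are non-decreasing), so
\[
\hat{a}^{(X_p)}\!\left(\max\!\left(1,(\hat{F}^{(R,X_p)})^{-1}(\hat{F}^{(R,X_1)}(i_1-1))\right)\right) \leq \hat{a}^{(X_p)}(1).
\]
Applying $(\ast\ast)$ to each of these $\hat{a}^{(X_p)}(1)$ terms then yields $(\ast)$ for $R$.

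At the root, the algorithm returns $\sum_{i=1}^{|\texttt{ROOT}|} \prod_{p=1}^{k} \hat{a}^{(X_p)}(\max(1,\ldots))$, and by the same monotonicity argument each summand is bounded by $\prod_{p=1}^{k} \hat{a}^{(X_p)}(1) \leq \prod_{p=1}^{k}\prod_{S\in \text{tree}(X_p)} f_1^{(S,Z_S)} = \prod_{R\neq\texttt{ROOT}} f_1^{(R,Z_R)}$, since the subtrees rooted at the children of \texttt{ROOT} partition the non-root relations. Summing over $i=1,\ldots,|\texttt{ROOT}| = N_{\texttt{ROOT}}$ gives the desired bound $N_{\texttt{ROOT}} \prod_{R\neq\texttt{ROOT}} f_1^{(R,Z_R)} = PB(Q,\texttt{ROOT})$.

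I expect no serious obstacle here: the two key inequalities are both pure monotonicity observations, and the induction essentially mirrors Lemma~\ref{lemma:connection:to:pb}. The one subtle point that requires care is verifying that the intermediate vectors $\hat{a}^{(X_p)}$ produced by Algorithm~\ref{alg:bottom:up:f} are actually non-increasing; this must be argued inductively (element-wise products of non-increasing non-negative vectors stay non-increasing, and precomposition with the non-decreasing piecewise-linear function $\hat{F}_p^{-1}\circ \hat{F}_1$ turns a non-increasing function of its argument into a non-increasing function of $i_1$), and this monotonicity is exactly what lets us dominate each staircase lookup by its value at index $1$.
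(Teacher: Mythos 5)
Your proof is correct and follows essentially the same route as the paper's: a bottom-up induction on the rooted incidence tree showing $\hat w^{(R)}(i)\leq \prod_{S\in\text{tree}(R)}f_1^{(S,Z_S)}$, using the two monotonicity facts $\hat f^{(R,X_1)}(i_1)\leq f_1^{(R,X_1)}$ and $\hat a^{(X_p)}(\max(1,\cdot))\leq \hat a^{(X_p)}(1)$, and then summing $|\texttt{ROOT}|$ bounded terms at the root. The only difference is that you explicitly justify why the intermediate staircases $\hat a^{(X_p)}$ are non-increasing, a point the paper's proof asserts without elaboration; this is a welcome bit of extra rigor, not a different argument.
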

\begin{proof} 
We prove by induction on the tree that, for all $R\neq ROOT$, and every value $i\geq 1$, $\hat w^{(R)}(i)\leq \prod_{S\in tree(X_p)}f^{(S,Z_S)}(1)$. Assuming this holds for all children of $R$, we consider the definition of $\hat{\bm w}^{(R)}$. Because the functions $a_p$ are non-increasing and the maximum restricts their input to $\geq 1$, we know that $\hat w(i) \leq f_1(i)\cdot \prod_{i=2,p} a_p(1)$. By induction hypothesis, for each function $a_p$, we know $a_p(1)\leq \prod_{S\in tree(X_p)}f_1^{(S,Z_S)}$. Therefore, we immediately get,
\begin{align*}
    \hat w(i) \leq f_1(i)\cdot \prod_{i=2,p} \prod_{S\in tree(X_P)}f^{(S,Z_s)}(1) \leq \prod_{S\in tree(R)}f^{(S,Z_S)}(1)
\end{align*}
As before, this completes the proof of our inductive step. The base case of a leaf relation is trivial as in that case $w(i) = f_1(i) \leq f_1(1)$. Lastly, we consider the expression for the final sum and see that the lemma holds.
\begin{align*}
\sum_{i=1,|\texttt{ROOT}|}\prod_{p=1,k}a_p(\max(1,F_p^{-1}(i)))&\leq \sum_{i=1,|\texttt{ROOT}|} \prod_{p=1,k}\prod_{S\in tree(X_p)}f^{(S,Z_S)}(1)\\
&\leq |\texttt{ROOT}| \prod_{S\neq \texttt{ROOT} \in tree(\texttt{ROOT})}f^{(S,Z_S)}(1) = PB(Q, \texttt{ROOT})
\end{align*}
\end{proof}

\subsection{Proof of Theorem  \ref{th:approximate:agm:polymatroid}}
\label{app:thm:compress:polymatroid}
\begin{thm}
Suppose $Q$ is a Berge-acyclic query.  Then the following hold:
\begin{align}
  && |Q| \leq  & FDSB(Q) \leq PB(Q) \leq AGM(Q)
\end{align}
\end{thm}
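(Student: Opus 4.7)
The plan is to dispose of the three inequalities from right to left in difficulty, reusing the machinery already assembled in Sections~\ref{sec:general} and~\ref{sec:compress}. For $PB(Q) \leq AGM(Q)$, I would invoke Theorem~\ref{thm:polyamtroid:bound:sum:of:chains}: the polymatroid bound is the minimum, over all covers $\bm W = \{Q_1,\ldots,Q_m\}$ and all root choices, of $\prod_i PB(Q_i, \texttt{ROOT}_i)$. Specializing each $f_1^{(R,Z)}$ up to $N_R$ (which only enlarges $PB$) turns $PB(Q_i, \texttt{ROOT}_i)$ into $\prod_{R \in \bm R(Q_i)} N_R$, so the minimum over covers becomes $\min_{\bm W} \prod_{R \in \bm W} N_R$, i.e. the integral edge cover bound, which coincides with the AGM bound on Berge-acyclic queries as noted in the discussion following Example~\ref{eq:chain:bound}.

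For $FDSB(Q) \leq PB(Q)$, I would apply Lemma~\ref{lmm:compress:polymatroid} component-wise: for each subquery $Q_i$ in a cover and each choice of $\texttt{ROOT}_i \in \bm R(Q_i)$, we have $FDSB(Q_i, \texttt{ROOT}_i) \leq PB(Q_i, \texttt{ROOT}_i)$. Taking the minimum over $\texttt{ROOT}_i \in \bm R(Q_i)$ on both sides, multiplying across the cover, and taking the minimum over all covers $\bm W$ yields
\begin{align*}
FDSB(Q) = \min_{\bm W}\prod_{i=1,m}\min_{\texttt{ROOT}\in \bm R(Q_i)}FDSB(Q_i,\texttt{ROOT}) \leq \min_{\bm W}\prod_{i=1,m}\min_{\texttt{ROOT}\in \bm R(Q_i)}PB(Q_i,\texttt{ROOT}) = PB(Q),
\end{align*}
where the last equality is exactly Theorem~\ref{thm:polyamtroid:bound:sum:of:chains}.

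The main obstacle is the leftmost inequality $|Q| \leq FDSB(Q)$, because $FDSB$ is no longer a standard query evaluation and splits $Q$ into covers chosen independently of the query plan. The plan is to combine Theorem~\ref{thm:compress}(1) with the cover argument of Lemma~\ref{lemma:connection:to:pb}(2). For any cover $\bm W = \{Q_1,\ldots,Q_m\}$ of $Q$ and any choices $\texttt{ROOT}_i \in \bm R(Q_i)$, iterating the product-over-joint-variable inequality (which is valid because query answers on real bag instances are entrywise non-negative, as are the vectors $\bm a, \bm b$ used in the proof of Lemma~\ref{lemma:connection:to:pb}(2)) gives $|Q| \leq \prod_i |Q_i|$, and then Theorem~\ref{thm:compress}(1) supplies $|Q_i| \leq DSB(Q_i) \leq FDSB(Q_i,\texttt{ROOT}_i)$ for each $i$. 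Taking minima yields $|Q| \leq \min_{\bm W}\prod_i \min_{\texttt{ROOT}} FDSB(Q_i,\texttt{ROOT}) = FDSB(Q)$, completing the chain. Combining the three inequalities gives the theorem.
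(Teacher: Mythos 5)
Your proposal is correct and follows essentially the same route as the paper: the middle inequality comes from applying Lemma~\ref{lmm:compress:polymatroid} component-wise over covers and roots (exactly the paper's argument), the left inequality from Theorem~\ref{thm:compress}(1) combined with the cover decomposition of Lemma~\ref{lemma:connection:to:pb}(2), and the right inequality from the chain-decomposition characterization of $PB$. The only cosmetic difference is that you bound $|Q|\leq \prod_i |Q_i|$ on the real instance before invoking $DSB$, whereas the paper passes through $DSB(Q)\leq \prod_i DSB(Q_i)$; both are valid.
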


\begin{proof}
The theorem follows directly from lemma~\ref{lmm:compress:polymatroid} and the definitions of $FDSB(Q)$ and $PB(Q)$ which we restate below,
\begin{align*}
  PB(Q) \defeq & \min_{\bm W}\prod_{i=1,m} \min_{\texttt{ROOT}\in \bm R(Q_i)}PB(Q_i,\texttt{ROOT}_i)\\
  FDSB(Q) \defeq & \min_{\bm W}\prod_{i=1,m} \min_{\texttt{ROOT}\in \bm R(Q_i)}FDSB(Q_i,\texttt{ROOT}_i)
\end{align*}
Because we know $FDSB(Q_i, \texttt{ROOT}_i)\leq PB(Q_i, \texttt{ROOT}_i)$ from lemma ~\ref{lmm:compress:polymatroid}, our result immediately follows.
\begin{align*}
  PB(Q) \leq & \min_{\bm W}\prod_{i=1,m} \min_{\texttt{ROOT}\in \bm R(Q_i)}FDSB(Q_i,\texttt{ROOT}_i) = FDSB(Q)\\
\end{align*}
\end{proof}

\end{document}